\renewcommand{\bnfmid}{\mathrel{\mbox{\Large{$\mid$}}}}
\newcommand{\cmark}{\ding{51}}%
\newcommand{\xmark}{\ding{55}}%
\newcommand{\diff}[1]{#1}
\renewcommand{\lbl}{\ell}
\newcommand{\arxiv}{}
\newcommand{\DD}[1]{\ifdefined\commentaire{{{\color{orange!60!black!55} \texttt{[#1]}}}}\fi}
\newcommand{\TR}[1]{\ifdefined\commentaire{{{\color{blue!60!black!55} \texttt{[#1]}}}}\fi}
\author{Davide Davoli}
\author{Martin Avanzini}
\author{Tamara Rezk}
\affil{Inria, Université Côte d'Azur}
\newcommandx{\wfp}{\mathsf{wf}}
\newcommandx{\wf}[3][1=\system,2=\lay]{#1\vdash \wfp_{#2}(#3)}
\newcommandx{\fencerel}{\precsim}
\newcommand{\HERE}{{{~}\par\color{red}{HERE\hrule}}}
\title{Comprehensive Kernel Safety in the Spectre Era: Mitigations and Performance Evaluation (Extended Version)}
\begin{document}

\maketitle
\begin{abstract}
  The efficacy of address space layout randomization has been formally demonstrated in a shared-memory model by Abadi et al., contingent on specific assumptions about victim programs. However, modern operating systems, implementing layout randomization in the kernel, diverge from these assumptions and operate on a separate memory model with communication through system calls. 
In this work, we relax Abadi et al.'s language assumptions while demonstrating that layout randomization offers a comparable safety guarantee in a system with memory separation. 
However, in practice, speculative execution and side-channels are recognized threats to layout randomization. 
We show that kernel safety cannot be restored for attackers capable of using side-channels and speculative execution, and introduce enforcement mechanisms that can guarantee speculative kernel safety for safe system calls in the Spectre era. We show that kernel safety cannot be restored for attackers capable of using side-channels and speculative execution, and introduce enforcement mechanisms that can guarantee speculative kernel safety for safe system calls in the Spectre era. We implement three suitable mechanisms and we evaluate their performance overhead on the Linux kernel.
\end{abstract}

\section{Introduction}
\label{sec:introduction}
Memory safety violations on kernel memory can result in serious ramifications for security, such as e.g. arbitrary code execution, privilege escalation, or information leakage. 
In order to mitigate safety violations, operating systems such as Linux employ 
address space layout randomization~\cite{LWNKASLR,MacOSXASLR,AndroidASLR,iOSASLR,OpenBSD6.3,FreeBSDASLR}. This protection measure can prevent attacks that depend on knowledge of specific data or procedure location, as it introduces randomization of  these addresses.

On the one hand, the efficacy of layout randomization has been formally demonstrated in Abadi et al.'s line of work~\cite{Abadi, Abadi2, Abadi3}, as a protective measure within a {\it shared-memory model} between the attacker and the victim. These results, however, are contingent on specific assumptions regarding victim programs, notably the absence of pointer arithmetic,  introspection, or indirect jumps. These precise constraints shaped a controlled environment where memory safety could be enforced effectively via layout randomization.
However, operating systems employing layout randomization on kernel (a.k.a. KASLR in Linux e.g.~\cite{LWNKASLR}) diverge from these assumptions. Notably, they operate on a  separate memory model, wherein, kernel code---acting as the victim---resides on
kernel memory, while user code---acting as the potential attacker---resides in user space. The interaction between the two occurs through a limited set of procedures provided via system calls~\cite{Tanenbaum}.
In the operating system's realm, system calls may be written in C and assembly code, further deviating from 
  the restricted conditions outlined by Abadi et al.  This introduces a distinction not only in the expressiveness of victim code considered but also in the underlying memory model.
  
Hence, our first research question  emerges: can we relax the language assumptions proposed by Abadi et al.~\cite{Abadi,Abadi2,Abadi3} while concurrently demonstrating that layout randomization offers a comparable safety guarantee in a system  with memory 
separation? We affirmatively respond to this question by showcasing that layout randomization probabilistically ensures kernel safety within a classic attacker model, where users of an operating system execute without privileges and victims can feature pointer arithmetic, introspection, and indirect jumps.

On the other hand, in the current state-of-the-art of security, often referred to as the Spectre era, speculative execution and side-channels are well known to be effective vectors for compromising layout randomization~\cite{BlindSide,Meltdown,cacheKASLR,EntryBleed,TagBleed,EchoLoad}.\TR{add example Blindside here?? } 
Indeed, our first result neglects the impact of speculative execution and side-channels. Recognizing this limitation, our second research question arises:  can we restore a similar safety result in the Spectre era?

In this regard, we formally acknowledge that by relying solely on layout randomization it is not possible to restore kernel safety. We then introduce a new condition, called \emph{side-channel layout non-interference} akin to speculative constant-time~\cite{CTFundations}, which intuitively asserts that victims should not unintentionally leak information on the kernel's layout through side-channels. Our research formally demonstrates that under this assumption, the system is safe, and perhaps surprisingly, without the necessity of layout randomization.
Later, we show that side-channel layout non-interference is not a necessary requirement, and this motivates us to study how safety can be enforced without requiring that property.

\diff{
Our third contribution is to show that kernels can be protected even without requiring {\it side-channel layout non-interference}. We do so by relating safety in the classic execution model to the speculative one.
%
%
Specifically, we show that a kernel that is safe against
classic attackers can be protected against speculative attackers
by applying specific program transformations. However, the initial
safety requirement that we impose on the kernel cannot be granted solely
by layout randomization. Finally, we show the soundness of three
such transformations, and we implement them in order to evaluate their
overhead in terms of performance.  
}

This marks the first formal step toward strengthening kernel safety in the presence of speculative and side-channel vulnerabilities,
and the surpassing of layout randomization as a system level protection mechanism.
In summary, our contributions are: 
\begin{itemize}
\item  We formally demonstrate the effectiveness of layout randomization to provide kernel safety for a classic operating system scenario, with system calls offered as 
interfaces to attackers and different privilege execution modes, as well as kernel and user memory separation. 
\item We empower the attackers of our first scenario to execute side-channel attacks and control speculative execution. We demonstrate that kernel safety is not maintained under this more potent attacker model, and we subsequently present a sufficient condition to ensure kernel safety. 
\item We show that it is possible to enforce safety against speculative attackers on a system that enjoys weaker security guarantees by the application of specific program transformations.
  \diff{
  \item We implement these transformations, and we measure their overhead on the Linux kernel. The experimental evaluation shows that they impose low performance overheads on computationally heavy user-space tasks.
  }\end{itemize}

\diff{This paper is an extended version of the ACM CCS conference paper \cite{ConferencePaper}
  on kernels' safety in presence of speculative execution. The additional material in this
  work, compared to the conference version, is the following:
  \begin{itemize}
    \item We model indirect branch prediction, related to Spectre v2, significantly strengthening the attack model compared to the conference paper~(\Cref{sec:safety2}). Incorporating indirect branch prediction required extending the attacker's language with controls over branch targets, and adapting our speculative semantics accordingly. To contrast unwanted indirect branch predictions, we modeled a class of jump instructions whose targets cannot be speculative, and that that are crucial for the effectiveness of the mitigations in \Cref{sec:sksenforcement}. This addition does not affect sections where speculative execution is not considered, notably \Cref{sec:language,sec:safety1}. 
  \item We review state-of-art mitigations against speculative execution attacks, showing that none of them can easily be adopted to enforce kernel's safety in presence of speculative execution~(\Cref{sec:discussion}).
  \item We extend the transformation given in the conference version of this paper~\cite[Section 7]{ConferencePaper} to protect against indirect branch speculation.
  \item We define two additional transformations. The first one is an optimized version of the transformation given in the conference version of this paper~\cite[Section 7]{ConferencePaper}, and the other one blocks all forms of \emph{kernel-space} speculation taken into account in this work~(\Cref{sec:sksenforcement}). 
  \item We implement the transformations as LLVM passes, and we offer them as open source~\cite{repo}.
  \item We evaluate the performance overhead of these transformations on computationally heavy \emph{user-space} and \emph{kernel-space} tasks, and on I/O-bound \emph{user-space} workloads. 
  \end{itemize}
}

The paper is structured as follows: in \Cref{sec:motivations} we give
an overview of the contributions of this paper, motivated by a
concrete example.  In \Cref{sec:language}, we introduce our execution
model by giving its language and semantics; in \Cref{sec:threatmodel},
we establish threat models.  \Cref{sec:safety1} is devoted to showing
that layout randomization protects against attacks that do not rely on
speculative execution and side-channel observations.  In
\Cref{sec:safety2} we extend the model of \Cref{sec:language} with
side-channel info leaks and speculative execution, and we show that
layout randomization is not a viable protection mechanism in this
scenario.
\diff{ State-of-art mitigations against speculative attacks are
  reviewed in~\Cref{sec:discussion}.  In \Cref{sec:sksenforcement} we
  show that any system that is safe against classic attackers can be
  transformed into an equivalent system that is safe against speculative
  attackers, and we propose three suitable program transformations for
  this task.  In \Cref{sec:experiment}, we estimate the overhead of
  this transformation on real hardware.}  Finally, we discuss related
work in \Cref{sec:relatedwork}, and we conclude in
\Cref{sec:conclusion}.
\ifdefined\conference{Omitted proofs are available
in the extended version of
this paper~\cite{ExtendedJournal}.}\fi
\ifdefined\arxiv{Omitted proofs are in
\Cref{sec:appendix}.}\fi


\section{Overview}
\label{sec:motivations}
\begin{figure}[t]
\begin{code}[emph={buf,recv,send,valid},morekeywords={socket,size_t}]
int buf[K+1][H];

int recv(socket* s, size_t idx) {
  if (valid(s, idx)) return buf[*s][idx];
  return 0;
}

void send(socket* s, size_t idx, int msg) {
  if (valid(s, idx)) {
    buf[*s][idx] = msg;
    if (buf[K][0] != NULL)
      (*buf[K][0])(s, idx);
  }
}
\end{code}
  \vspace{-3mm}
  \caption{System Calls vulnerable to memory corruption}
  \label{fig:msgpassing}
\end{figure}

Each year, dozens of vulnerabilities are found in commodity operating systems' kernels, and the majority of them are memory corruption vulnerabilities~\cite{ThreatOverview}. A kernel suffers a memory corruption vulnerability when an unprivileged user-space program, acting as the attacker, can trigger it to read or write its memory in an \emph{unexpected} way, usually, by issuing a sequence of system calls with maliciously crafted arguments. In \Cref{fig:msgpassing}, we show a pair of system calls of a hypothetical operating system that are subject to this vulnerability.
\def\ex{\cc[emph={buf, recv, send, valid, buf, native_write_cr4, sc_leak, foo}]}
The \ex{recv} and \ex{send} system calls are meant to implement a simple message passing protocol.
The implementation supports up to \ex{K} sockets, each socket can buffer up to \ex{H} messages.
A user can send messages by invoking the system call \ex{send}, and read them with the system call \ex{recv}.
These system calls employ a shared buffer \ex!buf! that stores messages, together with a hook for a customizable callback pointed by \ex{buf[K][0]}.
If specified, this callback is executed after a message is sent. Such a callback may, for instance, inform the sender on whether the message was sent correctly.

These system calls are meant to interact only with the memory
containing the buffer, the code of the called procedures and with the
resources that these procedures access.  In the following, we will
refer to the set of memory resources that a system call may access
rightfully as the \emph{\tcaps} of that system call.  Depending on the
implementation of the procedure \ex{valid}, these system calls can
suffer from memory corruption vulnerabilities.  For instance, if the
\ex{valid} function does not perform any bound checks on the value of
\ex{idx}, these two system calls can be used to perform arbitrary read
and write operations by a malicious user-space program, acting as an
attacker against the victim kernel.  In particular, if the attacker
supplies an out-of-bounds value for \ex{idx} to the \ex{recv} system
call, the system call can be used to perform an unrestricted memory
read.  Similarly, the \ex{send} system call can be used to overwrite
any value of kernel memory and, in particular, to overwrite the
pointer to the callback that is stored within the buffer.  This means
that a malicious user-space program, acting as an attacker, can turn
this memory-vulnerability into a \emph{control flow integrity}
(CFI) violation~\cite{CFI}, as it can
deviate the control flow from its intended paths.

However, if the system that implements these system calls is
protected with layout randomization---like many commodity operating
systems
do~\cite{LWNKASLR,MacOSXASLR,AndroidASLR,iOSASLR,OpenBSD6.3,FreeBSDASLR}---the
exploitation of these vulnerabilities is not a straightforward
operation. In Linux, for instance, one of the viable ways to mount a
privilege-escalation attack is to disable the \emph{Supervisor Mode
  Execution Prevention} (SMEP) by running the \ex{native_write_cr4}
procedure. When this protection is disabled, the kernel is allowed to
run any \emph{payload} stored in user-space. The attacker can trigger
the system to execute the \emph{payload} by exploiting the
vulnerability of the \ex{send} system call twice: the first time to
run the \ex{native_write_cr4} procedure, and the second time to run
the \emph{payload}.  However, in order to do so, the attacker has to
infer the address of \ex{native_write_cr4}.  In the absence of
info-leaks, an attacker can only guess such address and, due to layout
randomization, the probability of success is low.
\Cref{sec:safety1} is devoted to showing that without
side-channel leaks (and speculative execution), if a system is
protected with layout randomization, the probability that an
unprivileged attacker leads the system to perform an unsafe memory
access is very low, provided the address space is sufficiently
large. Of course, the precise probability depends on the concrete
randomization scheme.  This result is compatible
with the large number of kernel attacks that break Linux's kernel
layout randomization, e.g.\ by means of heap
overflows~\cite{HeapFengShui2022}, as the distribution of Linux's heap
addresses lacks entropy~\cite{FreelistRandomization}. In consequence,
the probability of mounting a successful attack are relatively high.

Although it was already well known that layout randomization can
provide security guarantees~\cite{Abadi, Abadi2, Abadi3}, one of
the main novelties of \Cref{sec:safety1} is showing that these
guarantees are valid even if victims can perform pointer arithmetic
and indirect jumps.
Despite this positive result, the threat model considered in \Cref{sec:safety1} is unrealistic nowadays. In particular, it does not take into account the ability of the attackers to access side-channel info-leaks and to steer speculative execution. There is evidence that, by leveraging similar features, the attackers can leak information on the kernel's layout~\cite{cacheKASLR,TLBKASLR,TagBleed, EntryBleed, EchoLoad} and compromise the security guarantees offered by layout randomization~\cite{BlindSide,Meltdown}.

In particular, if the victim system suffers from side-channel
info-leaks that involve the layout, an attacker may break the
protection offered by randomization.  For instance,
suppose the system contains the following system call:

{
\begin{code}[emph={sc_leak,native_write_cr4}]
void sc_leak(x){
  if ((void*) x  == (void*) native_write_cr4)
    for(int i = 0; i < K; i++);
}
\end{code}
}

By measuring the execution time of the system call, an attacker may deduce information on the location of \ex{native_write_cr4}.
If, for some address \ex{a}, the invocation call \ex{sc_leak(a)} takes sensibly longer to execute, the attacker can deduce that \ex{a} corresponds to the address of \ex{native_write_cr4}.
Then, the attacker can disable the SMEP protection via the vulnerable system call \ex{send}.

Side-channel leaks can also be triggered with the help of speculative execution:
in our example from \Cref{fig:msgpassing}, an attacker can use the \ex{recv} primitive to probe for readable data without crashing the system. This can be done by supplying to the system call arguments \ex{s} and \ex{idx} such that \ex{valid(s, idx)} returns false---ideally, causing an out of bound access when the return value is fetched from memory. If the attacker manages in mis-training the branch predictor, the access to \ex{buf[*s][idx]} is performed after a wrong prediction, i.e., in \emph{transient execution}. Depending on the allocation state of the address referenced by \ex{buf[*s][idx]}, two cases arise. If that address does not store any readable data, the memory violation is not raised to the architectural state, because it occurred during transient execution. However, if that address stores writable data, the load operation inserts a new line in the system's cache and, when the system detects the mis-prediction, the execution backtracks to the latest architecturally valid state. After the backtrack, the insertion of a new line in the cache can be detected from user-space. Thus, the attacker can infer that the address referenced by \ex{buf[*s][idx]} contains readable data, and it can exploit the vulnerabilities of the \ex{send} and the \ex{recv} system calls to read or write the content of that memory address. 
This form of \emph{speculative probing} is very similar to what happens, for instance, in the BlindSide attack~\cite{BlindSide} that effectively defeats Linux's KASLR.

The reader may observe that such attacks rely on the attacker's ability to reconstruct the kernel's memory layout by collecting side-channel info-leaks. For this reason, a natural question is whether these attacks can be prevented by imposing that no information of the layouts leaks to the architectural and the micro-architectural state during the execution of system calls. It turns out that this is the case, as we show in~\Cref{sec:safety2}. In practice, this mitigation is of little help though, as it would effectively rule out all system calls that access memory at runtime.

However, we show that any operating system can be pragmatically turned into another system that is functionally equivalent to the original one for user-space programs, but that is not subject to vulnerabilities due to transient execution. \diff{This can be achieved by program transformation.} With such approach, showing that a kernel is safe in the speculative execution model boils down to showing that the kernel under consideration is safe in the classic execution model.  Concretely, a suitable program transformation could block the speculative attack to the \ex{recv} system call we described above by disallowing the transient execution of the unsafe load operation with a speculation barrier: an instruction that stops speculative execution.  The efficacy of this technique is formally shown in~\Cref{sec:sksenforcement}.

\diff{Although partially blocking kernel-space speculative execution can have deleterious ramifications for performance, computer systems' execution takes place mostly in user space, amortizing the overhead that the system encounters in kernel-space. An experimental evaluation given in~\Cref{sec:experiment} confirms this claim.}


\section{Language}
\label{sec:language}
In this section, we introduce the language that we employ throughout the following to study the effectiveness of kernel address space layout randomization. We are considering a simple imperative \texttt{while} language. The address space is explicit, and segregated into user and kernel space. We start by describing the syntax.
\subsection{Syntax}

\begin{figure}
  \centering
  \begin{align*}
    \Expr \ni \expr ,\exprtwo &\bnfdef
    \val
    \bnfmid  \vx 
    \bnfmid  \ar 
    \bnfmid  \fn 
    \bnfmid  \op (\expr_1,\dots,\expr_n)\\
    \Instr \ni \stat,\stattwo &\bnfdef
    \cskip 
    \bnfmid
    \vx \ass \expr 
    \bnfmid \cmemread \vx \expr 
    \bnfmid \cmemass \expr \exprtwo 
    \bnfmid \ccall \exprtwo{\expr_1,\dots,\expr_n} 
    \bnfmid \csyscall \syscall {\expr_1,\dots,\expr_n} \\ 
    &  \bnfmid \cif{\expr}{\cmd}{\cmdtwo} 
    \bnfmid \cwhile{\expr}{\cmd} \\ 
    \Cmd \ni \cmd, \cmdtwo &\bnfdef \cnil \bnfmid \stat\sep\cmd
  \end{align*}
  \vspace{-7mm}
  \caption{Syntax of the language. Here, $\val$ is a value, $\vx$ a register, $\ar$ an array identifier, $\fn$ a procedure identifier, and $\op$ is an operator.}
  \label{fig:syntax}
\end{figure}


The set $\Cmd$ of \emph{commands} is given in \Cref{fig:syntax}.
Memories may store \emph{procedures} and \emph{arrays}, i.e.,
sequences of \emph{values} $\val \in \Val$ organized as contiguous
regions.  The set of values is left abstract, but we assume that it
encompasses at least \emph{Boolean values}
$\cBool \defsym \{\ctrue,\cfalse\}$, \emph{(memory) addresses} $\Add$,
modeled as non-negative integers, and an \emph{undefined value}
$\cnull$.  Within expressions, $\vx \in \Reg$ ranges over
\emph{registers}, $\ar \in \Ar$ and $\fn \in \Fn$ over \emph{array}
and \emph{procedure identifiers}, and $\op \in \Op$ over
\emph{operators}. \emph{Identifiers} $\Id \defsym \Ar \uplus \Fn$ are
mapped to addresses at runtime, as governed by a layout randomization
scheme.  The \emph{size} (length) of an array $\ar$ is denoted by
$\size{\ar}$ and is fixed for simplicity, i.e., we do not model
dynamic allocation and deallocation.

A command $\cmd \in \Cmd$ is a sequence of instructions, evaluated in-order.
The instruction $\vx \ass \expr$ stores the result of evaluating $\expr$ within register $\vx \in \Reg$.
To keep the semantics brief, expressions neither read nor write to
memory. Specifically, addresses are dereferenced explicitly.  To this
end, the instruction $\cmemread \vx \expr$ performs a memory read from
the address given by $\expr$, and stores the corresponding value in
register $\vx$. Dually, the instruction $\cmemass \expr \exprtwo$
stores the value of $\exprtwo$ at the address given by $\expr$.  The
instruction $\ccall \exprtwo{\expr_1,\dots,\expr_n}$ invokes the
procedure residing at address $\exprtwo$ in memory, supplying
arguments $\expr_1,\dots,\expr_n$.  Likewise,
$\csyscall \syscall {\expr_1,\dots,\expr_n}$ invokes a system call
$\syscall$ with arguments $\expr_1,\dots,\expr_n$ from a finite set of
system calls $\Sys$.  The execution of a system call engages the
\emph{privileged execution mode} and thereby the accessible address
space changes.  \diff{ To this end, the address space $\Add$ is
  partitioned into $\kappa_\um$ \emph{user-space} addresses
  $\Addu = \{0, \ldots, \kappa_\um-1\}$, visible in unprivileged mode,
  and $\kappa_{\km}$ \emph{kernel-space} addresses
  $\Addk = \{ \kappa_{\um}, \ldots, \kappa_{\um}+\kappa_{\km}-1\}$,
  visible in unprivileged mode.}  The remaining constructs are
standard.

\paragraph*{Stores} 
Let $\Arr$ denote the set of arrays, i.e., finite sequences of values
$\vec{\val}$ of fixed length $|\vec{\val}|$.  A \emph{store} is a
(well-sorted) mapping $\rfs : \Id \to \Arr \cup \Cmd$, mapping array
identifiers $\ar$ to arrays $\rfs(\ar) \in\Arr$ of length $\size{\ar}$
and procedure identifiers $\fn$ to their implementation
$\rfs(\fn) \in \Cmd$.  Let $\Idu \uplus \Idk = \Id$ be a partitioning
of identifiers int \emph{user-space} and \emph{kernel-space}
identifiers, respectively. This distinction will signify the intended
location of the corresponding objects within the memory address space.
We write $\Fnk \subseteq \Idk$ and $\Fnu \subseteq \Idu$ for the
\emph{kernel-space} and \emph{user-space procedure idenitifiers};
similar for array identifiers we use $\Ark \subseteq \Idk$ and
$\Aru \subseteq \Idu$ to denote \emph{kernel-space} and
\emph{user-space array idenitifiers} respectively.  Given a store
$\rfs$, we always assume that the address space is sufficiently large
to hold $\rfs$; that is, for every $b \in \{\um,\km\}$,
$\kappa_{b} \geq \sum_{\id \in \Id[b]} \size{\id}$.  Here, by
convention, $\size{\fn} \defsym 1$. In the following, we often work
with pairs of stores that associate a set of identifiers to the same
values---e.g., containing identical procedures. We write
$\rfs \eqon{Id} \rfs'$ if $\rfs$ and $\rfs'$ coincide on
$Id \subseteq \Id$.

\paragraph*{\Tcaps} To model safety, each system call $\syscall$
is associated with a fixed set of identifiers defining the memory
regions it can access. We refer to this set as the \emph{\tcaps} of
$\syscall$.

\paragraph*{Systems} 
In our model, kernels are modeled as triples defining system calls,
the content of kernel space memory, and the capabilities associated to every system call.  
Let $\Sys$ denote a (finite) set of \emph{system call identifiers}.
A \emph{system} for $\Sys$ is a tuple $\system = (\rfs,\syss,\caps)$, consisting of:
\begin{itemize}
\item an \emph{initial store} $\rfs : \Id \to \Arr \cup \Cmd$, relating identifiers to their initial
  value;
  \item a \emph{system call map} $\syss : \Sys \to \Cmd$ associating system calls to
  their implementation; and
\item a \emph{capability map} $\caps : \Sys \to \parts{\Idk}$ associating system calls with their capabilities.
\end{itemize}
We require that the code $\rfs(\fn)$ associated to user space
identifiers $\fn \in \Fnu$ is \emph{unprivileged}, i.e.
$\ids(\rfs(\fn)) \subseteq \Idu$, where $\ids(\cmd) \subseteq \Id$ is
the set of identifiers literally occurring in $\cmd$.  As our notion
of safety will be defined in terms of system calls' capabilities, we
furthermore require that the set of capabilities $\caps(\syscall)$ of
a system call $\syscall$ includes all identifiers that $\syscall$
refers to, directly in its body, or indirectly through procedure or
system calls.
To define this latter requirement formally, let
$\refs_\system(\cmd) \subseteq \Id \cup \Sys$ denote the set of
identifiers and system calls that the command $\cmd$ refers to,
directly or indirectly.  Specifically, we call $\mathsf{syscalls}(\cmd)$ is
the set of system calls in $\cmd$, and we define $\refs_\system(\cmd)$
as the smallest set such that:
(a)~$\ids(\cmd) \cup \mathsf{syscalls}(\cmd) \subseteq
\refs_\system(\cmd)$, where (b)~if $\fn \in \refs_\system(\cmd)$ then
$\ids(\rfs(\fn)) \subseteq \refs_\system(\cmd)$, and likewise, (c)~if
$\syscall \in \refs_\system(\cmd)$ then
$\ids(\syss(\syscall)) \subseteq \refs_\system(\cmd)$.  The
requirement can now be stated as
$\refs_\system(\syss(\syscall)) \setminus \Sys\subseteq
\caps(\syscall)$.
%

\subsection{Semantics}

We now endow our language with an operational semantics. To define our
notion of safety, we directly define an \emph{instrumented semantics},
which signals capability violations.  Semantics cannot be defined
directly on stores, which directly relate identifiers to their
values. Instead, to model address-based memory accesses, we introduce
\emph{memories}.

\paragraph*{Memories}
A \emph{memory} is a function
$\mem: \Add \to \Val \cup \Cmd \cup \{\none\}$ associating addresses
with their content, or to the special symbol $\none \not\in \Val$ if
the location is not occupied.  We denote by $\Mem$ the set of all
memories.  \diff{Arrays will be represented by sequences of values in
  contiguous memory locations.} We use $\update \mem \add \val$ to
denotes the memory that is pointwise identical to $\mem \in \Mem$,
except for the address $\add\in \Add$ that is mapped to
$\val\in \Val$.  Note that updates are restricted values, in
particular updating a memory location with a command is forbidden.
Thereby we model a W\^{}X memory protection policy, separating
writable from executable memory regions.

\paragraph{Layouts} A \emph{(memory) layout} is a function $\lay : \Id \to \Add$ that
describes where objects are placed in memory.
As we mentioned, an array $\ar$ is stored as continuous block at addresses
$\underline \lay(\ar) \defsym \{ \lay(\ar), \ldots, \lay(\ar)+\size\ar - 1 \}$
within memory. For procedure identifiers $\fn$, we set 
$\underline \lay(\fn) \defsym \{ \lay(\fn)\}$.
We overload this notation to sets of identifiers in the obvious way.
In particular, $\underline \lay(\Ar)$ and $\underline \lay(\Fn)$ refer
to the address-spaces of arrays and procedures, under the given layout.
We regard only layouts that associate identifiers with non-overlapping blocks
($\underline \lay(\id_1)\cap \underline \lay(\id_2) = \emptyset$ for all
$\id_1 \neq \id_2$) and that respect address space separation ($\underline \lay(\id) \subseteq \Add[b]$
for $\id \in \Id[b]$, $b \in \{\um,\km\}$).
The set of all such layouts is denoted by $\Lay$.
Note that, by the assumptions on the sizes
$\kappa_\um$ and $\kappa_{\km}$ of address spaces, layouts always exist.


A layout $\lay : \Id \to \Add$ now defines how a store
$\rfs : \Id \to \Arr \cup \Cmd$ is placed in memory. This memory,
denoted by $\lay \lcomp \rfs$, is defined as follows:
\[
  (\lay \lcomp \rfs)(\add) \defsym
  \begin{cases}
    \rfs(\fn) & \text{if $\add = \lay(\fn)$ for some $\fn \in\Fn$,} \\
    \vec \val[k] & \parbox{35em}{if $\add = \lay(\ar)+ k$, for some $\ar \in \Ar$ and $0 \leq k < \size{\ar}$ s.t. $\rfs(\ar) = \vec \val$,} \\
    \none & \text{otherwise,}
  \end{cases}
\]
where $\vec \val[k]$ denotes the $k$-th element of the tuple $\vec \val$, indexed
starting from $0$.

\paragraph{Randomization scheme}
Abstracting from details, we model an address space randomization
scheme through a probability distribution over layouts.  A specific
layout $\lay$ is selected at random prior to system execution.  For a
given system $\system = (\rfs,\syss,\caps)$, this choice then dictates
the initial memory configuration $\lay \lcomp \rfs$.  Although the
semantics is itself deterministic, computation can be viewed as a
probabilistic process.  In our language, instructions are
layout-sensitive: for example, the outcome of a memory load operation
at a specific address depends on whether $\lay$ places an object at
that address. Therefore, kernel's safety should be construed as a
property that holds in a probabilistic sense.


\paragraph*{Register maps} Besides memory locations, our program can
also manipulate the content of registers. In our semantics, we model
registers through functions $\regmap: \Reg \to \Val$, associating register
identifiers to their value. As for memories,
the notation $\update \regmap \vx \val$ denotes the register map that is pointwise
identical to $\regmap$, except for the register $\vx$, which is mapped to the value $\val$.
Again, this operation is only defined when $\val$
is a value, i.e., registers cannot store arrays or procedures.

\paragraph*{Semantics of expressions} To define the semantics of
expressions, we assume for each $n$-ary operator $\op$ an
interpretation $\widehat \op:\Val^n \to \Val$. The semantics of
an expression depends, besides registers, on a layout, in order to resolve identifiers.
The semantics is now defined by:
\begin{align*}
\sem\val_{\regmap, \lay} &\defsym \val & \sem\vx_{\regmap, \lay} &\defsym \regmap(\vx) & \sem \id_{\regmap, \lay} &\defsym \lay(\id) &  \sem{\op (\expr_1,\dots,\expr_n)}_{\regmap, \lay} \defsym \widehat \op (\sem{\expr_1}_{\regmap, \lay}, \ldots, \sem{\expr_n}_{\regmap, \lay} ).
\end{align*}
Let $(\cdot)^{\Add{}} : \Val \to \Add$ and
$(\cdot)^{\cBool{}} : \Val \to \cBool$ be functions that cast any
value to an address, or a Boolean, respectively. In particular,
$\sem {\expr}^{\Add}_{\regmap, \lay}$ and
$\sem {\expr}^{\cBool}_{\regmap, \lay}$ evaluate expression to
addresses and Boolean.

\paragraph*{Configurations}
Due to the presence of (possible recursive) procedures, configurations make use
of a stack of frames. 
Each such frame records the command under evaluation, the register contents
and the execution mode.
Formally, configurations are drawn from the following BNF:

{\hfill
\begin{minipage}[l]{0.25\linewidth}
  \begin{cbnf}
    \opt & \um \bnfmid \km[\syscall]
  \end{cbnf}
\end{minipage}
\begin{minipage}[l]{0.25\linewidth}
  \begin{cbnf}
    \st & \varepsilon \bnfmid \frame{\cmd}{\regmap}{\opt} : \st 
  \end{cbnf}
\end{minipage}
\begin{minipage}[l]{0.25\linewidth}
\begin{cbnf}
    \confone, \conftwo & \conf{\st,\mem} \bnfmid \err \bnfmid \unsafe. 
  \end{cbnf}
\end{minipage}
\hfill}

A \emph{configuration} of the form $\conf{\st,\mem}$, with
top-frame $\frame{\cmd}{\regmap}{\opt}$, indicates that $\cmd$ is
executed with allocated registers $\regmap$ in \emph{execution mode}
$\opt$ on memory $\mem$.  In particular, $\opt = \km[\syscall]$ indicates that
execution proceeds in privileged kernel-mode, triggered by system call
$\syscall$. The annotation of the \emph{kernel-mode} flag by a system
call name facilitates the instrumentation of the semantics.  Indeed,
every time an access to the memory is made, the semantics enforces
that address is in the capabilities of the system call that is running
(if any).  If the address can be rightfully accessed, the execution
Finally, $\err$ signals abnormal termination (for instance, when
dereferencing a pointer to kernel memory from user-mode or vice
versa).


\ifdefined\conference{\input{rules1}}\fi
\ifdefined\arxiv{\begin{figure*}[t]
  \small
  \centering
  \columnwidth=\linewidth
  \[
    \Infer[WL][Pop]{
      \step
      {\ntc{\cnil}{\regmap}{\opt} {\frame {\cmd} {\regmap'} {\opt'}\cons \st}{\mem}}
      {\ntc{\cmd}{\update{\regmap'}{\ret} {\regmap(\ret)}}{\opt'}{\st}{\mem}}
    }
    {}
  \]
  \\[-3mm]
  \[
    \Infer[WL][Skip]
    { \step
      {\ntc{\cskip\sep\cmd}{\regmap}{\opt}{\st}{\mem}}
      {\ntc{\cmd}{\regmap}{\opt}{\st}{\mem}}
    }
    {
    }
  \]
  \\[-3mm]
  \[
    \Infer[WL][Op]
    { \step
      {\ntc{\vx \ass \expr\sep\cmd}{\regmap}{\opt}{\st}{\mem}}
      {\ntc{\cmd}{\update{\regmap}{\vx}{\sem \expr_{\regmap, \lay}}}{\opt}{\st}{\mem}}
    }
    {
    }
  \]
  \\[-3mm]
  \[
    \Infer[WL][If]
    { \step
      {\ntc{\cif \expr {\cmd_\ctrue} {\cmd_\cfalse}\sep\cmdtwo}{\regmap}{\opt}{\st}{\mem}}
      {\ntc{\cmd_{\toBool{\sem \expr_{\regmap, \lay}}}\sep \cmdtwo}{\regmap}{\opt}{\st}{\mem}}
    }
    {
    }
  \]  
  \\[-3mm]
  \[
    \Infer[WL][While]
    { \step
      {\ntc{\cwhile {\expr} {\cmd}\sep\cmdtwo}{\regmap}{\opt}{\st}{\mem}}
      {\confone_{\toBool{\sem \expr_{\regmap, \lay}}}}
    }
    {
      \confone_{\ctrue} = {\ntc{\cmd\sep\cwhile {\expr} {\cmd}\sep \cmdtwo}{\regmap}{\opt}{\st}{\mem}} &
      \confone_{\cfalse} = {\ntc{\cmdtwo}{\regmap}{\opt}{\st}{\mem}}     }
  \]
  \caption{Semantics w.r.t. system $\system=(\rfs,\syss,\caps)$, first part.}
  \label{fig:stepexcerpt1}
\end{figure*}

\begin{figure*}[t]
  \small
  \centering
  \columnwidth=\linewidth
  \[
    \Infer[WL][Load]
    { \step
      {\ntc{\cmemread \vx \expr\sep\cmd}{\regmap}{\opt}{\st}{\mem}}
      {\ntc{\cmd}{\update{\regmap}{x}{\mem(\add)}}{\opt}{\st}{\mem}}
    }
    {
      \toAdd{\sem{\expr}_{\regmap, \lay}} = \add &
      \add \in \underline \lay(\Ar[\opt]) &
      \fbox{$\opt = \km[\syscall] \Rightarrow \add \in \underline \lay(\caps(\syscall))$}
    }
  \]
  \\[-3mm]
  \[
    \Infer[WL][Load-Error]
    {\step
      {\ntc{\cmemread \vx \expr\sep\cmd}{\regmap}{\opt}{\st}{\mem}}
      {\err }
    }
    {
      \toAdd{\sem{\expr}_{\regmap, \lay}} = \add &
      \add \not\in \underline \lay(\Ar[\opt])
    }
    \quad
    \Infer[WL][Load-Unsafe]
    {\step
      {\ntc{\cmemread \vx \expr\sep\cmd}{\regmap}{\km[\syscall]}{\st}{\mem}}
      {\unsafe}
    }
    {
      \toAdd{\sem{\expr}_{\regmap, \lay}} = \add &
      \add \in \underline \lay(\Ar[\km]) &
      \fbox{$\add \not\in \underline \lay(\caps(\syscall))$}
    }
  \]
  \\[-3mm]
  \[
    \Infer[WL][Store]
    { \step
      {\ntc{\cmemass \expr \exprtwo\sep\cmd}{\regmap}{\opt}{\st}{\mem}}
      {\ntc{\cmd}{\regmap}{\opt}{\st}{\update{\mem}{\add}{\sem \exprtwo_{\regmap, \lay}}}}
    }
    {
      \toAdd{\sem{\expr}_{\regmap, \lay}} = \add &
      \add \in \underline \lay(\Ar[\opt]) &
      \fbox{$\opt = \km[\syscall] \Rightarrow \add \in \underline \lay(\caps(\syscall))$}
    }
  \]
  \\[-3mm]
  \[
    \Infer[WL][Store-Error]
    {\step
      {\ntc{\cmemass \expr \exprtwo\sep\cmd}{\regmap}{\opt}{\st}{\mem}}
      {\err}
    }
    {\toAdd{\sem\expr_{\regmap, \lay}} = \add &
      \add \notin \underline \lay(\Ar[\opt])
    }
    \quad
    \Infer[WL][Store-Unsafe]
    {\step
      {\ntc{\cmemass \expr \exprtwo\sep\cmd}{\regmap}{\km[\syscall]}{\st}{\mem}}
      {\unsafe}
    }
    {
      \toAdd{\sem{\expr}_{\regmap, \lay}} = \add &
      \add \in \underline \lay(\Ar[\km]) &
      \fbox{$\add \not\in \underline \lay(\caps(\syscall))$}
    }
  \]
  \\[-3mm]
  \[
    \Infer[WL][Call]{
      \step
      {\ntc{\ccall{\expr}{\vec \exprtwo}\sep\cmd}{\regmap}{\opt}{\st}{\mem}}
      {
        \ntc
        {\mem(\add)}
        {\regmap_0[\vec \vx \upd \sem{\vec \exprtwo}_{\regmap,\lay}]}
        {\opt}
        {\frame{\cmd}{\regmap}{\opt} : \st}
        {\mem}
      }
    }
    {\toAdd{\sem{\expr}_{\regmap, \lay}}=\add &
      \add \in \underline \lay(\Fn[\opt]) &
      \fbox{$\opt = \km[\syscall] \Rightarrow \add \in \underline \lay(\caps(\syscall))$}
    }
  \]
  \\[-3mm]
  \[
    \Infer[WL][Call-Error]{
      \step
      {\ntc{\ccall{\expr}{\vec\exprtwo}\sep\cmd}{\regmap}{\opt}{\st}{\mem}}
      {\err}
    }
    {
      \toAdd{\sem{\expr}_{\regmap, \lay}} = \add &
      \add \not\in \underline \lay(\Fn[\opt])
    }
    \quad
    \Infer[WL][Call-Unsafe]{
      \step
      {\ntc{\ccall{\expr}{\vec \exprtwo}\sep\cmd}{\regmap}{\km[\syscall]}{\st}{\mem}}
      {\unsafe}
    }
    {
      \toAdd{\sem{\expr}_{\regmap, \lay}} = \add &
      \add \in \underline \lay(\Fn[\km]) &
      \fbox{$\add \not\in \underline \lay(\caps(\syscall))$}
    }
  \]
  \\[-3mm]
  \[
    \Infer[WL][SystemCall][\textsc{SC}]{
      \step
      {\ntc{\csyscall{\syscall}{\vec \exprtwo}\sep\cmd}{\regmap}{\opt}{\st}{\mem}}
      {
        \ntc
        {\syss(\syscall)}
        {\update {\regmap_0}{\vec \vx}{\sem{\vec\exprtwo}_{\regmap,\lay}}}
        {b'}
        {\frame{\cmd}{\regmap}{\opt} : \st}
        {\mem}
      }
    }
    {
      b = \um \Rightarrow b' = \km[\syscall] &
      b = \km[\syscalltwo] \Rightarrow b' = \km[\syscalltwo] 
    }
  \]
    \caption{Semantics w.r.t. system $\system=(\rfs,\syss,\caps)$, second part.}
  \label{fig:stepexcerpt2}
\end{figure*}}\fi

\paragraph{Small step operational semantics} Transitions in our semantics take the form
\[
  \step{\confone}{\conftwo},
\]
indicating that, w.r.t. system $\system$, configuration $\confone$
reduces to $\conftwo$ in one step, under layout $\lay$.
\ifdefined\conference{The most important reduction rules are defined
    in \Cref{fig:stepexcerpt}}\fi\ifdefined\arxiv{The reduction rules
    are defined in \Cref{fig:stepexcerpt1,fig:stepexcerpt2}}\fi.
%
Rule~\ref{WL:Load} implements a successful memory load $\cmemread \vx \expr$.
Expression $\expr$ is evaluated to an address $\add = \toAdd{\sem{\expr}_{\regmap, \lay}}$,
and the content of the register $\vx$ is updated with the value $\mem(\add)$. 
The side-condition $\add \in \underline \lay(\Ar[\opt])$ enforces
that $\add$ refers to a value accessible in the current execution mode $\opt$ (by slight abuse of notation, we disregard
the system call label in kernel-mode),
otherwise the instruction leads to $\err$ (see Rule~\ref{WL:Load-Error}).
As such, we are modeling unprivileged execution and SMAP protection, preventing
the access of kernel-space addresses when in user-mode, and vice versa.
The final, boxed, side-condition refers to the safety instrumentation.
In kernel-mode, triggered by system call $\syscall$ ($\opt =\km[\syscall]$), the rule
ensures that $\add$ refers to an object within the capabilities of $\syscall$ ($\add \in \underline w(\caps(\syscall))$).
When this condition is violated, unsafe execution is signaled (see Rule~\ref{WL:Load-Unsafe}).
In a similar fashion, the rules for memory writes and procedure calls
are defined.


Rule~\ref{WL:Call} deals with procedure calls.  It opens a new frame
and, places the $n$ evaluated arguments
$\exprtwo_1, \ldots, \exprtwo_n=\vec\exprtwo$ at registers
$\vx_1,\dots,\vx_n$ in an initial register environment $\regmap_0$,
summarized by the notation
$\regmap_0[\vec \vx \upd \sem{\vec \exprtwo}_{\regmap,\lay}]$. Notice
that our choice does not exclude stack-based inter-procedural
communication from our model: procedures can use a dedicated array as
stack, and pass the stack and return pointers as arguments.  System
calls, modeled by Rule~\ref{WL:SystemCall}, follow the same calling
convention.  Note that, in the newly created frame, the execution flag
is set to kernel-mode.  Once a procedure or system call finished
evaluation, Rule~\ref{WL:Pop} removes the introduced frame from the
stack. Observe how the rule permits return values through a designated
register $\ret$.  The remaining rules are standard.

Let us denote by $\lay \red \confone \to^* \conftwo$ that configuration $\confone$ reduces in zero or more steps to configuration $\conftwo$,
and by $\diverge \confone$ that $\confone$ \emph{diverges}.
In our semantics, under layout $\lay$, any non-diverging computation either
halts in a terminal configuration of the form $\conf{\frame{\cnil}{\regmap}{\opt}, \lay \lcomp \store'}$,
or abnormally terminates through an error $\err$, or safety violation $\unsafe$.
This motivates the following definition of an evaluation function:

\[
  \Eval {\cmd,\regmap,\opt,\store} \defsym
  \begin{cases}
    (\val,\store')
    & \!\!\!\text{if } \lay \calign{\red \conf{\frame{\cmd}{\regmap}{\opt},\lay \lcomp \store} \to^* \conf{\frame{\cnil}{\regmap'[\ret \mapsto \val]}{\opt}, \lay \lcomp \store'},} \\
    \err
    & \!\!\!\text{if } \lay \red \conf{\frame{\cmd}{\regmap}{\opt},\lay \lcomp \store} \to^* \err, \\
    \unsafe
    & \!\!\!\text{if } \lay \red \conf{\frame{\cmd}{\regmap}{\opt},\lay \lcomp \store} \to^* \unsafe\\
      \Omega
    & \!\!\!\text{if $\diverge{\conf{\frame{\cmd}{\regmap}{\opt},\lay \lcomp \store}}$}.
  \end{cases}
\]

Note how, in the case of normal termination, a computation produces a pair of a return value and a store.


\section{Threat Model}
\label{sec:threatmodel}
In our threat model, attackers are unprivileged user-space
programs that execute on a machine supporting two
privilege rings: user-mode and kernel-mode.
The victim is the host operating system which
runs in kernel mode and has exclusive access to its private memory.
In particular, the operating system exposes a set of procedures,
the system calls, that can be invoked by the attacker and
that have access to kernel's memory. The attacker's goal is
to trigger a system call to perform an unsafe memory
access or control flow transfer.

In \Cref{sec:safety1}, attackers are ordinary programs that do not
control speculative execution and do not have access to side-channel
info-leaks. However, the target machine has standard kernel protection
measures enabled.  In particular, these measures include \emph{Data
  Execution Protection} mechanisms (DEP), \emph{Supervisor Mode Access
  Prevention} (SMAP)~\cite{LWNSMAP} preventing kernel-mode access to
user-space data, and \emph{Supervisor Mode Execution Prevention}
(SMEP)~\cite{SMEP} preventing the execution of user-space functions
when running in kernel-mode.  More precisely, the above-mentioned
protection mechanisms are modeled in our semantics by the
preconditions of Rules \ref{WL:Call}, \ref{WL:Load},
\ref{WL:Store} that prevent the system from: (i) loading and
overwriting procedures, (ii) execute values, (iii) accessing
user-space data and procedures when the system is in kernel-mode.
Although we assume that the victim's code remains immutable, our
results naturally extend to scenarios where a victim may load new
code. Indeed, for any victim that introduces new vulnerable code at
runtime, we can consider the victim that has already loaded that code
and is at least as vulnerable as the first one. 
\diff{We also assume that the victim hardware supports
  Intel${}^{\text{\textregistered}}$\ \emph{Indirect Branch Tracking}
  (IBT)~\cite{IntelManual}, to restrict control flow transfer only to
  specific program points, typically the beginning of a procedure.
  This is modeled by restricting the target of indirect call
  instructions to the beginning of procedures only.  } Finally, the
system adopts kernel address space layout randomization, that is
modeled by executing programs with a randomly chosen memory layout.

\diff{In \Cref{sec:safety2}, we consider a stronger threat model where
  attackers have access to side-channel observations and control
  \emph{Pattern History Table} (PHT), \emph{Branch Target Buffer}
  (BTB) predictions and \emph{Store To Load} (STL) forwarding, related
  to Spectre v1, v2 and v4 vulnerabilities
  respectively~\cite{Spectre}}.  In addition, we assume that the
system supports \emph{Page Table Isolation} (PTI)~\cite{PTI} to
prevent speculative accesses to kernel-space memory by user-space
program; this is modeled by using the same preconditions of Rules
\ref{WL:Call}, \ref{WL:Load}, \ref{WL:Store} for their speculative
counterparts in \Cref{sec:internalsemantics}. We also assume that
the victim's machine supports a specific class of indirect branch
instructions where the user cannot influence the jump target
prediction---this result can be achieved by using retpoline when it is
fully effective~\cite{IntelBHIDISS} or by using
Intel${}^{\text{\textregistered}}$\ enhanced Indirect Branch
Restricted Speculation (eIBRS)~\cite{eIBRS} and
\texttt{BHI\_DIS\_S}~\cite{IntelBHIDISS}.
This is modeled by adding an instruction for speculatively safe jumps
to the victim's language in \Cref{sec:sksenforcement}.

Finally, we assume that the victim machine does not support return
address speculation (exploited by attacks such as
Retbleed~\cite{Retbleed}), and we model it by not allowing any form of
return target speculation.  Phantom
speculation~\cite{Phantom,Inception} is out of scope for this work and
the correspondent speculation mechanism is not modeled.



\section{Kernel Safety in the Classic Threat Model}
\label{sec:safety1}

In this section we show how the result of Abadi et. al. \cite{Abadi,Abadi2,Abadi3} scales
to the model introduced in Section~\ref{sec:language}.
The safety property that we aim at is defined in terms of our instrumented semantics, as follows:

\begin{definition}[Kernel safety]\label{def:ks}
  We say that a system $\system = (\rfs,\syss,\caps)$ is \emph{kernel safe},
  if for every layout $\lay$, 
  \emph{unprivileged} attacker $\cmd \in \Cmd$, 
  and registers $\regmap$, we have: 
  \[
    \lnot \left(\lay \red \conf{\frame{\cmd}{\regmap}{\um}, \lay \lcomp \rfs} \to^* \unsafe \right).
  \]
\end{definition}
\def\ex{\cc[v,s1,s2,f,s]}
Thus, safety is broken if an attacker $\cmd$, executing in unprivileged user mode, is able to trigger a system call
in such a way that it accesses, or invokes, a kernel-space object outside its capabilities.
The source of such a safety violation can be twofold:
\begin{varenumerate}
  \item\label{issue1} \textbf{Scope extrusion.}
  An obvious reason why kernel-safety may fail is due to apparent
  communication channels, specifically through the memory and procedure returns.
  As an example, consider a system $\system =(\rfs, \syss, \caps)$,
  where:
  \begin{align*}
    \syss(\syscall_1) &\defsym \cmemass \ar \fn & \syss(\syscall_2) &\defsym \cmemread \vx \ar ;\ccall \vx {}   & \caps (\syscall_1)=\caps(\syscall_2)=\{\ar\}
  \end{align*}
  A malicious program can use $\syscall_1$ to store the address of
$\fn$ at $\ar[0]$, which is a shared capability. A consecutive call
to $\syscall_2$ then breaks safety if $\fn$ is not within the capabilities of $\syscall_2$.
\item\label{issue2} \textbf{Probing.} Another counterexample is given
  by a system call accessing memory based on its input, such as the
  system call that only contains the instruction $\ccall {\vx_1}{}$,
  which directly invokes the procedure stored at the kernel-address
  corresponding to the value of its first argument $\vx_1$.  This
  system call can potentially be used as a gadget to invoke an
  arbitrary kernel-space procedure from user-space. Since an attacker
  lacks knowledge of the kernel-space layout, such an invocation needs
  to happen effectively through probing.  As any probe of an unused
  memory address leads to an unrecoverable error,\footnote{%
    This is not always the case for \emph{user-space} software
    protected with layout randomization, as some programs (e.g. web
    servers) may automatically restart after a crash to ensure
    availability.  This behavior can be exploited by attackers to
    probe the entire memory space of the victim program, thus
    compromising the protection offered by layout
    randomization~\cite{ApacheAttack}.  } the likelihood of an unsafe
  memory access is, albeit not zero, diminishingly small when the
  address-space is reasonably large.
\end{varenumerate}
To overcome Issue~\ref{issue1}, we impose a form of (layout) \emph{non-interference}
on system calls.
\begin{definition}[Layout non-interference]
  \label{def:lni}
  Given $\system=(\rfs,\syss,\caps)$,
  a system call $\syscall$ is \emph{layout non-interfering}, if,
  \[
    \Eval[\system][\lay_1]{\syss(\syscall), \regmap, \km[\syscall], \store'}
    \evaleq
    \Eval[\system][\lay_2]{\syss(\syscall), \regmap, \km[\syscall], \store'}
  \]
  for all layouts $\lay_{1},\lay_{2}$, registers $\regmap$ and stores $\store' \eqon{\Fn} \rfs$.
  Here, the equivalence $\evaleq$ extends equality with $\err\evaleq \unsafe$ and $\unsafe\evaleq \err$.
  The system $\system$ is non-interfering if all its system calls are.
\end{definition}
In effect, layout non-interfering systems do not expose layout information,
neither through the memory nor through return values.
In particular, observe how non-interference rules out Issue~\ref{issue1},
as witnessed by two layouts placing $\fn$ at different addresses in kernel-memory.

Concerning Issue~\ref{issue2}, it is well known that layout
randomization provides in general safety not in an absolute sense, but
\emph{probabilistically}~\cite{Abadi,DieHard,PaXASLR}.  Indeed, the
chance for a probe to be successful depends on the randomization
scheme.  Following \citet{Abadi}, let $\mu$ be a \emph{probability
  distribution} of layouts, i.e., a function $\mu : \Lay \to [0,1]$
assigning to each layout $\lay \in \Lay$ a probability $\mu(\lay)$
(where $\sum_{\lay \in \Lay} \mu(\lay) = 1$).  Without loss of
generality, we assume that the layout of public, i.e. user-space,
addresses is fixed.  That is, we require for each $\lay_{1},\lay_{2}$
with non-zero probability in $\mu$, that $\lay_1(\id)=\lay_2(\id)$ for
all $\id \in \Idu$.  For a distribution of layouts $\mu$ and a system
$\system = (\rfs, \syss, \caps)$, the value of $\delta_{\mu, \system}$
quantifies the smallest probability that a probe for an address
$\add \in \Addk$ fails.  To formally define $\delta_{\mu, \system}$, given a
system call $\syscall$ we denote with
$\id_1^\syscall, \dots,\id_k^\syscall$ the enumeration of its
references $\refs_\system(\syss(\syscall))\setminus \Sys$.  The value $\delta_{\mu, \system}$
can then be defined as follows:
\begin{multline*}
  \delta_{\mu, \system} \defsym
  \min \bigl \{
  \displaystyle{\Pr_{\lay\leftarrow \mu}}
  [ \add \notin \underline\lay(\Idk) \mid \lay(\id_i^\syscall)=p_i,
  \text{ for } 1\le i \le h]
  \mid \syscall \in \Sys, p,p_1,\dots,p_h \in \Addk \land {}\\
  \phantom{{} \mid} p \notin \{p_i, \ldots, p_i +\size{\id_i^\syscall}-1\}, \text{ for } 1\le i \le h
  \bigr \}.
\end{multline*}
In practice, $\delta_{\mu, \system}$ bounds the probability that,
during the execution of a system call $\syscall$, a fixed kernel
address $\add$ is not allocated, given that it does not store any
object that is in the references of that system call. Notably, if an
attacker controls the value of a kernel address $\add$,
$\delta_{\mu, \system}$ is a lower bound to the probability that its
probe for $\add$ does not hit any memory content.
This property
is reflected by the cases \ref{case:lemma12} and \ref{case:lemma2b} of
\Cref{lemma:onsyscall,lemma:onsyscallterm} below.  More precisely,
\Cref{lemma:onsyscall} proves this property for fixed-length reductions,
and \Cref{lemma:onsyscallterm} lifts it to full evaluations.
By considering the
complementary event, $\delta_{\mu, \system}$ gives an upper bound to
the probability of performing a safety violation in the presence of
layout randomization, as expressed by \Cref{thm:scenario1} below.

We now give the formal statements of
\Cref{lemma:onsyscall,lemma:onsyscallterm} together with their proofs.
To this aim, we extend $\refs_\sigma$ to frame stacks as follows:
\begin{align*}
  \refs_\sigma(\nil) &\defsym \emptyset&
  \refs_\sigma(f\cons \st) &\defsym \refs_\sigma(f) \cup \refs_\sigma(\st) &
  \refs_\sigma(\frame \cmd \regmap \opt) &\defsym \refs_\sigma(\cmd).
\end{align*}

\begin{restatable}{lemma}{onsyscall}
  \label{lemma:onsyscall}
  Let $\syscall$ be a system call of a \emph{layout non-interfering} system $\system=(\rfs, \syss, \caps)$, and let $\refs_\system(\syss(\syscall))\setminus \Sys=\{\id_1, \dots, \id_h\}$ be identifiers within the references of $\syscall$. Given a sequence of addresses $\add_1, \dots, \add_h$, an initial frame $\frame{\syss(\syscall)}{\regmap}{\km[\syscall]}$, and a store $\rfs'$ such that $\rfs'\eqon{\Fn}\rfs$,
  for every reduction length $\nat\in \Nat$ and distribution of layouts $\mu$, one of the following statements holds:
  
  \begin{enumerate}[label={(\arabic*)}]
    \item For every layout $\lay$ such that $\forall 1\le i\le h$, $\lay(\id_i)=\add_i$, we have \(
      \nstep{\nat}
      {\conf{\frame{\syss(\syscall)}{\regmap}{\km[\syscall]}, \lay \lcomp \rfs'}}
      {\conf{\overline \st, \lay \lcomp \overline \rfs}}
      \)
      for some non-empty stack $\overline \st$
      such that
      $\refs_\system(\overline\st)\subseteq \refs_\system(\syss(\syscall))$, and a store
      $\overline \rfs\eqon{\Fn}\rfs'$.
    \item\label{case:lemma12}
    \(
      \Pr_{\lay \leftarrow \mu}\Big[ \exists \nat' \le \nat. \nstep {\nat'} {\conf{\frame{\syss(\syscall)}{\regmap}{\km[\syscall]}, \lay\lcomp \rfs'}}\err \,\,\Big|\,\,
      \forall 1\le i\le h.\lay(\id_i)=\add_i\Big]\ge \delta_{\mu, \system}.
    \)

  \item For every layout $\lay$ such that $\forall 1\le i\le h$, $\lay(\id_i)=\add_i$, we have   \(
      \nstep{\nat'}
      {\conf{\frame{\syss(\syscall)}{\regmap}{\km[\syscall]}, \lay \lcomp \rfs'}}
      {\conf{\frame{\cnil}{\overline\regmap}{\km[\syscall]}, \lay \lcomp \overline \rfs}}
      \)
      for some $\nat' \le\nat$,  $\overline \regmap$ and
    store $\overline \rfs\eqon{\Fn}\rfs'$.
  \end{enumerate}
\end{restatable}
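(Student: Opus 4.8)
The plan is to argue by induction on the reduction length $\nat$, proving the trichotomy together with a strengthening I will need to close the induction: in cases~(1) and~(3) the resulting stack $\overline\st$ (resp.\ register map $\overline\regmap$) and store $\overline\rfs$ do \emph{not} depend on which admissible layout is chosen, i.e.\ they are the same for every $\lay$ with $\lay(\id_i)=\add_i$ for all $i$. This layout-independence is the engine of the whole argument: since every identifier referenced by $\syscall$ is pinned to a fixed address, any address $\toAdd{\sem\expr_{\regmap,\lay}}$ or value $\sem\exprtwo_{\regmap,\lay}$ computed from the current registers, the store, and the identifiers occurring in the executing code is identical under all admissible layouts. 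Hence execution proceeds step-for-step identically across layouts, and can only start to diverge when it dereferences an address lying \emph{outside} all pinned blocks.

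For the base case $\nat=0$, case~(1) holds trivially: the empty reduction stays at the initial configuration, whose single frame is non-empty, satisfies $\refs_\system(\overline\st)=\refs_\system(\syss(\syscall))$, and carries the store $\rfs'$ itself. For the inductive step I apply the induction hypothesis at length $\nat$. Cases~(2) and~(3) are absorbing and lift immediately to $\nat+1$: reaching $\err$ within $\le\nat$ steps, resp.\ terminating within $\nat'\le\nat$ steps, both hold a fortiori for $\nat+1$. The only real work is in case~(1). There I take the configuration $\conf{\overline\st,\lay\lcomp\overline\rfs}$ reached after $\nat$ steps (fixed across admissible $\lay$, by the strengthening). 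If its top frame is $\frame{\cnil}{\overline\regmap}{\km[\syscall]}$ and the stack has height one, it is terminal and case~(3) holds; otherwise I perform one further step by case analysis on the leading instruction.

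The analysis splits into a \emph{uniform} regime and a \emph{probing} regime. The administrative and non-dereferencing instructions (skip, register assignment, conditional, loop, frame return, and nested system calls) reduce identically under every admissible layout; here I keep case~(1) by checking that references do not grow. For branching and assignment this is immediate since the identifiers involved already lie in $\ids(\overline\st)$; for nested system calls and for successful procedure calls I invoke closure properties~(b) and~(c) of $\refs_\system$ to absorb the callee's identifiers into $\refs_\system(\syss(\syscall))$ (also using that the mode stays $\km[\syscall]$ throughout, by the SystemCall rule). For loads, stores, and calls, the target $\add=\toAdd{\sem\expr_{\regmap,\lay}}$ is layout-independent, so the outcome depends only on where $\lay$ places objects relative to the fixed $\add$. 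If $\add$ falls inside a pinned block, the access succeeds uniformly --- the capability check passes because every reference of $\syscall$ is a capability of $\syscall$ --- and I stay in case~(1), reading or writing a fixed value so that $\overline\rfs$ remains layout-independent and $\eqon{\Fn}\rfs'$. If $\add$ is a user-space address, or hits a pinned block of the wrong sort (e.g.\ a call into array data, or a load at a procedure address), the step yields $\err$ uniformly, so case~(2) holds with conditional probability $1\ge\delta_{\mu,\system}$.

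The crux, and the step I expect to be hardest to state cleanly, is the genuine probing regime: $\add\in\Addk$ but outside every pinned block. Here behavior is no longer uniform --- depending on $\lay$ the address may be unoccupied (giving $\err$), occupied by an in-capability object (the access continues), or occupied by an out-of-capability object (giving $\unsafe$) --- so I can no longer maintain case~(1). The key observation is that I need not: it suffices to lower-bound the probability of the $\err$ branch. Since $\add$ misses all referenced identifiers and lies in kernel space, it satisfies exactly the side condition in the definition of $\delta_{\mu,\system}$, whence $\Pr_{\lay\leftarrow\mu}[\add\notin\underline\lay(\Idk)\mid \lay(\id_i)=\add_i]\ge\delta_{\mu,\system}$; and for each such layout the access hits no object and reduces to $\err$ at step $\nat+1$ (using that, by the induction hypothesis, all admissible layouts reach the probing configuration uniformly in exactly $\nat$ steps). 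This establishes case~(2) for $\nat+1$. I would finally note that the standing non-interference hypothesis is not what drives this step analysis --- confinement within the system call is guaranteed by the $\refs_\system$-containment invariant of case~(1) --- so the two delicate points are really the layout-independence strengthening and matching the probing event precisely to the definition of $\delta_{\mu,\system}$.
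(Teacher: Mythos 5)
Your proposal is correct and follows essentially the same route as the paper's proof: induction on $\nat$ with the (implicitly assumed, here usefully made explicit) layout-independence of the reached stack and store, case analysis on the leading instruction, a uniform regime handled via the closure properties of $\refs_\system$ and the containment $\refs_\system(\syss(\syscall))\setminus\Sys\subseteq\caps(\syscall)$, and a probing regime where the event $\add\notin\underline\lay(\Idk)$ is matched to the defining condition of $\delta_{\mu,\system}$. Your side remark that the layout non-interference hypothesis is not actually used in this step analysis is also consistent with the paper, which only invokes it in the follow-up \Cref{lemma:onsyscallterm}.
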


\begin{proof}[Proof sketch of \Cref{lemma:onsyscall}]
  The proof proceeds by induction on $\nat$ and case analysis on the transition rules. In the base case, it is trivial to establish claim (A). In the inductive case, by applying the IH to the initial configuration. Three cases arise:
  \begin{proofcases}
    \proofcase{A} Let $\system$ be a system, $\syscall$ a system call
    and consider the initial configuration
    $ {\conf{\frame{\syss(\syscall)}{\regmap}{\km[\syscall]}, \lay
        \lcomp \rfs'}}$.  Suppose
    $\refs_\system(\syss(\syscall))\setminus \Sys= \{\id_1, \dots,
    \id_h\}$ and fix addresses $\add_1, \dots, \add_h$.  By the IH,
    there exists a stack $\st$ and a
    store $\rfs'' \eqon{\Fn} \rfs$ satisfying the following property:
    \[
      \forall \lay. \left( \forall 1 \leq i \leq h, \lay(\id_i) = \add_i\right)  \Rightarrow
      \nstep{\nat}
      {\conf{\frame{\syss(\syscall)}{\regmap}{\km[\syscall]}, \lay \lcomp \rfs'}}
      {\conf{\st, \lay \lcomp \rfs''}}.
    \]
    Since $\st$ is non-empty, we can assume that
    $\st = \frame \cmd {\regmap'} {\km[\syscall]}:\st'$.  Furthermore,
    from the IH, we know that
    $\refs_\system(\st)\subseteq \refs_\system(\syss(\syscall))$
    (H). The proof proceeds with a case analysis on $\cmd$. We only
    focus on the representative case of procedure invocation.
    
    \begin{proofcases}
      \proofcase{$\ccall \expr {\exprtwo_1, \dots, \exprtwo_k}\sep
        \cmdtwo$} We start by observing that by (H) there exists a
      unique address $\add$ such that for every $\lay$ that satisfies
      the precondition (that stores the references of $\syscall$ at
      addresses $p_1, \ldots, p_h$), we have
      $\toAdd{\sem \expr_{\regmap,\lay}}=\add$. Similarly, we
      introduce the values $\val_1, \dots,\val_k$, which correspond to
      the semantics of $\exprtwo_1,\dots,\exprtwo_k$ evaluated under
      $\regmap$ and every layout that satisfies the precondition.
      Again for the same reason, we observe that there exists a set
      $P$ such that for each layout under consideration, it holds that
      $\underline{\lay} (\refs_\system(\syss(\syscall))\setminus
      \Sys)=P$. The proof proceeds by cases on whether $\add \in P$.
      \begin{proofcases}
        \proofcase{$\add\in P$} In this case, there is a unique identifier $\id_j$ such that for every layout $\lay$ that satisfies the precondition, we have $\add \in \underline{\lay}(\id_j)$. We analyze two cases based on whether $\id_j$ is a function identifier $ \fn$.
        \begin{proofcases}
          \proofcase{$\id_j= \fn$} In this case, from the definition
          of $\lcomp$, we deduce that for each of these layouts we
          have $\lay\lcomp \rfs''(\add) = \rfs''(\fn) = \rfs(\fn)$,
          where the last step follows from the assumption
          $\rfs''\eqon{\Fn}\rfs$.  Since $\add \in P$, we conclude
          that, independently of the specific layout, if the
          preconditions hold, then:
          \begin{equation*}
            \step
            {\conf{\frame{\ccall  \expr {\exprtwo_1, \dots, \exprtwo_k}\sep \cmdtwo}{\regmap}{\km[\syscall]}:\st', \lay\lcomp \rfs''}}{}\\
            {\conf{\frame{\rfs(\fn)}{\regmap_0'}{\km[\syscall]}:\frame{\cmdtwo}{\regmap}{\km[\syscall]}:\st', \lay\lcomp \rfs''}},
          \end{equation*}
          where $\regmap_0'= \regmap_0[\vx_1,\dots,\vx_k\upd \val_1,\dots,\val_k]$.
          Finally, we observe that, by the definition of
          $\refs_\system$, $\refs_\system(\syss(\syscall))$
          contains all the identifiers within
          $\rfs(\fn) = \rfs(\id_j)$ because $\id_j \in \refs_\system(\syss(\syscall))$ and
          $\refs$ is closed under procedure calls.
          This shows that (A) holds.
          \proofcase{$\id_j =\ar$} In this case, since the set of
          array identifiers and that of functions are disjoint, we
          conclude that for every layout $\lay$ that satisfies the
          preconditions, we have that
          $\add \notin \underline \lay(\Fn[{\km}])$.  This means that
          for each of these layouts, we can show:
          \begin{equation*}
            \step
            {\conf{\frame{\ccall  \expr {\exprtwo_1, \dots, \exprtwo_k}\sep \cmdtwo}{\regmap}{\km[\syscall]}:\st', \lay\lcomp \rfs''}}
            {\err},
          \end{equation*}
          and this means that (B) holds with probability 1.
        \end{proofcases}
        \proofcase{$\add\notin P$}
        %
        Observe that, for every layout $\lay$ such that
        $\add \notin \underline{\lay}(\Idk)$,
        only rule \ref{WL:Call-Error} applies, which shows the following transition:
        \begin{equation*}
          \step
          {\conf{\frame{\ccall  \expr {\exprtwo_1, \dots, \exprtwo_k}\sep \cmdtwo}{\regmap}{\km[\syscall]}:\st', \lay\lcomp \rfs''}}
          {\err},
        \end{equation*}
        Thus, we observe that:
        \begin{equation*}
          \Pr_{\lay \leftarrow\mu}\big[
          \step
          {\conf{\frame{\ccall  \expr {\exprtwo_1, \dots, \exprtwo_k}\sep \cmdtwo}{\regmap}{\km[\syscall]}:\st', \lay\lcomp \rfs''}}
          \err\,\, \big|\\
          \forall 1\le i\le h.\lay(\id_i)=\add_i\big]
        \end{equation*}
        is greater than
        \[
          \Pr_{\lay \leftarrow\mu}\big[ \add \notin \underline\lay(\Idk) \mid
          \forall 1\le i\le h.\lay(\id_i)=\add_i\big]
        \]
        which, by definition, is greater than $\delta_{\mu}$.  
        This shows that (B) holds.
      \end{proofcases}
    \end{proofcases}
  \end{proofcases}
\end{proof}

Observe that, in the proof of \Cref{lemma:onsyscall}, the case where
$\add \notin P$ also covers situations where memory is accessed via a
\emph{raw reference} during a system call, such as when dereferencing
a constant pointer. In this scenario, \Cref{lemma:onsyscall}
establishes claim (B), emphasizing that such practices should be
avoided in kernel code, as they are highly likely to result in memory
violations.

\begin{restatable}{lemma}{onsyscallterm}
  \label{lemma:onsyscallterm}
    Let $\syscall$ be a system call of a \emph{layout non-interfering} system $\system=(\rfs, \syss, \caps)$. Given an initial frame $\frame{\syss(\syscall)}{\regmap}{\km[\syscall]}$, and a store $\rfs'$ such that $\rfs'\eqon{\Fn}\rfs$,
  for every distribution of layouts $\mu$, one of the following statements holds:

  
  \begin{enumerate}[label={(\Alph*)}]
    \item For every layout $\lay$, we have
    \(
      \Eval{\syss(\syscall), \regmap, \rfs', \km[\syscall]} = (\overline \rfs, \overline v),
    \)
    for some $\overline \val$ and $\overline\rfs\eqon{\Fn}\rfs$.
    \item\label{case:lemma2b}
    \(
      \Pr_{\lay\leftarrow \mu}\left[\Eval{\syss(\syscall), \regmap, \km[\syscall], \rfs'} = \err\right] \ge \delta_{\mu, \system}.
    \)

    \item For every layout $\lay$, we have
    \(
      \Eval{\syss(\syscall), \regmap, \km[\syscall], \rfs'} = \Omega.
    \)
  \end{enumerate}
\end{restatable}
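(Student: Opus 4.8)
The plan is to reduce the full-evaluation trichotomy to the step-indexed one of \Cref{lemma:onsyscall}, using layout non-interference to force the \emph{kind} of outcome (normal termination, abnormal termination, or divergence) to be uniform across all layouts. First I would observe that, since $\system$ is layout non-interfering and $\rfs' \eqon{\Fn} \rfs$, \Cref{def:lni} gives $\Eval[\system][\lay_1]{\syss(\syscall),\regmap,\km[\syscall],\rfs'} \evaleq \Eval[\system][\lay_2]{\syss(\syscall),\regmap,\km[\syscall],\rfs'}$ for all $\lay_1,\lay_2$. Because $\evaleq$ only identifies $\err$ with $\unsafe$ and is otherwise equality, $\Omega$ is related solely to itself and a normal pair $(\val,\store)$ solely to the identical pair. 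Hence exactly one of three mutually exclusive situations occurs: (i)~every layout diverges; (ii)~every layout terminates normally, with one and the same $(\val,\store)$; or (iii)~every layout terminates abnormally, each reaching either $\err$ or $\unsafe$. These will yield cases (C), (A) and (B) of the statement, respectively.

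The two outer situations are immediate. Situation (i) is literally case (C). In situation (ii) all layouts yield the same pair $(\val,\store)$, and it remains only to check $\store \eqon{\Fn} \rfs$; this holds because memory writes target array blocks only and never procedure addresses (the store rule requires the target to lie in $\underline\lay(\Ar[\opt])$), so the content at procedure addresses is preserved and $\store$ coincides with $\rfs'$, hence with $\rfs$, on $\Fn$. This is case (A).

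The interesting case is situation (iii), which must produce the bound of case (B), namely $\Pr_{\lay\leftarrow\mu}[\Eval{\syss(\syscall),\regmap,\km[\syscall],\rfs'}=\err] \ge \delta_{\mu,\system}$. Here I would partition the probability space of layouts according to the positions $\add_1,\dots,\add_h$ assigned to the references $\refs_\system(\syss(\syscall))\setminus\Sys = \{\id_1,\dots,\id_h\}$. Fixing one valid slice, I apply \Cref{lemma:onsyscall} to $\add_1,\dots,\add_h$ at varying reduction lengths $\nat$. Its third alternative (normal termination within $\nat$) is excluded, since by situation (iii) the consistent layouts — which exist, as valid slices are nonempty — do not terminate normally. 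Its first alternative cannot hold for every $\nat$ either, because a computation for which it holds at all lengths would admit a reduction of every length and therefore diverge, contradicting that those layouts terminate abnormally in finitely many steps. Thus the first alternative fails at some length $\nat_0$, and since the third is globally impossible, the second alternative holds at $\nat_0$, giving $\Pr_{\lay\leftarrow\mu}[\exists \nat'\le\nat_0.\ \text{reach } \err \mid \lay(\id_i)=\add_i] \ge \delta_{\mu,\system}$. As $\err$ is terminal, reaching it coincides with the event $\Eval{\dots}=\err$, so this conditional bound holds on every slice; the law of total probability then lifts it to the unconditional bound, since the valid slices partition the layout space (every layout places the references at non-overlapping positions).

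The main obstacle is precisely this last bridging step from the per-length, per-slice trichotomy of \Cref{lemma:onsyscall} to a single statement about the whole evaluation. Two points need care. First, non-interference is used not merely to fix the outcome type globally, but to guarantee that within a fixed slice no layout escapes into normal termination or divergence while others error, so that the same alternative of \Cref{lemma:onsyscall} is forced uniformly across the slice. Second, $\delta_{\mu,\system}$ is defined as a \emph{conditional} lower bound given the reference positions, so the unconditional bound of case (B) genuinely requires the slicing-plus-total-probability argument rather than a direct appeal. Ruling out the ``runs forever'' alternative by finiteness of abnormal termination, and confirming that no valid slice slips into the normal-termination alternative, are where I expect to spend the most effort.
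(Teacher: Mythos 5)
Your proof is correct and follows essentially the same route as the paper's: a case split on the outcome type, made uniform across layouts by non-interference, with cases (A) and (C) immediate and case (B) obtained by slicing the layout space over the positions of the references, invoking conclusion (2) of \Cref{lemma:onsyscall}, and recombining via total probability. The only cosmetic difference is that the paper extracts a single uniform step bound from the finiteness of $\Lay$ before applying \Cref{lemma:onsyscall}, whereas you obtain a per-slice bound by ruling out its first alternative through a divergence contradiction; both are sound.
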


\begin{proof}[Proof Sketch of \Cref{lemma:onsyscallterm}]
  The proof is by case analysis: if for some layout $\lay$, $\Eval{\syss(\syscall), \regmap, \km[\syscall], \rfs'} \notin \{\err, \unsafe\}$, claims (A) or (C) must hold because of layout non-interference. On the other hand, if all layouts lead to $\err$ or to $\unsafe$, due to the finiteness of $\Lay$, there exists an upper bound $t$ on the number of steps needed to reach a terminal configuration. 
  We call $\refs(\syss(\syscall))\setminus \Sys=\{\id_1, \dots, \id_h\}$ and apply \Cref{lemma:onsyscall} with $\nat=t$. Observe that for every choice of $p_1, \ldots, p_h \in \Addk$, we can refuse conclusions (1) and (3) of that lemma because they are contradictory with $\forall \lay. \Eval{\syss(\syscall), \regmap, \km[\syscall], \rfs'} \in \{\err, \unsafe\}$, so (2) must hold.
  Observe that:
  \[
    \Pr_{\lay\leftarrow \mu}\left[\Eval{\syss(\syscall), \regmap, \km[\syscall], \rfs'} = \err\right] = 
    \Pr_{\lay \leftarrow \mu}\Big[ \exists \nat' \le \nat. \nstep {\nat'} {\conf{\frame{\syss(\syscall)}{\regmap}{\km[\syscall]}, \lay\lcomp \rfs'}}\err].
  \]
  The right-hand-side, in turn, is equal to:
    \begin{multline*}
    \sum_{\add_1, \ldots, \add_k}\Pr_{\lay \leftarrow \mu}\Big[ \exists \nat' \le \nat. \nstep {\nat'} {\conf{\frame{\syss(\syscall)}{\regmap}{\km[\syscall]}, \lay\lcomp \rfs'}}\err \,\,\Big|\,\,
    \forall 1\le i\le h.\lay(\id_i)=\add_i\Big]\cdot\\
    \Pr_{\lay \leftarrow \mu}\Big[\forall 1\le i\le h.\lay(\id_i)=\add_i\Big]. 
  \end{multline*}
  Since all the factors on the left
  are bounded by $\delta_{\mu, \system}$ by conclusion (2), their
  convex combination is also bounded, proving (B).
\end{proof}

We arrive now at the main result of this section, where we show the effectiveness of
layout randomization:

\begin{theorem}
  \label{thm:scenario1}
  Let $\system=(\rfs, \syss, \caps)$ be \emph{layout non-interfering}.
  Then, for any \emph{unprivileged} attacker $\cmd \in \Cmd$ and register map $\regmap$,
  \(
    \Prob{\lay \leftarrow \mu}\left[\Eval{\cmd, \regmap, \um, \rfs} =\unsafe\right] \leq 1 - \delta_{\mu, \system}
  \).
\end{theorem}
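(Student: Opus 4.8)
The plan is to reduce the claim about an arbitrary attacker to the single–system-call analysis already carried out in \Cref{lemma:onsyscallterm}. The starting observation is that the outcome $\unsafe$ is produced only by the boxed premises of Rules \ref{WL:Load-Unsafe}, \ref{WL:Store-Unsafe} and \ref{WL:Call-Unsafe}, all of which require the current execution mode to be $\km[\syscall]$. Since the attacker $\cmd$ is unprivileged, it begins executing in user mode, and the only way to enter kernel mode is through Rule \ref{WL:SystemCall}. Hence every execution that reaches $\unsafe$ must do so inside the sub-computation of some system call, and I will pin the blame on the \emph{first} system call whose evaluation is not of the benign form given by \Cref{lemma:onsyscallterm}~(A).

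The key structural ingredient is a \emph{lockstep} (determinacy) argument showing that, up to this critical system call, the computation is independent of the layout. First, user-mode computation is layout-insensitive: an unprivileged attacker satisfies $\ids(\cmd)\subseteq\Idu$, so every expression it evaluates resolves only user-space identifiers, whose placement is fixed across all layouts with non-zero probability under $\mu$; moreover SMAP (the premise $\add\in\underline\lay(\Ar[\um])$ of Rules \ref{WL:Load} and \ref{WL:Store}) confines its memory accesses to user-space arrays, which agree across layouts. Second, whenever a system call falls under \Cref{lemma:onsyscallterm}~(A) it returns the \emph{same} value and the \emph{same} store (up to $\eqon{\Fn}$, which is preserved because user code never overwrites procedures by the W\^{}X policy and because case (A) preserves it) for every layout; after Rule \ref{WL:Pop} the computation therefore resumes in a layout-independent state. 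Here I also use a compositionality property of the semantics: since all rules except \ref{WL:Pop} inspect only the top frame, the internal behaviour of a system call (normal return, $\err$, or $\unsafe$) does not depend on the frames beneath it, so the isolated evaluation $\Eval{\syss(\syscall),\regmap,\km[\syscall],\rfs'}$ of \Cref{lemma:onsyscallterm} faithfully describes what happens when $\syscall$ is invoked on top of the running stack. Consequently all layouts proceed identically, reaching the same sequence of system calls with the same arguments and intermediate stores, until (if ever) the first system call that is not of type (A); this critical call, its entering register map $\regmap^\ast$ and its entering store $\rfs^\ast$ (with $\rfs^\ast\eqon{\Fn}\rfs$) are therefore the same for every layout.

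It remains to classify the abstract lockstep execution. If it never reaches a critical system call—because it terminates (possibly in an $\err$ raised in user mode) or diverges—then no layout ever reaches $\unsafe$, and the bound holds since $0\le 1-\delta_{\mu,\system}$. If the critical call is of type \Cref{lemma:onsyscallterm}~(C), it diverges under every layout, again producing no $\unsafe$. The only remaining possibility is a critical call of type \ref{case:lemma2b}~(B). Since this call is neither of type (A) nor (C), layout non-interference forces every layout to yield a result in the $\evaleq$-class containing $\err$, namely $\{\err,\unsafe\}$, and applying \Cref{lemma:onsyscallterm} to $\syscall$ with entering data $\regmap^\ast,\rfs^\ast$ gives
\[
  \Pr_{\lay\leftarrow\mu}\bigl[\Eval{\syss(\syscall),\regmap^\ast,\km[\syscall],\rfs^\ast}=\err\bigr]\ge\delta_{\mu,\system}.
\]
By the compositionality remark, the critical call's outcome determines the global outcome layout-by-layout ($\err$ propagating to a global $\err$ and $\unsafe$ to a global $\unsafe$), and since these two outcomes are exhaustive over the layouts,
\[
  \Pr_{\lay\leftarrow\mu}\bigl[\Eval{\cmd,\regmap,\um,\rfs}=\unsafe\bigr]
  =1-\Pr_{\lay\leftarrow\mu}\bigl[\Eval{\syss(\syscall),\regmap^\ast,\km[\syscall],\rfs^\ast}=\err\bigr]
  \le 1-\delta_{\mu,\system},
\]
as required. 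I expect the main obstacle to be making the lockstep and compositionality glue rigorous: precisely, proving that the store discrepancy between any two layouts stays confined to kernel-space memory that user code cannot observe, and that a system call's sub-computation can be excised from the surrounding frame stack. Once these are in place, the probabilistic bound follows immediately from \Cref{lemma:onsyscallterm}.
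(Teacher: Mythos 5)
Your proposal is correct and follows essentially the same route as the paper's (appendix) proof: the paper formalizes your ``lockstep'' argument as a Main Lemma proved by induction on reduction length (using finiteness of $\Lay$ to obtain a uniform step bound and a layout non-interference preservation lemma to show all layouts terminate abnormally together), your ``compositionality'' glue as a context-plugging lemma for frame stacks, and then discharges the critical system call exactly via \Cref{lemma:onsyscallterm}~(B). The only difference is organizational --- you pivot on the \emph{first} non-benign system call, while the paper carries the trichotomy through the induction --- but the content is the same.
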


\begin{proof}
  The $\unsafe$ state can only be reached during the execution of a
  system call. \Cref{lemma:onsyscallterm} provides a lower bound on
  the probability of reaching $\err$, and thereby an upper bound to
  the probability of reaching $\unsafe$.
\end{proof}

\Cref{thm:scenario1}
extends the results of \cite{Abadi,Abadi2,Abadi3} by showing
that layout randomization guarantees \emph{kernel safety} probabilistically to
operating systems; in contrast with \cite{Abadi,Abadi2,Abadi3}, this holds even
when victim's code contains unsafe programming constructs,
such as arbitrary pointer arithmetic and indirect jumps. This is achieved by
replacing \Citet{Abadi}'s restrictions
on the syntax of victims with a \emph{weaker} dynamic property: \emph{layout
  non-interference}. 
Notice that the strength of the security guarantee provided by
\Cref{thm:scenario1} depends on the distribution of the layouts $\mu$.
Therefore, in practice, it is important to determine a randomization
scheme that provides a good bound.  This can be done quite easily: for
instance, if we assume that
(i)~$\kappa_{\km} > \sum_{\id \in \Id[\km]} \size \id$ and that
(ii)~$\theta \defsym \max_{\id \in \Id[\km]} (\size \id)$ divides
$\kappa_{\km}$, we can think of the kernel space address range as
divided in $\frac {\kappa_{\km}} \theta$ slots, each one large enough
to store any procedure or array.  In this setting, we can define the
distribution $\nu$ as the uniform distribution of all the layouts that
store each \emph{memory object} within a \emph{slot} starting from the
beginning of that slot.  For this simple scheme, we can approximate
the bound $\delta_\nu$ as the ratio between unallocated slots and all
the slots that do not store any object that is in the reference of a
system call:
\begin{equation*}
  \label{eq:deltanu}
  \delta_\nu \ge 
  \min_{\syscall \in \Sys}\frac{\kappa_{\km}/\theta-|{\Idk}|} {\kappa_{\km}/\theta-|{\refs_\system(\syss(\syscall))\setminus \Sys}|}.
\end{equation*}
In particular, the fraction in the right-hand side is the probability that
by choosing a slot that is not storing any object referenced by $\syscall$,
we end up with a fully unoccupied slot.
Observe that this lower-bound approaches
$1$ when $\kappa_{\km}$ goes to infinity.

\paragraph{Relation of kernel safety with security} Kernel safety
encompasses some form of \emph{spatial memory safety} and of
\emph{control flow integrity}, which are among the most critical
security properties of operating systems' kernels. This importance is
reflected in the numerous measures developed to enforce such
properties~\cite{kCFI,RustInLinux,towardkernelsafety,HyperSafe,FineGrainedkCFI,CompProtKer}.
Often, definitions of \emph{spatial memory safety} associate a
software component (a program, an instruction, or even a variable)
with a fixed memory area, that this component can access
rightfully~\cite{HighAssurance,SoftBound,MSWasm,PierceMS}. In this
realm, any load or store operation that does not fall within this area
is considered a violation of \emph{spatial memory safety}.
Our notion of \emph{kernel safety} encompasses a form of \emph{spatial
  memory safety}: if a system enjoys \emph{kernel safety}, then no
system call can access a memory region that does not appear within its
\tcaps.
In addition, \emph{kernel safety} also implies a form of \emph{control
  flow integrity}: a property which requires that the control transfer
operations performed by a program can reach only specific statically
determined targets~\cite{CFI}.  More precisely, in our semantics it is
unsafe to execute a procedure if its address does not belong to the
set of \tcaps of the current system call. This means that, if a system
$(\rfs, \syss, \caps)$ is \emph{kernel safe}, when executing a system
call $\syscall$, the control flow will remain across the procedures in
$\caps(\syscall)$.

\paragraph{Final remarks}
Kernel-space layout randomization provides \emph{kernel safety} in a
probabilistic sense. In particular, the probability of violating
safety depends on the randomization scheme, and for certain
randomization schemes, it approaches zero when the size of the address
space goes to infinity.  Finally, \emph{kernel safety} provides
essential security guarantees by enforcing both \emph{spatial memory
  safety}---restricting memory accesses to authorized regions---and
\emph{control flow integrity}---ensuring that control flow transfers occur
only to permitted procedures.


\section{Kernel Safety in the Speculative Threat Model}
\label{sec:safety2}
In this section we study \emph{kernel safety} in the presence of speculative attackers.
To this aim, in \Cref{subsec:specmodel}, we extend the model of
\Cref{sec:language} for this new scenario. More precisely,
we endow the semantics of \Cref{sec:language}
with speculative execution and side-channel observations
that reveal the accessed addresses and the value of conditional branches~\cite{HighAssurance,CTFundations,Spectector,Contracts}.
In \Cref{subsec:specsafety},
we refine the notion of \emph{kernel safety} for this model,
by defining \emph{speculative kernel safety}.

\subsection{The Speculative Execution Model}
\label{subsec:specmodel}

A popular way to model speculative attacks is by annotating
transitions with \emph{directives} that describe the choices made by
microarchitectural prediction units~\cite{HighAssurance,CTFundations,
  Spectector,Contracts}.  For instance, PHT speculation can be modeled
by the directives $\dbranch \ctrue$ and $\dbranch \cfalse$ that, in
the presence of a branching command, instruct the processor to
speculatively execute the true and the false branch respectively. In
this realm, reductions are driven by sequences of directives governing
each transition. Therefore, stating the absence of a
speculative attack boils down to stating the absence of a sequence of
directives that steer the execution to a bad event. This also
means that attacks are never explicitly represented as computational
objects, but only as sequences of possibly unrelated directives.

Also in our model, we use directives to guide predictors. In addition,
our approach allows attackers to be explicitly represented as
fully-fledged program that has primitives for steering speculative
execution and perform side-channel observations.
This permits us to naturally extend the notion of \emph{kernel safety}
to the new scenario. Besides, we believe that modeling an attack explicitly can be
interesting on its own.  The feasibility of an attack is witnessed explicitly
through a program. In this setting, for instance, assumptions on
the attacker's computational capabilities can be imposed seamlessly. 



\DD{Relate with crypto's CPA property??}

We extend the victim's language and semantics by introducing
directives and observations in \Cref{sec:internalsemantics} below, and
then we define the attacker's language in
\Cref{sec:adversarialsemantics}.

\subsubsection{Victim Language and Semantics}
\label{sec:internalsemantics}

The victims' language remains identical to the classic model except that,
we assume that load, branch and call instructions are tagged by unique labels $\lbl \in \Lbl$
in order to model the ability of attackers to influence the speculative
execution of specific instructions.
For example, a tagged load operation looks like $\cmemread[\lbl] \vx \expr$, and can be targeted only with directives with label $\lbl$.
We show labels alongside with commands only when they are relevant---e.g. the command is being executed speculatively---and they cannot be inferred from the context.

The speculative semantics is instrumented through \emph{directives}, modeling the choice made by
prediction units of the processor. Directives take the form:
\begin{cbnf}
  \dir \ni \Dir
  & \dbranch[\lbl] \bool
  \bnfmid \djump[\lbl] \add
  \bnfmid \dload[\lbl]{i}
  \bnfmid \dbt
  \bnfmid \dstep,
\end{cbnf}
where $i \in \Nat$, $\add\in \Add$,  and $\bool \in \cBool$.
The $\dbranch[\lbl] \bool$ directive causes a branch
instruction to be evaluated as if the guard resolved to
$b$, modeling PHT speculation. \diff{The $\djump[\lbl] \add$ directive causes a call
  instruction to execute the procedure stored at address $\add$ (if any),
  modeling BTB speculation.}
The directive $\dload[\lbl]{i}$ causes the load instruction to load the
$i$-th most recent value that is associated to the load address from the
(buffered) memory, modeling STL speculation. The $\dbt$ directive
directs speculations, either backtracking the most recent
mis-speculation or committing the microarchitectural state.
Finally, the $\dstep$ directive evaluates instructions
without engaging into speculation, in correspondence to the semantics
given in \Cref{sec:language}.

Transitions are also labeled with observations to model timing side-channel leakage. Observations are drawn from the following grammar:
\begin{cbnf}
  \obs, \obstwo \ni \Obs
  & \onone
  \bnfmid \obranch b
  \bnfmid \omem \add
  \bnfmid \ojump \add
  \bnfmid \obt \bool,
\end{cbnf}
where $b\in \cBool$ and $\add \in \Add$.
We use $\onone$ to label transitions that do not leak specific data.
The $\obranch \bool$ observation is caused
by branching instructions, with $\bool$ reflecting the taken branch.
The $\omem \add$ observation is caused by memory access, through loads or stores,
and contains the address of the accessed location, thus modeling data-cache leaks.
Likewise, the $\ojump \add$ observation is caused by calls to
procedures residing at address $\add$ in memory, modeling
instruction-cache leaks.
Finally, the $\obt \bool$ observation signals a
backtracking step during speculative execution.
Notice that we leak full addresses on memory accesses, and
the value of the branching instructions, i.e. we adopt
the \emph{baseline leakage model} that is
widely employed in the literature to model side-channel
info-leaks~\cite{CTPolicies, HighAssurance,
  CTFundations, SLNonInt}.

\ifdefined\conference{\input{rules2}}\fi
\ifdefined\arxiv{
\begin{figure*}[t]
  \small
  \centering
    \vspace{0.5ex}
    
  \resizebox{\textwidth}{!}{\(
    \Infer[SIE][SLoad][\textsc{SLoad-Load}]
    {\sstep
      {\confone \cons\cfstack}
      {\sframe{\frame{\cmd}{\update \regmap x \val}{\opt}\cons\st}{\bmvar}{\boolms\lor f}\cons
        \confone \cons \cfstack}  {{{\dload[\lbl]{i}}}} {\omem \add}}
    {\confone = \sframe{\frame{\cmemread[\lbl] \vx  \expr\sep\cmd}{\regmap}{\opt}\cons \st} {\bmvar}{\boolms} &
      \toAdd{\sem\expr_{\regmap, \lay}} = \add &
      \bufread {\bmvar} \add i =(\val, f) &
      \add \in \underline \lay(\Ar[\opt]) &
      \fbox{$\opt = \km[\syscall] \Rightarrow \add \in \underline \lay(\caps(\syscall))$}
    }
    \)}

  \[
    \Infer[SIE][SLoad-Error][\textsc{SLoad-Error}]
    {
      \lay \red
      \sframe{\frame{\cmemread[\lbl] \vx \expr\sep\cmd}{\regmap}{\opt}\cons \st}{\bmvar}{\boolms}\cons\cfstack
      \sto{\dir}{\onone}
      \sconf{\err, \boolms}\cons\cfstack
    }{
      \toAdd{\sem\expr_{\regmap, \lay}} = \add &
      \add \notin \underline \lay(\Ar[\opt]) &
      \dir = \dstep \lor \dir = \dload[\lbl]{i} &
    }
  \]

  \[
    \Infer[SIE][SLoad-Unsafe][\textsc{SLoad-Unsafe}]
    {\sstep
      {\sframe{\frame{\cmemread[\lbl] \vx \expr\sep\cmd}{\regmap}{\km[\syscall]}\cons \st}{\bmvar}{\boolms}\cons\cfstack}
      {{\unsafe}}   {\dir} {\omem \add}}
    {\toAdd{\sem\expr_{\regmap, \lay}} = \add &
      \add \in \underline \lay(\Ar[\km]) &
      \dir = \dstep \lor \dir = \dload[\lbl]{i} &
      \fbox{$\add \notin \underline \lay(\caps(\syscall))$} 
    }
  \]

  \[
    \Infer[SIE][Load-Step][\textsc{SLoad-Step}]
    {
      \lay \red \sframe{\frame{\cmemread \vx \expr\sep\cmd}{\regmap}{\opt}\cons\st}{\bmvar}{\boolms}\cons\cfstack
      \sto{\dstep}{\omem \add} \sframe{\frame{\cmd}{\update \regmap x \val}{\opt} \cons \st}{\bmvar}{\boolms}\cons\cfstack
    }
    {\toAdd{\sem\expr_{\regmap, \lay}} = \add &
      \bufread {\bmvar} \add 0 =\val, \bot &
      \add \in \underline \lay(\Ar[\opt]) &
      \fbox{$\opt = \km[\syscall] \Rightarrow \add \in \underline \lay(\caps(\syscall))$}
    }
  \]    

  \[
    \Infer[SIE][If-Branch][\textsc{SIf-Branch}]{
      \lay \red
      \specconfone\cons\cfstack
      \sto{\dbranch[\lbl]{d}}{\obranch{d}}
      \sframe{\frame{\cmdtwo_{d}\sep\cmdtwo}{\regmap}{\opt}\cons\st}{\bmvar}{\boolms\lor d\neq\toBool{\sem \expr_{\regmap, \lay}}}
      \cons \specconfone \cons \cfstack
    }
    {\specconfone=\sframe {\frame{\cif[\lbl] \expr {\cmdtwo_\ctrue} {\cmdtwo_\cfalse}\sep\cmdtwo}{\regmap}{\opt}\cons\st}{\bmvar}{\boolms}}
  \]

  \[
    \Infer[SIE][Backtrack-Top][\textsc{Bt}_{\top}]{
      \lay \red \specconfone \cons\cfstack \sto{\dbt}{\obt \top} \cfstack
    }{
      \specconfone = \sframe{\st}{\bmvar}{\top} \lor \specconfone = \sconf{\err,\top}
    }
    \quad
    \Infer[SIE][Backtrack-Bot][\textsc{Bt}_{\bot}]{
      \lay \red \specconfone \cons\cfstack \sto{\dbt}{\obt \bot} \specconfone \cons \nil
    }{
      \specconfone = \sframe{\st}{\bmvar}{\bot} \lor \specconfone = \sconf{\err,\bot}
      & \cfstack \neq \nil
    }
  \]
  \caption{Speculative semantics, excerpt.}
  \label{fig:scen2specsemiexcerpt}
\end{figure*}

\begin{figure*}[t]
  \small
  \centering
  \[
      \Infer[SIE][CallJmp][\textsc{SCall}]{
        \lay \red \confone\cons\cfstack
        \sto{\djump[\lbl] \add}{\ojump \add} \sframe{\frame{\mem(\add)} {\regmap_0[\vec \vx\upd\sem{\vec \exprtwo}_{\regmap,\lay}]}{\opt}\cons\frame{\cmd}{\regmap}{\opt} \cons \st}{\bm\buf \mem}{\boolms \lor (\add \neq \toAdd{\sem\expr_{\regmap, \lay}})}\cons \confone\cons \cfstack
      }
      {
        \confone = \sframe{\frame{\ccall[\lbl] \expr {\vec \exprtwo}\sep\cmd}{\regmap}{\opt}\cons\st}{\bm\buf \mem}{\boolms} &
        \add \in \underline \lay(\Fn[\opt]) &
        \fbox{$\opt = \km[\syscall] \Rightarrow \add \in \underline \lay(\caps(\syscall))$}
      }
    \]
    \\[-3mm]
    \[
      \Infer[SIE][CallJmp-Unsafe][\textsc{SCall-Unsafe}]{\sstep
        {\sframe{\frame{\ccall[\lbl] \expr{\vec \exprtwo}\sep\cmd}{\regmap}{\km[\syscall]}\cons\st}{\bmvar}{\boolms}\cons\cfstack}
        {\unsafe}
        {\djump[\lbl] \add}
        {\ojump \add}}
      { \add \in \underline \lay(\Fn[\km]) &
        \fbox{$\add \notin \underline \lay(\caps(\syscall))$}
      }
    \]
    \\[-3mm]
    \[
      \Infer[SIE][CallJmp-Error][\textsc{SCall-Error}]{
        \lay \red \confone\cons\cfstack
        \sto{\djump[\lbl] \add}{\onone} \sconf{\err,\boolms \lor (\add \neq \toAdd{\sem\expr_{\regmap, \lay}})}\cons \confone \cons \cfstack
      }
      {
        \confone = \sframe{\frame{\ccall[\lbl] \expr {\vec \exprtwo}\sep\cmd}{\regmap}{\opt}\cons\st}{\bmem}{\boolms} &
        \add \notin \underline \lay(\Fn[\opt])
      }
    \]
  \caption{Speculative semantics, speculative call instructions.}
  \label{fig:scen2specsemiexcerptscall}
\end{figure*}}\fi

A reduction step now takes the form
\[
  \lay \red \cfstack \sto{\dir}{\obs} \cfstack',
\]
indicating that for a given system $\system$, under layout $\lay \in \Lay$,
the system evolves from state $\cfstack$ with directive $\dir \in \Dir$ to $\cfstack'$ in one step,
producing the side-channel observation $\obs$. States  are described by the following BNF:

\begin{minipage}{0.2\linewidth}
  \begin{cbnf}
    \cfstack & \nil \bnfmid \specconfone \cons \cfstack
  \end{cbnf}
\end{minipage}
\begin{minipage}{0.4\linewidth}
  \begin{cbnf}
    \specconfone, \specconftwo & \sframe{\st}{\bmvar}{\boolms} \bnfmid \sconf
    {\err, \boolms} \bnfmid \unsafe
  \end{cbnf}
\end{minipage}
\begin{minipage}{0.15\linewidth}
  \begin{cbnf}
    \bmvar & \bm \buf \mem
  \end{cbnf}
\end{minipage}
\begin{minipage}{0.2\linewidth}
  \begin{cbnf}
    \buf & \nil \bnfmid \bitem \add \val \cons \buf
  \end{cbnf}
\end{minipage}

Here, $\cfstack$ is a stack of backtrackable configurations. During
the execution, the state of the system is described by the
configuration on top of the stack. To support backtracking, every time
a speculation occurs, the system pushes a new configuration to the stack
and keeps track of the current state in the configuration just below.

In a configuration $\sframe{\st}{\bmvar}{\boolms}$, $\st$ is a
call-stack as in \Cref{sec:language}, $\bmvar$ is a memory
$\mem$ equipped with a \emph{write buffer} $\buf$, and the Boolean
vale $\boolms$ is the \emph{mis-speculation} flag.
The mis-speculation flag governs backtracking of speculative
execution: when its value is $\top$, some previous speculation may be
incorrect, so a backtracking step will discard the configuration from
the stack.  
As errors are recoverable under mis-speculation,
error configurations carry a mis-speculation flag too.
As in \Cref{sec:language}, $\unsafe$ indicates
a safety violation.

\diff{Buffered
  memories are used to model STL speculation, and are based on those
  from~\cite{HighAssurance}.  } Writing a value $\val$ at address
$\add$ results in a delayed write $[\add \mapsto \val]$ that is
appended to the buffer $\buf$, in notation:
$\bm{[\add \mapsto \val]:\buf}{\mem}$.  Reading the $n$-th most recent
entry associated to $\add$ in a buffered memory is written
$\bufread {\bm{\buf}{\mem}} \add n$ and yields a value $\val$ together
with a boolean flag $f$ that is $\bot$ only if $\val$ is the
most recent value associated to address $\add$, as described in
\Cref{fig:lookup}.

\begin{figure}[t]
  \small
\begin{align*}
  \bufread {\bm{\nil}\mem} \add n &\defsym \mem(\add), \bot &  
  \bufread {\bm{\bitem {\add'} \val \cons \buf}\mem} \add n &\defsym
                                                              \begin{cases}
                                                                \bufread {\bm\buf\mem} \add n &\text{if }\add\neq\add'\\
                                                                \val', \top &\text{if }\add=\add', n>0\text{, and }\bufread {\bm\buf\mem} \add {n-1} = \val', f\\
                                                                \val, \bot &\text{if } \add=\add'\text{, and }n=0.
                                                           \end{cases}
\end{align*}
\normalsize
\vspace{-4mm}
%
\caption{Buffered memory lookup function.}
\label{fig:lookup}
\end{figure}

\ifdefined\conference{Some illustrative rules of the semantics are
    given in \Cref{fig:scen2specsemiexcerpt}, the complete set of
    rules is relegated to the extended version of this
    work~\cite{ExtendedJournal}.}\fi
\ifdefined\arxiv{Some illustrative rules of the semantics are
    given in \Cref{fig:scen2specsemiexcerpt}, the complete set of
    rules is relegated to~\Cref{sec:appsafety2}.}\fi\xspace
The rules for load instructions are very similar to the ones we give in
\Cref{sec:language}, but attackers can take advantage of
the store-to-load dependency speculation by issuing a $\dload[\lbl] i$
directive, as in Rule \ref{SIE:SLoad}. When this happens, the $i$-th most recent value
associated to the address $\add = \toAdd{\sem \expr_{\regmap, \lay}}$
is retrieved from the buffered memory.
When such value may not correspond to that of
the most recent store to the address $\add$, the flag $f$ is
$\top$. Depending on the value of $f$, Rule \ref{SIE:SLoad} may be
engaging mis-speculation and, for this reason, it keeps track of the
starting configuration in the stack and updates the mis-speculation
flag with $f$.  A successful load produces the observation
$\omem \add$ that leaks the address to the attacker.
The rules for erroneous and unsafe loads are
\ref{SIE:SLoad-Error} and \ref{SIE:SLoad-Unsafe}.
These rules are analogous to their non-speculative counterparts.
In our semantics, every command also supports the $\dstep$ directive,
which does not cause speculation. For instance,
Rule \ref{SIE:Load-Step} evaluates the $\cmemread \vx \expr$ command
by fetching the most recent value from the write buffer, instead of an
arbitrary one.


Branch instructions can be executed speculatively by issuing the
directive $\dbranch d$ by means of Rule \ref{SIE:If-Branch}.  This
directive
causes the evaluation to continue as if the guard resolved to
$d$. This operation leaks which branch is being executed by means of
the observation $\obranch d$. 

\diff{\ifdefined\conference{Rule \ref{SIE:CallJmp} models indirect
      jump speculations by accepting the directive $\djump \add$. When
      the call instruction is evaluated with such directive, the
      system invokes the procedure that is stored at address $\add$
      (if any).}\fi\ifdefined\arxiv{When a call instruction is
      encountered, the attacker can steer the control flow to any
      target function by supplying the directive $\djump \add$, as
      described by the rules in \Cref{fig:scen2specsemiexcerptscall}.
      Speculative call instructions are evaluated with Rule
      \ref{SIE:CallJmp}, which invokes the procedure that is stored at
      address $\add$ (if any).}\fi ~This operation leaks the
  observation $\ojump \add$, which reveals that some procedure is
  stored at address $\add$. To keep track of wrong speculation,
  the system checks whether the address supplied by the attacker corresponds to the intended jump location and updates the mis-speculation flag accordingly.
  The rules for erroneous and unsafe speculative call instructions
  \ref{SIE:CallJmp-Error} and \ref{SIE:CallJmp-Unsafe}
   are similar to their non-speculative counterparts,
  with the exception that the address supplied by the attacker is used to determine
  whether the access is erroneous or unsafe.
%
}

If the topmost configuration of a stack
carries the mis-speculation flag $\top$, some prior speculation may
have been incorrect, so
the configuration can be discarded with
Rule \ref{SIE:Backtrack-Top}.
If the mis-peculation flag is $\bot$, the current
state is not mis-speculating, so the whole stack
of book-kept configurations can be discarded with
Rule  \ref{SIE:Backtrack-Bot}.

We write $\sto{}{}^{*}$ for the multi-step reduction relation induced by $\sto{}{}$, i.e,
$\cfstack \sto{\nil}{\nil} \cfstack$ and $\cfstack \sto{\dir : \Ds}{\obs : \Os}^{*} \cfstack'$ if
$\cfstack \sto{\dir}{\obs} \circ \sto{\Ds}{\Os}^{*} \cfstack'$.

\begin{figure*}
  \small
  \vspace{-3mm}
  \centering
  \[
    \Infer[ALE][Poison]{
      \lay \red \aconf{\frame{\cpoison \dir\sep\speccmd}{\regmap}{\opt}\cons\st }{\mem}{\Ds}{\Os}
      \ato \aconf{\frame{\speccmd}{\regmap}{\opt}\cons\st}{\mem}{\dir:\Ds}{\Os}
    }{\strut}
  \]
  \\[-3mm]
  \[
    \Infer[ALE][Observe]{
      \lay \red \aconf{\frame{\vx \ass \eobs\sep\speccmd}{\regmap}{\opt}\cons\st }{\mem}{\Ds}{\obs:\Os}
      \ato \aconf{\frame{\speccmd}{\update \regmap\vx \obs}{\opt}\cons\st}{\mem}{\Ds}{\Os}
    }{}
  \]
  \\[-3mm]
  \[
    \Infer[ALE][Spec-Init]{
      \lay \red \aconf{\frame{\cspec\cmd\sep\speccmd}{\regmap}{\opt}\cons\st }{\mem}{\Ds}{\Os}
      \ato \hconf{\sframe{\frame{\cmd}{\regmap}{\opt}}{\bm{\nil} \mem}{\bot}}{\frame{\speccmd}{\regmap}{\opt}\cons\st}{\Ds}{\Os}
    }{}
  \]
  \\[-3mm]
  \[
    \Infer[ALE][Spec-D]{
      \lay \red \hconf{\cfstack}{\st}{\dir{\cons}\Ds}{\Os} \ato \hconf{\cfstack'}{\st}{\Ds}{\obs{\cons}\Os}
    }{
      \lay \red \cfstack \sto{\dir}{\obs} \cfstack'
    }
  \]
  \\[-3mm]
  \[
    \Infer[ALE][Spec-S]{
      \lay \red \hconf{\cfstack}{\st}{\Ds}{\Os} \ato \hconf{\cfstack'}{\st}{\Ds}{\obs{\cons}\Os}
    }{
      \nf \cfstack \Ds &
      \lay \red \cfstack \sto{\dstep}{\obs} \cfstack'
    }
    \quad
    \Infer[ALE][Spec-BT]{
      \lay \red \hconf{\cfstack}{\st}{\Ds}{\Os} \ato \hconf{\cfstack'}{\st}{\Ds}{\obs{\cons}\Os}
    }{
      \nf \cfstack \Ds &
      \nf \cfstack \dstep &
      \lay \red \cfstack \sto{\dbt}{\obs} \cfstack'
    }
  \]
  \\[-3mm]
  \[
    \Infer[ALE][Spec-Term]{
      \lay \red
      \hconf{\sframe{\frame{\cnil}{\regmap}{\opt}}{\bmvar}{\bot}}
      {\frame{\speccmd}{\regmap'}{\opt'}\cons\st}{\Ds}{\Os}
      \ato \aconf{\frame{\speccmd}{\regmap}{\opt}\cons\st}{\overline{\bmvar}}{\Ds}{\Os}
    }{}
  \]
  \\[-3mm]
  \[
    \Infer[ALE][Spec-Error]{
      \lay \red \hconf{\sconf{\err,\bot}} {\st}{\Ds}{\Os} \ato \err
    }{}
    \quad
    \Infer[ALE][Spec-Unsafe]{
      \lay \red \hconf{\unsafe} {\st}{\Ds}{\Os} \ato \unsafe
    }{}
  \]
  \vspace{-5mm}
  \caption{Semantics for speculative attackers, excerpt.}%
  \label{fig:scen2semiexcerpt}
\end{figure*}


\subsubsection{Attacker's Language and Semantics}
\label{sec:adversarialsemantics}

To give a definition of kernel safety w.r.t.\ speculative semantics,
we endow attackers with the ability to engage speculative execution,
to issue directives, and to read side-channel leaks.
To this end, we extend the instructions from \Cref{sec:language}
as follows:

\begin{minipage}[l]{0.6\linewidth}
\begin{bnf}
  \SpInstr \ni \specstat & \stat 
  \bnfmid \cspec \cmd 
  \bnfmid \cpoison \dir 
  \bnfmid \vx \ass \eobs &\\ 
\end{bnf}
\end{minipage}
\begin{minipage}[l]{0.3\linewidth}
\begin{bnf}
    \SpCmd \ni \speccmd & \cnil \bnfmid \specstat\sep\speccmd
\end{bnf}
\end{minipage}

The instruction $\cspec \cmd$ is used to execute a command $\cmd$ with
the speculative semantics defined just above.  The instruction
$\cpoison \dir$ models the attacker's ability to control
microarchitectural predictors and to control speculative execution by
issuing the directive $\dir$, which is used to control the evaluation
of instructions executed under speculative semantics.  Dual, the
instruction $\vx \ass \eobs$ is used to extract side-channel
info-leaks, collected during speculative execution of commands. To
model this operation, in the following, we assume
$\Obs \subseteq\Val$.  As an example, the snippet
\begin{equation}
  \label{spec-prob}
  \begin{aligned}
  \cblock{
    \cpoison{\dbranch[\lbl]{\top}}\sep
    \cinline{\cspec{\cwhen[\lbl]{\expr}{\csyscall{\syscall}{\add}}}} \sep
    \vx \ass \eobs
  }
  \end{aligned}
  \tag{\dag}
\end{equation}
forces the mis-speculative execution of $\csyscall{\syscall}{\add}$, independently of
the value of $\expr$. The register $\vx$ will hold the final observation leaked
through executing the system call.

The attacker's semantics is defined in terms of a relation
\[
  \lay \red \specconfone \ato \specconfone' .
\]
Configurations for the attacker's semantics are drawn from the following BNF:

\begin{minipage}[l]{0.5\linewidth}
  \begin{cbnf}
    \specconfone, \specconftwo
    & \aconf{\st}{\mem}{\Ds}{\Os}
    \bnfmid \hconf{\cfstack}{\st}{\Ds}{\Os}
    \bnfmid \err
    \bnfmid \unsafe.
  \end{cbnf}
\end{minipage}
\begin{minipage}[l]{0.25\linewidth}
  \begin{cbnf}
    \Os
    & \nil
    \bnfmid \obs\cons\Os
  \end{cbnf}
\end{minipage}
\begin{minipage}[l]{0.25\linewidth}
  \begin{cbnf}
    \Ds
    & \nil
    \bnfmid \dir\cons\Ds.
  \end{cbnf}
\end{minipage}

Within attacker's configurations $\st$ is the evaluation stack (as in
\Cref{sec:language}), and $\mem$ is the memory. The stacks $\Ds$ and
$\Os$ collect directives and observations respectively.  
We use hybrid configurations $\hconf{\cfstack}{\st}{\Ds}{\Os}$ to
model the execution of the command $\cspec \cmd$. In hybrid
configurations, the state is given by the stack of speculative
configurations $\cfstack$, and the frame stack $\st$ keeps tracks of
the procedure's activation records before the invocation of
$\cspec{\cmd}$. The stack $\Ds$ contains the directives used by the speculative
semantics of \Cref{sec:internalsemantics} to evaluate $\cfstack$. The stack $\Os$ collects the side-channel observations produced during speculative execution of $\cfstack$.

\Cref{fig:scen2semiexcerpt} shows the evaluation rules for the new
constructs.  Rules~\ref{ALE:Poison} and~\ref{ALE:Observe} define the
semantics for poisoning and side-channel observations, modeled by
respectively pushing and popping elements of the corresponding stacks.
Rule~\ref{ALE:Spec-Init} deals with the initialization $\cspec \cmd$
of speculative execution, starting from the corresponding initial
configuration of the victim $\cmd$ in an empty speculation context.  A
frame for the continuation of the attacker $\speccmd$ is left on the
call stack $F$.  This frame is used to resume execution of the
attacker, once the victim has been fully evaluated.  The victim itself
is evaluated via the speculative semantics through
Rules~\ref{ALE:Spec-D}, \ref{ALE:Spec-S} and \ref{ALE:Spec-BT}.  Note
how execution of the victim is directed through the directive stack
$\Ds$ (Rule~\ref{ALE:Spec-D}).  Should the current directive be
inapplicable, a non-speculative rewrite step (Rule~\ref{ALE:Spec-S})
or backtracking (Rule~\ref{ALE:Spec-BT}) is performed. Here, the
premise $\nf \cfstack \dir$ signifies that $\cfstack$ is irreducible
w.r.t.\ the directive $\dir$. Likewise, $\nf \cfstack \Ds$ means that
$\cfstack$ is irreducible w.r.t.\ the topmost directive of $\Ds$, or
that $\Ds$ is empty.
Also notice how side-channel leakage, modeled through observations, is
collected in the configuration via these rules.
Upon normal termination, resuming of evaluation of the attacker is
governed by Rule~\ref{ALE:Spec-Term} in the case of normal termination.
As a side-effect, this rule commits all buffered writes to memory,
as described by the following function:
\begin{align*}
  \overline {\bm{\nil}\mem} &\defsym \mem &  \overline {\bm{\bitem \add \val : \buf}\mem} &\defsym \update{\overline {\bm{\buf}\mem}}\add \val.
\end{align*}
Finally, Rules~\ref{ALE:Spec-Error} and~\ref{ALE:Spec-Unsafe} deal with abnormal termination.

Apart for the new constructs, the attacker executes under a semantics
analogous to the one given in \Cref{sec:language}.

\subsection{Speculative Kernel Safety}
\label{subsec:specsafety}
We are now ready to extend the definition of kernel safety (\Cref{def:ks})
to the speculative semantics.
\begin{definition}[Speculative kernel safety]\label{def:sks}
  We say that a system $\system = (\rfs,\syss,\caps)$ is \emph{speculative kernel safe}
  if for every unprivileged attacker $\speccmd \in \SpCmd$,
  every layout $\lay$,
  and register map $\regmap$, we have: 
  \[
    \lnot \left(\lay \red \aconf{\frame{\speccmd}{\regmap}{\um}}{\lay \lcomp \rfs}{\nil}{\nil} \ato^* \unsafe \right).
  \]
\end{definition}
It is important to note that this safety notion captures violations that
occur during transient execution. This is in line with what happens, for instance,
for Spectre and Meltdown~\cite{Spectre,Meltdown}, both exploiting unsafe
memory access under transient execution in order to
reveal confidential information.

\subsection{The Demise of Layout Randomization in the Spectre Era}
\label{sec:breakingaslr}

A direct consequence of \Cref{def:sks} is that every system that is
\emph{speculative kernel safe} is also \emph{kernel safe}. 
The inverse, of course, does not hold in general. Specifically,
the probabilistic form of safety provided by layout randomization
in \Cref{sec:safety1} does not scale to this extended threat model.
This happens for the following reasons:

\begin{varenumerate}
\item {\bf Side-channel leaks.} 
  %
  \emph{Layout non-interference} (\Cref{def:lni}) does not account for
  side-channel leakage. In practice, kernel-space gadgets that leak
  information on the kernel layout can be exploited by attacker to
  compromise layout randomization.  For instance, with the following
  gadget
  \[
    \label{eq:addleak}
    \cif{\fn = \add}{\cmd}{\cmdtwo},
    \tag{\ddag}
  \]
  an attacker who controls the value of $\add$ may gather information
  about the address of the kernel procedure $\fn$ by measuring the
  variations in the execution time of this gadget for different values
  of $\add$, assuming that the execution times of $\cmd$ and $\cmdtwo$
  are indeed different.  In our model, this form of leak is captured
  by leaking the guard of the conditional when
  executing \eqref{eq:addleak}.
    %
    %

\item {\bf Speculative execution.} With speculative execution,
  unsuccessful memory probes within transient executions do not lead
  to abnormal termination.  This fact undermines another fundamental
  assumption of \Cref{thm:scenario1} and the majority of studies
  demonstrating the efficacy of layout randomization
  (e.g.,~\cite{Abadi,Abadi2,Abadi3}), where memory access violations
  are not fully recoverable.
  As an example, consider the snippet \eqref{spec-prob} at page
  \pageref{spec-prob}. This program will never reach an unrecoverable
  error state even when the system call $\syscall$ loads an arbitrary
  address $\add$ form the memory.  If $\add$ is not allocated, the
  system performs a memory violation under transient execution that
  does not terminate the execution. If $\add$ is allocated, its
  content may be loaded into the cache (producing the observation
  $\omem \add$) before the execution backtracks.  By reading
  side-channel observations, an attacker can thus distinguish
  allocated kernel-addresses from those that are not allocated.  This
  last example, in particular, is not at all fictitious: the BlindSide
  attack~\cite{BlindSide} uses the same idea to break Linux's KASLR
  and locate the position of kernel's executable code and data. In our
  model, we account for a similar attacks by associating
  different observations to successful and unsuccessful memory
  operations (e.g., see Rules \ref{SIE:CallJmp} and
  \ref{SIE:CallJmp-Error}) and by allowing the backtrack of error states
  reached during transient execution.
\end{varenumerate}

\subsection{Side-channel Layout Non-Interference}
\label{sec:slni}

As this revised model significantly enhances the attackers' strength,
we need to implement more stringent countermeasures in order to
restore kernel safety.  To counter side-channel info-leaks, we can
impose a form of side-channel non-interference that is in line with
the notion of \emph{speculative constant-time}
from~\cite{CTFundations}, and that is meant to prevent side-channel
info-leaks from leaking information on the kernel's memory layout.

\begin{definition}[Side-channel layout non-interference]\label{def:slni}
  A system call $\syscall$ of a system $\system=(\rfs,\syss,\caps)$ is \emph{side-channel layout non-interferent} if for every $\rfs'\eqon\Fn \rfs$, directives $\Ds$, observations $\Os$, register map $\regmap$ and buffer $\mu$, we have:
  \[
    \exists \cfstack_1.\lay_1 \red \sframe{\frame{\syss(\syscall)}{\regmap}{\km[\syscall]}}{\bm{\buf}{(\lay_1 \lcomp \store')}}{\boolms} \sto{\Ds}{\Os}^* \cfstack_1
  \]
  implies
  \[
    \exists \cfstack_2.\lay_2 \red \sframe{\frame{\syss(\syscall)}{\regmap}{\km[\syscall]}}{\bm{\buf}{(\lay_2 \lcomp \store')}}{\boolms} \sto{\Ds}{\Os}^* \cfstack_2,
  \]
  for all layouts $\lay_{1},\lay_{2}$.
\end{definition}

 \emph{Side-channel layout non-interference} ensures the non-leakage of layout information throughout the side-channels by requiring the identity of the sequences of observations produced by the two reductions. However, it indirectly implies severe restrictions on memory interactions---effectively prohibiting non-static memory accesses, that are the main ingredient of layout randomization! Unsurprisingly, this form of non-interference directly establishes kernel safety of system calls:

\begin{lemma}
  \label{lemma:cttosafe}
  Suppose $\kappa_{\km}\ge \sum_{\id \in \Idk} \size\id + 2\cdot\max_{\id \in \Idk} \size\id$.
  Given $\system=(\rfs,\syss,\caps)$,
  if $\syscall$ is \emph{side-channel layout non-interfering} then
  \[
    \lnot \left(\lay \red \sframe{\frame{\syss(\syscall)}{\regmap}{\km[\syscall]}}{\bm{\buf}{(\lay \lcomp \store')}}{\boolms} \sto{\Ds}{\Os}^* \unsafe\right)
  \]
  for all layouts $\lay$ and initial configurations over stores $\store'$ coinciding with $\rfs$ on $\Fn$.
\end{lemma}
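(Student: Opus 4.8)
The plan is to argue by contradiction, reasoning entirely at the level of observation traces. Fix an arbitrary layout $\lay$ and a store $\store' \eqon{\Fn} \rfs$, and suppose towards a contradiction that
\[
  \lay \red \sframe{\frame{\syss(\syscall)}{\regmap}{\km[\syscall]}}{\bm{\buf}{(\lay \lcomp \store')}}{\boolms} \sto{\Ds}{\Os}^* \unsafe
\]
for some directives $\Ds$ and observations $\Os$. Since $\unsafe$ is terminal, the final transition must instantiate \ref{SIE:SLoad-Unsafe}, the analogous store rule, or \ref{SIE:CallJmp-Unsafe}; in each case the access targets a concrete kernel address $\add$ with $\add \in \underline\lay(\Idk)$ (it is covered by some kernel object) but $\add \notin \underline\lay(\caps(\syscall))$. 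Crucially, this step emits an observation that names $\add$ explicitly, namely $\omem \add$ for a load/store or $\ojump \add$ for a call. Hence the first step is to record that $\Os$ contains an observation mentioning the \emph{specific} address $\add$, and to note that, inspecting the rules, any transition emitting $\omem \add$ (resp. $\ojump \add$) must be a memory access (resp. call) whose computed target $\toAdd{\sem{\expr}_{\regmap,\lay}}$ actually resolves to $\add$ with $\add \in \underline\lay(\Ark)$ (resp. $\add\in\underline\lay(\Fnk)$).

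The second step is to build a companion layout $\lay'$ that leaves $\add$ unoccupied. I would keep $\lay'$ equal to $\lay$ on all user identifiers (so $\add \in \Addk$ is untouched there and separation in $\Addu$ is preserved) and repack the kernel identifiers within $\Addk$ so as to avoid $\add$. Concretely, place the kernel objects greedily from the bottom of $\Addk$; whenever the next object would straddle $\add$, skip ahead and resume at $\add+1$. The only wasted region is the single skipped gap, of size at most $\max_{\id\in\Idk}\size\id$, so the hypothesis $\kappa_{\km}\ge \sum_{\id\in\Idk}\size\id + 2\cdot\max_{\id\in\Idk}\size\id$ guarantees that all kernel objects still fit with room to spare, yielding a well-formed, non-overlapping, address-space-separated layout $\lay' \in \Lay$ with $\add\notin\underline{\lay'}(\Idk)$.

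Finally, I would invoke \emph{side-channel layout non-interference} (\Cref{def:slni}) with $\lay_1 := \lay$ and $\lay_2 := \lay'$, reusing the same $\store'$, $\buf$, $\regmap$, and $\boolms$. The assumed run witnesses the premise (taking $\cfstack_1 := \unsafe$), so the definition yields a run
\[
  \lay' \red \sframe{\frame{\syss(\syscall)}{\regmap}{\km[\syscall]}}{\bm{\buf}{(\lay' \lcomp \store')}}{\boolms} \sto{\Ds}{\Os}^* \cfstack_2
\]
producing the \emph{same} observation sequence $\Os$. But $\Os$ contains $\omem \add$ (resp. $\ojump \add$), while under $\lay'$ the address $\add$ is unallocated, so in kernel mode any load, store, or call resolving to $\add$ can only fire \ref{SIE:SLoad-Error} (resp. \ref{SIE:CallJmp-Error}), which emits $\onone$ rather than an observation naming $\add$. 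Consequently no run under $\lay'$ can emit $\Os$, contradicting the conclusion of non-interference and establishing \Cref{lemma:cttosafe}.

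The main obstacle is the interplay between the first and second steps. Because the accessed address is computed as $\toAdd{\sem{\expr}_{\regmap,\lay}}$, changing the layout may change which address a given instruction targets, so one cannot simply ``replay'' the original run under $\lay'$; the argument must instead exploit that non-interference forces the trace to be reproduced verbatim, and reproducing an observation that names $\add$ becomes impossible once $\add$ is freed. The packing argument then has to be carried out carefully enough to certify that a single forbidden kernel address can always be avoided under the stated size bound — which is exactly what the extra $2\cdot\max_{\id\in\Idk}\size\id$ slack over the minimal requirement $\sum_{\id\in\Idk}\size\id$ provides.
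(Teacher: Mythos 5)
Your proposal is correct and follows essentially the same route as the paper's proof: identify that the unsafe step leaks the concrete address $\add$ via $\omem\add$ or $\ojump\add$, use the size slack to build a layout $\lay'$ with $\add\notin\underline{\lay'}(\Idk)$, and derive a contradiction from side-channel layout non-interference since no rule can emit an observation naming an unallocated address under $\lay'$. The only cosmetic difference is in constructing $\lay'$: the paper relocates just the single object covering $\add$ (via a case analysis and a pigeonhole argument on the free space on either side), whereas you greedily repack all kernel objects around $\add$ — both fit comfortably within the stated bound.
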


\begin{proof}[Proof Sketch]
  The condition on the size of the memory ensures that
  it is large enough to allow any array or procedure to be
  moved to a different location.  With this precondition, we just need
  to observe that if an invocation of a system call $\syscall$
  performs an unsafe memory access when executing under a layout
  $\lay$, the address $\add$ of the accessed resource is leaked; but
  the same address cannot leak if the resource is moved to another
  location, leaving $\add$ free. Therefore, if a system call is not speculatively safe, it
  cannot be side-channel layout non-interfering because different memory
  layouts produce different observations.
\end{proof}

Observe that our leakage model could be relaxed without invalidating
\Cref{lemma:cttosafe}, which is valid as long as the attacker can
distinguish a successful load, store or call operation form an
unsuccessful one.

The following result, stating that \emph{side-channel layout non-interference}
entails \emph{speculative kernel safety},
is a direct consequence of \Cref{lemma:cttosafe}.

\begin{theorem}
  \label{thm:scenario2}
  Under the assumption $\kappa_{\km}\ge \sum_{\id \in \Idk} \size\id + 2\cdot\max_{\id \in \Idk} \size\id$, if a system $\system$ is side-channel layout non-interfering, then it is \emph{speculative kernel safe}.
\end{theorem}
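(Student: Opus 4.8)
The plan is to reduce \Cref{thm:scenario2} to \Cref{lemma:cttosafe} by localizing where the $\unsafe$ state can originate. Recalling \Cref{def:sks}, I would argue by contradiction: suppose some unprivileged attacker $\speccmd$ admits a run $\lay \red \aconf{\frame{\speccmd}{\regmap}{\um}}{\lay \lcomp \rfs}{\nil}{\nil} \ato^* \unsafe$ for some layout $\lay$ and register map $\regmap$. The first observation is that $\unsafe$ is produced only by the kernel-mode rules \ref{WL:Load-Unsafe}, \ref{WL:Store-Unsafe}, \ref{WL:Call-Unsafe} and their speculative analogues \ref{SIE:SLoad-Unsafe}, \ref{SIE:CallJmp-Unsafe}, all of which require the active frame to carry an execution mode of the form $\km[\syscall]$; at the attacker level it can surface only through \ref{ALE:Spec-Unsafe} or through the embedded non-speculative rules. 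Since the attacker starts in user mode $\um$, kernel mode is entered exclusively by invoking a system call (rule \ref{WL:SystemCall}), so the offending step must occur while the body $\syss(\syscall)$ of some system call $\syscall$ is being evaluated in mode $\km[\syscall]$.

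Next I would establish the invariant that, at the moment any system-call body is entered, the underlying memory is \emph{layout-consistent}, i.e.\ equal to $\lay \lcomp \store'$ for a store $\store'$ with $\store' \eqon{\Fn} \rfs$. This holds because the initial memory is $\lay \lcomp \rfs$, and the only rules that mutate memory are the store rules, which write \emph{values} to array addresses only: updating a location that holds a procedure is forbidden by the W\^{}X discipline, and the store side-conditions force $\add \in \underline\lay(\Ar[\opt])$. Consequently every reachable memory---and every buffered memory $\bm{\buf}{\mem}$ arising during speculative execution, whose commit $\overline{\bm{\buf}{\mem}}$ again only writes values to array cells---has $\mem = \lay \lcomp \store'$ with $\store'$ agreeing with $\rfs$ on all procedure identifiers. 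These are precisely the stores admitted by \Cref{lemma:cttosafe}.

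Finally, I would decompose the run at the point where the culprit system call $\syscall$ is invoked, isolating the sub-reduction of its body. For a direct, non-speculative invocation this sub-reduction is a classic evaluation, which is the special case of the speculative semantics driven entirely by $\dstep$ directives over an empty buffer; for an invocation occurring inside a $\cspec \cmd$ block it is literally a speculative reduction $\sframe{\frame{\syss(\syscall)}{\regmap''}{\km[\syscall]} \cons \st}{\bm{\buf}{(\lay \lcomp \store')}}{\boolms} \sto{\Ds}{\Os}^{*} \unsafe$ for some buffer $\buf$, flag $\boolms$, and streams $\Ds, \Os$. Since $\system$ is side-channel layout non-interfering, every system call---in particular $\syscall$---satisfies \Cref{def:slni}, and the size hypothesis $\kappa_{\km}\ge \sum_{\id \in \Idk} \size\id + 2\cdot\max_{\id \in \Idk} \size\id$ is exactly the one required by \Cref{lemma:cttosafe}. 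That lemma forbids the body from reaching $\unsafe$ under any layout, buffer, flag and directive/observation streams, contradicting the extracted sub-reduction and hence the assumed attacker run.

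The main obstacle I anticipate is this last decomposition step: I must show that an $\unsafe$ event arising deep inside the attacker's execution---surrounded by caller frames $\st$, a backtrackable stack $\cfstack$ below, and accumulated directives and observations---can be faithfully reproduced by the isolated single-frame reduction that \Cref{lemma:cttosafe} governs. This requires a \emph{frame-locality} argument: the caller frames and the portion of $\cfstack$ below the invocation are inert during the body's evaluation and do not influence the side-conditions that trigger $\unsafe$, while the entry buffer and mis-speculation flag are exactly the parameters quantified over in the lemma. Establishing this locality---that neither lower stack frames nor the attacker's surrounding context can interfere with the active computation until a return or a backtrack---is the crux; once it is in place, the remainder is bookkeeping.
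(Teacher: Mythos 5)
Your proposal is correct and follows essentially the same route as the paper: localize the $\unsafe$ event to the evaluation of some system-call body in mode $\km[\syscall]$, extract that sub-reduction (handling the non-speculative case by viewing it as a $\dstep$-driven speculative run over an empty buffer), and contradict \Cref{lemma:cttosafe}. The ``frame-locality'' decomposition you flag as the crux is precisely what the paper's auxiliary lemmas establish, so your plan matches the intended argument.
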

\begin{proof}[Proof Sketch]
  As for \Cref{thm:scenario1}, observe that the only way to reach the $\unsafe$ state is during the execution of a system call. \Cref{lemma:cttosafe} proves no system call can reach that the $\unsafe$ state under this theorem's assumptions. 
\end{proof}

\paragraph{Final remarks} In contrast with what happens in \Cref{thm:scenario1} of \Cref{sec:safety1}, the safety guarantee
provided by \Cref{thm:scenario2} is not probabilistic and does not rely on \emph{layout randomization}.
%
%
Therefore, although layout randomization is unlikely to be restored at
the software level without imposing \emph{side-channel layout
  non-interference}, in the presence of this assumption, layout
randomization is a redundant protection measure.

In general, non-interference properties do not necessarily entail
memory safety. 
In our case, \emph{side-channel layout non-interference} entails
\emph{speculative kernel safety} because the layouts are not only used
as the inputs for a computation, but they also determine \emph{where}
objects are placed in memory.

Notice that \emph{side-channel layout non-interference} is not a
necessary condition for \emph{speculative kernel safety}.  As an
example, take the system that only defines the array $\ar$ with size 1
and the system call $\syscall$ with body $\cmemass \ar \val$, and
capabilities $\{\ar\}$. This system is not \emph{side-channel layout
  non-interfering} because by executing $\syscall$ the address of
$\ar$ leaks, and this address depends on the layout.  However, this
system is speculatively safe because the only operation it can do is
writing to $\ar$ through $\syscall$, and $\ar$ is in the capabilities
of $\syscall$.

\section{Assessing the Effectiveness of Existing Mitigations for Speculative Kernel Safety}
\label{sec:discussion}
\renewcommand{\speccmdtwo}{\command{B}}

\begin{figure}[t]\centering
\[
  \begin{aligned}
    \speccmd\defsym
    \cblock{\vy \ass \kappa_{\um} ; \\
    \cdowhile {
    \cpoison {{\dbranch[\lbl] \ctrue}};\\
    \cspec{\csyscall {\syscall} {\cfalse, \vy}};\\
    \cdowhile {\vx \ass \eobs }{x \notin \{\ojump \_, \cnull\}} ;\\
    \vy \ass \vy + 1
    }{x \neq \ojump \_}
    }
  \end{aligned}
  \ 
  \begin{aligned}
    \speccmdtwo\defsym
    \cblock{\vy \ass \kappa_{\um} ; \\
    \cdowhile {
    \cpoison {\dload[\lbl] 1};\\
    \cspec{\csyscall {\syscalltwo} {\vy}}\\
    \cdowhile {\vx \ass \eobs}{x \notin \{\ojump \_, \cnull\}} ;\\
    \vy\ass \vy + 1
    }{x \neq \ojump \_}
    
    }
  \end{aligned}
  \ 
  \begin{aligned}
    \command{C} \defsym
    \cblock{\vy \ass \kappa_{\um} ; \\
    \cdowhile {
    \cpoison {\djump[\lbl] y};\\
    \cspec{\csyscall {\sym u}  {0}}\\
    \cdowhile {\vx \ass \eobs}{x \notin \{\ojump \_, \cnull\}} ;\\
    \vy\ass \vy + 1
    }{x \neq \ojump \_}
    
    }
  \end{aligned}
\]
\vspace{-4mm}
\caption{Attacks witnessing that $\system$ does not enjoy \emph{speculative kernel safety}.}
\label{fig:sksattacks}
\end{figure}

In this section we review state-of-art countermeasures
against speculative attacks to
assess to what extent they can
help to enforce \emph{speculative kernel safety}.


To this aim, we introduce a system $\system = (\rfs, \syss, \caps)$ that is
intentionally vulnerable to \emph{speculative kernel safety}
violations. The system includes a kernel function $\fn$ that simply
returns the constant $0$, and which is not included in the
capabilities of any system call. Additionally, the system defines the
system calls $\syscall$, $\syscalltwo$, and $\sym u$, that are
designed to trigger transient execution via PHT, STL, and BTB
speculation, respectively. These system calls can be used to mount
transient execution attacks that probe kernel memory during transient
execution and execute $\fn$. These system calls are
defined as follows:
\small
\begin{align*}
   \syss(\syscall) &= \cwhen[\lbl] {\vx_1} {\ccall {\vx_2} {}}
  &
    \syss(\syscalltwo) &= \cmemread \vz (\ar +\vx_1);  \cmemass \ar 1;\cmemread[\lbl] \vy \ar; \ccall {\vx_1} {}
  &
  \syss(\sym u) &= \ccall[\lbl] {\vx_1} {},
\end{align*}
\normalsize
 with  
$\caps(\syscall) =\caps(\sym u) = \emptyset$ and
$\caps(\syscalltwo) = \{\ar\}$.\footnote{In the system calls
  $\syscall$, $\syscalltwo$, and $\sym u$, the attacker has
  \emph{direct} control over the jump target, violating the design
  principle of not accepting kernel addresses as input for system
  calls~\cite[p. 414]{UnderstandingLinux}.  Although our system calls
  violate such principle, we use them to model the scenario where an
  attacker controls a jump target addresses, which is a common kernel
  vulnerability (see, e.g., CVE-2017-1000112, CVE-2017-7294, and
  CVE-2018-5332).  }

We describe in \Cref{fig:sksattacks} the attacks
$\speccmd, \speccmdtwo$ and $\command{C}$ that take advantage of
$\syscall$, $\syscalltwo$ and $\sym u$ respectively to break
\emph{speculative kernel safety}.  The attack $\speccmd$ is
reminiscent of BlindSide~\cite{BlindSide}. In the main loop, the
variable $y$ ranges over all the kernel addresses. At each iteration,
the attacks triggers the speculative execution of the conditional
instruction within the victim system call to probe the address $y$,
and it stops when it reads a side-channel leak $\ojump \add$,
revealing the address of $\fn$. When the attack reads $\ojump p$
within the inner loop, it exits both the loops and the address of
$\fn$ will be equal to $y-1$. In the inner loop, reading the value
$\cnull$ (which indicates an empty observation stack) implies that $y$
does not match the address of $\fn$, so the attack continues with the
next iteration.
The attack $\speccmdtwo$ is very similar to $\speccmd$, but it relies
on STL forwarding instead of branch direction speculation to engage
speculative execution and transiently execute $\ccall {\vx_1} {}$.
Finally, the attack $\command{C}$ takes
control of branch target speculation to probe the kernel's memory
address space for
$\fn$. 
%

\begin{table}[t]
  \centering
  \begin{tabular}{lccc}
    \toprule
    {\bf Protection mechanism} & {\bf  Attack $\speccmd$}& {\bf     Attack $\speccmdtwo$}  & {\bf Attack $\command{C}$} \\
    \midrule
    SLH and Eclipse   & \cmark & \xmark & \xmark   \\
    SCT   & \xmark    & \xmark  & \xmark  \\ 
    STT   & \xmark    & \xmark  & \xmark \\ 
    SPT   & \cmark     & \xmark & \cmark  \\ 
    NDA   & \xmark    & \xmark   & \xmark \\ 
    \textsc{ProSpeCT}   & \xmark    & \xmark  & \xmark  \\ 
    \textsc{SESES}   & \xmark    & \cmark  & \xmark \\ 
    SESES + \texttt{lvi-cfi}   & \cmark    & \cmark  & \xmark \\ 
    SESES + \texttt{lvi-cfi} + retpoline   & \cmark    & \cmark  & \cmark \\ 
    Our transformations (\Cref{sec:sksenforcement})   & \cmark  & \cmark  & \cmark  \\ 
    \bottomrule
  \end{tabular}
  \vspace{0.4em}
  \caption{Effectiveness of the protection mechanisms examined in \cref{sec:discussion,sec:sksenforcement} against the attacks in \Cref{fig:sksattacks}. }
  \label{tab:mitigations}
  \vspace{-2em}
\end{table}
\paragraph{SLH~\cite{SLH} and Eclipse~\cite{Eclipse}}
\emph{Speculative Load Hardening} (SLH) is a software-level
countermeasure that prevents speculative leaks of sensitive
information. It keeps track of a predicate indicating whether the
execution follows a wrong speculation or not. The value of this
predicate is used to instrument leaking operations---typically
loads---to obfuscate leaked values during transient execution.  In
addition, SLH can be used to obfuscate also loaded addresses and the
targets of indirect calls or jumps.  Later works like \emph{Strong
  SLH}~\cite{Exorcising} and \emph{UltimateSLH}~\cite{UltimateSLH} can
obfuscate even more operands, including store addresses and inputs of
time-variable arithmetical operations,
respectively. Eclipse~\cite{Eclipse} is a software-level protection
measure adopting a similar approach to obfuscate the operands of
instructions used by attacks like BlindSide~\cite{BlindSide} or
PACMAN~\cite{PACMAN} to perform speculative probes. Eclipse achieves
lower overhead compared to SLH by protecting only a specific set of
instructions---either indirect jumps or instructions for pointer
validation, depending on the type of speculative probing the user aims
to prevent.
However, Eclipse and SLH are specific to PHT speculation---related to
Spectre v1---and they fail to detect BTB and STL speculation---related
to Spectre v2 and v4, respectively.  Therefore, although they would be effective against attacks relying on PHT speculation
only, such as $\speccmd$, they would not stop $\speccmdtwo$ and
$\command{C}$ because they would not detect STL-dependency speculation
or branch target speculation.

\paragraph{SCT} The \emph{Constant-Time} (CT) discipline imposes that
program do not leak secret information via side-channels.
\emph{Speculative Constant Time} (SCT) extends (CT) by also taking into
account speculative execution~\cite{HighAssurance,CTFundations}.
Therefore, SCT programs are not subject to speculative attacks like
Spectre~\cite{Spectre}. However, SCT is not helpful to enforce
\emph{speculative kernel safety}, as leaking secrets speculatively
does not necessarily involve breaking \emph{speculative kernel
  safety}.  So, by imposing SCT, \emph{speculative kernel safety} is
not necessarily achieved. For instance, notice that none of the
attacks in~\Cref{fig:sksattacks} leaks secrets \emph{stored in
  memory}, although they break
\emph{speculatively kernel safety}.

In \Cref{sec:slni} we proposed a policy
based on SCT, to prevent (secret) information on the layout
to leak during speculative execution:
\emph{side-channel layout non-interference}.
It was also possible to show that
\emph{speculative kernel safety} can be obtained by imposing
\emph{side-channel layout non-interference}.
However, such property would be too restrictive
in practice, as any system call
interacting with the memory would violate it.

\paragraph{STT~\cite{STT}} \emph{Speculative Taint Tracking} (STT) is
a hardware protection mechanism to protect from speculative leakage by
preventing the propagation of transiently loaded data. However, STT
works by preventing the propagation only of data that is loaded during speculative
execution to subsequent instructions.
Therefore, it would neither block $\speccmd, \speccmdtwo$ or
$\command {C}$ from speculatively probing the victim's
memory. Specifically, STT would not prevent the
execution of the vulnerable call instruction, because the jump target
is never loaded during transient execution.

\paragraph{SPT~\cite{SPT}} \emph{Speculative Privacy Tracking} (SPT)
is a hardware protection mechanism based on STT. Instead of preventing
the propagation of \emph{speculatively loaded data}, SPT prevents the
propagation of \emph{speculative secrets}, i.e., of data that without
this measure would leak during speculative execution, but not during
normal execution.  To identify \emph{speculative secrets}, the
hardware keeps track of the data that is leaked in normal execution.
When the hardware detects that some data leaks in normal execution,
that data is also allowed to leak in speculative execution.

SPT would be effective against the attacks $\speccmd$ and $\command{C}$
because the addresses that are probed by the syscalls $\syscall$ and
$\syscalltwo$ do not leak in normal execution. However, it would not
be effective against attacks the attack $\speccmdtwo$ because the
probed address is leaked in normal execution inside $\syscalltwo$ by
the first load instruction before engaging speculative execution.

\paragraph{NDA~\cite{NDA}} \emph{Non-speculative Data Access} (NDA) is
a hardware protection measure similar to STT that restricts the
propagation of data during speculative execution. NDA supports different
policies to determine whether an instruction can broadcast its output,
however NDA is not effective against attacks leaking addresses via a
single transient operation~\cite{NDA}.
%
%
Therefore, it cannot prevent the speculative
leaks caused by a single instruction that tries to access a memory
address, such as the call instructions in the attacks we are considering.

\paragraph{\textsc{ProSpeCT}~\cite{ProSpeCT}} \textsc{ProSpeCT} is a
RISC-V processor prototype that adopts taint-tracking to prevent
speculative leaks of secret data stored in a specific address range. In
particular, an instruction cannot interact with the memory (performing
side channel leaks) if the register holding the target address depends
on secret data. The authors of \textsc{ProSpeCT} show that any CT
program running on \textsc{ProSpeCT} is SCT. Therefore
\textsc{ProSpeCT} does not enforce \emph{speculative kernel safety}
because, as we already mentioned, SCT programs are not necessarily
\emph{speculative kernel safe}.


\paragraph{SESES~\cite{SESES}} LLVM's Speculative Execution Side
Effect Suppression (SESES) aims at preventing the speculative leakage
of confidential data by placing \texttt{lfence} instructions before
load or store instructions and \emph{before} branch instructions.
This approach mitigates \emph{speculative} memory safety violations as
well as \emph{speculative }violations of constant-time execution.
However, the SESES mitigation itself does not prevent violations of
\emph{speculative kernel safety} by speculative probing with indirect
branch instructions, as it happens in our example and the
BHI~\cite{BHInjection} and BlindSide~\cite{BlindSide} attacks. When
used in combination with LLVM's \texttt{lvi-cfi} pass, indirect jumps
are replaced by a call to a thunk which places an additional
\texttt{lfence} before the jump. This means that SESES would be
effective in blocking attacks $\speccmd$ and $\speccmdtwo$. However,
even in combination with LLVM's \texttt{lvi-cfi} enforcement pass,
SESES would not prevent branch target speculation on the branch
instruction itself, as in $\command{C}$, where the attacker may still
be able to influence the jump target speculation in the jump inside
the thunk. Therefore, to achieve full protection, SESES needs to be
combined with \texttt{lvi-cfi}, and with some mechanism blocking BTB
speculation, such as Intel eIBRS or retpoline.

\paragraph{Final Remarks}
The countermeasures we just discussed are designed to prevent the
side-channel leakage of sensitive data during transient execution
at the hardware level. However, a violation of
\emph{speculative kernel safety} does not necessarily leak sensitive
data to the attacker. For instance, our attacks break
\emph{speculative kernel safety} but do not leak sensitive data, as
$\fn$ returns the constant 0, which is not a secret.

We believe that \emph{speculative kernel safety} could be enforced at
hardware level by delaying all the operations that interact with
kernel's memory until all the ongoing speculations are resolved as
correct. Notice that this is not what taint-tracking mechanisms such
as STT, SPT, NDA, and \textsc{ProSpeCT} do. For instance, by tainting
the content of kernel memory,
\textsc{ProSpeCT} would only block accesses to locations whose address
depends on kernel data, rather than blocking all speculative accesses
to kernel's memory.
%
This consideration motivates us to develop an alternative
software-based technique specifically aimed at enforcing
\emph{speculative kernel safety}.





\section{Enforcement of Speculative Kernel Safety}
\label{sec:sksenforcement}
Our last research goal is to define software-level protection
mechanism by which we can enforce \emph{speculative kernel safety} on
a kernel. 
Instead of defining a mechanism that enforces
\emph{speculative kernel safety} on a system from scratch, our goal is
rather to nullify the gap between \emph{kernel safety} and
\emph{speculative kernel} safety, by making the latter property a
consequence of the former.
This approach is similar to the one adopted by the
countermeasures of \Cref{sec:discussion}, that are
ultimately aimed to prevent undesirable events to take place under
transient execution. With this guarantee, any software that is secure
in the classic execution model becomes safe in the speculative one as
well.



%

In terms of our model, our goal boils down to finding a transformation
$\zeta$ that turns any \emph{kernel safe} system $\system$ into
another system $\zeta(\system)$ which is secure during speculative
execution but that cannot be distinguished from the first by users
that do not control speculative execution and side channel leaks.
The latter requirement is expressed by the notion of \emph{semantic
  equivalence} in \Cref{def:sempres} below.

\begin{definition}
  \label{def:sempres}
  Two systems $\system=(\rfs, \syss, \caps)$ and
  $\system'=(\rfs', \syss', \caps')$ are \emph{semantically
    equivalent} if
  \[
    \Eval[\system][\lay]{\cmd, \regmap, \um, \rfs'} \simeq
    \Eval[\system'][\lay]{\cmd, \regmap, \um, \rfs}
  \]
  for every
  unprivileged $\cmd\in \Cmd$.  Here, the equivalence is given by
  $(v,\rfs[1]) \simeq (v,\rfs[2])$ if
  $\rfs[1] \eqon{\Idu\cup \Ark} \rfs[2]$, and coincides with equality
  otherwise.
\end{definition}

In the previous definition we require
$\rfs \eqon {\Idu\cup\Ark} \rfs'$ instead of the identity between the
stores in order to allow reasoning about systems with different
kernel-space procedures.  In turn, to guarantee \emph{semantic
  equivalence}, it is sufficient to impose that the transformation
does not change the behavior of the single system calls, as expressed
by the notion of \emph{system call semantics preservation} below.

\begin{definition}
  \label{def:sempres}
  Let $\system=(\rfs, \syss, \caps)$ and
  $\systrans(\system) = (\rfs', \syss', \caps')$ be two systems.
  The transformation $\systrans$ is \emph{system call
    semantics preserving} if we have $\rfs \eqon {\Idu\cup \Ark} \rfs'$ and
  for every layout $\lay$, registers
  $\regmap$,
  $\syscall$, and stores
  $\rfstwo, \rfstwo'$:
  \[
  \label{eq:sempres}
  \rfstwo \eqon \Fnk\rfs \land
  \rfstwo' \eqon \Fnk\rfs' \land
  \rfstwo \eqon \Ark \rfstwo' \quad\text{ implies }\quad
    \Eval[\system][\lay]{\syss(\syscall), \regmap, \km[\syscall], \rfstwo}
    \simeq
    \Eval[\systrans(\system)][\lay]{\syss'(\syscall), \regmap, \km[\syscall], \rfstwo'}.
  \]
\end{definition}

Reducing \emph{speculative kernel safety}
to \emph{kernel safety} means reducing safety violations in speculative
execution to safety violations in normal execution.
Therefore, the second requirement on $\zeta$
can be expressed by asking that the system $\zeta(\system)$
can violate \emph{speculative kernel safety} only if it violates \emph{kernel safety}, as
captured by \Cref{def:cbu} below.

\begin{definition}
  \label{def:cbu}
  We say that $\zeta$  \emph{imposes speculative kernel safety}
  if, for every system $\system$ such that $\systrans(\system)=(\rfs, \syss, \caps)$,
  every buffer $\buf$ with $\dom(\buf)\subseteq\underline \lay(\Ar)$
  and store $\rfs' \eqon{\Fun} \rfs$ we have:
  \ifdefined\conference{
      \[
        \left(
          \lay \red[\systrans(\system)] \conf{\conf{\syss(\syscall), \regmap, \km[\syscall]}, \bm\buf{(\lay \lcomp \rfs')}, \boolms} \sto{\Ds}{\Os}^* \unsafe
        \right)
        \ \text{implies}\  
        \left(
          \lay \red[\systrans(\system)] \conf{\conf{\syss(\syscall), \regmap, \km[\syscall]}, \overline{\bm\buf{(\lay \lcomp \rfs')}}}
          \to^* \unsafe
        \right).
      \]
    }\fi
  \ifdefined\arxiv{
      \begin{multline*}
        \left(
          \lay \red[\systrans(\system)] \conf{\conf{\syss(\syscall), \regmap, \km[\syscall]}, \bm\buf{(\lay \lcomp \rfs')}, \boolms} \sto{\Ds}{\Os}^* \unsafe
        \right)
        \Rightarrow\\
        \left(
          \lay \red[\systrans(\system)] \conf{\conf{\syss(\syscall), \regmap, \km[\syscall]}, \overline{\bm\buf{(\lay \lcomp \rfs')}}}
          \to^* \unsafe
        \right).
  \end{multline*}
}\fi
\end{definition}


\DD{add a reference to table 1 and an informal explanation of the theorem}

We now state the main result of this section, that establishes the effectiveness of our mitigation strategy.

\begin{theorem}
  \label{prop:mitigation}
  If a system $\system$ is \emph{kernel-safe},
  and the transformation $\zeta$  (i) \emph{imposes speculative kernel
    safety} and (ii) is \emph{system call semantics preserving}, then
  (a) $\zeta(\system)$ is \emph{speculative kernel safe},
  and (b) $\zeta(\system)$ is semantically equivalent to $\system$.
\end{theorem}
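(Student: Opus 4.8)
Both claims rest on the observation already exploited for \Cref{thm:scenario1}: the $\unsafe$ state can only be produced while a system call is running in kernel mode. For claim (b) I would set up a simulation between the classic executions of an arbitrary unprivileged $\cmd$ against $\system$ and against $\zeta(\system)$. Since the attacker is unprivileged and $\rfs \eqon{\Idu} \rfs'$, every user-mode step is literally identical in both runs, so the only interesting transitions are system-call invocations. The invariant I would maintain is that at matching configurations the register maps coincide, the frame stacks have the same user-mode shape, and the two stores agree on $\Idu \cup \Ark$ (they may differ only on $\Fnk$). Each system call is treated as a single big step: when $\cmd$ invokes $\syscall$, its whole evaluation is summarised by $\Eval$, and \emph{system call semantics preservation} (ii) makes the two big steps return $\simeq$-related outcomes. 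Because SMAP forbids kernel-mode accesses to user memory and memory writes never target procedures, a call alters only $\Ark$, so the invariant survives it, and the $\Fnk$-agreement needed to invoke (ii) holds as procedures are immutable. Reading off the final configurations gives the equation defining semantic equivalence, i.e.\ (b). As a by-product the same simulation shows that $\zeta(\system)$ is itself \emph{kernel safe}: a classic $\unsafe$ of $\zeta(\system)$ would, through a matched system call and the fact that $\simeq$ is equality on $\unsafe$, force a classic $\unsafe$ of $\system$, contradicting the hypothesis.

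For claim (a) I would argue by contradiction. Assume some unprivileged $\speccmd$ drives $\zeta(\system)$ to $\unsafe$ under the speculative semantics. By inspection of the rules that emit $\unsafe$ (e.g.\ \ref{SIE:SLoad-Unsafe}, \ref{SIE:CallJmp-Unsafe}), this event is raised inside the speculative evaluation of some system call $\syscall$ during a $\cspec{\cmd}$ block. The central step is to \emph{localise} this offending run: I would prove a frame/locality lemma for the speculative semantics stating that, once the frame for $\syss'(\syscall)$ sits on top in kernel mode, neither the frames below it on the call stack nor the book-kept configurations pushed before its invocation can affect whether $\unsafe$ is reached---the only backtracking that discards those lower configurations fires on a non-mis-speculating ($\bot$) state and merely commits history. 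This extracts a self-contained derivation $\sframe{\frame{\syss'(\syscall)}{\regmap}{\km[\syscall]}}{\bm{\buf}{(\lay \lcomp \rfs'')}}{\boolms} \sto{\Ds}{\Os}^{*} \unsafe$ with $\dom(\buf) \subseteq \underline\lay(\Ar)$ and $\rfs'' \eqon{\Fn} \rfs$, which is exactly the shape required by \Cref{def:cbu}.

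Applying hypothesis (i), that $\zeta$ \emph{imposes speculative kernel safety}, then turns this transient violation into an \emph{architectural} one: the same call reaches $\unsafe$ in the classic semantics started from the committed memory $\overline{\bm{\buf}{(\lay \lcomp \rfs'')}}$. It remains to contradict the kernel safety of $\zeta(\system)$ obtained in the first paragraph, and for this I would show that the committed store and the call's entry registers can be reproduced by an unprivileged attacker of $\zeta(\system)$: the buffered writes accumulated before entering $\syscall$ lie in array locations and the call's arguments are attacker-chosen parameters, so a classic attacker that replays those writes and then issues $\csyscall{\syscall}{\cdots}$ realises precisely this classic $\unsafe$ run. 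I expect this final step to be the main obstacle: one must prove the locality lemma in the presence of both the write buffer and the backtracking stack, and---more delicately---one must argue that the localised committed configuration delivered by \Cref{def:cbu} is genuinely reachable by a classic attacker rather than an artefact of earlier mis-speculation, which is where the interplay between (i) and kernel safety must be pinned down most carefully. The simulation of (b) and the bridge through \Cref{def:cbu} are, by contrast, comparatively routine.
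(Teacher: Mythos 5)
Your proposal is correct and follows essentially the same route as the paper: claim (b) is established by a step-by-step simulation in which user-mode steps are identical and each system call is handled as one big step via condition (ii) (the paper's \Cref{lemma:ctxsemequiv}), this simulation also transfers classic kernel safety to $\zeta(\system)$ (\Cref{rem:skstrans}), and claim (a) is proved by contraposition using a localisation lemma that extracts a self-contained $\dbt$-free derivation of $\unsafe$ from the top system-call frame (\Cref{lemma:thereisasyscall2} together with \Cref{lemma:nobt}), to which \Cref{def:cbu} is applied. The one point where you diverge—having a classic attacker replay the committed buffered writes to realise the extracted configuration—is not how the paper closes the argument: since \Cref{def:cbu} is already quantified over all stores $\rfs'\eqon{\Fun}\rfs$, buffers, and registers, the paper draws the contradiction directly at the level of the system-call configuration produced by the localisation lemmas, so your extra replay step is unnecessary (and would in any case be delicate, since the buffer may contain kernel-array writes the attacker cannot issue directly).
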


\begin{proof}[Proof Sketch]
  Claim (a) follows by contraposition: assume that the system $\zeta(\system)$ is unsafe, so there are a system call, and a sequence of transition that lead to $\unsafe$ with the specualtive semantics. By combining \Cref{def:cbu} and (ii), we deduce that the system $\system$ is unsafe under the non-speculative semantics, contradicting $\system$'s safety. Claim (b) follows by induction on the length of the reduction, and by applying (ii) when a system call is encountered.
\end{proof}

Notice that \emph{kernel safety} cannot be
provided solely by the adoption of layout randomization: by
\Cref{thm:scenario1}, we know that layout randomization only provides
\emph{kernel safety} modulo a small probability of failure.

\diff{
  With \Cref{prop:mitigation} in mind, our next goal is to show
  some candidate transformations that fulfill its requirements
  and which enforce
  speculative kernel safety on classically safe kernels.
  To model these transformations, we extend our language with program instructions acting
  as speculation barriers---modeling the \texttt{lfence} instruction found in
  modern CPUs~\cite{IntelManual}---and with call instructions that do not support
  BTB speculation---similar to retpoline thunks~\cite{Retpoline}.

  The resulting language looks as follows:
  \begin{cbnf}
    \Instr \ni \stat & \ldots \bnfmid \cfence \bnfmid \cscall \expr {\expr_1, \ldots, \expr_n}.
  \end{cbnf}

}

The semantics of these additional instructions is in
\Cref{fig:scen2specsemifence}.
Architecturally, the $\cfence$ instruction is a no-op,
on the micro-architecture level
it commits all buffered writes to memory.
In particular, for consistency, a potentially mis-speculative state must
be resolved. This is why this rule requires the mis-speculation flag to
be $\bot$. This means that, if this configuration is reached
when the flag is $\top$, the semantics must backtrack
with Rule \ref{SIE:Backtrack-Top}.

\diff{ 
  The semantics of the $\cscall \expr {\vec\exprtwo}$ instruction is
  similar to the for ordinary call instructions, with the difference
  that $\djump[\lbl] \add$ directives are not accepted. Instead, the
  address of the target procedure is always obtained by evaluating
  $\expr$. In normal execution, the \emph{non-speculative} call
  instruction is interpreted in the same way as the ordinary call
  instruction.
}

To facilitate proving the \emph{system call semantics preservation}
condition, we make the following remark:

\begin{remark}
  \label{rem:seq}
  If a transformation $\zeta$ only inserts $\cfence$ instructions (at
  any positions) in the kernel code and replaces $\kwd{call}$
  instruction with safe calls, then $\zeta$ is
  \emph{system call semantic preserving}.
\end{remark}

\begin{figure*}
  \small
  \[
    \Infer[SIE][SCall][\textsc{SSCall}]{
      \lay \red \confone\cons\cfstack
      \sto{\dstep}{\ojump \add} \sframe{\frame{\mem(\add)} {\regmap_0[\vec \vx\upd\sem{\vec \exprtwo}_{\regmap,\lay}]}{\opt}\cons\frame{\cmd}{\regmap}{\opt} \cons \st}{\bmem}{\boolms }\cons \cfstack
    }
    {
      \confone = \sframe{\frame{\cscall \expr {\vec \exprtwo}\sep\cmd}{\regmap}{\opt}\cons\st}{\bmem}{\boolms} &
      \toAdd{\sem\expr_{\regmap, \lay}} = \add &
      \add \in \underline \lay(\Fn[\opt]) &
      \fbox{$\opt = \km[\syscall] \Rightarrow \add \in \underline \lay(\caps(\syscall))$}
    }
  \]
  \\[-3mm]
  \[
    \Infer[SIE][SCall-Unsafe][\textsc{SSCall-Unsafe}]{\sstep
      {\sframe{\frame{\cscall \expr{\vec \exprtwo}\sep\cmd}{\regmap}{\km[\syscall]}\cons\st}{\bmvar}{\boolms}\cons\cfstack}
      {\unsafe}
      {\dstep}
      {\ojump \add}}
    {
      \toAdd{\sem\expr_{\regmap, \lay}} = \add &
      \add \in \underline \lay(\Fn[\km]) &
      \fbox{$\add \notin \underline \lay(\caps(\syscall))$}
    }
  \]
  \\[-3mm]
  \[
    \Infer[SIE][SCall-Err][\textsc{SSCall-Error}]{
      \lay \red \confone\cons\cfstack
      \sto{\dstep}{\onone} \sconf{\err,\boolms}\cons \cfstack
    }
    {
      \confone = \sframe{\frame{\cscall \expr {\vec \exprtwo}\sep\cmd}{\regmap}{\opt}\cons\st}{\bmvar}{\boolms} &
      \toAdd{\sem\expr_{\regmap, \lay}}\notin \underline \lay(\Fn[\opt])
    }
  \]
  \\[-3mm]
  \[
    \Infer[SIE][Fence]{
      \lay \red
      \sframe{\frame{\cfence\sep\cmd}{\regmap}{\opt} \cons\st}{\bmvar}{\bot} \cons \cfstack
      \sto{\dstep}{\onone} \sframe{\frame{\cmd}{\regmap}{\opt}\cons\st}{\overline {\bmvar}}{\bot} \cons \cfstack
    }{}
  \]
  \vspace{-5mm}
  \caption{Speculative semantics, additional constructs.}
  \label{fig:scen2specsemifence}
\end{figure*}

    


\begin{figure}[t]
  \centering
  \small
  \begin{align*}
    \fencetrans( \vx \ass \expr)&\defsym \vx \ass \expr & 
    \fencetrans( \cskip)&\defsym \cskip\\
    \fencetrans( \cmemass \expr\exprtwo)&\defsym \cfence; \cmemass \expr\exprtwo & 
    \fencetrans( \cmemread \vx\expr)&\defsym\cfence\sep\cmemread \vx\expr\\ 
    \fencetrans( \ccall \expr {\vec \exprtwo})&\defsym  \cfence\sep \cscall \expr {\vec \exprtwo} & 
    \fencetrans( \cwhile{\expr}{\cmd})& \defsym \cwhile{\expr}{\fencetrans( \cmd)} \\
    \fencetrans( \csyscall \syscall {\vec \expr})&\defsym  \csyscall \syscall {\vec \expr} & 
    \fencetrans( \cif{\expr}{\cmd}{\cmdtwo})& \defsym\cif{\expr}{\fencetrans( \cmd)}{\fencetrans( \cmdtwo)}  \\
    \fencetrans( \stat\sep \cmd)&\defsym\fencetrans(\stat) ;\fencetrans(\cmd) &
    \fencetrans (\cnil) &\defsym \cnil.
  \end{align*}
  \vspace{-1em}
  \begin{align*}
    \fencetrans (\syss) &\defsym \syscall \mapsto\fencetrans (\syss(\syscall)) & 
    \fencetrans (\rfs) &\defsym  \id \mapsto
    \begin{cases}
      \rfs(\id) & \text{if } \id \in \Arr \cup \Funu\\
      \fencetrans (\rfs(\id)) & \text{otherwise }
    \end{cases} & 
    \fencetrans((\rfs, \syss, \caps)) &\defsym (\fencetrans (\rfs), \fencetrans(\syss), \caps).
  \end{align*}
  \vspace{-4mm}
  \caption{Simple fencing transformation}
  \label{fig:fencetrans}
\end{figure}

\paragraph{Simple Fencing Transformation}
\label{sec:fencetrans}
As a warm-up example we now consider the simple transformation
$\fencetrans$ shown in \Cref{fig:fencetrans}, and prove that it
satisfies the requirements of~\Cref{prop:mitigation}.  Similarly to
SESES~\cite{SESES}, this transformation places a $\cfence$ instruction
before \emph{all} the potentially unsafe operations, including call
instructions, as they can perform transitions to $\unsafe$. In
addition to what SESES does, it prevents BTB speculation by rewriting
ordinary call instructions with non-speculating call instructions.
With $\fencetrans$, any ongoing speculation is stopped before
executing potentially unsafe operations (including non-speculative
call instructions), and their transient execution is prevented, yet
leaving the program's semantics unaltered at the architectural level.

Finally, note that the transformation $\fencetrans$, does not
stop speculation completely, instead it only prevents the transient
execution of those instructions that can perform unsafe operations.
For instance, conditional instructions can still execute speculatively
if their branches contain no potentially unsafe operations. This
is not in contrast with \Cref{def:cbu} because even in transient
execution, a conditional instruction cannot perform any safety
violation by itself. However, as their branches can contain unsafe
operations, the transformation visits them to protect any unsafe
operation therein.
By observing that $\fencetrans$ enjoys both the properties in
\Cref{def:sempres,def:cbu}, we can draw the following conclusion:

\begin{lemma}
  \label{thm:mitigation}
  If a system $\system$ is \emph{kernel-safe}, then
  $\fencetrans(\system)$ is \emph{speculative kernel safe}
  and semantically equivalent to $\system$.
\end{lemma}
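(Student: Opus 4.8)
The plan is to obtain \Cref{thm:mitigation} as a corollary of \Cref{prop:mitigation}, instantiating the abstract transformation $\zeta$ with the concrete $\fencetrans$. Since $\system$ is kernel-safe by hypothesis, it suffices to check that $\fencetrans$ meets the two premises of that theorem: that it (i)~\emph{imposes speculative kernel safety} and (ii)~is \emph{system call semantics preserving}. Once both are discharged, conclusions (a)~speculative kernel safety of $\fencetrans(\system)$ and (b)~semantic equivalence to $\system$ follow verbatim from \Cref{prop:mitigation}. So the real work is to establish (i) and (ii) for $\fencetrans$.

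Condition (ii) is essentially immediate. By inspection of \Cref{fig:fencetrans}, $\fencetrans$ alters kernel code only by inserting $\cfence$ instructions in front of loads, stores and calls, and by rewriting every $\ccall$ into a non-speculating $\cscall$ with the same target and arguments; it leaves user-space code and array contents untouched, so in particular $\rfs \eqon{\Idu \cup \Ark} \fencetrans(\rfs)$. This is exactly the class of transformations covered by \Cref{rem:seq}, from which system call semantics preservation follows directly. Condition (i) is the crux, and is where the proof effort concentrates.

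For (i) I would argue that in $\fencetrans(\system)$ no load, store or call can perform an unsafe access while mis-speculating, so that every unsafe step is already witnessed by the classic execution, matching \Cref{def:cbu}. The key observation is syntactic: in the transformed code each potentially unsafe instruction is immediately preceded by a $\cfence$. To reach such an instruction the reduction must therefore first fire Rule~\ref{SIE:Fence}, which applies only when the mis-speculation flag is $\bot$ and which, as a side effect, commits the write buffer, leaving it empty. Since the mis-speculation flag is monotone within a speculative configuration and the fence sits adjacent to the guarded instruction in program order, the guarded load/store/call executes with flag $\bot$ and an empty buffer. This simultaneously neutralises all three speculation sources: store-to-load speculation is vacuous, since the buffered-memory lookup on an empty buffer returns the committed value with flag $\bot$ irrespective of any $\dload[\lbl]{i}$ directive; branch-target speculation is impossible, since $\cscall$ rejects $\djump[\lbl]{\add}$ directives and always resolves its target from $\expr$; and any earlier PHT mis-speculation would have set the flag to $\top$, forbidding the fence from firing and instead forcing a backtrack through Rule~\ref{SIE:Backtrack-Top}. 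Hence the address or call target computed at the unsafe step equals the one computed in a non-mis-speculative run, and the relevant Unsafe rule fires on the same address it would fire on classically.

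To lift this local fact to the form required by \Cref{def:cbu}, I would prove a simulation invariant stating that the architecturally committed state of a speculative trace (the bottom, flag-$\bot$ configuration) stays in lockstep with the classic semantics started from the committed memory $\overline{\bm\buf{(\lay \lcomp \rfs')}}$; the unsafe transition then transfers across this correspondence, yielding a classic reduction to $\unsafe$ and contradicting kernel safety of $\system$. I expect this correspondence lemma to be the main obstacle: one must thread an invariant relating the mis-speculation flag and the write buffer through every rule of the speculative semantics, and verify that interleaved speculation, commitment and backtracking never perturb the committed state relative to the classic execution. The fencing argument above is what makes the induction go through, since it guarantees that the only rules ever able to emit $\unsafe$ are applied in states where the speculative and classic semantics already coincide.
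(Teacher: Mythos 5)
Your proposal matches the paper's proof: it likewise reduces the statement to \Cref{prop:mitigation}, discharges semantics preservation via \Cref{rem:seq}, and establishes safety imposition by arguing that every fenced load/store/call executes with mis-speculation flag $\bot$ and an empty buffer (after eliminating $\dbt$ directives), so the unsafe step can be replayed in the non-speculative semantics. The ``simulation invariant'' you anticipate as the main obstacle is exactly the well-formedness invariant and step-simulation lemmas the paper uses in its appendix, so the approach is essentially identical.
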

\begin{proof}[Proof Sketch]
  Fix $\fencetrans(\system) = (\rfs, \syss, \caps)$. By induction on
  $\nat$, we show that when a system call reaches $\unsafe$ with the
  speculative semantics, i.e., when
  \[
    \lay \red[\fencetrans(\system)] \conf{\frame{\syss(\syscall)}{
        \regmap}{\km[\syscall]}, \bm\buf{(\lay \lcomp \rfs')}, \boolms}
    \sto{\Ds}{\Os}^n {\conf{\st, \bm {\buf'} {\mem}, \boolms'} \cons
      \cfstack} \sto \dir \obs \unsafe,
  \]
  $\boolms'$ is
  $\bot$. Moreover, we can assume without loss of generality that
  $\Ds$ does not contain $\dbt$ directives. By induction on the
  sequence of steps, and by inspecting the transition rules not
  involving $\dbt$, we observe that at each step the mis-speculation
  flag is $\bot$. Therefore, we deduce that the reduction can be
  mimicked by the non-speculative semantics, i.e.,
  \( \lay \red[\fencetrans(\system)] \conf{\frame{\syss(\syscall)}{
      \regmap}{\km[\syscall]}, \overline {\bm\buf{(\lay \lcomp
        \rfs')}}} \to^n {\conf{\st, \overline {\bm {\buf'} {\mem'}}}}\to \unsafe \).
  This proves that
  $\fencetrans$ imposes \emph{speculative kernel
    safety}. \emph{System call semantics preservation} follows from \Cref{rem:seq}.
\end{proof}

\begin{figure*}[t]
  \small
  \begin{align*}
    \optfencetrans[m][e]( \vx \ass \expr)&\defsym (\vx \ass \expr, m, e) \\
      \optfencetrans[m][e]( \cskip)&\defsym (\cskip, m, e)\\
      \optfencetrans[m][e]( \cmemread \vx\expr)&\defsym
                                               \begin{cases}
                                                 (\cmemread \vx\expr, \lnot e, e)& \text{if }m = \bot\\
                                                 (\cfence\sep\cmemread \vx\expr, \bot, \top)& \text{otherwise }
                                               \end{cases}\\ 
    \optfencetrans[m][e]( \cmemass \expr\exprtwo)&\defsym
                                                \begin{cases}
                                                  (\cmemass \expr\exprtwo, \bot, \bot) & \text{if }m = \bot\\
                                                  (\cfence; \cmemass \expr\exprtwo, \bot, \bot) & \text{otherwise }
                                                \end{cases} \\
    \optfencetrans[m][e]( \cwhile{\expr}{\cmd})& \defsym  (\cwhile{\expr}{{\optfencetrans[\top][\bot]( \cmd)}}, \top, \bot) \\
    \optfencetrans[m][e]( \cif{\expr}{\cmd}{\cmdtwo})& \defsym (\cif{\expr}{{\optfencetrans[\top][\bot]( \cmd) }}{{\optfencetrans[\top][\bot]( \cmdtwo) }}, \top, \bot) \\
          \optfencetrans[m][e]( \csyscall \syscall {\vec\expr})&\defsym  (\csyscall \syscall {\vec\expr}, \top, \bot) \\
      \optfencetrans[m][e]( \ccall \expr {\vec\exprtwo})&\defsym  (\cfence\sep \cscall \expr {\vec\exprtwo}, \top, \bot) \\
      \optfencetrans[m][e]( \cnil)&\defsym (\cnil,m, e)\\
    \optfencetrans[m][e]( \stat\sep \cmd)&\defsym \parbox[t]{18em}{ $(\stat';\cmd', m'', e'')$\ \emph{where} $(\stat', m', e')=  {\optfencetrans[m][e](\stat)}$,\\ \emph{and} $(\cmd', m'', e'') = \optfencetrans[m'][e'](\cmd)$.}
  \end{align*}
  \begin{align*}
    \optfencetrans (\syss) &\defsym \syscall \mapsto \optfencetrans[\top][\bot] (\syss(\syscall))&
    \optfencetrans (\rfs) &\defsym  \id \mapsto
    \begin{cases}
      \rfs(\id) & \text{if } \id \in \Arr \cup \Funu\\
      \optfencetrans[\top][\bot] (\rfs(\id)) & \text{otherwise }
    \end{cases}&
    \optfencetrans((\rfs, \syss, \caps)) &\defsym (\optfencetrans (\rfs), \optfencetrans(\syss), \caps).
  \end{align*}
  \vspace{-4mm}
  \caption{Optimized fencing transformation. When there is no risk of ambiguity, we use the notation $\optfencetrans (\rfs(\id))$ to represent the resulting command alone, rather than the full triple.}
  \label{fig:optfencetrans}
\end{figure*}

\paragraph{Optimized fencing transformation}
\label{sec:optfencetrans}
The transformation $\fencetrans$ of \Cref{fig:fencetrans} can be
enhanced by means of a simple static analysis to determine whether the
current instruction can be reached in transient execution or not.
The resulting transformation $\optfencetrans$ is described in
\Cref{fig:optfencetrans}. Specifically, while analyzing the AST of a
program, $\optfencetrans$ keeps track of two additional flags: $m$ and
$e$. The flag $m$ is $\top$ if the instruction may be reached after a
mis-speculation, and the flag $e$ is $\top$ if the instruction can
only be reached when the write-buffer is empty.  Therefore, when the
transformation encounters an instruction that may lead to the
$\unsafe$ state, but the flag $m$ is $\bot$, the $\cfence$ instruction
can be avoided.

For instance, consider the following snippet of code:
\[
  \cmd \defsym \cmemread \vx \expr ; \cmemread \vy \exprtwo ; \cmemass {\vz} {\mathtt{G}}. \qquad \text{with} \qquad
  \optfencetrans[\top][\bot](\cmd) =   (\cfence ; \cmemread \vx \expr ;  \cmemread \vy \exprtwo;  \cmemass \vz {\mathtt{G}}, \bot, \bot).
\]
when $\optfencetrans$ starts its execution, the initial value of $m$
is $\top$ and that of $e$ is $\bot$, meaning that the system may be
mis-speculating and the write buffer is not necessarily
empty. Therefore, $\optfencetrans$ protects the first load instruction
in $\cmd$ with a $\cfence$ instruction and sets $m$ to $\bot$ and $e$
to $\top$. When it instruments the instruction
$\cmemread \vy \exprtwo$, no $\cfence$ instruction is placed because
$m=\bot$. The value of $m$ is then set to $\lnot e = \bot$ since,
due to the empty buffer, no mis-speculation can occur (see
\Cref{fig:optfencetrans}, case of load operations with $m = \bot$ and
$e = \top$).  Notice that after analyzing the last store instruction
the transformation sets the flag $e$ to $\bot$, signaling that the
write buffer may be not empty anymore, and that subsequent unprotected
load instruction may be mis-speculating (see \Cref{fig:optfencetrans},
case of load operations with $m=\bot$).
Besides the load and store instruction we just analyzed, the
transformation sets the flag $m$ to $\top$ when it explores the
branches of $\kwd{if}$ or $\kwd{while}$ constructs to capture PHT
speculation. Finally, the flag $m$ is set to $\top$ and $e$ to $\bot$
when returning from an indirect call, due to the lack of
information on the called function.

Our optimizations do not impact the security of the transformation $\optfencetrans$,
as expressed by the following lemma:
\begin{lemma}
  \label{thm:optmitigation}
  If a system $\system$ is \emph{kernel-safe}, then
  $\optfencetrans(\system)$ is \emph{speculative kernel safe}
  and $\optfencetrans(\system)$ semantically equivalent to $\system$.
\end{lemma}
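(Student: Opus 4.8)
The plan is to invoke \Cref{prop:mitigation}, for which it suffices to establish that $\optfencetrans$ (i)~\emph{imposes speculative kernel safety} (\Cref{def:cbu}) and (ii)~is \emph{system call semantics preserving} (\Cref{def:sempres}). Condition~(ii) is immediate: reading off \Cref{fig:optfencetrans}, the only edits $\optfencetrans$ makes to kernel code are the insertion of $\cfence$ instructions and the replacement of each $\ccall \expr {\vec\exprtwo}$ with $\cfence\sep\cscall \expr {\vec\exprtwo}$, so \Cref{rem:seq} applies verbatim and yields~(ii). All the real work is in~(i).

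For~(i) I would reuse the skeleton of the proof of \Cref{thm:mitigation}, but replace the trivial observation that \emph{every} unsafe instruction is fenced by a genuine soundness argument for the static analysis carried by the flags $m$ and $e$. As there, from a speculative reduction reaching $\unsafe$ I would first discard all $\dbt$ directives---backtracking only erases book-kept configurations---and work with the resulting forward reduction, in which the topmost configuration is never popped. The central claim, proved by induction along this reduction, is the invariant: if the instruction under evaluation in the top configuration $\sframe{\frame{\cmd'}{\regmap}{\opt}\cons\st}{\bm\buf\mem}{\boolms}$ was emitted by $\optfencetrans$ with entry flags $(m,e)$, then $m=\bot \Rightarrow \boolms=\bot$ and $e=\top \Rightarrow \buf=\nil$.

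The induction is a case analysis on the transition rule, each time comparing the dynamic update of $\boolms$ and $\buf$ against the matching clause of \Cref{fig:optfencetrans}. The base case holds because the transformation enters every procedure and system-call body with $(m,e)=(\top,\bot)$, rendering both implications vacuous for arbitrary initial $\boolms$ and $\buf$. The informative cases are the memory operations. A load can raise $\boolms$ only through an STL forwarding step, which requires a non-empty buffer; this is exactly why the clause for loads with $m=\bot$ may set the outgoing flag to $\lnot e$, since $e=\top$ forces $\buf=\nil$, and an empty buffer always returns the most recent value with $f=\bot$ (\Cref{fig:lookup}), precluding mis-speculation regardless of the directive. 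Every store invalidates buffer-emptiness, which is why $e$ is reset to $\bot$ afterwards, while a preceding fence (inserted when $m=\top$) first re-establishes $\boolms=\bot$ through \ref{SIE:Fence}. Finally, the control-flow constructs---$\kwd{if}$, $\kwd{while}$, $\csyscall$, and the return edge of $\cscall$---re-enter with $m=\top$, so the $m$-implication is preserved vacuously across PHT mis-prediction and across frames whose micro-architectural state is unknown.

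With the invariant in hand, the conclusion mirrors \Cref{thm:mitigation}. Any $\unsafe$ transition is triggered by a load, store, or call; in $\optfencetrans(\system)$ that instruction is either guarded by a $\cfence$, forcing $\boolms=\bot$ via \ref{SIE:Fence}, or left unguarded, which by \Cref{fig:optfencetrans} happens only when its entry flag is $m=\bot$, so the invariant again yields $\boolms=\bot$. In either case the mis-speculation flag is $\bot$ immediately before the violation, so the forward reduction coincides with a non-speculative one and can be replayed by the semantics of \Cref{sec:language} to reach $\unsafe$, establishing \Cref{def:cbu}. The main obstacle I anticipate is precisely the bookkeeping in this invariant: the load/$e$-flag interaction with STL forwarding, together with the fact that the write buffer is global and spans procedure frames, is exactly where $\optfencetrans$ departs from the blunt $\fencetrans$, and where a single unsound flag assignment would silently reopen a transiently exploitable window.
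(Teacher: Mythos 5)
Your proposal is correct and follows essentially the same route as the paper: it reduces to \Cref{prop:mitigation}, discharges semantics preservation via \Cref{rem:seq}, and establishes safety imposition through the same invariant the paper uses (that $m=\bot$ forces $\boolms=\bot$ and $e=\top$ forces an empty buffer, maintained by induction along a $\dbt$-free reduction and then replayed non-speculatively). The paper formalizes this invariant as a well-formedness relation on configurations in its appendix, but the content and the case analysis you sketch match it.
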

\begin{proof}[Proof Sketch] 
The proof follows a similar structure to that of \Cref{thm:mitigation}. However, to establish that whenever an unsafe state is reached, the mis-speculation flag in the predecessor state is $\bot$, we rely on a form of subject reduction and of soundness of our static analysis. Specifically,
    whenever the execution reaches a command $\cmd$, either $\cmd$ is instrumented, namely:    
    \(
      \cmd = \optfencetrans[m][e](\cmd'); \optfencetrans[\top][\bot](\cmdtwo_1);\ldots; \optfencetrans[\top][\bot](\cmdtwo_k),
    \) 
    for some $\cmd', \cmdtwo_1, \ldots, \cmdtwo_k$, $m$, and $e$, or $\cmd$ reaches an instrumented command in one step. In both cases, if $m$ is $\bot$, the mis-speculation flag is also $\bot$ and if $e$ is $\top$, the write buffer is empty.


This property can be established by induction on $\nat$, with the help of other technical invariant properties.
\end{proof}

\begin{figure*}[t]
  \small
  \begin{align*}
    \nospec( \vx \ass \expr)&\defsym \vx \ass \expr \\
    \nospec( \cskip)&\defsym \cskip \\
    \nospec( \cmemass \expr\exprtwo)&\defsym \cmemass \expr\exprtwo; \cfence\\
    \nospec( \cmemread \vx\expr)&\defsym \cmemread \vx\expr\\ 
    \nospec( \cif{\expr}{\cmd}{\cmdtwo}, m, e)& \defsym \cif{\expr}{\cfence ; \cmd}{\cfence ; \cmdtwo}\\
    \nospec( \ccall \expr {\vec\exprtwo})&\defsym\cscall \expr {\vec\exprtwo}\\
    \nospec( \csyscall \syscall {\vec \exprtwo})&\defsym\csyscall \syscall {\vec\exprtwo}\\
    \nospec( \cwhile{\expr}{\cmd})& \defsym \cwhile{\expr}{ \cfence; \cmd };\cfence\\
    \nospec( \cmd\sep \cmdtwo)&\defsym \nospec(\cmd) ; \nospec(\cmdtwo)\\
    \nospec( \cnil)&\defsym \cnil.
  \end{align*}
  \vspace{-1em}
  \begin{align*}
    \nospec (\syss) &\defsym \syscall \mapsto \cfence\sep \nospec (\syss(\syscall)) &
    \nospec (\rfs) &\defsym  \id \mapsto
    \begin{cases}
      \rfs(\id) & \text{if } \id \in \Arr \cup \Funu\\
      \nospec (\rfs(\id)) & \text{otherwise }
    \end{cases} &
    \nospec((\rfs, \syss, \caps)) &\defsym (\nospec (\rfs), \nospec(\syss), \caps).
  \end{align*}
  \vspace{-4mm}
  \caption{Speculation-blocking transformation.}
  \label{fig:nospec}
\end{figure*}

\paragraph{Speculation-blocking transformation}
\label{sec:nospecfencetrans}
Notice that the transformations proposed in \Cref{fig:fencetrans,fig:optfencetrans}
prevent the speculative execution of unsafe commands \emph{lazily},
as they do not completely stop speculative execution, but they just prevent
the transient execution of some instructions that may be unsafe. For instance,
if in the conditional instruction  $\cif \expr \cmd \cmdtwo$ neither $\cmd$ nor $\cmdtwo$ contain
potentially unsafe operations---e.g., if they only contain register assignments---%
$\fencetrans$ and $\optfencetrans$ leave those commands unchanged and the CPU can
still speculate over the value of $\expr$.

In line with Intel${}^{\text{\textregistered}}$'s guidelines for specualtive attacks'
mitigation~\cite{IntelGuidelines4}, another approach is to systematically prevent
any form of speculation by placing speculation barriers after
speculation sources and using safe jump instructions. In our model, this solution
approach down to:
\begin{itemize}
\item placing a $\cfence$ instruction after every direct branch to stop PHT speculation,
\item replacing every ordinary call instruction with a non-speculative one to stop BTB speculation,
\item placing a $\cfence$ instruction after every store operation to stop STL speculation. 
\end{itemize}

The transformation we outlined above is formally defined in
\Cref{fig:nospec}. It is quite easy to observe that $\nospec$
preserves the semantics of the system it is applied to and that no
potentially unsafe instruction can be reached during transient
execution.
As a consequence, we conclude that $\nospec$ enforces \emph{speculative kernel safety} on \emph{kernel safe} systems: 

\begin{lemma}
  \label{thm:nospecmitigation}
  If a system $\system$ is \emph{kernel-safe}, then
  $\nospec(\system)$ is \emph{speculative kernel safe}
  and $\nospec(\system)$ semantically equivalent to $\system$.
\end{lemma}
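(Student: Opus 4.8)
The plan is to derive the statement from \Cref{prop:mitigation}, exactly as was done for \Cref{thm:mitigation,thm:optmitigation}: it suffices to verify that $\nospec$ (i)~\emph{imposes speculative kernel safety} in the sense of \Cref{def:cbu}, and (ii)~is \emph{system call semantics preserving} in the sense of \Cref{def:sempres}. Condition (ii) will be immediate from \Cref{rem:seq}: inspecting \Cref{fig:nospec}, $\nospec$ only inserts $\cfence$ instructions (after stores, at the head of every conditional and loop body, after loops, and at the head of each system-call body) and replaces every ordinary call $\ccall$ with a non-speculating safe call $\cscall$, while leaving user-space code and arrays untouched. So the whole effort goes into condition (i).

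Following the proof of \Cref{thm:mitigation}, I would show that whenever a system call of $\nospec(\system)$ reaches $\unsafe$ under the speculative semantics, the mis-speculation flag $\boolms$ of the configuration immediately preceding the $\unsafe$ step is $\bot$. Once this is established, the offending step is a genuine, non-transient unsafe load, store, or safe call, so the whole reduction can be replayed by the non-speculative semantics after committing the buffer, which is precisely the implication required by \Cref{def:cbu}. As before, I may assume without loss of generality that the directive stack driving the reduction to $\unsafe$ contains no $\dbt$, since an explicit backtrack could only discard the very path we are trying to exhibit. The invariant I would actually maintain along the reduction is stronger: \emph{no potentially-unsafe instruction (load, store, or safe call) is ever executed while $\boolms = \top$, and the write buffer is empty whenever a load is about to fire}. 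This is where the uniform speculation-blocking of $\nospec$ pays off, since a flag can be raised only by three events, each sealed by the transformation. A wrong branch prediction (Rule~\ref{SIE:If-Branch}) is harmless because every conditional and loop body begins with a $\cfence$: by Rule~\ref{SIE:Fence} the fence fires only when $\boolms = \bot$, so a body entered under a raised flag is forced to backtrack via Rule~\ref{SIE:Backtrack-Top} before any of its instructions runs. An STL read of a stale value (Rule~\ref{SIE:SLoad} with flag $f = \top$) is ruled out because every store is immediately followed by a $\cfence$ that flushes the buffer, so by compositionality of $\nospec$ the buffer is empty before any later load and the lookup of \Cref{fig:lookup} returns the committed value with flag $\bot$. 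Finally a wrong indirect jump target (Rule~\ref{SIE:CallJmp}) is impossible because every indirect call is a safe call $\cscall$, which does not admit $\djump$ directives. Hence a load, store, or safe call can only fire with $\boolms = \bot$, giving the invariant and thus the desired $\boolms = \bot$ in the predecessor of $\unsafe$.

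The hard part will be the bookkeeping for this buffer-and-flag invariant across procedure and system-call boundaries. Concretely, I must argue that the $\cfence$ prefixed to each system-call body first flushes the arbitrary incoming buffer $\buf$ of \Cref{def:cbu} (or forces a backtrack if $\boolms = \top$), that the commit performed on return (Rule~\ref{ALE:Spec-Term}) keeps the ``empty buffer before a load'' property stable through nested calls, and that the backtracks forced at fences interact correctly with the attacker-level Rules~\ref{ALE:Spec-D}, \ref{ALE:Spec-S}, and \ref{ALE:Spec-BT} that select between directive-driven steps, plain $\dstep$ steps, and backtracking. This invariant is checked source-construct by source-construct against the semantics of \Cref{fig:scen2specsemiexcerpt,fig:scen2specsemifence}, by induction on the length of the reduction. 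Once it is in place, \emph{system call semantics preservation} follows from \Cref{rem:seq}, and an appeal to \Cref{prop:mitigation} yields both that $\nospec(\system)$ is \emph{speculative kernel safe} and that it is semantically equivalent to $\system$.
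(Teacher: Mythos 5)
Your proposal is correct and follows essentially the same route as the paper: the paper also reduces the claim to \Cref{prop:mitigation}, obtains \emph{system call semantics preservation} from the observation behind \Cref{rem:seq} (formally, \Cref{cor:fencesem2}), and proves \emph{imposition of speculative kernel safety} by a step-indexed invariant (a well-formedness predicate requiring an empty write buffer and an unset mis-speculation flag) showing that no load, store, or safe call can fire transiently, with exactly your three-way case analysis on branch, STL, and BTB speculation and the same $\dbt$-free WLOG reduction. The only cosmetic difference is that the paper packages the invariant as a formal predicate preserved by a one-step lemma and notes the argument is ``analogous to that of \Cref{thm:mitigation}'', whereas you spell the invariant out inline; the content is the same.
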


\begin{proof}[Proof Sketch]
  The proof of this result is analogous to that of \Cref{thm:mitigation}.
\end{proof}

\paragraph{Covering additional speculation mechanisms} In this work,
we focused on speculation related to Spectre v1, v2, and v4 in order
to keep our model abstract enough to allow for formal reasoning, while
covering the most significant speculation mechanisms. However, our
model and results can be extended to cover additional forms of
speculation, such as Load Value Injection (LVI) and return value
speculation. For instance, LVI can be modeled by allowing attackers to
control the loaded value with specific directives. Return value
speculation can be addressed by introducing explicit returns and
directives controlling predicted return addresses. The notions of
semantic preservation and speculative safety imposition naturally
extend to additional speculation mechanisms, ensuring the validity of
Theorem 3 in the enriched model. However, the notion of \emph{safety
  imposition} would become more stringent. Therefore, our
transformations would need to be modified to take into account the
additional sources of speculation.  For instance, the $\optfencetrans$
transformation could be modified to address LVI and return speculation
by replacing return instructions with specific non-speculative return
instructions, similar to our $\kwd{scall}$ instruction, and by
treating loaded values as potential sources of speculation, i.e., by
setting $m=\top$ after each load.


\paragraph{Final remarks} \Cref{prop:mitigation} establishes two
sufficient conditions under which a program transformation turns a
\emph{kernel safe} system into a semantically equivalent
\emph{speculative kernel safe} one, and we provided three
transformations that meet these conditions.  Moreover, as reported in
\Cref{tab:mitigations}, our transformations are capable of stopping
speculative probing on the attacks described in \Cref{sec:discussion}.


\section{Experimental Evaluation}
\label{sec:experiment}
\newcommand{\fimpl}{name\_of\_impl\xspace}

In this section, we evaluate the performance overhead of the
program transformations presented in the previous section.
Specifically, our goal is to measure the impact of our
transformations to \emph{kernel-space} execution, \emph{user-} space
execution, and \emph{user-space} I/O-bound workloads.





\subsection{Overview of the Implementation}
\label{sec:impl}

We implemented the transformations of
\Cref{fig:fencetrans,fig:optfencetrans,fig:nospec} as parts of the
LLVM/Clang compiler infrastructure~\cite{LLVM} version 18. More
precisely, we defined a novel \texttt{MachineFunctionPass} called
\texttt{X86SpeculativeSafetyPass}, implementing our transformations,
which can be activated by specific command-line flags. When a
transformation is activated, our pass iterates over every block and
every instruction within the block, protecting each instruction as
required by the corresponding definition in
\Cref{fig:fencetrans,fig:optfencetrans,fig:nospec}. We were able to
generalize the transformations from our language to the more
expressive x86 ISA by relying on the LLVM API, which allows us to
determine if a certain instruction may load, store, or
jump. Instructions showing these behaviors are protected in the same
way the selected transformation protects instructions
$\cmemread \vx \expr$, $\cmemass \expr \exprtwo$, and
$\ccall \expr {\vec\exprtwo}$ respectively, e.g., by replacing
$\kwd{call}$ instructions with $\cfence; \cscall \expr {\vec\exprtwo}$
sequence.  As we mentioned in \Cref{sec:sksenforcement}, the semantics
of $\cfence$ reproduces that of the x86 instruction \texttt{lfence},
therefore our pass uses \texttt{lfence} to implement the
$\cfence$ instruction. Finally, our LLVM pass supports two ways of
implementing the non-speculative call instruction
$\cscall \expr {\vec \exprtwo}$:
\begin{itemize}
\item One way is to exchange indirect jump and call instructions  with \emph{retpoline} thunks, aimed at preventing branch target speculation. Specifically, we replace indirect jumps and calls with Linux's retpoline thunks at boot time by passing the specific boot command-line parameter \texttt{spectre\_v2=retpoline}.
\item A faster alternative is to rely on Intel${}^{\text{\textregistered}}$ eIBRS~\cite{eIBRS}, leaving branch instructions unchanged. This measure prevents the target of any indirect branch that is executed in \emph{kernel-space} to be predicted depending on \emph{user-space} execution.     
\end{itemize}

We chose to benchmark both the mechanisms as in our attack model,
where RSB speculation is not taken into account, retpoline offers
stronger security guarantees than eIBRS: contrarily to eIBRS retpoline
is not subject to confused-deputy-attacks~\cite{BHInjection,eIBRS},
although it has larger performance overheads in
practice~\cite{MitigationOverhead}.

Like similar passes such as SLH and SESES, the
\texttt{X86SpeculativeSafetyPass} runs just before register allocation
to ensure that subsequent passes do not alter the program's control
flow, which could compromise its protections. The implementation of
our LLVM pass, along with additional details on its behavior, is
available in~\cite{repo}.

      

\subsection{Experimental Evaluation}
\label{sec:expeval}

\paragraph{Performance evaluation goals} Our experimental evaluation aims to determine the performance overhead induced by the transformations of \Cref{fig:fencetrans,fig:optfencetrans,fig:nospec} on \emph{kernel-space} execution, user-space programs that combine \emph{kernel-} and \emph{user-}space execution, and on user-space programs that perform intensive I/O workloads.

\paragraph{Benchmark choice}  We chose to evaluate the overhead on heavy \emph{kernel-space} tasks with the UnixBench Benchmark, which is meant to evaluate the performance of tasks with high pressure on system calls like \texttt{execl}, \texttt{getpid}, \texttt{fork}, or by executing patterns like file copy and inter-process communication. To benchmark real-world computationally heavy \emph{user-space} applications, we used the SPEC${}^{\text{\textregistered}}$ benchmark suite, which includes tasks like compilation, interpretation, compression, simulation, and other  that engage \emph{kernel-space} execution only sporadically.
To measure the overhead on I/O-bound workloads, we used the following benchmarks:
\begin{itemize}
\item The ab~\cite{ab} benchmark, for measuring the latency of widely used web servers. Specifically, we measured the latency of 1000 HTTP GET requests for a 1GB file.
\item The crossdb~\cite{crossdb} benchmark, to measure the response speed of the \texttt{sqlite} and \texttt{crossdb} DBMSs to 200.000 and 2.000.000 queries, respectively.
\item The cryptsetup~\cite{cryptsetup} benchmark, for measuring the throughput of kernel side cryptographic applications. Specifically, we measured the encryption and decryption throughput of AES-EBC
\item The ugrep~\cite{ugrep} benchmark, for assessing the speed of GNU's \texttt{grep} tool on different patterns across multiple files and directories. 
\item The vbench~\cite{vbench} benchmark to measure the transcoding speed of \texttt{ffmpeg} on large videos.
\end{itemize}

\paragraph{Performance evaluation methodology} All the benchmarks were executed on a machine running the Debian 12 distribution using the following Linux 6.10 kernels: 

\begin{varenumerate}
\item\label{k:vanilla} \textbf{Off-the-shelf kernel}. This kernel was compiled from the Linux source using Clang with Debian 12's default settings.
\item\label{k:instrumented} \textbf{Baseline transformed kernel}. This kernel is protected with the $\fencetrans$ transformation (\Cref{fig:fencetrans}), placing $\mathtt{lfence}$ instructions before instructions that interact with the memory and blocking BTB speculation via retpoline or eIBRS.
\item\label{k:instrumented} \textbf{Optimized transformed kernel}. This kernel is protected with the implementation of the $\optfencetrans$ transformation (\Cref{fig:optfencetrans}), selectively placing $\mathtt{lfence}$ instructions before instructions that interact with the memory, and blocking BTB speculation via retpoline or eIBRS.
\item\label{k:nospec} \textbf{Intel${}^{\text{\textregistered}}$ Guidelines' kernel}. This kernel is obtained by transforming the kernel with the implementation of the $\nospec$ transformation (\Cref{fig:nospec}) that applies Intel${}^{\text{\textregistered}}$ guidelines to block \emph{kernel-space} PHT, BTB and STL speculation.
\end{varenumerate}

\ifdefined\arxiv{
\begin{table}[t]
  \resizebox{\textwidth}{!}{%
    \begin{tabular}{lllll}
      \toprule
      & Off-the-shelf & Optimized transformation & Baseline transformation & Intel's guidelines \\
      \midrule
      {perlbench}
      & $\begin{array}[h]{r@{\,\,=\,\,}l} t & 188.04\ s\\\sigma & 0.82\ s\end{array}$
      & $\begin{array}[h]{r@{\,\,=\,\,}l} t & 190.42\ s\\\sigma & 0.92\ s\end{array}$
      & $\begin{array}[h]{r@{\,\,=\,\,}l} t & 190.98\ s\\\sigma & 1.28\ s\end{array}$
      & $\begin{array}[h]{r@{\,\,=\,\,}l} t & 190.07\ s\\\sigma & 1.18\ s\end{array}$ \\[3.5mm]
      {gcc}
      & $\begin{array}[h]{r@{\,\,=\,\,}l} t & 319.08\ s\\\sigma & 0.64\ s\end{array}$
      & $\begin{array}[h]{r@{\,\,=\,\,}l} t & 333.46\ s\\\sigma & 0.28\ s\end{array}$
      & $\begin{array}[h]{r@{\,\,=\,\,}l} t & 347.04\ s\\\sigma & 0.56\ s\end{array}$
      & $\begin{array}[h]{r@{\,\,=\,\,}l} t & 343.77\ s\\\sigma & 0.55\ s\end{array}$ \\[3.5mm]
      {mcf}
      & $\begin{array}[h]{r@{\,\,=\,\,}l} t & 436.33\ s\\\sigma & 5.85\ s\end{array}$
      & $\begin{array}[h]{r@{\,\,=\,\,}l} t & 437.87\ s\\\sigma & 0.63\ s\end{array}$
      & $\begin{array}[h]{r@{\,\,=\,\,}l} t & 439.33\ s\\\sigma & 0.51\ s\end{array}$
      & $\begin{array}[h]{r@{\,\,=\,\,}l} t & 437.80\ s\\\sigma & 0.68\ s\end{array}$ \\[3.5mm]
      {omnetpp}
      & $\begin{array}[h]{r@{\,\,=\,\,}l} t & 287.51\ s\\\sigma & 6.54\ s\end{array}$
      & $\begin{array}[h]{r@{\,\,=\,\,}l} t & 288.24\ s\\\sigma & 4.82\ s\end{array}$
      & $\begin{array}[h]{r@{\,\,=\,\,}l} t & 292.42\ s\\\sigma & 8.20\ s\end{array}$
      & $\begin{array}[h]{r@{\,\,=\,\,}l} t & 288.79\ s\\\sigma & 6.53\ s\end{array}$ \\[3.5mm]
      {xalancbmk}
      & $\begin{array}[h]{r@{\,\,=\,\,}l} t & 154.04\ s\\\sigma & 0.42\ s\end{array}$
      & $\begin{array}[h]{r@{\,\,=\,\,}l} t & 156.93\ s\\\sigma & 0.29\ s\end{array}$
      & $\begin{array}[h]{r@{\,\,=\,\,}l} t & 156.50\ s\\\sigma & 0.41\ s\end{array}$
      & $\begin{array}[h]{r@{\,\,=\,\,}l} t & 156.46\ s\\\sigma & 0.39\ s\end{array}$ \\[3.5mm]
      {x264}
      & $\begin{array}[h]{r@{\,\,=\,\,}l} t & 128.38\ s\\\sigma & 0.49\ s\end{array}$
      & $\begin{array}[h]{r@{\,\,=\,\,}l} t & 129.98\ s\\\sigma & 0.48\ s\end{array}$
      & $\begin{array}[h]{r@{\,\,=\,\,}l} t & 132.10\ s\\\sigma & 0.62\ s\end{array}$
      & $\begin{array}[h]{r@{\,\,=\,\,}l} t & 131.24\ s\\\sigma & 0.63\ s\end{array}$ \\[3.5mm]
      {deepsjeng}
      & $\begin{array}[h]{r@{\,\,=\,\,}l} t & 233.84\ s\\\sigma & 0.21\ s\end{array}$
      & $\begin{array}[h]{r@{\,\,=\,\,}l} t & 234.72\ s\\\sigma & 0.26\ s\end{array}$
      & $\begin{array}[h]{r@{\,\,=\,\,}l} t & 236.87\ s\\\sigma & 0.23\ s\end{array}$
      & $\begin{array}[h]{r@{\,\,=\,\,}l} t & 234.70\ s\\\sigma & 0.21\ s\end{array}$ \\[3.5mm]
      {leela}
      & $\begin{array}[h]{r@{\,\,=\,\,}l} t & 295.56\ s\\\sigma & 0.96\ s\end{array}$
      & $\begin{array}[h]{r@{\,\,=\,\,}l} t & 296.98\ s\\\sigma & 0.92\ s\end{array}$
      & $\begin{array}[h]{r@{\,\,=\,\,}l} t & 298.77\ s\\\sigma & 0.75\ s\end{array}$
      & $\begin{array}[h]{r@{\,\,=\,\,}l} t & 297.29\ s\\\sigma & 0.86\ s\end{array}$ \\[3.5mm]
      {exchange2}
      & $\begin{array}[h]{r@{\,\,=\,\,}l} t & 119.53\ s\\\sigma & 0.27\ s\end{array}$
      & $\begin{array}[h]{r@{\,\,=\,\,}l} t & 120.17\ s\\\sigma & 0.19\ s\end{array}$
      & $\begin{array}[h]{r@{\,\,=\,\,}l} t & 120.68\ s\\\sigma & 0.19\ s\end{array}$
      & $\begin{array}[h]{r@{\,\,=\,\,}l} t & 120.25\ s\\\sigma & 0.21\ s\end{array}$ \\[3.5mm]
      {xz}
      & $\begin{array}[h]{r@{\,\,=\,\,}l} t & 909.59\ s\\\sigma & 3.62\ s\end{array}$
      & $\begin{array}[h]{r@{\,\,=\,\,}l} t & 923.79\ s\\\sigma & 2.94\ s\end{array}$
      & $\begin{array}[h]{r@{\,\,=\,\,}l} t & 940.53\ s\\\sigma & 2.83\ s\end{array}$
      & $\begin{array}[h]{r@{\,\,=\,\,}l} t & 933.27\ s\\\sigma & 4.55\ s\end{array}$ \\
      \bottomrule
    \end{tabular}
  }
  \caption{Measurements of the average run time $t$ and standard deviation $\sigma$ over 40 runs of the SPEC${}^{\text{\textregistered}}$ CPU 2017 Benchmark with eIBRS active.}
  \label{tab:data2a}
\end{table}

\begin{table}[t]
  \resizebox{\textwidth}{!}{%
    \begin{tabular}{lllll}
      \toprule
      & Off-the-shelf & Optimized transformation & Baseline transformation & Intel's guidelines \\
      \midrule
      {perlbench}
      &  $\begin{array}[h]{r@{\,\,=\,\,}l} t & 188.65\ s\\\sigma & 0.84\ s\end{array}$
      & $\begin{array}[h]{r@{\,\,=\,\,}l} t & 189.02\ s\\\sigma & 0.83\ s\end{array}$
      & $\begin{array}[h]{r@{\,\,=\,\,}l} t & 190.12\ s\\\sigma & 1.16\ s\end{array}$
      & $\begin{array}[h]{r@{\,\,=\,\,}l} t & 190.34\ s\\\sigma & 2.54\ s\end{array}$ \\[3.5mm]
      {gcc}
      & $\begin{array}[h]{r@{\,\,=\,\,}l} t & 319.62\ s\\\sigma & 0.69\ s\end{array}$
      & $\begin{array}[h]{r@{\,\,=\,\,}l} t & 333.13\ s\\\sigma & 0.47\ s\end{array}$
      & $\begin{array}[h]{r@{\,\,=\,\,}l} t & 345.71\ s\\\sigma & 0.47\ s\end{array}$
      & $\begin{array}[h]{r@{\,\,=\,\,}l} t & 344.00\ s\\\sigma & 1.02\ s\end{array}$ \\[3.5mm]
      {mcf}
      &  $\begin{array}[h]{r@{\,\,=\,\,}l} t & 436.55\ s\\\sigma & 2.54\ s\end{array}$
      & $\begin{array}[h]{r@{\,\,=\,\,}l} t & 437.10\ s\\\sigma & 0.61\ s\end{array}$
      & $\begin{array}[h]{r@{\,\,=\,\,}l} t & 438.05\ s\\\sigma & 0.81\ s\end{array}$
      & $\begin{array}[h]{r@{\,\,=\,\,}l} t & 437.81\ s\\\sigma & 1.05\ s\end{array}$ \\[3.5mm]
      {omnetpp}
      & $\begin{array}[h]{r@{\,\,=\,\,}l} t & 286.75\ s\\\sigma & 3.35\ s\end{array}$
      & $\begin{array}[h]{r@{\,\,=\,\,}l} t & 290.81\ s\\\sigma & 7.47\ s\end{array}$
      & $\begin{array}[h]{r@{\,\,=\,\,}l} t & 289.78\ s\\\sigma & 6.98\ s\end{array}$
      & $\begin{array}[h]{r@{\,\,=\,\,}l} t & 288.40\ s\\\sigma & 3.53\ s\end{array}$ \\[3.5mm]
      {xalancbmk}
      & $\begin{array}[h]{r@{\,\,=\,\,}l} t & 155.46\ s\\\sigma & 0.25\ s\end{array}$
      & $\begin{array}[h]{r@{\,\,=\,\,}l} t & 156.04\ s\\\sigma & 0.42\ s\end{array}$
      & $\begin{array}[h]{r@{\,\,=\,\,}l} t & 157.30\ s\\\sigma & 0.70\ s\end{array}$
      & $\begin{array}[h]{r@{\,\,=\,\,}l} t & 156.59\ s\\\sigma & 0.50\ s\end{array}$ \\[3.5mm]
      {x264}
      & $\begin{array}[h]{r@{\,\,=\,\,}l} t & 128.66\ s\\\sigma & 0.46\ s\end{array}$
      & $\begin{array}[h]{r@{\,\,=\,\,}l} t & 129.99\ s\\\sigma & 0.59\ s\end{array}$
      & $\begin{array}[h]{r@{\,\,=\,\,}l} t & 131.58\ s\\\sigma & 0.65\ s\end{array}$
      & $\begin{array}[h]{r@{\,\,=\,\,}l} t & 131.11\ s\\\sigma & 0.66\ s\end{array}$ \\[3.5mm]
      {deepsjeng}
      & $\begin{array}[h]{r@{\,\,=\,\,}l} t & 234.00\ s\\\sigma & 0.24\ s\end{array}$
      & $\begin{array}[h]{r@{\,\,=\,\,}l} t & 235.61\ s\\\sigma & 0.16\ s\end{array}$
      & $\begin{array}[h]{r@{\,\,=\,\,}l} t & 235.27\ s\\\sigma & 0.86\ s\end{array}$
      & $\begin{array}[h]{r@{\,\,=\,\,}l} t & 235.88\ s\\\sigma & 0.26\ s\end{array}$ \\[3.5mm]
      {leela}
      & $\begin{array}[h]{r@{\,\,=\,\,}l} t & 296.35\ s\\\sigma & 0.86\ s\end{array}$
      & $\begin{array}[h]{r@{\,\,=\,\,}l} t & 296.87\ s\\\sigma & 0.80\ s\end{array}$
      & $\begin{array}[h]{r@{\,\,=\,\,}l} t & 296.95\ s\\\sigma & 0.88\ s\end{array}$
      & $\begin{array}[h]{r@{\,\,=\,\,}l} t & 297.33\ s\\\sigma & 0.79\ s\end{array}$ \\[3.5mm]
      {exchange2}
      & $\begin{array}[h]{r@{\,\,=\,\,}l} t & 120.00\ s\\\sigma & 0.26\ s\end{array}$
      & $\begin{array}[h]{r@{\,\,=\,\,}l} t & 120.15\ s\\\sigma & 0.22\ s\end{array}$
      & $\begin{array}[h]{r@{\,\,=\,\,}l} t & 120.21\ s\\\sigma & 0.20\ s\end{array}$
      & $\begin{array}[h]{r@{\,\,=\,\,}l} t & 120.29\ s\\\sigma & 0.28\ s\end{array}$ \\[3.5mm]
      {xz}
      &  $\begin{array}[h]{r@{\,\,=\,\,}l} t & 910.89\ s\\\sigma & 2.19\ s\end{array}$
      & $\begin{array}[h]{r@{\,\,=\,\,}l} t & 930.33\ s\\\sigma & 3.04\ s\end{array}$
      & $\begin{array}[h]{r@{\,\,=\,\,}l} t & 936.67\ s\\\sigma & 4.14\ s\end{array}$
      & $\begin{array}[h]{r@{\,\,=\,\,}l} t & 937.97\ s\\\sigma & 3.19\ s\end{array}$ \\[3.5mm]
      \bottomrule
    \end{tabular}
  }
  \caption{Measurements of the average run time $t$ and standard deviation $\sigma$ over 40 runs of the SPEC${}^{\text{\textregistered}}$ CPU 2017 Benchmark with retpoline active.}
  \label{tab:data2b}
\end{table}

\begin{table}[t]
  \centering
  \small
  \begin{subtable}[t]{\textwidth}
  \resizebox{\textwidth}{!}{%
    \begin{tabular}[t]{lllll}
      \toprule
      & Off-the-shelf & Optimized transformation & Baseline transformation & Intel's guidelines \\
      \midrule
      Execl Throughput  & $4455~\mathit{lps}$ & $729.1~\mathit{lps}$ & $405.8~\mathit{lps}$ & $425.7~\mathit{lps}$\\
      File Copy  & $4.658\cdot 10^6~\mathit{KB/s}$ & $1.16\cdot 10^6~\mathit{KB/s}$ & $6.524\cdot 10^5~\mathit{KB/s}$ & $7.383\cdot 10^5~\mathit{KB/s}$\\
      Pipe-based IPC  & $2.396\cdot 10^5~\mathit{lps}$ & $5.54\cdot 10^4~\mathit{lps}$ & $2.813\cdot 10^4~\mathit{lps}$ & $3.379\cdot 10^4~\mathit{lps}$\\
      Process Creation  & $9622~\mathit{lps}$ & $985.8~\mathit{lps}$ & $608.4~\mathit{lps}$ & $620~\mathit{lps}$\\
      Shell Scripts (8 concurrent)  & $5826~\mathit{lpm}$ & $2091~\mathit{lpm}$ & $1443~\mathit{lpm}$ & $1534~\mathit{lpm}$\\
      \texttt{getpid} overhead  & $3.234\cdot 10^6~\mathit{lps}$ & $7.181\cdot 10^5~\mathit{lps}$ & $4.56\cdot 10^5~\mathit{lps}$ & $4.706\cdot 10^5~\mathit{lps}$\\
      \bottomrule
    \end{tabular}
  }
  \caption{Results with eIBRS}
\end{subtable}

\vspace{2mm}

\begin{subtable}[t]{\textwidth}
  \resizebox{\textwidth}{!}{%
    \begin{tabular}[t]{lllll}
      \toprule
      & Off-the-shelf & Optimized transformation & Baseline transformation & Intel's guidelines \\
      \midrule
      Execl Throughput & $4354~\mathit{lps}$ & $721~\mathit{lps}$ & $406.4~\mathit{lps}$ & $426.6~\mathit{lps}$  \\
      File Copy & $4.483\cdot 10^6~\mathit{KB/s}$ & $1.136\cdot 10^6~\mathit{KB/s}$ & $6.501\cdot 10^5~\mathit{KB/s}$ & $7.372\cdot 10^5~\mathit{KB/s}$  \\
      Pipe-based IPC & $2.325\cdot 10^5~\mathit{lps}$ & $5.395\cdot 10^4~\mathit{lps}$ & $2.9\cdot 10^4~\mathit{lps}$ & $3.289\cdot 10^4~\mathit{lps}$ \\
      Process Creation & $9344~\mathit{lps}$ & $991.4~\mathit{lps}$ & $593.9~\mathit{lps}$ & $626.5~\mathit{lps}$ \\
      Shell Scripts (8 concurrent) & $5786~\mathit{lpm}$ & $2087~\mathit{lpm}$ & $1344~\mathit{lpm}$ & $1425~\mathit{lpm}$  \\
      \texttt{getpid} overhead & $3.234\cdot 10^6~\mathit{lps}$ & $7.188\cdot 10^5~\mathit{lps}$ & $4.566\cdot 10^5~\mathit{lps}$ & $4.709\cdot 10^5~\mathit{lps}$  \\
      \bottomrule
    \end{tabular}
  }
  \caption{Results with retpoline}
\end{subtable}

  \caption{Throughput of the UnixBench Benchmark. The unit $\mathit{lps}$ stands for ``loops per second'', indicating how many times a specific task has completed within one second. Similarly $\mathit{lps}$ stands for ``loops per minute''.}
  \label{tab:data1}
\end{table}

\begin{table}[t]
  \centering
  \resizebox{\textwidth}{!}{
    \begin{tabular}{lllll}
      \toprule
      & Off-the-shelf & Optimized transformation & Baseline transformation & Intel's guidelines \\
      \midrule
      {Apache}
      & $\begin{array}[h]{r@{\,\,=\,\,}l} t & 169\ ms\\\sigma & 5.2\ ms\end{array}$
      & $\begin{array}[h]{r@{\,\,=\,\,}l} t & 592\ ms\\\sigma & 7.2\ ms\end{array}$
      & $\begin{array}[h]{r@{\,\,=\,\,}l} t & 954\ ms\\\sigma & 10.8\ ms\end{array}$
      & $\begin{array}[h]{r@{\,\,=\,\,}l} t & 858\ ms\\\sigma & 11.2\ ms\end{array}$ \\[3.5mm]
      {ngninx}
      & $\begin{array}[h]{r@{\,\,=\,\,}l} t & 140\ ms\\\sigma & 9.4\ ms\end{array}$
      & $\begin{array}[h]{r@{\,\,=\,\,}l} t & 578\ ms\\\sigma & 13.6\ ms\end{array}$
      & $\begin{array}[h]{r@{\,\,=\,\,}l} t & 999\ ms\\\sigma & 19.4\ ms\end{array}$
      & $\begin{array}[h]{r@{\,\,=\,\,}l} t & 865\ ms\\\sigma & 18\ ms\end{array}$ \\[3.5mm]
      {AES ECB encryption}
      & $\begin{array}[h]{r@{\,\,=\,\,}l} th & 5711.0\ \mathit{MiB/s}\end{array}$
                      & $\begin{array}[h]{r@{\,\,=\,\,}l} th & 1724.7\ \mathit{MiB/s}\end{array}$
                                                 & $\begin{array}[h]{r@{\,\,=\,\,}l} th  & 1023.2\ \mathit{MiB/s}\end{array}$
                                                                           & $\begin{array}[h]{r@{\,\,=\,\,}l} th & 1104.9\ \mathit{MiB/s}\end{array}$ \\
      {AES ECB decryption}
      & $\begin{array}[h]{r@{\,\,=\,\,}l} th & 5700.2\ \mathit{MiB/s}\end{array}$
                      & $\begin{array}[h]{r@{\,\,=\,\,}l} th & 1726.3\ \mathit{MiB/s}\end{array}$
                                                 & $\begin{array}[h]{r@{\,\,=\,\,}l} th & 1024.5\ \mathit{MiB/s}\end{array}$
                                                                           & $\begin{array}[h]{r@{\,\,=\,\,}l} th & 1112.4\ \mathit{MiB/s}\end{array}$ \\[3.5mm]
      {SQLite}
      & $\begin{array}[h]{r@{\,\,=\,\,}l} t & 1257.15\ ms\\\sigma & 0.36\ ms\end{array}$
      & $\begin{array}[h]{r@{\,\,=\,\,}l} t & 1235.43\ ms\\\sigma & 2.59\ ms\end{array}$
      & $\begin{array}[h]{r@{\,\,=\,\,}l} t & 1225.62\ ms\\\sigma & 3.95\ ms\end{array}$
      & $\begin{array}[h]{r@{\,\,=\,\,}l} t & 1235.55\ ms\\\sigma & 2.03\ ms\end{array}$ \\[3.5mm]
      {CrossDB}
      & $\begin{array}[h]{r@{\,\,=\,\,}l} t & 517.34\ ms\\\sigma & 5.36\ ms\end{array}$
      & $\begin{array}[h]{r@{\,\,=\,\,}l} t & 517.58\ ms\\\sigma & 3.30\ ms\end{array}$
      & $\begin{array}[h]{r@{\,\,=\,\,}l} t & 529.10\ ms\\\sigma & 2.16\ ms\end{array}$
      & $\begin{array}[h]{r@{\,\,=\,\,}l} t & 521.10\ ms\\\sigma & 4.49\ ms\end{array}$ \\[3.5mm]
      {grep}
      & $\begin{array}[h]{r@{\,\,=\,\,}l} t & 102.290\ s\\\sigma & 0.300\ s\end{array}$
      & $\begin{array}[h]{r@{\,\,=\,\,}l} t & 105.921\ s\\\sigma & 0.371\ s\end{array}$
      & $\begin{array}[h]{r@{\,\,=\,\,}l} t & 108.932\ s\\\sigma & 0.210\ s\end{array}$
      & $\begin{array}[h]{r@{\,\,=\,\,}l} t & 108.472\ s\\\sigma & 0.328\ s\end{array}$ \\[3.5mm]
      {vbench}
      & $\begin{array}[h]{r@{\,\,=\,\,}l} t & 149.624\ s\\\sigma & 0.142\ s\end{array}$
      & $\begin{array}[h]{r@{\,\,=\,\,}l} t & 153.357\ s\\\sigma & 0.190\ s\end{array}$
      & $\begin{array}[h]{r@{\,\,=\,\,}l} t & 157.963\ s\\\sigma & 0.214\ s\end{array}$
      & $\begin{array}[h]{r@{\,\,=\,\,}l} t & 157.339\ s\\\sigma & 0.268\ s\end{array}$ \\
      \bottomrule
    \end{tabular}
  }
  \caption{ Measurements of the average time $t$, standard deviation $\sigma$ and
    throughput $th$ of the I/O-bound workloads with eIBRS active.}
  \label{tab:data3a}
\end{table}
\begin{table}[t]
  \resizebox{\textwidth}{!}{%
    \begin{tabular}{lllll}
      \toprule
      & Off-the-shelf & Optimized transformation & Baseline transformation & Intel's guidelines \\
      \midrule
      {Apache}
      & $\begin{array}[h]{r@{\,\,=\,\,}l} t & 175\ ms\\\sigma & 5.4\ ms\end{array}$
      & $\begin{array}[h]{r@{\,\,=\,\,}l} t & 597\ ms\\\sigma & 6.6\ ms\end{array}$
      & $\begin{array}[h]{r@{\,\,=\,\,}l} t & 967\ ms\\\sigma & 8.1\ ms\end{array}$
      & $\begin{array}[h]{r@{\,\,=\,\,}l} t & 863\ ms\\\sigma & 8.1\ ms\end{array}$ \\[3.5mm]
      {ngninx}
      & $\begin{array}[h]{r@{\,\,=\,\,}l} t & 154\ ms\\\sigma & 9.5\ ms\end{array}$
      & $\begin{array}[h]{r@{\,\,=\,\,}l} t & 580\ ms\\\sigma & 13.2\ ms\end{array}$
      & $\begin{array}[h]{r@{\,\,=\,\,}l} t & 1010\ ms\\\sigma & 26.2\ ms\end{array}$
      & $\begin{array}[h]{r@{\,\,=\,\,}l} t & 874\ ms\\\sigma & 22.9\ ms\end{array}$ \\[3.5mm]
      {AES ECB encryption}
      & $\begin{array}[h]{r@{\,\,=\,\,}l} th & 5692.8\ \mathit{MiB/s}\end{array}$
                      & $\begin{array}[h]{r@{\,\,=\,\,}l} th & 1708.0\ \mathit{MiB/s}\end{array}$
      & $\begin{array}[h]{r@{\,\,=\,\,}l} th  & 1021.6\ \mathit{MiB/s}\end{array}$
                                                           & $\begin{array}[h]{r@{\,\,=\,\,}l} th & 1109.1\ \mathit{MiB/s}\end{array}$ \\
      {AES ECB decryption}
      & $\begin{array}[h]{r@{\,\,=\,\,}l} th & 5610.9\ \mathit{MiB/s}\end{array}$
                      & $\begin{array}[h]{r@{\,\,=\,\,}l} th & 1711.7\ \mathit{MiB/s}\end{array}$
      & $\begin{array}[h]{r@{\,\,=\,\,}l} th & 1023.7\ \mathit{MiB/s}\end{array}$
                                                           & $\begin{array}[h]{r@{\,\,=\,\,}l} th & 1108.8\ \mathit{MiB/s}\end{array}$ \\[3.5mm]
      {SQLite}
      & $\begin{array}[h]{r@{\,\,=\,\,}l} t & 1262.09\ ms\\\sigma & 2.63\ ms\end{array}$
      & $\begin{array}[h]{r@{\,\,=\,\,}l} t & 1234.22\ ms\\\sigma & 4.83\ ms\end{array}$
      & $\begin{array}[h]{r@{\,\,=\,\,}l} t & 1248.22\ ms\\\sigma & 2.99\ ms\end{array}$
      & $\begin{array}[h]{r@{\,\,=\,\,}l} t & 1234.80\ ms\\\sigma & 3.84\ ms\end{array}$ \\[3.5mm]
      {CrossDB}
      & $\begin{array}[h]{r@{\,\,=\,\,}l} t & 508.92\ ms\\\sigma & 3.20\ ms\end{array}$
      & $\begin{array}[h]{r@{\,\,=\,\,}l} t & 524.98\ ms\\\sigma & 0.95\ ms\end{array}$
      & $\begin{array}[h]{r@{\,\,=\,\,}l} t & 521.85\ ms\\\sigma & 8.76\ ms\end{array}$
      & $\begin{array}[h]{r@{\,\,=\,\,}l} t & 513.70\ ms\\\sigma & 3.46\ ms\end{array}$ \\[3.5mm]
      {grep}
      & $\begin{array}[h]{r@{\,\,=\,\,}l} t & 102.122\ s\\\sigma & 0.473\ s\end{array}$
      & $\begin{array}[h]{r@{\,\,=\,\,}l} t & 105.815\ s\\\sigma & 0.324\ s\end{array}$
      & $\begin{array}[h]{r@{\,\,=\,\,}l} t & 109.612\ s\\\sigma & 0.303\ s\end{array}$
      & $\begin{array}[h]{r@{\,\,=\,\,}l} t & 111.164\ s\\\sigma & 5.42\ s\end{array}$ \\[3.5mm]
      {vbench}
      & $\begin{array}[h]{r@{\,\,=\,\,}l} t & 149.243\ s\\\sigma & 0.194\ s\end{array}$
      & $\begin{array}[h]{r@{\,\,=\,\,}l} t & 153.045\ s\\\sigma & 0.158\ s\end{array}$
      & $\begin{array}[h]{r@{\,\,=\,\,}l} t & 158.53\ s\\\sigma & 0.248\ s\end{array}$
      & $\begin{array}[h]{r@{\,\,=\,\,}l} t & 167.045\ s\\\sigma & 0.382\ s\end{array}$ \\[3.5mm]
      \bottomrule
    \end{tabular}
  }
  \caption{ Measurements of the average time $t$, standard deviation $\sigma$ and
    throughput $th$ of the I/O-bound workloads with retpoline active.}
  \label{tab:data3b}
\end{table}

}\fi






\begin{figure}[p]
  \centering
  \begin{subfigure}[l]{0.48\textwidth}
    \includegraphics[width=\textwidth]{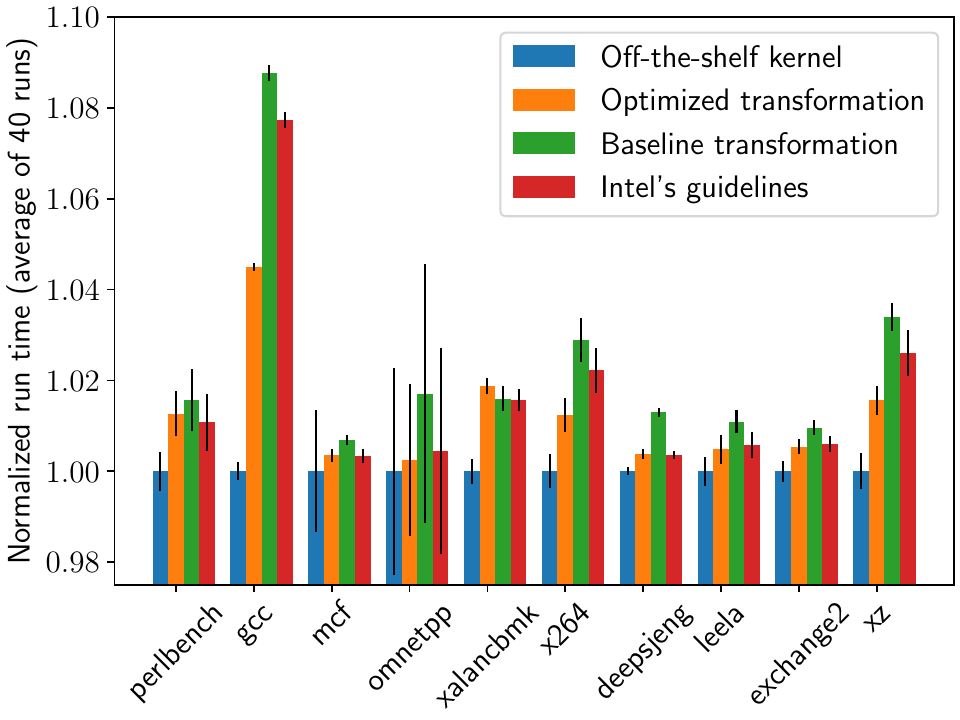}
    \caption{Overhead on the SPEC${}^{\text{\textregistered}}$ CPU 2017 Benchmark with eIBRS.}
    \label{fig:speceibrs}
  \end{subfigure}
  \hfill
  \begin{subfigure}[l]{0.48\textwidth}
    \includegraphics[width=\textwidth]{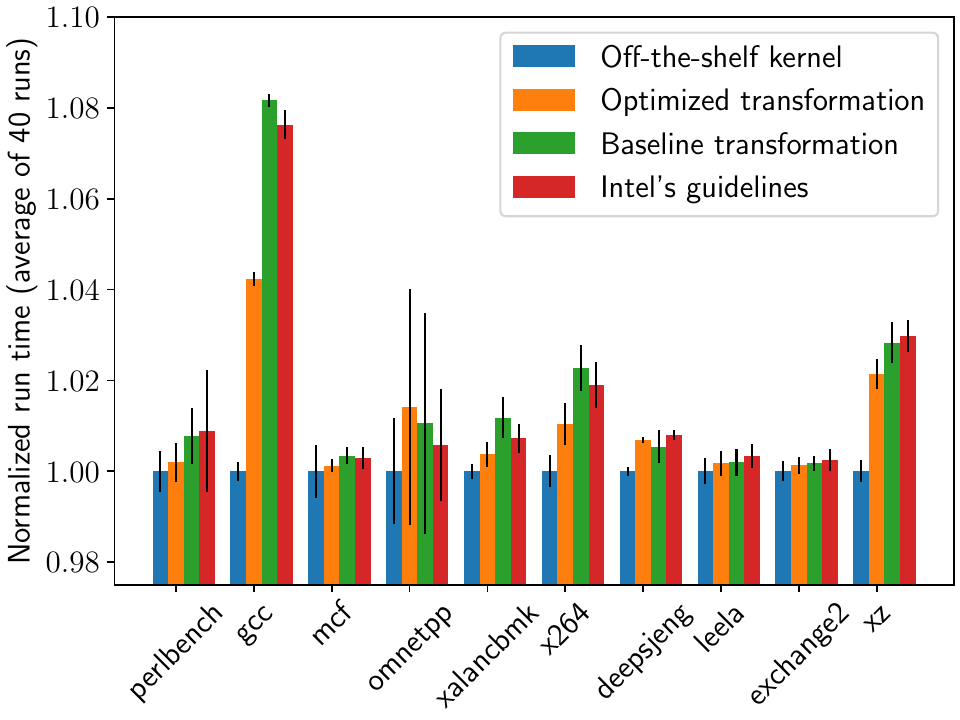}
    \caption{Overhead on the SPEC${}^{\text{\textregistered}}$ CPU 2017 Benchmark with retpoline.}
    \label{fig:specretpoline}
  \end{subfigure}
  \hfill
  \begin{subfigure}[l]{0.48\textwidth}
    \includegraphics[width=\textwidth]{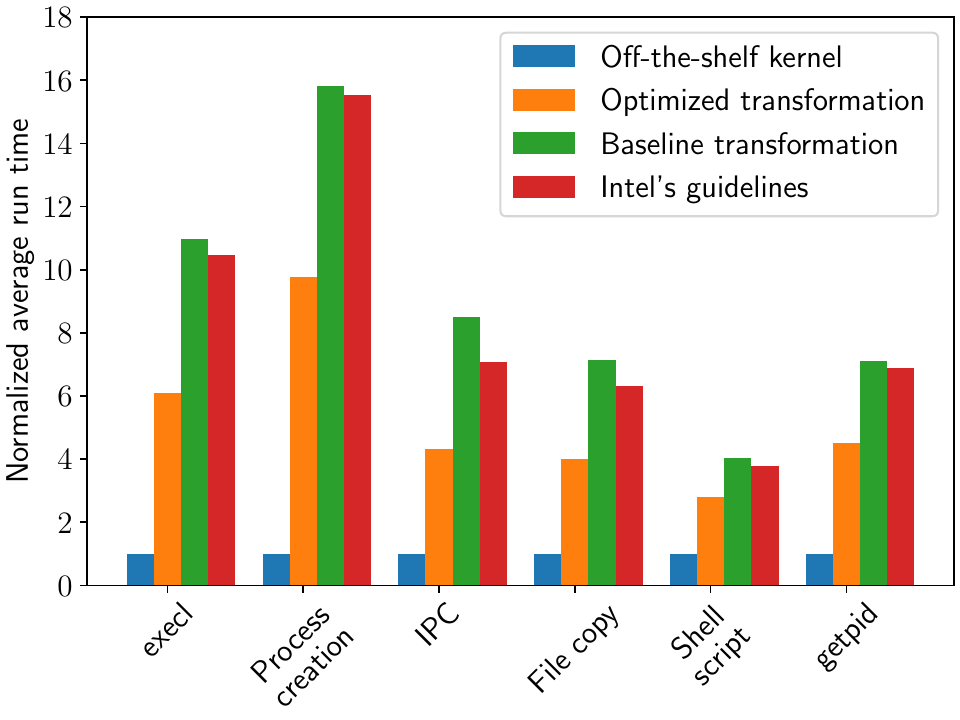}
    \caption{Overhead on the UnixBench Benchmark with eIBRS.}
    \label{fig:unixbencheibrs}
  \end{subfigure}
  \hfill
  \begin{subfigure}[l]{0.48\textwidth}
    \includegraphics[width=\textwidth]{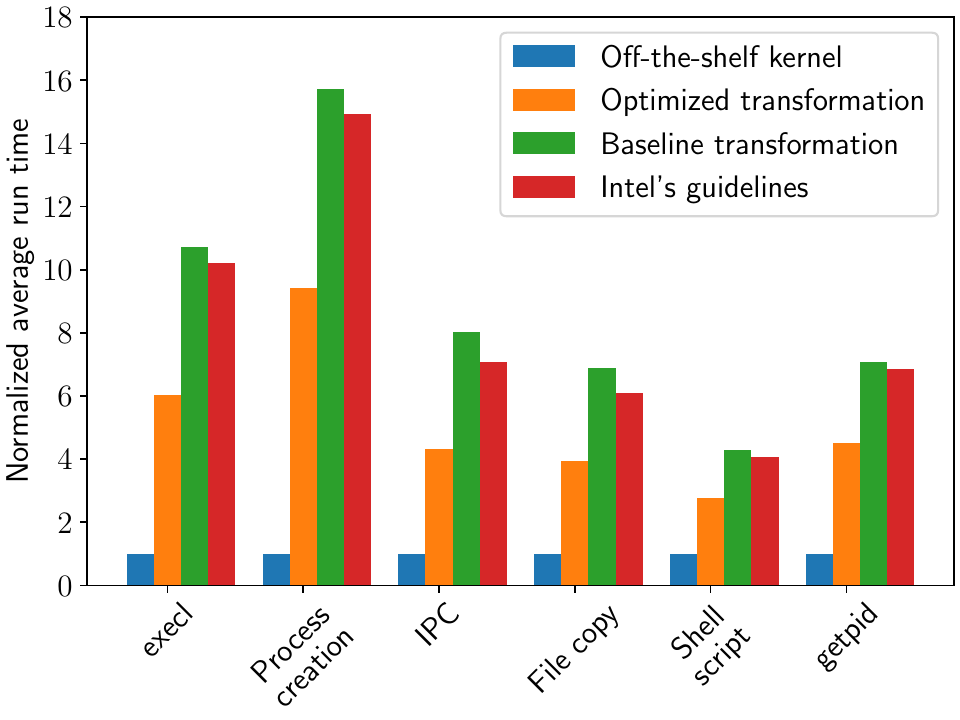}
    \caption{Overhead on the UnixBench Benchmark with retpoline.}
    \label{fig:unixbenchretpoline}
  \end{subfigure}
  \hfill
  \begin{subfigure}[l]{0.48\textwidth}
    \includegraphics[width=\textwidth]{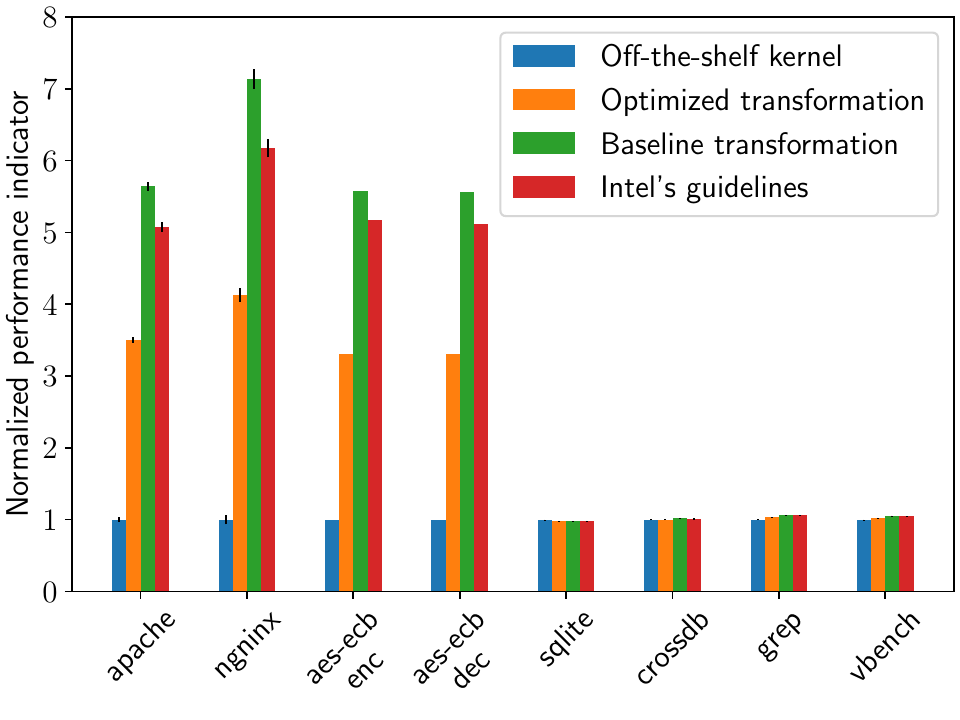}
    \caption{Overhead on the I/O benchmarks with eIBRS.}
    \label{fig:ioeibrs}
  \end{subfigure}
  \hfill
  \begin{subfigure}[l]{0.48\textwidth}
    \includegraphics[width=\textwidth]{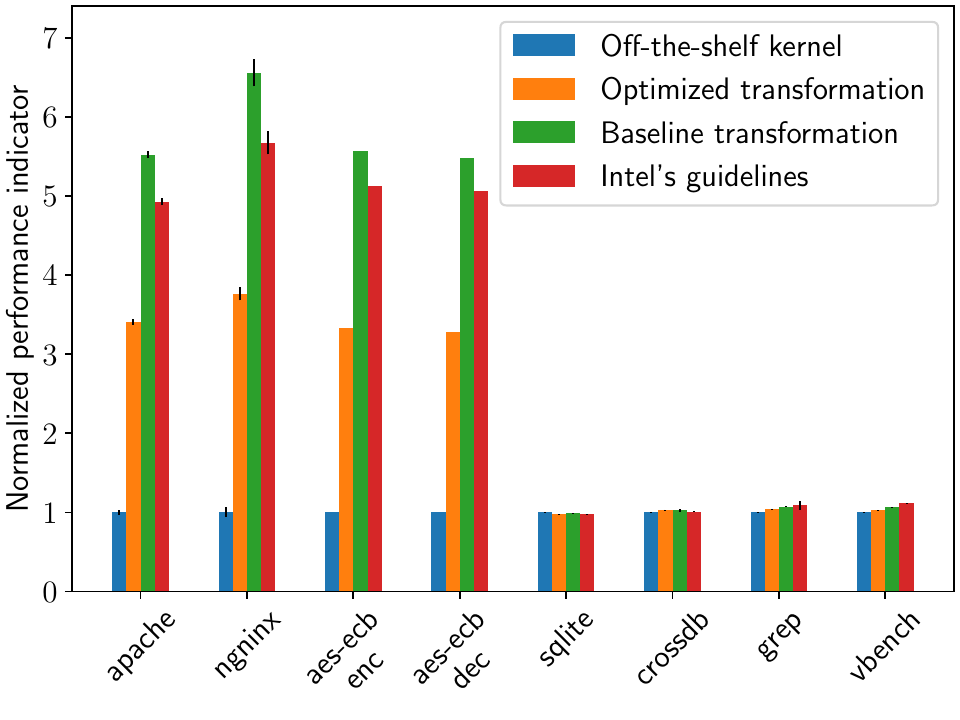}
    \caption{Overhead on the I/O benchmarks with retpoline.}
    \label{fig:ioretpoline}
  \end{subfigure}
  \caption{Overheads of the different transformations of the
    kernel. In the figures, the scale of the $y$-axis is linear. To
    better highlight variations in overhead, the $y$-axis origin is
    set to 0.975 in \Cref{fig:speceibrs,fig:specretpoline}.
  }
  \label{fig:results}
\end{figure}

All kernels were compiled with the default settings against speculative attacks. In particular, these settings include support for eIBRS~\cite{eIBRS}, retpoline~\cite{Retpoline}, and BHI\_DIS\_S~\cite{IntelBHIDISS}, which is an Intel${}^{\text{\textregistered}}$ mitigation for Branch History Injection~\cite{BHInjection}. The settings also include measures against \emph{Return Stack Buffer} (RSB) speculation~\cite{Retbleed}---out of scope for this work---including support for call depth tracking~\cite{CDT} and untrained return thunks, implementing AMD's JMP2RET~\cite{AMDJMP2RET}.
All the experiments were run on a machine equipped with an
Intel${}^{\text{\textregistered}}$ Core i5-1345U processor with 12
logical cores and 16GB of DDR4 memory. Notably, on such
processor, retpoline and eIBRS prevent indirect jump target
speculation~\cite{AffectedIntel,skylake1}.

\paragraph{Data collection and results}
\ifdefined\conference{The results of our experimental evaluation are
    summarized in \Cref{fig:results}.}\fi \ifdefined\arxiv{The results
    of our experimental evaluation are shown in
    \Cref{tab:data1,tab:data2a,tab:data2b,tab:data3a,tab:data3b} and summarized in
    \Cref{fig:results}.}\fi\xspace Most of our benchmarks reported
execution time measurements. For these benchmarks, the height of the
bar represents the average execution times recorded for each of the
kernels, divided by the average required by the off-the shelf kernel
\eqref{k:vanilla}. The black line displays the standard deviation of
our sample. As UnixBench and cryptsetup measure how many times a
specific task can be completed in a fixed unit of time, they provide a
throughput measurement and no indicator of the standard deviation. By
inverting the throughput measurement, we obtained an indicator of the
average execution time of each task, we estimated the overhead by
dividing the average indicator by the estimation of the same indicator
for the off-the-shelf
kernel~\eqref{k:vanilla}.


\paragraph{Final remarks} Our $\optfencetrans$  transformation outperforms others transformations and incurs in negligible overhead in the SPEC${}^{\text{\textregistered}}$ benchmark and some I/O-bound workloads, where its overhead is usually below 1\% and never exceeds 5\%. In UnixBench, and the I/O-bound workloads which put more pressure on store and load operations, it incurs a 3x-10x overhead compared to the non-transformed kernel. While this overhead is significant, future hardware advancements could help mitigate this cost, making such protections more practical for a wider range of systems.

\section{Related Work}
\label{sec:relatedwork}
\paragraph{On Layout Randomization. }
The first work that provided  a formal account of layout randomization was by \Citet{Abadi}, later extended in \cite{Abadi2,Abadi3}. In these works, the authors show that layout randomization prevents, with high probability, malicious programs from accessing the memory of a victim in an execution context with  shared address space. We have already discussed this in the body of the
paper how these results do not
model speculative execution or side-channel observations.


\paragraph{Spatial Memory Safety and Non-Interference} Spatial memory safety is typically defined by associating a software component with a memory area and requiring that, at runtime, the component only accesses that area~\cite{HighAssurance,SoftBound,MSWasm}. \Citet{PierceMS} demonstrated that memory safety can be expressed in terms of non-interference; this property, in turn, stipulates that the final output of a computation is not influenced by secret data that a program must keep confidential~\cite{NonInt}. Both of these properties have been extended to the speculative model. The definition of \emph{speculative memory safety} from~\cite{HighAssurance} closely aligns with our notion of \emph{speculative kernel safety}, with the difference that the latter also imposes some restriction on the victim's control flow. \emph{Speculative non-interference} was initially introduced in the context of the \textsc{Spectector} symbolic analyzer~\cite{Spectector}. \textsc{Spectector}'s property captures information flows to side-channels that occur with transient execution but not in sequential execution. In contrast to \textsc{Spectector}'s approach, our definition aligns with \emph{speculative constant-time}~\cite{CTFundations}, as it targets information leaks that occur during normal and transient execution.



\paragraph{Formal Analysis of Security Properties of Privileged Execution Environments. } \Citet{SLNonInt} deploy a model with side-channel leaks and privileged execution mode, without specualtive execution. In particular, they are interested in studying the preservation of constant-time in virtualization platforms. They also model privilege-raising procedures \emph{hypercalls}, similar to our system calls. They show that if one of the hosts is constant-time then the system enjoys a form of non-interference with respect to that host's secret memory. For this reason, although the two models are similar, the purposes of \Citet{SLNonInt} and our work are different: in~\cite{SLNonInt} the victim and the attacker have the same levels of privilege and the role of the hypervisor is to ensure their separation whilst, in our work, the privileged code base is itself the victim.  In addition, \Citet{SLNonInt} study constant-time, while we focus on memory safety and control flow integrity. 

\paragraph{Attacks to Kernel Layout Randomization} Attacks that aim at leaking information about the kernel's layout are very popular and can rely on implementation bugs that reveal information the kernel's layout~\cite{Uncontained,KernelBugs,KernelElasticObj} or on side-channel info-leaks~\cite{cacheKASLR,TagBleed, EntryBleed, EchoLoad}. In particular attacks such as EchoLoad, TagBleed and EntryBleed~\cite{TagBleed, EntryBleed, EchoLoad} are successful even in presence of state-of-art mitigations such as Intel${}^{\text{\textregistered}}$'s Page Table Isolation (PTI)~\cite{PTI}. These attacks motivate our decision to take into account side-channel info-leaks. Due to address-space separation between kernel and user space programs, an attacker cannot easily use a pointer to a kernel address to access the victim's memory. So, in general, if the attacker does not control the value of a pointer that is used by the victim, this kind of leak is not harmful.

The Meltdown attack~\cite{Meltdown} uses speculative execution to overcome address-space separation on operating systems running on Intel${}^{\text{\textregistered}}$ processors that do not adopt KAISER~\cite{Kaiser} or PTI~\cite{PTI}. In particular, the hardware can speculatively access an address before checking its permissions. The attack uses this small time window to access kernel memory content and leak it by using a side-channel info-leak gadget. These attacks can also be used to leak information on the layout: by dereferencing kernel addresses during transient execution, the kernel's address space can be probed without crashing the system. Due to the adoption of PTI~\cite{PTI}, this kind of attack is mitigated by removing most of the kernel-space addresses from the page tables of user-space programs. The BlindSide attack~\cite{BlindSide} overcomes this issue by probing directly from kernel-space. Similar attacks can be mounted by triggering different forms of mispredictions~\cite{SPEAR}.

Branch target buffer (BTB) speculation---related to Spectre v2---can be used by attackers to defeat kernel's layout randomization. \Citet{JumpOverASLR} were able to show that BTB mis-speculations reveal information on victim's layout. \Citet{BHInjection} showed how, in the presence of BTB speculation, an attacker can steer the control flow of kernel's indirect branches close to the context switch, even in the presence of KASLR and eIBRS. As shown by \Citet{InSpectre}, currently the Linux kernel contains hundreds of such exploitable indirect jumps. The form of speculation exploited by this attacks is taken into account in our model by allowing arbitrary speculation on unsafe jumps.
Akin to BTB speculation, Return Stack Buffer (RSB) speculation can be used to transiently divert the victim's control flow to arbitrary locations when it executes a return instruction. \Citet{Retbleed} used this vulnerability to break Layout Randomization and to leak kernel's memory. We did not consider RSB speculation in our model. Therefore, systems that are protected with our mitigations may still be vulnerable to attacks relying on RSB speculation.

However, attacks such as Retbleed~\cite{Retbleed} and BHI~\cite{BHInjection} do not compromise our reliance on eIBRS and retpoline since such measures are claimed to be effective on modern Intel${}^{\text{\textregistered}}$ processors, including the one used for our benchmarks~\cite{AffectedIntel,skylake1}.


\paragraph{Comparison with Eclipse~\cite{Eclipse}}
Eclipse~\cite{Eclipse} is a software-level protection measure that
obfuscates the operands of instructions used to perform speculative
probing by attacks like BlindSide~\cite{BlindSide} or
PACMAN~\cite{PACMAN} by inserting artificial data dependencies. The
performance evaluation of Eclipse shows outstanding results on
\emph{kernel-space} execution with maximum overheads smaller than 8\%,
i.e., more than 100 times smaller than our best performing
transformation $\optfencetrans$. This phenomenon is not surprising,
and is due to with the high specificity of Eclipse. As Eclipse only
considers PHT speculation, it can prevent speculative safety
violations by inserting artificial data dependencies with the values
of the guards, in a similar manner as SLH, obtaining good
performance. 
However, it is well known that SLH is not effective when attackers can
control other forms of speculations like BTB speculation~\cite{SLH}.
In contrast, the threat model that we are considering is stronger than
the one adopted by Eclipse~\cite{Eclipse}, as in our case attackers
can also control STL and BTB speculation. As a consequence, our
transformation cannot rely on inserting artificial data dependencies
to prevent the speculative execution of vulnerable
instructions. Another consequence is that the set of instructions that
can be executed speculatively according to Eclipse's criterion---and
therefore need protection---is smaller than for our transformations.
In addition, among the speculatively executable instructions, Eclipse
is only protecting a narrow class of instructions, e.g. indirect
branches, whilst our transformations protect indirect branches, loads
and stores--- i.e. all those instructions that can be used by an
attacker to violate speculative kernel safety.

\section{Conclusion and Future Work}
\label{sec:conclusion}

We have formally demonstrated that the kernel's
layout randomization probabilistically ensures kernel safety for a
classic model, where an attacker cannot compromise the system via
speculative execution or side channels. In this model, users of an
operating system execute without privileges, but victims can feature
pointer arithmetic, introspection, and indirect jumps.
We have also shown that the protection offered by layout randomization
does not naturally scale against attackers that can control
side-channels and speculative execution related to Spectre v1, v2 and
v4.  We stipulate a sufficient condition to enforce kernel safety in
the Spectre era and we proposed mechanisms based on program
transformations that provably enforce {speculative kernel safety} on a
system, provided that this system already enjoys {kernel safety} in
the classic model.  
We proved the soundness of three such transformation, we implemented
them as part of the LLVM/Clang compiler suite, and we established
their performance overhead.  To the best of our knowledge, our work is
the first to formally investigate and provide methods to achieve
kernel safety in the presence of speculative and side-channel
vulnerabilities, while also evaluating the performance of the proposed
mitigations.

\paragraph{Future work} This work prepares the ground for future
developments such as the evaluation of other probabilistic techniques
for the enforcement of safety properties such as Arm's
PA~\cite{ArmpaBTI}, and hardware backed capability machines like
CHERI~\cite{CHERI}.
We also consider the possibility to enhance our model with other
speculative vulnerabilities including LVI, RSB or Phantom speculation,
to develop software-level protection measures enforcing speculative
kernel safety in this stronger attacker model.  


\section*{Acknowledgments}
\label{sec:acks}

  We are grateful to Gilles Barthe, M\'arton Bogn\'ar, Ugo Dal Lago, Lesly-Ann Daniel, Benjamin Gr\'egoire, Jean-Pierre Lozi, Frank Piessens, Aurore Poirier, and Manuel Serrano for their comments on an early draft of the paper. This work was partially supported through the projects PPS ANR-19-C48-0014 and UCA DS4H ANR-17-EURE-0004, and by the Wallenberg AI, Autonomous Systems and Software Program (WASP) funded by the Knut and Alice Wallenberg Foundation.
  
\bibliographystyle{plainnat}
\bibliography{bibliography}

\pagebreak

\appendix

\section{Proofs}
\label{sec:appendix}

\subsection{Appendix for \Cref{sec:safety1}}
\label{sec:appsafety1}

\subsubsection{Omitted Proofs and Results}
\label{sec:proofs1}

We begin by introducing additional notation that will be used throughout this section. We then present the proof of \Cref{thm:scenario1}, followed by the proofs of the intermediate results upon which this proof relies.

\paragraph*{Semantics' Notation}
We write $\update{{\rfs}}{(\ar, i)} \val$ to denote the store that is identical to $\rfs$ except for the array $\ar$, which remains pointwise equal to $\rfs(\ar)$ except at index $i$, where it is updated to $\val$.

We use the notation $\nf \confone{}$ when $\confone$ does not reduce
in system $\system$ and using layout $\lay$.
$\nstep{!} \confone \conftwo$ to represent the formula
$\exists \nat'. \nstep{\nat'} \confone \conftwo \land \nf \conftwo {}$. Similarly, we use $\nstep{!\nat} \confone \conftwo$ to
denote the formula
$\exists \nat' \leq \nat. \nstep{\nat'} \confone \conftwo \land
\nf \conftwo {}$.

Additionally, we generalize the function $\Eval\cdot$ to take full configurations as arguments. This allows us to use it in the following way:
\[
  \Eval {\conf{\st, \lay\lcomp\rfs}} \defsym
  \begin{cases}
    \Omega
    & \text{if $\diverge{\conf{\st, \lay\lcomp\rfs}}$},\\
    (\val,\store')
    & \text{if } \lay \red \conf{\st, \lay\lcomp\rfs}  \to^* \conf{\frame{\cnil}{\regmap[\ret \mapsto \val]}{\opt}, \lay \lcomp \store'}, \\
    \err
    & \text{if } \lay \red \conf{\st, \lay\lcomp\rfs} \to^* \err, \\
    \unsafe
    & \text{if } \lay \red \conf{\st, \lay\lcomp\rfs} \to^* \unsafe.
  \end{cases}
\]
\noindent
We also extend the relation $\evaleq$ by stating that:
\[
(\val,\store) \evaleq \conf{\frame{\cnil}{\regmap[\ret \mapsto \val]}{\opt}, \lay \lcomp \store}
\]
holds for every layout $\lay$, register map $\regmap$, and every flag $\opt$. Furthermore, we take its symmetric and transitive closure.

\paragraph*{User- and Kernel-mode stack}
We write $\um(\cmd)$ as a shorthand for $\ids(\cmd) \subseteq \Idu$. The predicate  $\ids(\st)$ is defined inductively as follows:
\[
  \um(\nil)\defsym \top \quad \um(\fr:\st)\defsym \um(\fr) \land \um(\st)\quad \um(\frame{\cmd}{\regmap}{\opt}) \defsym \um(\cmd) \land \opt=\um. 
\]
The predicate $\km[\syscall]$ is defined analogously:
\[
  \km[\syscall](\nil)\defsym \top \quad \km[\syscall](\fr:\st)\defsym \km[\syscalltwo](\fr) \land \km[\syscall](\st)\quad \km[\syscall](\frame{\cmd}{\regmap}{\opt}) \defsym \km(\cmd) \land \opt=\km[\syscall]. 
\]
We write $\km(\cdot)$ as a shorthand for $\exists \syscall \in \Sys. \km[\syscall](\cdot)$.

\begin{proof}[Proof of \Cref{thm:scenario1}]
  Let $\system=(\rfs, \syss, \caps)$ be a system. We aim to show that for every \emph{unprivileged} command $\cmd$,
register map $\regmap$, and layout distribution $\mu$, the following holds:
\[
    \Pr_{\lay \leftarrow\mu}\left[\Eval {\conf{\frame{\cmd}{\regmap}{\um}, \lay \lcomp \rfs}} = \unsafe \right]=
      \Pr_{\lay \leftarrow\mu}\left[\nstep! {\conf{\frame{\cmd}{\regmap}{\um}, \lay \lcomp \rfs}} \unsafe \right] \le
    1-\delta_\mu.
  \]
  If the probability is not 0, by \Cref{lemma:premiseApreserv} (that we can apply because all the system calls in $\system$ are \emph{layout non-interferent}, and because all the layouts are identical with respect to user-space identifiers), we deduce that:
  \[
    \forall \lay. \Eval[\system][\lay]{\conf{\frame{\cmd}{\regmap}{\um}, \lay \lcomp \rfs}} \evaleq \unsafe.
  \]
  This proposition can be rewritten as:
  \[
    \forall \lay. \Eval[\system][\lay]{\conf{\frame{\cmd}{\regmap}{\um}, \lay \lcomp \rfs}} \in \{\unsafe, \err\},
  \]
  which allows us to deduce that:
  \[
    \forall \lay. \exists \conftwo_\lay, \nat_\lay. \nstep{!\nat_\lay} {\conf{\frame{\cmd}{\regmap}{\um}, \lay \lcomp \rfs}} {\conftwo_\lay}.
  \]
  Since memory size is finite, also $\Lay$ is finite, so there is a natural number $\overline \nat$ such that $\overline \nat \ge \max_{\lay \in \Lay}{\nat_\lay}$.
  Thus, we obtain:
  \[
    \forall \lay. \exists \conftwo_\lay. \nstep{!\overline \nat} {\conf{\frame{\cmd}{\regmap}{\um}, \lay \lcomp \rfs}} {\conftwo_\lay}.
  \]
  Applying \Cref{lemma:mainlemmapms} to the system $\system$, the
  bound $\overline \nat$, the unprivileged attacker $\cmd$, the
  initial register map $\regmap$, the distribution $\mu$, and to the last
  intermediate claim, we deduce that one of the following statements
  holds:
  \begin{varitemize}
  \item There are a register map $\regmap'$ and a store $\rfs'\eqon{\Fun}\rfs$
    such that for every layout $\lay$:
    \[
      \nstep {!\overline \nat}{\conf{\frame{\cmd}{\regmap}{\um}, \lay \lcomp \rfs}}{\conf{\frame{\cnil}{\regmap}{\um}, \lay \lcomp \rfs'}},
    \]
  \item
    $
      \Pr_{\lay \leftarrow \mu}\left[\nstep {!\overline \nat} {\conf{\frame{\cmd}{\regmap}{\um}, \lay \lcomp \rfs}}\err \right] \ge \delta_\mu.
    $
  \end{varitemize}
  We go by cases on these two statements. If the first one holds,
  then for every layout $\lay$: 
  \[
    \nstep {!}{\conf{\frame{\cmd}{\regmap}{\um}, \lay \lcomp \rfs}}{\conf{\frame{\cnil}{\regmap'}{\um}, \lay \lcomp \rfs'}}.
  \]
  Consequently, we have
  \[
    \Pr_{\lay \leftarrow \mu}\left[    \nstep {!}{\conf{\frame{\cmd}{\regmap}{\um}, \lay \lcomp \rfs}}{\conf{\frame{\cnil}{\regmap'}{\um}, \lay \lcomp \rfs'}} \right] = 1.
  \]
  Since the final configuration cannot be contemporary $\unsafe$
  and ${\conf{\frame{\cnil}{\regmap'}{\um}, \lay \lcomp \rfs'}}$, we conclude:
  \begin{equation*}
    \Pr_{\lay \leftarrow \mu}\left[    \nstep {!}{\conf{\frame{\cmd}{\regmap}{\um}, \lay \lcomp \rfs}}{\unsafe} \right] = 0.
  \end{equation*}
  This established the claim. If the second proposition holds then:
  \begin{equation*}
    \Pr_{\lay \leftarrow \mu}\left[\nstep {!} {\conf{\frame{\cmd}{\regmap}{\um}, \lay \lcomp \rfs}}\err \right] \ge\\
    \Pr_{\lay \leftarrow \mu}\left[\nstep {!\overline \nat} {\conf{\frame{\cmd}{\regmap}{\um}, \lay \lcomp \rfs}}\err \right] \ge \delta_\mu,
  \end{equation*}
  and therefore we conclude that:
  \begin{equation*}
    \Pr_{\lay \leftarrow \mu}\left[\nstep {!} {\conf{\frame{\cmd}{\regmap}{\um}, \lay \lcomp \rfs}}\unsafe \right] \le\\
    1- \Pr_{\lay \leftarrow \mu}\left[\nstep {!} {\conf{\frame{\cmd}{\regmap}{\um}, \lay \lcomp \rfs}}\err \right] \le 1-\delta_\mu.
  \end{equation*}
\end{proof}

\begin{lemma}[Preservation of \emph{layout non-interference}]
  \label{lemma:premiseApreserv}
  If every system call in $\system$ is \emph{layout non-interferent},
  then the entire system is also \emph{layout non-interferent} with
  respect to \emph{unprivileged} attackers, in the following sense:
  for every \emph{unprivileged} attacker $\cmd$, register map
  $\regmap$, pair of layouts $\lay_1, \lay_2$, and configuration
  $\confone_1$ we have that:
  \[
    \Eval[\system][\lay_1]{\conf{\frame{\cmd}{\regmap}{\um}, \lay_1\lcomp \rfs}} \evaleq \Eval[\system][\lay_2]{\conf{\frame{\cmd}{\regmap}{\um}, \lay_2\lcomp \rfs}}
  \]  
\end{lemma}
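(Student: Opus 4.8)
The plan is to prove a ``big-step-up-to-system-calls'' simulation between the two runs, using the hypothesis (per--system-call layout non-interference, \Cref{def:lni}) to bridge exactly the kernel-mode segments, and plain determinism to keep the user-mode segments in lockstep. First I would record the standing facts that make this work: by the global assumption that the public layout is fixed, $\lay_1$ and $\lay_2$ agree on $\Idu$; every memory reachable from $\lay \lcomp \rfs$ has the shape $\lay \lcomp \store$ for some store $\store$; procedures are immutable (writing a command to memory is forbidden), so every reachable store satisfies $\store \eqon{\Fn} \rfs$; and, by SMAP, system calls touch only kernel-space arrays, so user-mode steps neither read nor alter kernel-space memory. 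With these in hand I would define a synchronization relation $\approx$ by putting $\conf{\st, \lay_1 \lcomp \store} \approx \conf{\st, \lay_2 \lcomp \store}$ whenever the two configurations share the same store $\store$ and the same frame stack $\st$, all of whose frames are in user mode; the two initial configurations are $\approx$-related.

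Next I would establish the simulation for $\approx$-related user-mode configurations. For every rule other than the system-call rule, the step depends only on the top command, the register maps, the user-mode stack, the user-space part of $\lay \lcomp \store$, and the addresses $\lay(\id)$ of user identifiers $\id \in \Idu$; since $\cmd$ is unprivileged ($\ids(\cmd) \subseteq \Idu$) and all of these coincide across $\lay_1$ and $\lay_2$, the two runs take syntactically identical steps to an $\approx$-related pair (this covers register assignment, user-mode loads/stores, intra-user-mode calls and returns, conditionals, loops, and user-mode $\err$, while $\unsafe$ cannot arise in user mode). In particular registers only ever hold layout-independent values. When the top command is $\csyscall{\syscall}{\vec\exprtwo}$, the evaluated arguments agree on both sides, so the same kernel frame $\frame{\syss(\syscall)}{\regmap_0[\vec\vx \upd \vec v]}{\km[\syscall]}$ is pushed. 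Here I would invoke \Cref{def:lni} for $\syscall$ on the common store $\store$ (legitimate since $\store \eqon{\Fn} \rfs$), together with a routine framing lemma asserting that the stack tail is a passive context: the standalone evaluation of the system-call body composes with, and is unaffected by, the frames beneath it. This yields three matched outcomes: normal return produces the \emph{same} value $v$ and the \emph{same} store $\store'$ under both layouts, so after the $\ret$-propagating pop the runs resume $\approx$-related; reaching $\err$ or $\unsafe$ happens on both sides and these are identified by $\evaleq$; and divergence of the body forces both sides to diverge.

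Finally I would read the conclusion off the evaluation function by classifying the (deterministic) computation from the related pair. If it terminates---normally or via $\err/\unsafe$---it does so through a finite alternation of lockstep user segments and normally-returning system-call big-steps ending in a terminal event; a short induction on the number of big-steps, using the simulation above, shows the $\lay_2$-run reaches an $\evaleq$ outcome, so $\Eval[\system][\lay_1]{\cdot} \evaleq \Eval[\system][\lay_2]{\cdot}$. If it diverges, either some user segment or some system-call big-step diverges (both mirrored), or there are infinitely many normally-returning big-steps; in each case $\approx$ persists, the $\lay_2$-run also diverges, and both evaluations return $\Omega$.

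The main obstacle is the divergence bookkeeping together with the framing/compositionality lemma: unlike the terminating case, divergence cannot be closed by a single finite induction, so I would argue that $\approx$ is preserved for arbitrarily many phases (equivalently, that for every $n$ the $\lay_2$-run matches the first $n$ phases of the $\lay_1$-run) and separately transfer divergence of a single system-call body across layouts via \Cref{def:lni}. A secondary point worth isolating is that \Cref{def:lni} forces every value returned by a system call to be layout-independent; this is precisely what prevents the attacker from smuggling a layout-dependent kernel address into a register and thereby desynchronizing the two runs.
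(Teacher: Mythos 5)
Your proposal is correct and follows the same strategy as the paper's proof: a lockstep simulation of the user-mode segments (justified by the fixed public layout, so that unprivileged expressions, loads, stores and calls behave identically under $\lay_1$ and $\lay_2$), bridged across each system call by \Cref{def:lni} applied to the current store (legitimate because reachable stores agree with $\rfs$ on $\Fn$), and composed with the surrounding stack via a framing lemma (the paper's \Cref{lemma:contextplugging}). Your relation $\approx$ is exactly the invariant the paper carries implicitly in its induction hypothesis for claim (C'): same frame stack, same store, all frames in user mode.

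The one place where your route genuinely differs is the divergence case, which you identify as the main obstacle and propose to discharge by showing that for every $n$ the $\lay_2$-run matches the first $n$ phases of the $\lay_1$-run. That works, but the paper avoids this bookkeeping entirely with a symmetry argument: it first proves the terminating direction (C) --- if the $\lay_1$-run reaches a terminal configuration, the $\lay_2$-evaluation is $\evaleq$ to it --- and then observes that if $\Eval[\system][\lay_1]{\cdot} = \Omega$ while $\Eval[\system][\lay_2]{\cdot} \neq \Omega$, applying (C) with the roles of $\lay_1$ and $\lay_2$ swapped would force the $\lay_1$-run to terminate, a contradiction. This reduces the coinductive phase-persistence argument to a single application of the already-proved finite case, and you may want to adopt it; otherwise your argument is sound as stated.
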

\begin{proof}
  The main claim is a consequence of an auxiliary claim, namely:
    \begin{equation*}
    \nstep[\system][\lay_1]!{\conf{\frame{\cmd}{\regmap}{\um}, \lay_1\lcomp \rfs}}{\confone_1} \Rightarrow
    \Eval[\system][\lay_2]{\conf{\frame{\cmd}{\regmap}{\um}, \lay_2\lcomp \rfs}} \evaleq \confone_1
    \tag{C}
  \end{equation*}
  To prove (C), we establish a slightly stronger statement: instead of
  quantifying over $\cmd$ and $\regmap$, we generalize to a non-empty
  stack $\st$ such that $\um(\st)$, and we rewrite the claim in a more
  convenient shape, where all the meta-variables that are not
  quantified explicitly are quantified universally.
  \begin{equation*}
    \forall \nat. \forall \rfs'\eqon{\Fun} \rfs. \left(\nstep[\system][\lay_1]\nat{\conf{\st, \lay_1\lcomp \rfs'}}{\confone_1} \land \final[\system][\lay_1] {\confone_1}\right) \Rightarrow\\
    \Eval[\system][\lay_2]{\conf{\st, \lay_2\lcomp \rfs'}} \evaleq \confone_1
    \tag{C'}
  \end{equation*}
  Once (C') is established, we derive (C) by instantiating $\rfs'$ as
  $\rfs$ and $\st$ as
  ${\conf{\frame{\cmd}{\regmap}{\um}, \lay_1\lcomp \rfs}}$. We fix
  $\nat$ as the number of steps taken by the reduction from the
  initial configuration to the terminal configuration
  $\confone_1$. From (C), we can deduce the main claim by case
  analysis:
  \begin{proofcases}
    \proofcase{$\Eval[\system][\lay_1]{\conf{\frame{\cmd}{\regmap}{\um},
          \lay_1\lcomp \rfs}} = \Omega$} In this case, the value of
    $\Eval[\system][\lay_2]{\conf{\frame{\cmd}{\regmap}{\um},
        \lay_2\lcomp \rfs}}$ must also be $\Omega$. If this was not
    the case, we could apply (C) with $\lay_1$ and $\lay_2$ swapped to
    prove that
    $\Eval[\system][\lay_1]{\conf{\frame{\cmd}{\regmap}{\um},
        \lay_1\lcomp \rfs}}\neq \Omega$.
    \proofcase{$\Eval[\system][\lay_1]{\conf{\frame{\cmd}{\regmap}{\um},
          \lay_1\lcomp \rfs}} \neq \Omega$} In this case, the
    conclusion is a direct consequence of (C).
  \end{proofcases}

  Now that we showed how the main claim can be deduced from (C'), we
  can focus on the proof of (C').  In the proof we will extensively
  use the assumption on the layouts:
  \[
    \forall \id \in \Idu.\lay_1(\id)=\lay_2(\id),
    \tag{\dag}
  \]
  For this reason, we fix it on top and name it (\dag). The proof goes by induction on $\nat$.
  \begin{proofcases}
    \proofcase{0} We assume
    \[
      \nstep[\system][\lay_1]0{\conf{\st, \lay_1\lcomp \rfs'}}{\confone_1} \land \final[\system][\lay_1] {\confone_1}.
    \]
    This implies that $\confone_1 = \conf{\st, \lay_1\lcomp
      \rfs'}$. From this last observation, and since $\confone_1$ is
    terminal, we deduce that $\st = \frame{\cnil}{\regmap'}{\um}$.  By
    examining the semantics, we also establish that
    \[
      \final[\system][\lay_2]{\conf{\st, \lay_2\lcomp \rfs'}},
    \]
    so we conclude that
    $\Eval[\system][\lay_2]{\conf{\st, \lay_2\lcomp \rfs'}} =
    (\regmap'(\ret),\rfs') \evaleq \confone_1$.
    \proofcase{$\nat+1$}
    Observe that $\st \neq \nil$, as it would contradict the assumption
    \[
      \nstep[\system][\lay_1]{\nat+1}{\conf{\st, \lay_1\lcomp \rfs'}}{\confone_1} \land \final[\system][\lay_1] {\confone_1}.
      \tag{H}
    \]
    Thus, we assume $\st = \frame \cmd \regmap \um :\st'$, and proceed by case analysis on $\cmd$.  
    Note that the induction hypothesis coincides syntactically with (C').
    \begin{proofcases}
      \proofcase{$\cnil$} In this case, we rewrite the first part of (H) as follows:
      \begin{equation*}
        \step[\system][\lay_1] {\conf{\frame{\cnil}{ \regmap}{\um}:\st', \lay_1\lcomp \rfs'}}{}
        \conf{\st'', \lay_1\lcomp \rfs'}\to^\nat{\confone_1},
      \end{equation*}
      where $\st''$ is obtained by updating the topmost register map
      in $\st'$ with the return value from $\regmap$.  Since
      $\um(\st')$ holds, also $\um(st'')$ does, so we can apply the IH
      and conclude that
      $ \Eval[\system][\lay_2]{\conf{\st'', \lay_2\lcomp \rfs'}}
      \evaleq \confone_1 $.  Thus, to conclude the proof, it suffices
      to observe that
      $ \step[\system][\lay_2]{\conf{\frame{\cnil}{\regmap}{\um}:\st',
          \lay_2\lcomp \rfs'}}{} {\conf{\st'', \lay_2\lcomp \rfs'}}, $
      which follows by introspection of the semantics.
      \proofcase{$\vx \ass\expr\sep \cmdtwo$} In this case, we rewrite
      the first part of (H) as follows:
      \begin{equation*}
        \step[\system][\lay_1] {\conf{\frame{\vx \ass\expr\sep\cmdtwo}{ \regmap}{\um}:\st', \lay_1\lcomp \rfs'}}{}\\
        {\conf{\frame{\cmdtwo}{\update \regmap\vx {\sem \expr_{\regmap, \lay_1}}}{\um}:\st', \lay_1\lcomp \rfs'}}\to^\nat{\confone_1}
      \end{equation*}
      observe that
      $\um(\frame{\cmdtwo}{\update \regmap\vx {\sem \expr_{\regmap,
            \lay_1}}}{\um}:\st')$ holds, so we can apply the IH and
      conclude that
      \[
        \Eval[\system][\lay_2]{\conf{\frame{\cmdtwo}{\update \regmap\vx {\sem \expr_{\regmap, \lay_1}}, }{\um}:\st', \lay_2\lcomp \rfs'}} \evaleq \confone_1
      \]
      This means that, in order to conclude the proof, it suffices to observe that
      \begin{equation*}
        \step[\system][\lay_2]{\conf{\frame{\vx \ass\expr\sep\cmdtwo}{\regmap}{\um}:\st', \lay_2\lcomp \rfs'}}{}\\
        {\conf{\frame{\cmdtwo}{\update \regmap\vx {\sem \expr_{\regmap, \lay_1}}}{\um}:\st', \lay_2\lcomp \rfs'}},
      \end{equation*}
      which, in turn, reduces to showing that
      $\sem \expr_{\regmap, \lay_1}=\sem \expr_{\regmap, \lay_2}$. By
      definition of $\um(\st)$, we have that
      $\vx \ass\expr\sep\cmdtwo$ is an unprivileged command, so, in
      particular all identifiers in $\expr$ belong to $\Idu$, hence,
      the $\sem \expr_{\regmap, \lay_1}=\sem \expr_{\regmap, \lay_2}$
      follows from \Cref{rem:expreval}, and (\dag).
      \proofcase{$\cskip\sep \cmdtwo$} Analogous to the case of
      assignments.
      \proofcase{$\cif \expr {\cmd_\bot}{\cmd_\top}\sep\cmdtwo$}
      Analogous to the case of assignments.
      \proofcase{$\cwhile \expr {\cmd}\sep\cmdtwo$} Analogous to the
      case of assignments.
      \proofcase{$\cmemass \expr \exprtwo\sep \cmdtwo$} In this case,
      we start by observing that
      $\sem \expr_{\regmap, \lay_1}=\sem \expr_{\regmap, \lay_2}$ by
      \Cref{rem:expreval}. This also implies
      $\toAdd{\sem \expr_{\regmap, \lay_1}}=\toAdd{\sem
        \expr_{\regmap, \lay_2}}$. Let
      $\toAdd{\sem \expr_{\regmap, \lay_1}}= \add$. By
      \Cref{rem:expreval}, we also know that
      $\sem \exprtwo_{\regmap, \lay_1}=\sem \exprtwo_{\regmap,
        \lay_2}$, and we denote this value $\val$.
      From the assumption (\dag) we deduce that
      $\underline \lay_1(\Ar[\um])=\underline \lay_2(\Ar[\um])$, and
      we denote this set $P$.  We go by cases on $\add \in P$.
      \begin{proofcases}
        \proofcase{$\add \in P$} In this case, the rule \ref{WL:Store}
        applies to both the configurations
        \[
        {\conf{\frame{\cmemass  \expr \exprtwo\sep \cmdtwo}{\regmap}{\um}:\st', \lay_1\lcomp \rfs'}}
        \]
        and
        \[
        {\conf{\frame{\cmemass  \expr \exprtwo\sep \cmdtwo}{\regmap}{\um}:\st', \lay_2\lcomp \rfs'}}
        \]
        under the layouts $\lay_1$ and $\lay_2$ respectively, obtaining:
        \[
          {\conf{\frame{\cmdtwo}{\regmap}{\um}:\st', \update{\lay_1\lcomp \rfs'}{\add}{\val}}}
        \]
        and
        \[
          {\conf{\frame{\cmdtwo}{\regmap}{\um}:\st', \update{\lay_2\lcomp \rfs'}{\add}{\val}}}.
        \]
        Since $\add \in \underline \lay_1(\Ar[\um])$
        and by definition of $\underline \lay_1$, we deduce that
        there exist $\ar \in \Ar[\um]$
        and $0\le i\le\size\ar$ such that $\lay_1(\ar)+i = \add$.
        By (\dag), we also deduce that $\lay_2(\ar)+i = \add$,
        so we can apply \Cref{rem:memupdtostupd}
        in order to show that:
        \[
          \update{\lay_1\lcomp \rfs'}{\add}{\val} =
          \lay_1\lcomp \update{\rfs'}{(\ar, i)}{\val}
        \]
        and that 
        \[
          \update{\lay_2\lcomp \rfs'}{\add}{\val} =
          \lay_1\lcomp \update{\rfs'}{(\ar, i)}{\val}.
        \]
        Finally, since $\um({\frame{\cmdtwo}{\regmap}{\um}:\st'})$ holds
        and we have:
        \[
          \update{\rfs'}{(\ar, i)}\val\eqon{\Fn}{\rfs'}\eqon{\Fn}\rfs,
        \]
        we can apply the IH, and conclude the proof of this
        sub-derivation.  \proofcase{$\add \notin P$} In this case,
        from (\dag), we deduce that
        $\add \notin \underline \lay_2(\Ar[\um])$.  Therefore, the
        rule \ref{WL:Store-Error} applies, showing both:
        \[
          \step[\system][\lay_1]
          {\conf{\frame{\cmemass  \expr \exprtwo\sep \cmdtwo}{\regmap}{\um}:\st', \lay_1\lcomp \rfs'}}
          \err,
        \]
        and
        \[
          \step[\system][\lay_2]
          {\conf{\frame{\cmemass  \expr \exprtwo\sep \cmdtwo}{\regmap}{\um}:\st', \lay_2\lcomp \rfs'}}
          \err.
        \]
        This establishes the claim. 
      \end{proofcases}
      \proofcase{$\cmemread \vx \expr\sep \cmdtwo$} Analogous to the
      case of memory store operations.
      \proofcase{$\ccall \expr {\exprtwo_1, \dots, \exprtwo_k}\sep
        \cmdtwo$} This case is also similar to that of memory loads,
      but requires some non-trivial observations.  As before, we
      define $\add = \toAdd{\sem \expr_{\regmap,\lay_1}}$ and observe
      that $\add = \toAdd{\sem \expr_{\regmap,\lay_2}}$ because of
      \Cref{rem:expreval}. Similarly, we introduce the values
      $\val_1, \dots,\val_k$ which correspond to the semantics of
      $\exprtwo_1,\dots,\exprtwo_k$ evaluated under $\regmap$ and both
      layouts $\lay_1, \lay_2$. By (\dag), we have
      $\lay_1(\Fn[\um]) = \lay_2(\Fn[\um])$, so we denote this set
      $P_{\Fn[\um]}$, and we go by case analysis on
      $\add \in P_{\Fn[\um]}$.
      \begin{proofcases}
        \proofcase{$\add\in P_{\Fn[\um]}$} In this case, there exists
        $\fn \in \Fn[\um]$ such that $\lay_1(\fn)=\add$, and from
        (\dag) we deduce $\add=\lay_2(\fn)$. Using these intermediate
        conclusions and the definition of $\lcomp$, we deduce that
        $(\lay_1\lcomp \rfs')(\add) = \rfs'(\fn) = \rfs(\fn)$, (the
        last step follows from the assumption $\rfs'\eqon{\Fn}\rfs$),
        and similarly for $\lay_1\lcomp \rfs'(\add)$. Since
        $\add\in P_{\Fn[\um]}$, we deduce that rule \ref{WL:Call} can
        be applied, proving:
        \begin{equation*}
          \step[\system][\lay_1]
          {\conf{\frame{\ccall  \expr {\exprtwo_1, \dots, \exprtwo_k}\sep \cmdtwo}{\regmap}{\um}:\st', \lay_1\lcomp \rfs'}}{}\\
          {\conf{\frame{\rfs(\fn)}{\regmap_0'}{\um}:\frame{\cmdtwo}{\regmap}{\um}:\st', \lay_1\lcomp \rfs'}},
        \end{equation*}
        and
        \begin{equation*}
          \step[\system][\lay_2]
          {\conf{\frame{\ccall  \expr {\exprtwo_1, \dots, \exprtwo_k}\sep \cmdtwo}{\regmap}{\um}:\st', \lay_2\lcomp \rfs'}}{}\\
          {\conf{\frame{\rfs(\fn)}{\regmap_0'}{\um}:\frame{\cmdtwo}{\regmap}{\um}:\st', \lay_2\lcomp \rfs'}},
        \end{equation*}
        where $\regmap_0'= \regmap_0[\vx_1,\dots,\vx_k\upd \val_1,\dots,\val_k]$.
        Finally, by assumption on $\rfs$, we know that $\rfs(\fn)$ is an unprivileged program. So, in particular,
        \[
          \um({\frame{\rfs(\fn)}{\regmap_0'}{\um}:\frame{\cmdtwo}{\regmap}{\um}:\st'})
        \]
        holds. This allows us to apply the IH and conclude the proof.
        \proofcase{$p \notin P_{\Fn[\um]}$} Analogous to the corresponding case for store operations.
      \end{proofcases}
      \proofcase{$\csyscall \syscall{\expr_1, \ldots,
          \expr_k}\sep\cmdtwo$} This is perhaps the most interesting
      case of this proof. We begin by introducing the values
      $\val_1, \dots,\val_k$, which correspond to the
      semantics of $\exprtwo_1,\dots,\exprtwo_k$ evaluated under
      $\regmap$ as well as any of layouts $\lay_1, \lay_2$.These
      values do not depend on the layout because of \Cref{rem:expreval}. By
      introspection of the rule \ref{WL:SystemCall}, we deduce that
      both the following statements hold:
      \begin{equation*}
          \step[\system][\lay_1]
          {\conf{\frame{\csyscall  \syscall {\exprtwo_1, \dots, \exprtwo_k}\sep \cmdtwo}{\regmap}{\um}:\st', \lay_1\lcomp \rfs'}}{}\\
          {\conf{\frame{\syss(\syscall)}{\regmap_0'}{\km[\syscall]}:\frame{\cmdtwo}{\regmap}{\um}:\st', \lay_1\lcomp \rfs'}},
        \end{equation*}
        and
        \begin{equation*}
          \step[\system][\lay_2]
          {\conf{\frame{\csyscall \syscall {\exprtwo_1, \dots, \exprtwo_k}\sep \cmdtwo}{\regmap}{\um}:\st', \lay_2\lcomp \rfs'}}{}\\
          {\conf{\frame{\syss(\syscall)}{\regmap_0'}{\km[\syscall]}:\frame{\cmdtwo}{\regmap}{\um}:\st', \lay_2\lcomp \rfs'}},
        \end{equation*}
        where
        $\regmap_0'= \regmap_0[\vx_1,\dots,\vx_k\upd
        \val_1,\dots,\val_k]$. Unfortunately, we cannot apply the IH
        to the configuration
        \[
          {\conf{\frame{\syss(\syscall)}{\regmap_0'}{\km[\syscall]}:\frame{\cmdtwo}{\regmap}{\um}:\st', \lay_1\lcomp \rfs'}}
        \]
        because  $\um(\frame{\syss(\syscall)}{\regmap_0'}{\km[\syscall]}:\frame{\cmdtwo}{\regmap}{\um}:\st')$ does not hold. However, we can go by cases on
        \[
          \Eval[\system][\lay_1]{{{\syss(\syscall)},{\regmap_0'},{\km[\syscall]},\rfs'}}
        \]
        \begin{proofcases}
          \proofcase{$\Omega$} By \Cref{lemma:contextplugging}, we deduce that 
        \[
          \Eval[\system][\lay_1]{\conf{\frame{\syss(\syscall)}{\regmap_0'}{\km[\syscall]}:\frame{\cmdtwo}{\regmap}{\um}:\st', \lay_1\lcomp \rfs'}} = \Omega,
        \]
        which leads to a contradiction because we assumed
        \small
        \[
          \nstep[\system][\lay_1] {\nat+1} {\conf{\frame{\syss(\syscall)}{\regmap_0'}{\km[\syscall]}:\frame{\cmdtwo}{\regmap}{\um}:\st', \lay_1\lcomp \rfs'}}{\confone_1}\land \nf[\system][\lay_1] {\confone_1}.
        \]
        \normalsize
        \proofcase{$\unsafe$} From the \emph{layout non-intreference property} (\Cref{def:lni}), we deduce that
        \[
          \Eval[\system][\lay_2]{{\syss(\syscall), \regmap_0', \km[\syscall], \rfs'}} \in\{\err,\unsafe\}.
        \]
        We take $\err$ as an example. This means that 
        \[
          \exists h.\nstep[\system][\lay_2] h {\conf{\frame{\syss(\syscall)}{\regmap_0'}{\km[\syscall]}, \lay_2\lcomp \rfs'}}{\err}.
        \]
        By applying \Cref{lemma:contextplugging} to the reduction with $\lay_1$,
        we deduce that
        \[
          \nstep[\system][\lay_1] {\nat} {\conf{\frame{\syss(\syscall)}{\regmap_0'}{\km[\syscall]}:\frame{\cmdtwo}{\regmap}{\um}:\st', \lay_1\lcomp \rfs'}}{\unsafe}.
        \]
        By applying \Cref{lemma:contextplugging} to the
        reduction with $\lay_2$, we deduce that:
        \[
          \nstep[\system][\lay_2] {!} {\conf{\frame{\syss(\syscall)}{\regmap_0'}{\opt}:\frame{\cmdtwo}{\regmap}{\um}:\st', \lay_2\lcomp \rfs'}} {\err}.
        \]
        We conclude by observing that $\unsafe\evaleq \err$. The case with $\unsafe$ follows analogously.
        \proofcase{$\err$} Analogous to the previous case.
        \proofcase{$(\val, \mem)$}
        We start by observing that $\mem = \lay_1\lcomp \rfs''$ for some $\rfs''$ due to \Cref{rem:simispreserved}. Form the \emph{layout non-interference}
        property (\Cref{def:lni}),
        we deduce that
        \begin{equation*}
          \Eval[\system][\lay_2]{{\syss(\syscall)},{\regmap_0'},{\km[\syscall]},\rfs'} =
          {(\val,\rfs'')}.
        \end{equation*}
        By applying  \Cref{lemma:contextplugging} twice, we deduce that:
        \begin{multline*}
          \step[\system][\lay_1]
          {\conf{\frame{\csyscall  \syscall {\exprtwo_1, \dots, \exprtwo_k}\sep \cmdtwo}{\regmap}{\um}:\st', \lay_1\lcomp \rfs'}}{}\\
          {\conf{\frame{\syss(\syscall)}{\regmap_0'}{\km[\syscall]}:\frame{\cmdtwo}{\regmap}{\um}:\st', \lay_1\lcomp \rfs'}} \to^*\\
          {\conf{\frame{\cmdtwo}{\update\regmap\ret\val}{\um}:\st', \lay_1\lcomp \rfs''}}
        \end{multline*}
        and 
        \begin{multline*}
          \step[\system][\lay_2]
          {\conf{\frame{\csyscall  \syscall {\exprtwo_1, \dots, \exprtwo_k}\sep \cmdtwo}{\regmap}{\um}:\st', \lay_2\lcomp \rfs'}}{}\\
          {\conf{\frame{\syss(\syscall)}{\regmap_0'}{\km[\syscall]}:\frame{\cmdtwo}{\regmap}{\um}:\st', \lay_2\lcomp \rfs'}} \to^*\\
          {\conf{\frame{\cmdtwo}{\update\regmap\ret\val}{\um}:\st', \lay_2\lcomp \rfs''}}
        \end{multline*}
        Next, we observe that we assumed
        \begin{equation*}
          \nstep[\system][\lay_1]{n+1}
          {\conf{\frame{\csyscall  \syscall {\exprtwo_1, \dots, \exprtwo_k}\sep \cmdtwo}{\regmap}{\um}:\st', \lay_1\lcomp \rfs'}}{}
          {\confone_1},
        \end{equation*}
        and $\confone_1$ is terminal, so it must be the case that
        $\nstep[\system][\lay_1]!{\conf{\frame{\cmdtwo}{\update\regmap\ret\val}{\um}:\st',
            \lay_1\lcomp \rfs''}}{\confone_1}$. In particular, the
        reduction requires less than $\nat+1$ steps. From the
        assumption $\um(\st)$, we also deduce $\um(\st')$, and
        $\um(\frame{\cmdtwo}{\update\regmap\ret\val}{\um}:\st')$ and
        from \Cref{rem:simispreserved} we conclude that
        $\rfs''\eqon{\Fn}\rfs'\eqon{\Fn}\rfs$, so we can apply the IH
        to
        \[
          {\conf{\frame{\cmdtwo}{\update\regmap\ret\val}{\um}:\st', \lay_1\lcomp \rfs''}}
        \]
        and deduce that
        \[
          \Eval[\system][\lay_2]{\conf{\frame{\cnil}{\regmap'}{\opt_2}:\st', \lay_2\lcomp \rfs''}}\evaleq
          \confone_1,
        \]
        which concludes the proof.
      \end{proofcases}
    \end{proofcases}
  \end{proofcases}
\end{proof}

\begin{lemma}[Main Lemma]
  \label{lemma:mainlemmapms}
  Let $\system=(\rfs, \syss, \caps)$ be a \emph{layout non-interferent} system.
  For every $\nat \in \Nat$,
  \emph{unprivileged} program $\cmd$,
  register map $\regmap$, and distribution $\mu$,
  if
  \[
    \forall \lay\in \Lay. \exists \conftwo_\lay. \nstep{!\nat}{\conf{\frame{\cmd}{\regmap}{\um}, \lay \lcomp \rfs}} \conftwo_\lay,
  \]
  then one of the two following statements must hold: 
  \begin{varitemize}
  \item There are a register map $\regmap'$ and a store $\rfs'\eqon{\Fun}\rfs$
    such that, for every layout $\lay$, we have the reduction
    \[
      \nstep {! \nat}{\conf{\frame{\cmd}{\regmap}{\um}, \lay \lcomp \rfs}}{\conf{\frame{\cnil}{\regmap'}{\um}, \lay \lcomp \rfs'}}.
    \]
  \item
    $
      \Pr_{\lay \leftarrow \mu}\left[\nstep {! \nat} {\conf{\frame{\cmd}{\regmap}{\um}, \lay \lcomp \rfs}}\err \right] \ge \delta_\mu.
    $
  \end{varitemize}  
\end{lemma}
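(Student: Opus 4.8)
The plan is to prove the statement by induction on $\nat$, after first \emph{generalizing} it so that the induction goes through. Instead of quantifying only over single-frame configurations $\conf{\frame{\cmd}{\regmap}{\um}, \lay \lcomp \rfs}$, I would state it for an arbitrary \emph{unprivileged} stack $\st$ (one satisfying $\um(\st)$) and an arbitrary store $\rfs' \eqon{\Fn} \rfs$, taking the initial configuration to be $\conf{\st, \lay \lcomp \rfs'}$ and adjusting the reference store in both disjuncts. This generalization is forced by two reduction rules: rule~\ref{WL:Call} pushes a new frame (yielding a multi-frame stack), and a returning system call produces an updated store. Both operations preserve unprivilegedness, and since unprivileged stores may only write user-space arrays while the write-xor-execute discipline forbids overwriting procedures, they also preserve coincidence with $\rfs$ on $\Fn$.

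The base case $\nat = 0$ is immediate: the only irreducible unprivileged configurations have the shape $\conf{\frame{\cnil}{\regmap'}{\um}, \lay \lcomp \rfs'}$, so the first disjunct holds with $\regmap'$ and $\rfs'$ fixed independently of $\lay$. For the inductive step I would case-split on the command at the top frame, using that all layouts with non-zero probability agree on $\Idu$ together with \Cref{rem:expreval} to conclude that every expression occurring in unprivileged code evaluates to a layout-independent value. Hence, for every construct \emph{other than} a system call the first reduction step is deterministic and identical across all layouts: either it is a single well-defined step, to which I apply the induction hypothesis with bound $\nat - 1$ (the resulting stack is still unprivileged and its store still coincides with $\rfs$ on $\Fn$), or it drives all layouts to $\err$, in which case $\Pr_{\lay \leftarrow \mu}[\cdots \to \err] = 1 \ge \delta_\mu$ and the second disjunct holds. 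Prepending a deterministic non-erroring step preserves whichever disjunct the induction hypothesis returns.

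The interesting case is the system call $\csyscall{\syscall}{\vec \expr}$. Here the arguments evaluate to fixed values, the kernel-mode frame is pushed identically for all layouts, and the subsequent behaviour is governed by \Cref{lemma:onsyscallterm} applied to the isolated evaluation $\Eval[\system][\lay]{\syss(\syscall), \regmap_0, \km[\syscall], \rfs'}$. Case~(C) of that lemma (divergence under all layouts) is excluded by our hypothesis that every layout reaches a normal form within $\nat$ steps. Case~(B) gives $\Pr_{\lay \leftarrow \mu}[\Eval = \err] \ge \delta_\mu$; transporting this isolated error into an error of the full stack via the context-plugging property (\Cref{lemma:contextplugging}), and using determinism of the error-free prefix, yields $\Pr_{\lay \leftarrow \mu}[\cdots \to \err] \ge \delta_\mu$, i.e.\ the second disjunct. (By the proof of \Cref{lemma:onsyscallterm}, any layout reaching $\unsafe$ is absorbed into case~(B), so this branch also covers potential unsafe behaviour.) In case~(A) the system call terminates, for \emph{every} layout, with one common value $\overline \val$ and one common store $\overline \rfs \eqon{\Fn} \rfs$; context-plugging then shows the full stack reaches a single common continuation $\conf{\frame{\cmdtwo}{\regmap[\ret \mapsto \overline \val]}{\um}:\st', \lay \lcomp \overline \rfs}$ across all layouts.

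The hard part is the step bookkeeping in case~(A): the number $k_\lay$ of steps the system call consumes may depend on $\lay$, since layout non-interference constrains the \emph{result} of a system call but not its running time. To apply the induction hypothesis to the common continuation I would exploit the finiteness of $\Lay$: because $k_\lay \ge 1$, under each layout the continuation reaches a normal form within $\nat - k_\lay \le \nat - 1$ steps, hence within the uniform bound $\nat - 1$ for all of them, so the induction hypothesis applies with $\nat - 1$. If it returns the first disjunct, determinism of the reduction for each fixed layout identifies the common terminal configuration with the normal form already guaranteed to be reached within $\nat$ steps, giving the first disjunct for the whole computation; if it returns the second, the error probability is carried back unchanged through the deterministic, error-free prefix, giving the second disjunct. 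I expect this step-count uniformization, combined with the careful invocation of \Cref{lemma:onsyscallterm} and \Cref{lemma:contextplugging}, to be the delicate part; the remaining cases are routine introspection of the operational semantics.
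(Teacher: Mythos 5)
Your proposal is correct and follows essentially the same route as the paper's proof: the same generalization to an arbitrary unprivileged stack over a store coinciding with $\rfs$ on $\Fn$, the same induction on $\nat$ with layout-independent evaluation of unprivileged expressions via \Cref{rem:expreval}, and the same treatment of system calls through \Cref{lemma:onsyscallterm} and \Cref{lemma:contextplugging}. The only cosmetic difference is that your step-count uniformization in case~(A) does not actually need finiteness of $\Lay$ --- the bound $\nat-1$ already follows from each layout's prefix consuming at least one step.
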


\begin{proof}
  To facilitate the induction, we prove a slightly stronger statement. Specifically, we replace the quantification over $\cmd$ and $\regmap$ with a quantification over a non-empty stack $\st$ such that $\um(\st)$ holds. Additionally, we quantify over a general $\rfs' \eqon{\Fn} \rfs$ instead of $\rfs$, and we rewrite the claim in a more convenient form. In the following claim, all meta-variables that are not explicitly quantified are globally and universally quantified. As a result, we obtain the following claim:
  if $\forall \lay\in \Lay. \exists \conftwo_\lay. \nstep{!\nat}{\conf{\st, \lay \lcomp \rfs'}} \conftwo_\lay$, then:
  \begin{varitemize}
  \item $\exists \regmap', \rfs''\eqon{\Fn}\rfs'. \forall \lay\in \Lay.\nstep {! \nat}{\conf{\st, \lay \lcomp \rfs'}}{} {\conf{\frame{\cnil}{\regmap'}{\um}, \lay \lcomp \rfs''}}$, or
  \item $\Pr_{\lay \leftarrow \mu}\left[\nstep {! \nat} {\conf{\st, \lay \lcomp \rfs'}}\err \right] \ge \delta_\mu$.
  \end{varitemize}
  We call this auxiliary claim (C') and we proceed by induction on $\nat$.
  \begin{proofcases}
    %
    %
    \proofcase{0} In this case, we assume
    \[
      \forall \lay\in \Lay. \exists \conftwo_\lay. \nstep{!0}{\conf{\st, \lay \lcomp \rfs'}} \conftwo_\lay,
    \]
    and from this premise, we deduce that
    $\conf{\st, \lay \lcomp \rfs'}$ is terminal. Using the definition
    of the notation $\to^!$, the fact that $\um(\st)$ holds, and the
    assumption on the non-emptiness of $\st$, we conclude that
    $\st = \frame \cnil \regmap \um : \nil$ for some $\regmap$. Hence, the claim
    holds trivially by choosing the register map $\regmap$ and the
    store $\rfs'$.
    %
    %
    \proofcase{$\nat+1$} In this case, we start by assuming
    \[
      \forall \lay\in \Lay. \exists \conftwo_\lay. \nstep{!\nat+1}{\conf{\st, \lay \lcomp \rfs'}} \conftwo_\lay
      \tag{H}
    \]
    and we go by cases on $\st$, excluding the case where $\st =\nil$ due to our assumption on the non-emptiness of $\st$. 
    Therefore, in the following, we assume that $\st = \frame \cmd \regmap \um :\st'$,
    and we proceed by cases on $\cmd$. Notice that the induction
    hypothesis coincides syntactically with (C').
    \begin{proofcases}
      \proofcase{$\vx \ass\expr\sep \cmdtwo$} In this case, we observe
      that there exists a value $\val$ such that, for every layout
      $\lay$, we have $\sem \expr_{\regmap, \lay} = \val$. More
      specifically, this follows from \Cref{rem:expreval}, since for
      every $\id \in \Idu$ and every pair of layouts $\lay_1, \lay_2$,
      it holds that $\lay_1(\id) = \lay_2(\id)$. In particular, all
      identifiers appearing in $\expr$ belong to $\Idu$.
      By introspection of the semantics, we deduce that for every layout $\lay$:
      \begin{equation*}
        \step {\conf{\frame{\vx \ass\expr\sep\cmdtwo}{\regmap}{\um}:\st', \lay\lcomp \rfs'}}{}\\
        {\conf{\frame{\cmdtwo}{\update \regmap\vx {\val}}{\um}:\st', \lay\lcomp \rfs'}}.
        \tag{$*$}
      \end{equation*}
      Notice that
      $\um(\frame{\cmdtwo}{\update \regmap\vx
        {\val}}{\um}:\st')$. Because of (H) and this observation, we
      can apply the IH to the stack in the target configuration and
      $\rfs'$ to conclude that one among (A) and (B) below holds.
      \begin{equation*}
        \exists \regmap', \rfs''\eqon{\Fn}\rfs'. \forall \lay\in \Lay.\\
        \nstep {! \nat}{\conf{\frame{\cmdtwo}{\update \regmap\vx {\val}}{\um}:\st', \lay \lcomp \rfs'}}{}\\
        {\conf{\frame{\cnil}{\regmap'}{\um}, \lay \lcomp \rfs''}}
        \tag{A}
      \end{equation*}
      \[
        \Pr_{\lay \leftarrow \mu}\left[\nstep {! \nat} {\conf{\frame{\cmdtwo}{\update \regmap\vx {\val}}{\um}:\st', \lay \lcomp \rfs'}}\err \right] \ge \delta_\mu.
        \tag{B}
      \]
      We go by cases on this disjunction.
      \begin{proofcases}
        \proofcase{A} In this case, we introduce $\regmap'$ and $\rfs''$ from the IH,
        we assume $\rfs''\eqon{\Fn}\rfs'$, we fix a layout $\lay$ and from
        ($*$) and (A), we conclude:
        \begin{equation*}
          \nstep{\nat+1} {\conf{\frame{\vx \ass\expr\sep\cmdtwo}{\regmap}{\um}:\st', \lay\lcomp \rfs'}}{}
          {\conf{\frame{\cnil}{\regmap'}{\um}, \lay \lcomp \rfs''}}.
        \end{equation*}
        Due to the generality of $\lay$, we can
        introduce universal quantification over all layouts, thereby
        completing the proof.

        \proofcase{B} We call $E$ the set of all the layouts $\overline \lay$ such that
        \[
          \nstep[\system][\overline \lay] {! \nat} {\conf{\frame{\cmdtwo}{\update \regmap\vx {\val}}{\um}:\st', \lay \lcomp \rfs'}}\err
        \]
        Notice that we can apply the assumption ($*$)
        to each of these layouts, showing that
        \[
          \forall \overline \lay\in E.\nstep[\system][\overline \lay]{!\nat+1} {\conf{\frame{\vx \ass\expr\sep\cmdtwo}{\regmap}{\um}:\st', \overline\lay\lcomp \rfs'}}{\err}.
        \]
        From (B), we know that the probability associated to the event $E$ is bigger than $\delta_\mu$,
        the last observation establishes the claim.
      \end{proofcases}
      \proofcase{$\cnil$} Analogous to the previous case.
      \proofcase{$\cskip\sep \cmdtwo$} Analogous to the case of
      assignments.
      \proofcase{$\cif \expr {\cmd_\bot}{\cmd_\top}\sep\cmdtwo$}
      Analogous to the case of assignments.
      \proofcase{$\cwhile \expr {\cmd}\sep\cmdtwo$} Analogous to the
      case of assignments.
      \proofcase{$\cmemass \expr \exprtwo\sep \cmdtwo$} In this case,
      we begin by observing that there exists a value $\val_\expr$
      such that for every $\lay \in \Lay$, we have
      $\sem\expr_{\regmap,\lay} = \val_\expr$. This follows from
      \Cref{rem:expreval}, since we assume that
      $\um(\cmemass \expr \exprtwo \sep \cmdtwo)$, which implies that
      all identifiers within $\expr$ belong to $\Idu$.
      For the same reason, there exists a unique value $\val_\exprtwo$
      such that for every layout $\lay$, we have
      $\sem\exprtwo_{\regmap,\lay} = \val_\exprtwo$. Consequently,
      there is a unique address $\add \in \Add$ such that
      $\add = \toAdd{\val}$.
      Similarly, we deduce that there exists a unique set of addresses
      $P_{\Aru}$ such that, for every layout $\lay$, we have
      $\underline{\lay}(\Aru) = P_{\Aru}$. Moreover, due to the
      assumptions on the set of layouts, we conclude that
      $P_{\Aru} \subseteq \Addu$.
      Finally, we proceed by cases on whether $\add \in P_{\Aru}$.
      \begin{proofcases}
        \proofcase{$\add \in P_{\Aru}$} The rule \ref{WL:Store}
        can be applied to each configuration
        \[
        {\conf{\frame{\cmemass  \expr \exprtwo\sep \cmdtwo}{\regmap}{\um}:\st', \lay\lcomp \rfs'}}
        \]
        for every layout $\lay$, obtaining:
        \[
          {\conf{\frame{\cmdtwo}{\regmap}{\um}:\st', \update{\lay\lcomp \rfs'}{\add}{\val_\expr}}};
        \]
        from the assumption $\add \in \underline \lay(\Ar[\um])$
        and the definition of $\underline \lay$, we deduce that
        there is $\ar \in \Ar[\um]$
        and $0\le i\le\size\ar$ such that $\lay(\ar)+i = \add$
        and that for every other layout $\overline \lay$
        it must hold that $\overline \lay(\ar)+i = \add$,
        so we can apply \Cref{rem:memupdtostupd}
        in order to show that
        \[
          \forall \lay \in \Lay. \update{\lay\lcomp \rfs'}{\add}{\val_\expr} =
          \lay\lcomp \update{\rfs'}{(\ar, i)}{\val_\expr}.
        \]
        Finally, we observe that $\um({\frame{\cmdtwo}{\regmap}{\um}:\st'})$ holds
        and that
        \[
          \update{\rfs'}{(\ar, i)}{\val_\expr}\eqon{\Fn}{\rfs'}\eqon{\Fn}\rfs
        \]
        holds as well, so we can apply the IH
        and conclude the proof of this sub-derivation as we did in the case of assignments.
        \proofcase{$p \notin P_{\Aru}$}
        In this case,
        we just observe that
        the rule \ref{WL:Store-Error}
        can be applied to show both
        \[
          \step[\system][\lay_1]
          {\conf{\frame{\cmemass  \expr \exprtwo\sep \cmdtwo}{\regmap}{\um}:\st', \lay_1\lcomp \rfs'}}
          \err,
        \]
        and
        \[
          \step[\system][\lay_2]
          {\conf{\frame{\cmemass  \expr \exprtwo\sep \cmdtwo}{\regmap}{\um}:\st', \lay_2\lcomp \rfs'}}
          \err.
        \]
        This establishes the claim. 
      \end{proofcases}
      \proofcase{$\cmemread \vx \expr\sep \cmdtwo$} Analogous to the case of store operations.
      \proofcase{$\ccall \expr {\exprtwo_1, \dots, \exprtwo_k}\sep \cmdtwo$} 
      This case is similar to that of stores but requires additional
considerations regarding the stacks of the target configurations.
      We begin by observing that there exists a unique address $\add$
such that for every $\lay \in \Lay$, we have $\toAdd{\sem
\expr_{\regmap,\lay}} = \add$. Similarly, we define the values
$\val_1, \dots, \val_k$, which correspond to the semantics of
$\exprtwo_1, \dots, \exprtwo_k$ evaluated under $\regmap$ for every
layout $\lay \in \Lay$.
      Next, we note the existence of a set $P_{\Fn[\um]}$ such that
for all $\lay \in \Lay$, we have $\underline{\lay} (\Fn[\um]) =
P_{\Fn[\um]}$.
      The proof proceeds by cases on $\add \in P_{\Fn[\um]}$.

      \begin{proofcases}
        \proofcase{$\add\in P_{\Fn[\um]}$} In this case, there exists
        a unique $\fn \in \Fn[\um]$ such that for every
        $\lay \in \Lay$, we have $\lay(\fn) = \add$. From this
        observation and the definition of $\lcomp$, we deduce that for
        every $\lay \in \Lay$, $(\lay\lcomp \rfs')(\add) = \rfs'(\fn) = \rfs(\fn)$,
        where the last step follows from the assumption $\rfs' \eqon{\Fn} \rfs$. Since
        $\add\in P_{\Fn[\um]}$, we deduce that:
        \begin{multline*}
          \forall \add\in P_{\Fn[\um]}.
          \step
          {\conf{\frame{\ccall  \expr {\exprtwo_1, \dots, \exprtwo_k}\sep \cmdtwo}{\regmap}{\um}:\st', \lay\lcomp \rfs'}}{}\\
          {\conf{\frame{\rfs(\fn)}{\regmap_0'}{\um}:\frame{\cmdtwo}{\regmap}{\um}:\st', \lay\lcomp \rfs'}},
        \end{multline*}
        where
        $\regmap_0'= \regmap_0[\vx_1,\dots,\vx_k\upd
        \val_1,\dots,\val_k]$.  Finally, we observe that
        $\um(\frame{\rfs(\fn)}{\regmap_0'}{\um}:\frame{\cmdtwo}{\regmap}{\um}:\st')$
        holds since $\rfs$ maps unprivileged commands to identifiers
        in $\Fnu$.  This allows us to apply the IH. The proof then
        follows the same structure as in the case of assignments.

        \proofcase{$p \notin P_{\Fn[\um]}$} Analogous to the
        corresponding case for store operations.
      \end{proofcases}
      \proofcase{$\csyscall \syscall{\expr_1, \ldots, \expr_k}\sep\cmdtwo$}
      We begin by introducing the values $\val_1, \dots, \val_k$, which correspond to the semantics of $\exprtwo_1, \dots, \exprtwo_k$ evaluated under $\regmap$ and all layouts $\lay \in \Lay$. By introspection of the rule \ref{WL:SystemCall}, we deduce that for every $\lay \in \Lay$, the following holds:
      \begin{equation*}
        \step
        {\conf{\frame{\csyscall  \syscall {\exprtwo_1, \dots, \exprtwo_k}\sep \cmdtwo}{\regmap}{\um}:\st', \lay\lcomp \rfs'}}{}\\
        {\conf{\frame{\syss(\syscall)}{\regmap_0'}{\km[\syscall]}:\frame{\cmdtwo}{\regmap}{\um}:\st', \lay\lcomp \rfs'}},
        \tag{\dag}
      \end{equation*}
      where $\regmap_0'= \regmap_0[\vx_1,\dots,\vx_k\upd \val_1,\dots,\val_k]$. However, we cannot apply the IH on the configuration
      \[
        {\conf{\frame{\syss(\syscall)}{\regmap_0'}{\km[\syscall]}:\frame{\cmdtwo}{\regmap}{\um}:\st', \lay\lcomp \rfs'}}
      \]
      because
      $\um(\frame{\syss(\syscall)}{\regmap_0'}{\km[\syscall]}:\frame{\cmdtwo}{\regmap}{\um}:\st')$
      does not hold. Nevertheless, we can apply
      \Cref{lemma:onsyscallterm} and deduce that
      one of the following statements must hold:
      \begin{equation*}
        \exists \overline \regmap, \overline \rfs\eqon{\Fn} \rfs'.\forall \lay \in \Lay.\\
        \nstep !{\conf{\frame{\syss(\syscall)}{\regmap_0'}{\km[\syscall]}, {\lay\lcomp \rfs'}}}  {\conf{\frame{\cnil}{\overline \regmap}{\km[\syscall]}, \lay\lcomp \rfs''}},
        \tag{A}
      \end{equation*}
      \[
        \Pr_{\lay\leftarrow\mu}\left[\nstep !{\conf{\frame{\syss(\syscall)}{\regmap_0'}{\km[\syscall]}, \lay\lcomp \rfs'}}\err\right] \ge \delta_\mu,
        \tag{B}
      \]
      \[
        \forall \lay \in \Lay.\Eval{\conf{\frame{\syss(\syscall)}{\regmap_0'}{\km[\syscall]}, \lay\lcomp \rfs'}} = \Omega.
        \tag{C}
      \]
      We go by cases on the valid statement.
      \begin{proofcases}
        \proofcase{A} From (A) and \Cref{lemma:contextplugging}, we conclude that
        there exist a register map $\overline \regmap$ and a store
        $\overline \rfs$ such that for every
        layout $\lay$, we have:
        \begin{equation*}
          \nstep !        {\conf{\frame{\syss(\syscall)}{\regmap_0'}{\km[\syscall]}:\frame{\cmdtwo}{\regmap}{\um}:\st', \lay\lcomp \rfs'}}{}\\
          {\conf{\frame{\cmdtwo}{\update\regmap \ret{\overline \regmap(\ret)}}{\um}:\st', \lay\lcomp \overline \rfs}}.
        \end{equation*}
        We also observe that $\rfs \eqon{\Fn} \rfs'$ by \Cref{rem:simispreserved}. Combining this with (\dag), we conclude:
        \begin{equation*}
          \nstep ! {\conf{\frame{\csyscall  \syscall {\exprtwo_1, \dots, \exprtwo_k}\sep \cmdtwo}{\regmap}{\um}:\st', \lay\lcomp \rfs'}}{}\\
          {\conf{\frame{\cmdtwo}{\update\regmap \ret{\overline \regmap(\ret)}}{\um}:\st', \lay\lcomp \overline \rfs}}
        \end{equation*}
        holds as well.
        From (H), we also deduce that the number of steps must be at most $\nat+1$, while from (\dag), we infer that it must be greater than 1. This allows us to apply the IH to 
        ${\conf{\frame{\cmdtwo}{\update\regmap \ret{\overline \regmap(\ret)}}{\um}:\st', \lay\lcomp \overline \rfs}}$. The proof then proceeds similarly to the case of assignments.
        \proofcase{B} Define $E$ as the set of all layouts $\overline \lay \in \Lay$ such that:
        \[
          \nstep[\system][\overline \lay] {!} {\conf{\frame{\syss(\syscall)}{\regmap_0'}{\km[\syscall]}, \overline \lay\lcomp \rfs'}}\err
        \]
        Since (\dag) holds for each of these layouts, it follows that
        for every $\overline \lay \in E$, we have:
        \[
          \nstep[\system][\overline \lay]{!} {\conf{\frame{\csyscall  \syscall {\exprtwo_1, \dots, \exprtwo_k}\sep \cmdtwo}{\regmap}{\um}:\st', \overline \lay\lcomp \rfs'}}{\err}.
        \]
        From (B), we know that the probability measure associated with
        this set is greater than $\delta_\mu$. Furthermore, from (H),
        we deduce:
        \[
          \nstep[\system][\overline \lay]{!\nat+1} {\conf{\frame{\csyscall  \syscall {\exprtwo_1, \dots, \exprtwo_k}\sep \cmdtwo}{\regmap, \um}:\st', \overline \lay\lcomp \rfs'}}{\err},
        \]
        which completes the claim.
        \proofcase{C} Let $\lay\in \Lay$ be any layout.
        From \Cref{lemma:contextplugging}, we deduce that 
        \[
          \Eval{\conf{\frame{\syss(\syscall)}{\regmap_0'}{\km[\syscall]}:\frame{\cmdtwo}{\regmap}{\um}:\st', \lay\lcomp \rfs'}} = \Omega,
        \]
        which contradicts our assumption (H). This completes the proof.
      \end{proofcases}
    \end{proofcases}
  \end{proofcases}
  The main claim is a particular case of C' where $\st$ is $\frame{\cmd}{\regmap}{\um}$, and $\rfs'$ is $\rfs$.
\end{proof}


\onsyscallterm*

\begin{proof}
  We begin by fixing all the universally quantified variables in the
  statement using the same meta-variable as before. We also fix some
  $\overline{\lay} \in \Lay$ and proceed by case analysis on
  \[
    \Eval[\system][\overline \lay]{\conf{\frame{\syss(\syscall)}{\regmap}{\km[\syscall]}, \lay\lcomp \rfs'}}.
  \]
  \begin{proofcases}
    \proofcase{$(\val, \rfs'')$}
    In this case, by the \emph{layout non-interference} property, we conclude that (A) holds for $\overline{\val} = \val$ and $\overline{\rfs} = \rfs''$. Furthermore, by \Cref{rem:simispreserved}, we know that $\rfs'' \eqon{\Fn} \rfs' \eqon{\Fn} \rfs$.
    \proofcase{$\unsafe, \err$}
    Here, by the \emph{layout non-interference} property, we deduce that for every $\lay \in \Lay$,
    \[
      \Eval{\conf{\frame{\syss(\syscall)}{\regmap}{\km[\syscall]}, \lay\lcomp \rfs'}} \in \{\err, \unsafe\}.
    \]
    Consequently, for every $\lay \in \Lay$, there exists some $\nat_\lay$ such that
    \[
      \exists \nat_\lay.\nstep {\nat_\lay} {\conf{\frame{\syss(\syscall)}{\regmap}{\km[\syscall]}, \lay\lcomp \rfs'}} \conftwo \in\{\err, \unsafe\}.
    \]
    Since $\Lay$ is finite, there exists a uniform bound $\overline{\nat}$ such that
    \[
      \forall \lay\in \Lay.\nstep {!\overline \nat} {\conf{\frame{\syss(\syscall)}{\regmap}{\km[\syscall]}, \lay\lcomp \rfs'}} \conftwo \in\{\err, \unsafe\}.
      \tag{\dag}
    \]
    From this, we conclude that  
    \[
      \Pr_{\lay \leftarrow \mu}\left[ \nstep {!\overline \nat} {\conf{\frame{\syss(\syscall)}{\regmap}{\km[\syscall]}, \lay\lcomp \rfs'}}\err\right]\ge \delta_\mu.
    \]
    This follows by considering an enumeration $\id_0, \dots, \id_h$ of the identifiers in $\refs_\system(\syss(\syscall))$ and noting that the probability above can be rewritten as:
    \begin{multline*}
      \sum_{\add_1, \dots, \add_h \in \Addk}
      \Pr_{\lay \leftarrow \mu}\Big[ \nstep {!\overline \nat} {\conf{\frame{\syss(\syscall)}{\regmap}{\km[\syscall]}, \lay\lcomp \rfs'}}\err \,\,\Big|\,\,
      \forall 1\le i\le h.\lay(\id_i)=\add_i\Big] \cdot\\
      \Pr_{\lay \leftarrow \mu}[\lay(\id_1)=\add_1,\dots, \lay(\id_h)=\add_h]
    \end{multline*}
    From (\dag) and \Cref{lemma:onsyscall}, we deduce that for every
    choice of $\add_1, \dots, \add_h \in \Addk$, the terms on the left
    in the expression above are bounded by $\delta_{\mu}$ making
    their convex combination bounded by $\delta_{\mu}$. moreover, since:
    \[
      \nstep {!\overline \nat} {\conf{\frame{\syss(\syscall)}{\regmap}{\km[\syscall]}, \lay\lcomp \rfs'}}\err
      \Rightarrow
      \nstep {!} {\conf{\frame{\syss(\syscall)}{\regmap}{\km[\syscall]}, \lay\lcomp \rfs'}}\err
    \]
    for the definition of $\to^{!m}$, and this means that 
    \[
      \Pr_{\lay \leftarrow \mu}\left[ \nstep {!} {\conf{\frame{\syss(\syscall)}{\regmap}{\km[\syscall]}, \lay\lcomp \rfs'}}\err\right]
      \ge
      \Pr_{\lay \leftarrow \mu}\left[ \nstep {!\overline \nat} {\conf{\frame{\syss(\syscall)}{\regmap}{\km[\syscall]}, \lay\lcomp \rfs'}}\err\right],
    \]
    which establishes the claim.
    \proofcase{$\Omega$} In this case, due to the \emph{layout non-interference} property, we conclude that (C) holds.
  \end{proofcases}
\end{proof}

For the next lemma, we extend $\refs_\sigma$ to frame stacks as follows:
\begin{align*}
  \refs_\sigma(\nil) &\defsym \emptyset&
  \refs_\sigma(f\cons \st) &\defsym \refs_\sigma(f) \cup \refs_\sigma(\st) &
  \refs_\sigma(\frame \cmd \regmap \opt) &\defsym \refs_\sigma(\cmd).
\end{align*}

\onsyscall*

\begin{proof}
  We proceed by induction on $\nat$.
  \begin{proofcases}
    \proofcase{0}
    In this case, we conclude that either (A) or (C) holds trivially, depending on whether $\syss(\syscall) \neq \cnil$ or not.
    \proofcase{$\nat+1$}
    We begin by applying the IH to the initial configuration. Three cases arise:
  \begin{proofcases}
    \proofcase{A} Let $\system$ be a system, $\syscall$ a system call
    and consider the initial configuration
    $ {\conf{\frame{\syss(\syscall)}{\regmap}{\km[\syscall]}, \lay
        \lcomp \rfs'}}$.  Suppose
    $\refs_\system(\syss(\syscall))\setminus \Sys= \{\id_1, \dots,
    \id_h\}$ and fix addresses $\add_1, \dots, \add_h$.  By the IH,
    there exists a stack $\st$ and a
    store $\rfs'' \eqon{\Fn} \rfs$ satisfying the following property:
    \[
      \forall \lay. \left( \forall 1 \leq i \leq h, \lay(\id_i) = \add_i\right)  \Rightarrow
      \nstep{\nat}
      {\conf{\frame{\syss(\syscall)}{\regmap}{\km[\syscall]}, \lay \lcomp \rfs'}}
      {\conf{\st, \lay \lcomp \rfs''}}.
    \]
    Since $\st$ is non-empty, we can assume that $\st = \frame \cmd {\regmap'} {\km[\syscall]}:\st'$.
    Furthermore, from the IH, we know that that $\refs_\system(\st)\subseteq \refs_\system(\syss(\syscall))$ (H). We now perform a case analysis on $\cmd$.
    \begin{proofcases}
      \proofcase{$\vx \ass\expr\sep \cmdtwo$} We first observe that
      there exists a value $\val$ such that, for every layout in our
      quantification $\lay$, $\sem \expr_{\regmap, \lay} = \val$.
      This follows from \Cref{rem:expreval}, given that for every
      $\id \in \refs_\system(\syss(\syscall))$ and any two layouts
      $\lay_1, \lay_2$, we assume that $\lay_1(\id) = \lay_2(\id)$. In
      particular, this applies to identifiers appearing in $\expr$
      that belong to $\refs_\system(\syss(\syscall))$ by (H).  For all
      layouts if our quantification layouts, we have:
      \begin{equation*}
        \step {\conf{\frame{\vx \ass\expr\sep\cmdtwo}{\regmap'}{\km[\syscall]}:\st', \lay\lcomp \rfs''}}{}\\
        {\conf{\frame{\cmdtwo}{\update {\regmap'}\vx {\val}}{\km}:\st', \lay\lcomp \rfs''}}.
        \tag{$*$}
      \end{equation*}
      From (H), we deduce
      \[
        \refs_\system(\frame{\cmdtwo}{\update {\regmap'}\vx {\val}}{\km}:\st')\subseteq\refs_\system(\syss(\syscall))
      \]
      Therefore, we conclude that (A) holds if either
      $\cmdtwo \neq \cnil$ or $\st'$ is non-empty. Otherwise, (C)
      holds.  \proofcase{$\cnil$} Analogous to the case above.
      \proofcase{$\cskip\sep \cmdtwo$} Analogous to the case of
      assignments.
      \proofcase{$\cif \expr {\cmd_\bot}{\cmd_\top}\sep\cmdtwo$}
      Analogous to the case of assignments.
      \proofcase{$\cwhile \expr {\cmd}\sep\cmdtwo$} Analogous to the
      case of assignments.
      \proofcase{$\cmemass \expr \exprtwo\sep \cmdtwo$} In this case,
      we start by observing that there exist a value $\val_\expr$ such
      that for each layout $\lay$ that satisfies the premise, we have
      $\sem\expr_{\regmap,\lay} = \val_\expr$. This follows from
      \Cref{rem:expreval}, (H2), and the definition of $\refs$, which
      ensures that all the identifiers within $\expr$ are in
      $\refs(\frame{\cmemass\expr\exprtwo\sep\cmdtwo}{\regmap'}{\km[\syscall]}:\st')$.
      For the same reason, there is a unique value $\val_\exprtwo$
      such that for each of the layout in our quantification, we have
      that $\sem\exprtwo_{\regmaptwo,\lay} = \val_\exprtwo$.
      Therefore, there is a unique $\add\in \Add = \toAdd{\val_\expr}$ which
      the store instruction attempts to write at.  Finally, we observe
      that there is a unique set $P$ such that, for every layout
      $\lay$ satisfying the assumption, we have
      $P = \underline \lay(\refs_\system(\syss(\syscall))\setminus \Sys)$.  We go by
      case analysis on $\add \in P$.
      \begin{proofcases}
        \proofcase{$\add\in P$} In this case, for every layout $\lay$
        that we are quantifying over, there exists some index $i$ such
        that $\add \in \underline \lay(\id_i)$.  If $\id_i$ is a
        procedure identifier, then we have
        $\add = \underline{\lay}(\id_i)$, and thus, independently of $\lay$, we have:
        \[
          \step
          {\conf{\frame{\cmemass\expr\exprtwo\sep\cmdtwo}{\regmap'}{\km[\syscall]}:\st',
              \lay\lcomp \rfs''}} {\err}.
        \]
        This shows that (B) holds with probability 1.  Otherwise,
        there exists a unique array $\ar$ and a unique index $j$ such
        that, independently of the layout, we have
        $\lay(\ar) + j = \add$. Notice that if $\ar$ and $j$ were not
        unique, then it would not be true that the layouts store
        $\id_1, \dots,\id_h$ respectively at $\add_1, \dots, \add_h$.
        Thus, each layout $\lay$ that satisfies the premises of rule
        \ref{WL:Store}, we apply the rule to show the following
        transition:
        \begin{equation*}
          \step {\conf{\frame{\cmemass\expr\exprtwo\sep\cmdtwo}{\regmap'}{\km[\syscall]}:\st', \lay\lcomp \rfs''}}{}\\
          {\conf{\frame{\cmdtwo}{\regmap'}{\km[\syscall]}:\st', \update{(\lay\lcomp \rfs'')}\add{\val_\exprtwo}}}.
        \end{equation*}
        Using \Cref{rem:memupdtostupd}, we further conclude:
        \begin{equation*}
          \step {\conf{\frame{\cmemass\expr\exprtwo\sep\cmdtwo}{\regmap'}{\km[\syscall]}:\st', \lay\lcomp \rfs''}}{}\\
          {\conf{\frame{\cmdtwo}{\regmap'}{\km[\syscall]}:\st', \lay\lcomp {(\update{\rfs''}{(\ar, j)}{\val_\exprtwo})}}}.
        \end{equation*}
        By observing the uniqueness of the target configuration modulo
        $\lay$, and given that
        $\refs_\system(\frame{\cmdtwo}{\regmap'}{\km[\syscall]}:\st')
        \subseteq \refs_\system(\syss(\syscall))$ follows from (H), we
        conclude that (A) holds. If $\cmdtwo = \cnil$ and $\st'$ is
        empty (C) holds, instead.
        \proofcase{$\add\notin P$}
        %
        Observe that, for every layout $\lay$ such that
        $\add \notin \underline{\lay}(\Idk)$,
        only rule \ref{WL:Store-Error} applies, which shows the following transition:
        \[
          \step {\conf{\frame{\cmemass\expr\exprtwo\sep\cmdtwo}{\regmap'}{\km[\syscall]}:\st', \lay\lcomp \rfs''}}\err.
        \]
        Thus, we observe that:
        \begin{equation*}
          \Pr_{\lay \leftarrow\mu}\big[\step {\conf{\frame{\cmemass\expr\exprtwo\sep\cmdtwo}{\regmap'}{\km[\syscall]}:\st', \lay\lcomp \rfs''}}\err\,\, \big|\\
          \forall 1\le i\le h.\lay(\id_i)=\add_i\big]
        \end{equation*}
        is greater than
        \[
          \Pr_{\lay \leftarrow\mu}\big[ \add \notin \underline\lay(\Idk) \mid
          \forall 1\le i\le h.\lay(\id_i)=\add_i\big]
        \]
        which, by definition, is greater than $\delta_{\mu}$.  
        This shows that (B) holds.
      \end{proofcases}
      \proofcase{$\cmemread \vx \expr\sep \cmdtwo$} Analogous to the
      case of store operations.
      \proofcase{$\ccall \expr {\exprtwo_1, \dots, \exprtwo_k}\sep
        \cmdtwo$} We start by observing that there exists a unique
      address $\add$ such that for every $\lay$ that satisfies the
      precondition, we have $\toAdd{\sem
        \expr_{\regmap,\lay}}=\add$. Similarly, we introduce the
      values $\val_1, \dots,\val_k$, which correspond to the semantics
      of $\exprtwo_1,\dots,\exprtwo_k$ evaluated under $\regmap$ and
      every layout that satisfies the precondition.  Then, we observe
      that there exists a set $P$ such that for each layout under
      consideration, it holds that
      $\underline{\lay} (\refs_\system(\syss(\syscall))\setminus
      \Sys)=P$. The proof proceeds by cases on whether $\add \in P$.
      \begin{proofcases}
        \proofcase{$\add\in P$} In this case, there is a unique identifier $\id_j$ such that for every layout $\lay$ that satisfies the precondition, we have $\add \in \underline{\lay}(\id_j)$. We analyze two cases based on whether $\id_j$ is a function identifier $ \fn$.
        \begin{proofcases}
          \proofcase{$\id_j= \fn$} In this case, from the definition
          of $\lcomp$, we deduce that for each of these layouts we
          have $\lay\lcomp \rfs''(\add) = \rfs''(\fn) = \rfs(\fn)$,
          where the last step follows from the assumption
          $\rfs''\eqon{\Fn}\rfs$.  Since $\add \in P$, we conclude
          that, independently of the specific layout, if the
          preconditions hold, then:
          \begin{equation*}
            \step
            {\conf{\frame{\ccall  \expr {\exprtwo_1, \dots, \exprtwo_k}\sep \cmdtwo}{\regmap}{\km[\syscall]}:\st', \lay\lcomp \rfs''}}{}\\
            {\conf{\frame{\rfs(\fn)}{\regmap_0'}{\km[\syscall]}:\frame{\cmdtwo}{\regmap}{\km[\syscall]}:\st', \lay\lcomp \rfs''}},
          \end{equation*}
          where $\regmap_0'= \regmap_0[\vx_1,\dots,\vx_k\upd \val_1,\dots,\val_k]$.
          By definition, $\refs_\system(\syss(\syscall))$
          contains all the identifiers within
          $\rfs(\fn) = \rfs(\id_j)$ because $\id_j \in \refs_\system(\syss(\syscall))$ and
          $\refs$ is closed under procedure calls.
          This shows that (A) holds.
          \proofcase{$\id_j =\ar$} In this case, since the set of
          array identifiers and that of functions are disjoint, we
          conclude that for every layout $\lay$ that satisfies the
          preconditions, we have that
          $\add \notin \underline \lay(\Fn[{\km}])$.  This means that
          for each of these layouts, we can show:
          \begin{equation*}
            \step
            {\conf{\frame{\ccall  \expr {\exprtwo_1, \dots, \exprtwo_k}\sep \cmdtwo}{\regmap}{\km[\syscall]}:\st', \lay\lcomp \rfs''}}
            {\err},
          \end{equation*}
          and this means that (B) holds with probability 1.
        \end{proofcases}
        \proofcase{$p \notin P$} Analogous to the corresponding case
        for store operations.
      \end{proofcases}
      \proofcase{$\csyscall \expr {\exprtwo_1, \dots, \exprtwo_k}\sep \cmdtwo$}
      Analogous to the previous case.
    \end{proofcases}
  \end{proofcases}
    \proofcase{B} From the definition of $\to^{!\nat}$, we observe that if
    \[
      \nstep {! \nat} {\conf{\frame{\syss(\syscall)}{\regmap}{\km[\syscall]}, \lay\lcomp \rfs'}}\err,
    \]
    then
    \[
      \nstep {! \nat+1} {\conf{\frame{\syss(\syscall)}{\regmap}{\km[\syscall]}, \lay\lcomp \rfs'}}\err.
    \]
    This implies that
    \[
      \Pr_{\lay\leftarrow \mu}\Big[ \nstep {! \nat+1} {\conf{\frame{\syss(\syscall)}{\regmap}{\km[\syscall]}, \lay\lcomp \rfs'}}\err \,\,\Big|\,\,
      \forall 1\le i\le h.\lay(\id_i)=\add_i\Big]
    \]
    is greater than
    \[
      \Pr_{\lay\leftarrow \mu}\Big[ \nstep {! \nat} {\conf{\frame{\syss(\syscall)}{\regmap}{\km[\syscall]}, \lay\lcomp \rfs'}}\err \,\,\Big|\,\,
      \forall 1\le i\le h.\lay(\id_i)=\add_i\Big],
    \]
    which proves the claim.
    \proofcase{C} Similar to the previous case.
  \end{proofcases}
\end{proof}

\begin{remark}
  \label{rem:deltanu}
  \[
    \delta_\nu \ge \min_{\syscall \in \Sys}\frac{\kappa_{\km}/\theta-|{\Idk}|} {\kappa_{\km}/\theta-|{\refs_\system(\syss(\syscall))\setminus \Sys}|}
  \]
\end{remark}

\begin{proof}
  We want to prove that
  \begin{multline*}
     \min\bigl \{
      \displaystyle{\Pr_{\lay\leftarrow \nu}}
      [ \add \notin \underline\lay(\Id) \mid \lay(\id_i^\syscall)=p_i,
      \text{ for } 1\le i \le h]  \mid \syscall \in \Sys, p,p_1,\dots,p_h \in \Addk \land {}\\[3pt]
      \phantom{{} \mid} p \notin \{p_i, \ldots, p_i +\size{\id_i^\syscall}-1\}, \text{ for } 1\le i \le h
      \bigr \}.
    \end{multline*}
    is greater than
    $\min_{\syscall \in \Sys}\frac{\kappa_{\km}/\theta-|{\Idk}|} {\kappa_{\km}/\theta-|{\refs_\system(\syss(\syscall))}\setminus \Sys|}$.
    To establish this, we fix a system call $\syscall \in \Sys$
    and addresses
    $p,p_1,\dots,p_h \in \Addk$ that achieve the minimum.
    In particular, we have:
  \[
    p \notin \{p_i, \ldots, p_i +\size{\id_i^\syscall}-1\}, \text{ for } 1\le i \le h.
    \tag{\dag}
  \]
  Given that the minimum exists, $p_1, \ldots, p_h$ are the starting
  addresses of the slots $s_1, \ldots, s_h$ where the references
  $\refs_\system(\syss(\syscall))\setminus \Sys$ of $\syscall$ are allocated.
  Now, assume that $\add$ is located within one of these slots, say
  $s_j$. From assumption (\dag), we deduce that $\add$ cannot be
  allocated, since no object other than $\id_j$ can be placed in that
  slot. As a result, we obtain:
  \[
    \Pr_{\lay\leftarrow \nu}
    [ \add \notin \underline\lay(\Id) \mid \lay(\id_i^\syscall)=p_i,
    \text{ for } 1\le i \le h] = 1,
  \]
  which establishes the claim.

  If, on the other hand, $\add$ is located in a different slot $s$, we observe that:
  \begin{multline*}
    \Pr_{\lay\leftarrow \nu}
    [ \add \notin \underline \lay(\Id) \mid \lay(\id_i^\syscall)=p_i,
    \text{ for } 1\le i \le h] \ge\\
    \Pr_{\lay\leftarrow \nu}
    [ \mathit{slotof}(\add) \notin \lay(\Id) \mid \lay(\id_i^\syscall)=p_i,
    \text{ for } 1\le i \le h]
    \tag{$*$}
  \end{multline*} 
  where $\mathit{slotof}$ associates each address with the starting
  address of its corresponding slot.
  Due to the definition of $\nu$, the slots are sampled uniformly and
  they are independent. Therefore, the probability to the right in
  ($*$) is given by the ratio of the free slots and those that are not
  occupied by elements of
  $\refs_\system(\syss(\syscall))\setminus \Sys$, because of the
assumption (\dag). In conclusion ($*$) is equal to:
  \[
    \frac{\kappa_{\km}/\theta-|{\Idk}|}
    {\kappa_{\km}/\theta-|{\refs_\system(\syss(\syscall))}\setminus \Sys|} \ge
    \min_{\syscall \in \Sys}\frac{\kappa_{\km}/\theta-|{\Idk}|}
    {\kappa_{\km}/\theta-|{\refs_\system(\syss(\syscall))}\setminus \Sys|}.
  \]
\end{proof}

\paragraph{Technical observations}

\begin{remark}
  \label{rem:expreval}
Let $ \{\id_1, \dots, \id_h\} \subseteq \Id $ be a set of identifiers, $ \{\add_1, \dots, \add_h\} $ a set of addresses, and $ W \subseteq \{\lay \in \Lay \mid \forall 1 \leq i \leq h, \, \lay(\id_i) = p_i \} $ a set of layouts. Given an expression $ \expr $ such that $ \ids(\expr) \subseteq \{\id_1, \dots, \id_h\} $ and a register map $ \regmap $, there exists a value $ \val \in \Val $ such that
\[
\forall \lay \in W, \, \sem{\expr}_{\regmap, \lay} = \val.
\]
\end{remark}
\begin{proof}
  We proceed by cases on the size of $ W $. If $ |W| = 0 $ or $ |W| = 1 $, the claim is trivial. Otherwise, we prove the following auxiliary claim:
  \[
    \forall \expr. \ids(\expr) \subseteq \{\id_1,\dots,\id_h\}\Rightarrow \forall \lay_1, \lay_2  \in \Lay. \forall 1\le i \le h. \lay_1(\id_i) = \lay_2(\id_i) \Rightarrow
    \sem \expr_{\regmap, \lay_1} =\sem \expr_{\regmap, \lay_2},
  \]
  which can be proved by induction on the syntax of $ \expr $. The main claim follows from the IH on $ W \setminus \{\lay\} $ for some layout $ \lay $, the application of the auxiliary claim to a pair of layouts $ \overline{\lay} \in W \setminus \{\lay\} $ and $ \lay $, and the transitivity of equality.
\end{proof}

\begin{remark}
  \label{rem:memupdtostupd}
  Let $ \add \in \Add $ be an address, $ \ar \in \Arr $ be an array, and $ \rfs $ be a store. For $ 0 \leq i < \size(\ar) $, if $ \add = \lay(\ar) + i $, then
  \[
    \update{(\lay \lcomp \ars)}{\add}{\val} = \lay \lcomp (\update{{\rfs}}{(\add, i)}{\val}).
  \]
\end{remark}

\begin{proof}
  The claim follows directly from unrolling the definitions.
\end{proof}

\begin{remark}
  \label{rem:simispreserved}
  For every layout $ \lay \in \Lay $, and pair of configurations $ \conf{\st, \lay \lcomp \rfs} $ and $ \conf{\st', \mem} $ such that $ \nstep*{\conf{\st, \lay \lcomp \rfs}}{\conf{\st', \mem}} $, it holds that $ \mem = \lay \lcomp \rfs' $ for some $ \rfs' \eqon{\Fn} \rfs $.
\end{remark}

\begin{proof}
  The proof proceeds by induction on the length of the reduction. The base case follows from the reflexivity of $ \eqon{\Fn} $. The inductive case follows from introspection of the rule that has been used for the last transition. The only non-trivial case occurs for the \ref{WL:Store} rule, where the premise $ \add \in \underline{\lay} (\Ar[\opt]) $ guarantees the existence of a pair $ (\ar, i) $ such that $ 0 \leq i < \size\ar $. The observation follows from \Cref{rem:memupdtostupd}.
\end{proof}

\begin{lemma}
  \label{lemma:contextpluggingtech}
  For every layout $\lay\in \Lay$, configuration $\conf{\frame{\cmd}{\regmap}{\opt}, \mem}$ and non-empty stack $\fr:\st$,  $\nat\in \Nat$, and configuration $\confone$:
  \begin{varitemize}
  \item if $\nstep \nat {\conf{\frame{\cmd}{\regmap}{\opt}, \lay\lcomp \rfs}} {\conf{\st', \mem'}}$, then:
    \[
      \nstep \nat {\conf{\frame{\cmd}{\regmap}{\opt}:\fr:\st, \mem}} {\conf{\st':\fr:\st, \mem'}}
    \]
  \item if $\nstep \nat {\conf{\frame{\cmd}{\regmap}{\opt}, \lay\lcomp \rfs}} {\err}$, then:
    \[
      \nstep \nat {\conf{\frame{\cmd}{\regmap}{\opt}:\fr:\st, \mem}} {\err}
    \] 
  \item if $\nstep \nat {\conf{\frame{\cmd}{\regmap}{\opt}, \lay\lcomp \rfs}} {\unsafe}$, then:
    \[
      \nstep \nat {\conf{\frame{\cmd}{\regmap}{\opt}:\fr:\st, \mem}} {\unsafe}
    \] 
  \end{varitemize}
\end{lemma}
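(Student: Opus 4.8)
The plan is to reduce everything to the observation that the reduction rules of \Cref{fig:stepexcerpt1,fig:stepexcerpt2} are \emph{local} to the top of the call stack: every rule except \ref{WL:Pop} inspects and rewrites only the top frame, passing the remaining frames through untouched, while \ref{WL:Pop} inspects exactly the top two frames. Moreover, every side condition (address resolution, capability membership, the guards $\add \in \underline\lay(\dots)$, expression evaluation) depends only on the top frame's command and register map together with the memory and layout, never on the frames below. Hence placing a fixed suffix $\fr:\st$ beneath the active stack can change neither which rule fires nor its side conditions. The one point that needs care is guaranteeing that the original computation never applies \ref{WL:Pop} deep enough to reach into the appended suffix.

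First I would prove a single-step version: if $\conf{\st_0,\mem_0} \to \confone$ is any valid step, then $\conf{\st_0 : \fr : \st, \mem_0}$ performs the matching step, landing in $\conf{\st' : \fr : \st, \mem'}$ when $\confone = \conf{\st',\mem'}$, and in $\err$ (resp.\ $\unsafe$) when $\confone = \err$ (resp.\ $\unsafe$). Here $\st_0 : \fr : \st$ denotes the stack obtained by placing $\fr:\st$ beneath the frames of $\st_0$. The proof is a case analysis on the applied rule. For all rules other than \ref{WL:Pop} --- including the error- and unsafe-producing ones \ref{WL:Load-Error}, \ref{WL:Store-Error}, \ref{WL:Call-Error}, \ref{WL:Load-Unsafe}, \ref{WL:Store-Unsafe}, \ref{WL:Call-Unsafe} --- the tail of the stack below the top frame is an unconstrained parameter, so replacing it by its extension with $\fr:\st$ re-enables the identical transition. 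For \ref{WL:Pop}, validity of the step forces $\st_0$ to have at least two frames, whence its top two frames coincide with those of $\st_0 : \fr : \st$ and the pop is carried out in the same way, with the suffix left in place.

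Then I would lift this to $\nat$ steps by induction on $\nat$, peeling off the last step. For the configuration case the base $\nat = 0$ holds by reflexivity, and in the inductive step the prefix $\conf{\frame{\cmd}{\regmap}{\opt},\lay\lcomp\rfs} \to^{\nat-1} \conf{\st'',\mem''}$ is covered by the induction hypothesis and the final step by the single-step lemma. The $\err$ and $\unsafe$ cases follow the same pattern: since $\err$ and $\unsafe$ are absorbing, such a reduction stays among configurations until its very last step, so the prefix is handled by the already-proved configuration case and the terminal step by the single-step lemma for $\err$/$\unsafe$. Throughout, I would maintain the invariant that reduction preserves non-emptiness of the stack --- the only shrinking rule, \ref{WL:Pop}, consumes two frames and produces one --- so that, starting from the singleton stack $\frame{\cmd}{\regmap}{\opt}$, every intermediate stack $\st''$ is non-empty and \ref{WL:Pop} never reaches the suffix $\fr:\st$.

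The main (indeed only) obstacle is this interplay with \ref{WL:Pop}: one must be sure the appended frames stay passive spectators for the whole $\nat$-step run. This is exactly what the non-emptiness invariant secures, because the original computation begins with a single frame and can never pop it --- popping requires a frame beneath, which is absent in the original but present in the plugged configuration. That extra frame $\fr$ could only ever be popped \emph{after} the $\nat$ steps in question, and therefore does not affect the stated conclusion. Everything else is a routine, rule-by-rule check that the frames below the top are never read.
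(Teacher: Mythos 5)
Your proposal is correct and follows essentially the same route as the paper's proof: induction on $\nat$, peeling off the last step and observing rule by rule that every transition applicable to $\conf{\st',\mem'}$ is equally applicable to $\conf{\st':\fr:\st,\mem'}$ because the rules only inspect the top one or two frames. Your additional care about \ref{WL:Pop} and the non-emptiness invariant is a valid (and welcome) elaboration of a point the paper leaves implicit, not a departure from its argument.
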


\begin{proof}
  By induction on $\nat$. The base case is trivial.  The inductive
  case follows from the IH and by introspection of the rule that
  has been used for showing the last transition. In particular, it suffices to
  observe that every rule that can be applied to show
  $
  \step {\conf{\st', \mem'}}
    {\conf{\st'', \mem'}}
  $
  can also be applied to show
  $
  \step
  {\conf{\st':\fr:\st, \mem'}}
  {\conf{\st'':\fr:\st, \mem'}}.
  $
\end{proof}

\begin{lemma}
  \label{lemma:contextplugging}
  For every layout $\lay\in \Lay$, configuration $\conf{\frame{\cmd}{\regmap}{\opt}, \lay\lcomp\rfs}$ and non-empty stack $\frame \cmdtwo {\regmap'}{\opt}:\st$,  $\nat\in \Nat$, and configuration $\confone$:
  \begin{varitemize}
  \item if $\Eval {\conf{\frame{\cmd}{\regmap}{\opt}, \lay\lcomp \rfs}} = {\val, \rfs'}$, then:
    \begin{equation*}
      \nstep * {\conf{\frame{\cmd}{\regmap}{\opt}:\frame \cmdtwo {\regmap'}{\opt}:\st, \lay\lcomp\rfs}} {}\\
      {\conf{\frame \cmdtwo {\update{\regmap'}\ret{\val}}{\opt}:\st, \lay\lcomp\rfs'}}
    \end{equation*}
  \item if $\Eval {\conf{\frame{\cmd}{\regmap}{\opt}, \lay\lcomp \rfs}} = \err$, then:
    \[
      \nstep * {\conf{\frame{\cmd}{\regmap}{\opt}:\frame \cmdtwo {\regmap'}{\opt}:\st, \lay\lcomp\rfs}} {\err}
    \] 
  \item if $\Eval {\conf{\frame{\cmd}{\regmap}{\opt}, \lay\lcomp \rfs}} = \unsafe$, then:
    \[
      \nstep * {\conf{\frame{\cmd}{\regmap}{\opt}:\frame \cmdtwo {\regmap'}{\opt}:\st, \lay\lcomp\rfs}} {\unsafe}
    \] 
  \item if $\Eval {\conf{\frame{\cmd}{\regmap}{\opt}, \lay\lcomp \rfs}} = \Omega$, then:
    \[
      \Eval {\conf{\frame{\cmd}{\regmap}{\opt}:\frame \cmdtwo {\regmap'}{\opt}:\st, \lay\lcomp\rfs}} = \Omega
    \] 
  \end{varitemize}
\end{lemma}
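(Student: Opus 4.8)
The plan is to obtain \Cref{lemma:contextplugging} as a corollary of the fixed-length lemma \Cref{lemma:contextpluggingtech}, which already carries out the essential work of replaying a one-frame reduction underneath an arbitrary stack context. For each of the four possible outcomes of $\Eval{\conf{\frame{\cmd}{\regmap}{\opt}, \lay\lcomp\rfs}}$ I would first unfold the definition of $\Eval$ into a concrete reduction of the single-frame configuration, and then transport that reduction to the augmented configuration by instantiating the generic frame of \Cref{lemma:contextpluggingtech} with the continuation frame $\frame\cmdtwo{\regmap'}{\opt}$. Note that the continuation frame carries the same execution flag $\opt$ as the top frame, which will matter when popping.

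For the terminating case $\Eval{\conf{\frame{\cmd}{\regmap}{\opt}, \lay\lcomp\rfs}} = (\val, \rfs')$, the definition of $\Eval$ supplies a register map $\regmap''$ with $\lay \red \conf{\frame{\cmd}{\regmap}{\opt}, \lay\lcomp\rfs} \to^* \conf{\frame{\cnil}{\update{\regmap''}{\ret}{\val}}{\opt}, \lay\lcomp\rfs'}$. Applying the first clause of \Cref{lemma:contextpluggingtech} with $\st' = \frame{\cnil}{\update{\regmap''}{\ret}{\val}}{\opt}$ and continuation stack $\frame\cmdtwo{\regmap'}{\opt}:\st$ lifts this to
\[
  \nstep*{\conf{\frame{\cmd}{\regmap}{\opt}:\frame\cmdtwo{\regmap'}{\opt}:\st, \lay\lcomp\rfs}}{\conf{\frame{\cnil}{\update{\regmap''}{\ret}{\val}}{\opt}:\frame\cmdtwo{\regmap'}{\opt}:\st, \lay\lcomp\rfs'}}.
\]
A single further application of \ref{WL:Pop} pops the exhausted top frame, producing $\conf{\frame\cmdtwo{\update{\regmap'}{\ret}{\val}}{\opt}:\st, \lay\lcomp\rfs'}$: the rule copies the $\ret$ entry of the popped register map, and $(\update{\regmap''}{\ret}{\val})(\ret) = \val$. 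Since the continuation frame shares the flag $\opt$, the flag $\opt'$ of \ref{WL:Pop} instantiates to $\opt$, so the result is exactly the target configuration claimed by the lemma.

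The $\err$ and $\unsafe$ cases are immediate: unfolding $\Eval = \err$ (respectively $\unsafe$) yields a reduction $\nstep{\nat}{\conf{\frame{\cmd}{\regmap}{\opt}, \lay\lcomp\rfs}}{\err}$ (respectively to $\unsafe$) for some $\nat$, and the second (respectively third) clause of \Cref{lemma:contextpluggingtech} transports it verbatim. The divergence case is the only one requiring an extra ingredient, and is the main obstacle. From $\Eval = \Omega$ we have an infinite reduction $\conf{\frame{\cmd}{\regmap}{\opt}, \lay\lcomp\rfs} = \confone_0 \to \confone_1 \to \cdots$; since $\err$, $\unsafe$, and the terminal $\cnil$-configurations are all irreducible, every $\confone_\nat$ has the form $\conf{\st_\nat, \mem_\nat}$. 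Applying the first clause of \Cref{lemma:contextpluggingtech} to the length-$\nat$ prefix then produces, for each $\nat$, a reduction of length $\nat$ out of $\conf{\frame{\cmd}{\regmap}{\opt}:\frame\cmdtwo{\regmap'}{\opt}:\st, \lay\lcomp\rfs}$, so the augmented configuration admits reductions of every length. Because the classic semantics is deterministic, a configuration reaching a normal form after $m$ steps cannot also admit a reduction of length $m+1$; hence the augmented configuration never reaches a normal form and diverges, i.e. $\Eval{\conf{\frame{\cmd}{\regmap}{\opt}:\frame\cmdtwo{\regmap'}{\opt}:\st, \lay\lcomp\rfs}} = \Omega$. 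Recovering divergence from unbounded-length reductions by invoking determinism, rather than reading it off a single lifted reduction, is the one genuinely new step of the argument.
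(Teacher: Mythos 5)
Your proposal is correct and follows essentially the same route as the paper: all four cases are obtained as corollaries of \Cref{lemma:contextpluggingtech}, with the divergence case handled by lifting each length-$\nat$ prefix of the diverging run. The only cosmetic difference is that the paper takes ``admits reductions of every length'' directly as the unfolding of $\Eval{\cdot} = \Omega$, whereas you recover this from an infinite reduction via determinism — a harmless extra step that does not change the argument.
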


\begin{proof}
  All cases follow directly from \Cref{lemma:contextpluggingtech}. The
  first three cases are omitted, as the most interesting one is the
  last.
  Expanding the definition of $ \Eval \cdot $, we obtain that $\Eval {\conf{\frame{\cmd}{\regmap}{\opt}, \lay\lcomp \rfs}} = \Omega$ is equivalent to stating that for every $\nat$,
  \[
    \exists \conftwo. \nstep \nat {\conf{\frame{\cmd}{\regmap}{\opt}, \lay\lcomp \rfs}} \conftwo.
    \tag{\dag}
  \]
  Our goal is to show that for every $\nat$:
  \[
    \exists \conftwo'. \nstep \nat {\conf{\frame{\cmd}{\regmap}{\opt}:\frame \cmdtwo {\regmap'}{\opt}:\st, \lay\lcomp \rfs}} {\conftwo'}
  \]
  Fix $ \nat $. The claim follows by applying (\dag) to $\nat$ and then using \Cref{lemma:contextpluggingtech} on the $\nat$-step reduction  $\nstep \nat {\conf{\frame{\cmd}{\regmap}{\opt}, \lay\lcomp \rfs}} \conftwo$.
\end{proof}

%
%
%


\subsection{Appendix for \Cref{sec:safety2}}
\label{sec:appsafety2}

\paragraph*{Kernel-mode stack}
In this section, the predicate $\km[\syscall]$ additionally requires that all the commands within the stack belong to $\Cmd$, i.e., that they do not contain $\cspec\cdot$, $\cpoison \cdot$, and $\vx\ass\eobs$ statements:
\[
  \km[\syscall](\nil)\defsym \top \quad \km[\syscall](\fr:\st)\defsym \km[\syscalltwo](\fr) \land \km[\syscall](\st)\quad \km[\syscall](\frame{\cmd}{\regmap}{\opt}) \defsym \km(\cmd) \land \opt=\km[\syscall] \land \cmd \in \Cmd. 
\]
We write $\km(\cdot)$ as a shorthand for
$\exists \syscall \in \Sys. \km[\syscall](\cdot)$. Moreover, we relax
the definition of $\um$ by not requiring that programs are
unprivileged:
\[
  \um(\nil)\defsym \top \quad \um(\fr:\st)\defsym \um(\fr) \land \um(\st)\quad \um(\frame{\cmd}{\regmap}{\opt}) \defsym \opt=\um. 
\]

\subsubsection{Speculative Semantics of $\Cmd$}

The speculative semantics of $\Cmd$ is in \Cref{fig:scen2sem1,fig:scen2sem1bis,fig:scen2sem1bisbis}.

\begin{figure*}
  \centering
  \columnwidth=\linewidth
  \small
  \[
    \Infer[SI][Load-Step][\textsc{SLoad-Step}]
    {
      \lay \red \sframe{\frame{\cmemread \vx \expr\sep\cmd}{\regmap}{\opt}\cons\st}{\bmvar}{\boolms}\cons\cfstack
      \sto{\dstep}{\omem \add} \sframe{\frame{\cmd}{\update \regmap x \val}{\opt} \cons \st}{\bmvar}{\boolms}\cons\cfstack
    }
    {\toAdd{\sem\expr_{\regmap, \lay}} = \add &
      \bufread {\bmvar} \add 0 =\val, \bot &
      \add \in \underline \lay(\Ar[\opt]) &
      \fbox{$\opt = \km[\syscall] \Rightarrow \add \in \underline \lay(\caps(\syscall))$}
    }
  \]
  \\[-3mm]
\resizebox{\textwidth}{!}{\(
    \Infer[SI][Load][\textsc{SLoad-Load}]
    {\sstep
      {\confone \cons\cfstack}
      {\sframe{\frame{\cmd}{\update \regmap x \val}{\opt}\cons\st}{\bmvar}{\boolms\lor f}\cons
        \confone \cons \cfstack}  {{{\dload[\lbl]{i}}}} {\omem \add}}
    {\confone = \sframe{\frame{\cmemread[\lbl] \vx  \expr\sep\cmd}{\regmap}{\opt}\cons \st} {\bmvar}{\boolms} &
      \toAdd{\sem\expr_{\regmap, \lay}} = \add &
      \bufread {\bmvar} \add i =(\val, f) &
      \add \in \underline \lay(\Ar[\opt]) &
      \fbox{$\opt = \km[\syscall] \Rightarrow \add \in \underline \lay(\caps(\syscall))$}
    }
    \)
  }\\[-3mm]
  \[
    \Infer[SI][Load-Err][\textsc{SLoad-Error}]
    {\sstep
      {\sframe{\frame{\cmemread[\lbl] \vx \expr\sep\cmd}{\regmap}{\opt}\cons \st}{\bmvar}{\boolms}\cons\cfstack}
      {\sconf{\err, \boolms}\cons \cfstack}  {\dir} {\onone}}
    { \toAdd{\sem\expr_{\regmap, \lay}} = \add &
      \add \notin \underline \lay(\Ar[\opt]) &
      \dir = \dstep \lor \dir = \dload[\lbl]{i} &
    }
  \]
  \\[-3mm]
  \[
    \Infer[SI][Load-Unsafe][\textsc{SLoad-Unsafe}]
    {\sstep
      {\sframe{\frame{\cmemread[\lbl] \vx \expr\sep\cmd}{\regmap}{\km[\syscall]}\cons \st}{\bmvar}{\boolms}\cons\cfstack}
      {{\unsafe}}   {\dir} {\omem \add}}
    {\toAdd{\sem\expr_{\regmap, \lay}} = \add &
      \add \in \underline \lay(\Ar[\km]) &
      \dir = \dstep \lor \dir = \dload[\lbl]{i} &
      \fbox{$\add \notin \underline \lay(\caps(\syscall))$}
    }
  \]
  \\[-3mm]
  \[
    \Infer[SI][Store][\textsc{SStore}]
    {
      \lay \red \sframe{\frame{\cmemass \expr \exprtwo\sep\cmd}{\regmap}{\opt}\cons\st}{\bm\buf\mem}{\boolms}\cons\cfstack
      \sto{\dstep}{\omem \add} \sframe{\frame{\cmd}{\regmap}{\opt} \cons \st}{\bm{\bitem \add {\sem \exprtwo_{\regmap,\lay}}\cons \buf}\mem}{\boolms}\cons\cfstack
    }
    {\toAdd{\sem\expr_{\regmap, \lay}} = \add &
      \add \in \underline \lay(\Ar[\opt]) &
      \fbox{$\opt = \km[\syscall] \Rightarrow \add \in \underline \lay(\caps(\syscall))$}
    }
  \]
  \\[-3mm]
  \[
    \Infer[SI][Store-Err][\textsc{SStore-Error}]
    {\lay \red \sframe{\frame{\cmemass \expr \exprtwo\sep\cmd}{\regmap}{\opt}\cons\st}{\bmvar}{\boolms}\cons\cfstack
      \sto{\dstep}{\onone} 
      {\sconf{\err, \boolms}\cons \cfstack}}
    { \toAdd{\sem\expr_{\regmap, \lay}} = \add &
      \add \notin \underline \lay(\Ar[\opt]) 
    }
  \]
  \\[-3mm]
  \[
    \Infer[SI][Store-Unsafe][\textsc{SStore-Unsafe}]
    {\lay \red \sframe{\frame{\cmemass \expr \exprtwo\sep\cmd}{\regmap}{\km[\syscall]}\cons\st}{\bmvar}{\boolms}\cons\cfstack
      \sto{\dstep}{\onone} {\unsafe}}
    {\toAdd{\sem\expr_{\regmap, \lay}} = \add &
      \add \in \underline \lay(\Ar[\km]) &
      \fbox{$\add \notin \underline \lay(\caps(\syscall))$}
    }
  \]
  \caption{Speculative rules for $\Cmd$ and a system $\system=(\rfs, \syss, \caps)$, Part I.}
  \label{fig:scen2sem1}
\end{figure*}

\begin{figure*}
  \centering
  \small
  \columnwidth=\linewidth
  \[
    \Infer[SI][Call-Step][\textsc{SCall-Step}]{
      \lay \red \sframe{\frame{\stat\sep\cmd}{\regmap}{\opt}\cons\st}{\bmem}{\boolms}\cons\cfstack
      \sto{\dstep}{\omem \add} \sframe{\frame{\mem(\add)} {\regmap_0[\vec\vx\upd\sem{\vec\exprtwo}_{\regmap,\lay}]}{\opt}\cons\frame{\cmd}{\regmap}{\opt} \cons \st}{\bmem}{\boolms}\cons\cfstack
    }
    {\stat \in \{ \ccall \expr {\vec\exprtwo},
      \cscall \expr {\vec\exprtwo}\} &
      \toAdd{\sem\expr_{\regmap, \lay}} = \add &
      \add \in \underline \lay(\Fn[\opt]) &
      \fbox{$\opt = \km[\syscall] \Rightarrow \add \in \underline \lay(\caps(\syscall))$}
    }
  \]
  \\[-3mm]
  \[
    \Infer[SI][Call-Step-Unsafe][\textsc{SCall-Step-Unsafe}]{\sstep
      {\sframe{\frame{\stat\sep\cmd}{\regmap}{\km[\syscall]}\cons\st}{\bmvar}{\boolms}\cons\cfstack}
      {\unsafe}
      {\dstep}
      {\ojump \add}}
    {
      \stat \in \{ \ccall \expr {\vec\exprtwo},
      \cscall \expr {\vec\exprtwo}\} &
      \toAdd{\sem\expr_{\regmap, \lay}} = \add &
      \add \in \underline \lay(\Fn[\opt]) &
      \fbox{$\add \notin \underline \lay(\caps(\syscall))$}
    }
  \]
  \\[-3mm]
  \[
    \Infer[SI][Call-Step-Err][\textsc{SCall-Step-Error}]{
      \lay \red \sframe{\frame{\stat\sep\cmd}{\regmap}{\opt}\cons\st}{\bmvar}{\boolms}\cons\cfstack
      \sto{\dstep}{\onone} \sconf{\err,\boolms}\cons\cfstack
    }
    {
      \stat \in \{ \ccall \expr {\vec\exprtwo},
      \cscall \expr {\vec\exprtwo}\} &
      \toAdd{\sem\expr_{\regmap, \lay}} = \add &
      \add \notin \underline \lay(\Fn[\opt])
    }
  \]
  \\[-3mm]
  \[
    \Infer[SI][Call][\textsc{SCall}]{
        \lay \red \confone\cons\cfstack
        \sto{\djump[\lbl] \add}{\ojump \add} \sframe{\frame{\mem(\add)} {\regmap_0[\vec \vx\upd\sem{\vec \exprtwo}_{\regmap,\lay}]}{\opt}\cons\frame{\cmd}{\regmap}{\opt} \cons \st}{\bm\buf \mem}{\boolms \lor (\add \neq \toAdd{\sem\expr_{\regmap, \lay}})}\cons \confone\cons \cfstack
      }
      {
        \confone = \sframe{\frame{\ccall[\lbl] \expr {\vec \exprtwo}\sep\cmd}{\regmap}{\opt}\cons\st}{\bm\buf \mem}{\boolms} &
        \add \in \underline \lay(\Fn[\opt]) &
        \fbox{$\opt = \km[\syscall] \Rightarrow \add \in \underline \lay(\caps(\syscall))$}
      }
  \]
  \\[-3mm]
  \[
    \Infer[SI][Call-Unsafe][\textsc{SCall-Unsafe}]{\sstep
        {\sframe{\frame{\ccall[\lbl] \expr{\vec \exprtwo}\sep\cmd}{\regmap}{\km[\syscall]}\cons\st}{\bmvar}{\boolms}\cons\cfstack}
        {\unsafe}
        {\djump[\lbl] \add}
        {\ojump \add}}
      { \add \in \underline \lay(\Fn[\km]) &
        \fbox{$\add \notin \underline \lay(\caps(\syscall))$}
      }
  \]
  \\[-3mm]
  \[
    \Infer[SI][Call-Err][\textsc{SCall-Error}]{
        \lay \red \confone\cons\cfstack
        \sto{\djump[\lbl] \add}{\onone} \sconf{\err,\boolms \lor (\add \neq \toAdd{\sem\expr_{\regmap, \lay}})}\cons \confone \cons \cfstack
      }
      {
        \confone = \sframe{\frame{\ccall[\lbl] \expr {\vec \exprtwo}\sep\cmd}{\regmap}{\opt}\cons\st}{\bmem}{\boolms} &
        \add \notin \underline \lay(\Fn[\opt])
      }
    \]
    \\[-3mm]
    \[
    \Infer[SI][System-Call][\textsc{SSystemCall}]{
      \lay \red \sframe{\frame{\csyscall \syscall {\vec \exprtwo}\sep\cmd}{\regmap}{\opt}\cons\st}{\bmvar}{\boolms}\cons\cfstack
      \sto{\dstep}{\onone} \sframe{\frame{\syss(\add)} {\regmap_0[\vec\vx\upd\sem{\vec\exprtwo}_{\regmap,\lay}]}{\opt'}\cons\frame{\cmd}{\regmap}{\opt} \cons \st}{\bmvar}{\boolms}\cons\cfstack
    }
    {
      \opt = \um \Rightarrow \opt' = \km[\syscall] &
      \opt = \km[\syscalltwo] \Rightarrow \opt' = \km[\syscalltwo] }
  \]
  \\[-3mm]
  \[
    \Infer[SI][Pop][\textsc{SPop}]{\sstep
      {\sframe{\frame{\cnil}{\regmap}{\opt}\cons\frame{\cmd}{\regmap'}{\opt'} \cons \st}{\bm\buf\mem}{\boolms}\cons\cfstack}
      {\sframe{\frame{\cmd}{\update{\regmap'}{\ret}{\regmap(\ret)}}{\opt'} \cons \st}{\bm\buf\mem}{\boolms}\cons\cfstack}
      {\dstep}
      {\onone}}
    {}
  \]
\caption{Speculative rules for $\Cmd$ and a system $\system=(\rfs, \syss, \caps)$, Part II.}
  \label{fig:scen2sem1bis}
\end{figure*}

\begin{figure*}
  \columnwidth=\linewidth
  \centering
  \small
  \[
    \Infer[SI][Op][\textsc{SOp}]{\sstep
      {\sframe{\frame{\vx \ass \expr\sep\cmd}{\regmap}{\opt}\cons\st}{\bmvar}{\boolms}\cons \cfstack}
      {\sframe{\frame{\cmd}{\update \regmap x {\sem \expr_{\regmap, \lay}}}{\opt}\cons\st}{\bmvar}{\boolms}\cons \cfstack} {\dstep}{\onone}}{}
  \]
  \\[-3mm]
  \[
    \Infer[SI][Skip][\textsc{SSkip}]{\sstep
      {\sframe{\frame{\cskip\sep\cmd}{\regmap}{\opt}\cons \st}{\bmvar}{\boolms}\cons \cfstack}
      {\sframe{\frame{\cmd}{\regmap}{\opt}\cons\st}{\bmvar}{\boolms}\cons \cfstack} {\dstep}{\onone}}{}
  \]

  ~\\[-3mm]  

  \resizebox{\textwidth}{!}{
    \(
    \Infer[SI][Loop-Step][\textsc{SLoop-Step}]
    {\sstep
      {\sframe{\frame{\cwhile[\lbl] \expr \cmdtwo\sep\cmd}{\regmap}{\opt}\cons\st}{\bm\buf\mem}{\boolms}\cons\cfstack}
      {\specconfone_{d} \cons\cfstack}
      {\dstep}{\obranch d}}
    {
      d = {\toBool{\sem\expr_{\regmap, \lay}}} &
      \specconfone_\ctrue =\sframe{\frame{\cmdtwo\sep\cwhile[\lbl] \expr \cmdtwo\sep\cmd}{\regmap}{\opt}\cons\st}{\bm{\buf}{\mem}}{\boolms} &
      \specconfone_\cfalse =\sframe{\frame{\cmd}{\regmap}{\opt}\cons\st}{\bm \buf \mem}{\boolms}
    }
    \)}
  ~\\[-3mm]
  \[
    \Infer[SI][Loop-Branch][\textsc{SLoop}]
    {\sstep
      {\sframe{\frame{\cwhile[\lbl] \expr \cmdtwo\sep\cmd}{\regmap}{\opt}\cons\st}{\bmvar}{\boolms}\cons\cfstack}
      {\specconfone_{d}\cons  \sframe{\frame{\cwhile[\lbl]\expr \cmdtwo\sep\cmd}{\regmap}{\opt}\cons\st}{\bmvar}{\boolms}\cons\cfstack}
      {\dbranch d}{\obranch {d}}
    }
    {
      \boolms' = \boolms\lor(d\neq \toBool{\sem\expr_{\regmap, \lay}})&
      \specconfone_\ctrue =\sframe{\frame{\cmdtwo\sep\cwhile[\lbl] \expr \cmdtwo\sep\cmd}{\regmap}{\opt}\cons\st}{\bmvar}{\boolms'} &
      \specconfone_\cfalse =\sframe{\frame{\cmd}{\regmap}{\opt}\cons\st}{\bmvar}{\boolms'}
    }
  \]
  \\[-3mm]
  \[
    \Infer[SI][If][\textsc{SIf-Step}]{\sstep
      {\sframe{\frame{\cif[\lbl] \expr {\cmd_\ctrue} {\cmd_\cfalse}\sep\cmd}{\regmap}{\opt}\cons\st}{\bmvar}{\boolms}\cons\cfstack}
      {\sframe{\frame{\cmd_{d}\sep\cmd}{\regmap}{\opt}\cons\st}{\bmvar}{ \boolms}\cons\cfstack} {{\dstep}}{\obranch{d}}}
    {d=\toBool{\sem \expr_{\regmap, \lay}}}
  \]
  \\[-3mm]
  \[
    \Infer[SI][If-Branch][\textsc{SIf}]{\sstep
      {\specconfone\cons\cfstack}
      {\sframe{\frame{\cmd_{d}\sep\cmd}{\regmap}{\opt}\cons\st}{\bmvar}{\boolms\lor (d\neq\sem \expr_{\regmap, \lay}^\cBool)}\cons
        \specconfone\cons\cfstack}
      {{{\dbranch[\lbl] {d}}}}
      {\obranch d}}
    {\specconfone=\sframe {\frame{\cif[\lbl] \expr {\cmd_\ctrue} {\cmd_\cfalse}\sep\cmd}{\regmap}{\opt}\cons\st}{\bmvar}{\boolms}}
  \]
  \\[-3mm]
  \[
    \Infer[SI][Backtrack-Top][\textsc{Bt}_{\top}]{
      \lay \red \specconfone \cons\cfstack \sto{\dbt}{\obt \top} \cfstack
    }{
      \specconfone = \sframe{\st}{\bmvar}{\top} \lor \specconfone = \sconf{\err,\top}
    }
    \quad
    \Infer[SI][Backtrack-Bot][\textsc{Bt}_{\bot}]{
      \lay \red \specconfone \cons\cfstack \sto{\dbt}{\obt \bot} \specconfone \cons \nil
    }{
      \specconfone = \sframe{\st}{\bmvar}{\bot} \lor \specconfone = \sconf{\err,\bot}
      & \cfstack \neq \nil
    }
  \]
  \\[-3mm]
  \[
    \Infer[SI][Fence]{\sstep
      {\sframe{\frame{\cfence\sep\cmd}{\regmap}{\opt}\cons\st}{\bmvar}{\bot}\cons \cfstack}
      {\sframe{\frame{\cmd}{\regmap}{\opt}\cons\st} {\bm\nil {\overline {\bmvar}}} \bot \cons\cfstack} {\dstep}{\onone}}{}
  \]
  \caption{Speculative rules for $\Cmd$ and a system $\system=(\rfs, \syss, \caps)$, Part III.}
  \label{fig:scen2sem1bisbis}
\end{figure*}

\subsubsection{Semantics of $\SpCmd$} The semantics of $\SpCmd$ is in
\Cref{fig:scen2sem2,fig:scen2sem2bis}.

\begin{figure*}[t]
  \small
  \centering
  \columnwidth=\linewidth    
    \[
      \Infer[AL][Poison]{
      \lay \red \aconf{\frame{\cpoison \dir\sep\speccmd}{\regmap}{\opt}\cons\st }{\mem}{\Ds}{\Os}
      \ato \aconf{\frame{\speccmd}{\regmap}{\opt}\cons\st}{\mem}{\dir:\Ds}{\Os}
    }{\strut}
    \]
    \\[-3mm]
    \[
      \Infer[AL][Obs][\textsc{Observe}]{
      \lay \red \aconf{\frame{\vx \ass \eobs\sep\speccmd}{\regmap}{\opt}\cons\st }{\mem}{\Ds}{\obs:\Os}
      \ato \aconf{\frame{\speccmd}{\update \regmap\vx \obs}{\opt}\cons\st}{\mem}{\Ds}{\Os}
      }{}
    \]
    \\[-3mm]
    \[
      \Infer[AL][Obs-End][\textsc{Observe-End}]{
        \lay \red \aconf{\frame{\vx \ass \eobs\sep\adversary}{\regmap}{\opt}\cons\st }{\mem}{\Ds}{\nil}
        \ato \aconf{\frame{\adversary}{\update \regmap\vx \cnull}{\opt}\cons\st}{\mem}{\Ds}{\nil}
      }{}
    \]
    \\[-3mm]
    \[
      \Infer[AL][Spec-Init]{
      \lay \red \aconf{\frame{\cspec\cmd\sep\speccmd}{\regmap}{\opt}\cons\st }{\mem}{\Ds}{\Os}
      \ato \hconf{\sframe{\frame{\cmd}{\regmap}{\opt}}{\bm{\nil} \mem}{\bot}}{\frame{\speccmd}{\regmap}{\opt}\cons\st}{\Ds}{\Os}
      }{}
    \]
    \\[-3mm]
    \[
      \Infer[AL][Spec-Dir][\textsc{Spec-D}]{
        \lay \red \hconf{\cfstack}{\st}{\dir{\cons}\Ds}{\Os} \ato \hconf{\cfstack'}{\st}{\Ds}{\obs{\cons}\Os}
      }{
        \lay \red \cfstack \sto{\dir}{\obs} \cfstack'
      }
      \quad
      \Infer[AL][Spec-Step][\textsc{Spec-S}]{
        \lay \red \hconf{\cfstack}{\st}{\Ds}{\Os} \ato \hconf{\cfstack'}{\st}{\Ds}{\obs{\cons}\Os}
      }{
        \nf \cfstack \Ds &
        \lay \red \cfstack \sto{\dstep}{\obs} \cfstack'
      }
    \]
    \\[-3mm]
    \[
      \Infer[AL][Spec-Bt][\textsc{Spec-BT}]{
        \lay \red \hconf{\cfstack}{\st}{\Ds}{\Os} \ato \hconf{\cfstack'}{\st}{\Ds}{\obs{\cons}\Os}
      }{
        \nf \cfstack \Ds &
        \nf \cfstack \dstep &
        \lay \red \cfstack \sto{\dbt}{\obs} \cfstack'
      }
    \]
    \\[-3mm]
    \[
      \Infer[AL][Spec-Term]{
      \lay \red
      \hconf{\sframe{\frame{\cnil}{\regmap}{\opt}}{\bmvar}{\bot}}
      {\frame{\speccmd}{\regmap'}{\opt'}\cons\st}{\Ds}{\Os}
      \ato \aconf{\frame{\speccmd}{\regmap}{\opt}\cons\st}{\overline{\bmvar}}{\Ds}{\Os}
    }{}
    \]
    \\[-3mm]
    \[
      \Infer[AL][Spec-Error]{\esstep
        {\hconf{\sconf{\err,\bot}} \st \Ds \Os}
        {\err}
        {}}{}
      \quad
      \Infer[AL][Spec-Unsafe]{\esstep 
        {\hconf\unsafe\st \Ds \Os}
        {\unsafe}
        {}}{}
    \]
  \caption{Semantics of the non-standard constructs of $\SpCmd$ for the system $\system =(\rfs, \syss, \caps)$.}
  \label{fig:scen2sem2}
\end{figure*}

\begin{figure*}
  \centering
  \columnwidth=\linewidth
  \[
      \Infer[AL][Load][\textsc{ALoad}]
      { \esstep
        {\ntce{\cmemread \vx \expr\sep\cmd}{\regmap}{\opt}{\st}{\mem}}
        {\ntce{\cmd}{\update{\regmap}{x}{\mem(\add)}}{\opt}{\st}{\mem}}
      }
      {
        \toAdd{\sem{\expr}_{\regmap, \lay}} = \add &
        \add \in \underline \lay(\Ar[\opt]) &
        \fbox{$\opt = \km[\syscall] \Rightarrow \add \in \underline \lay(\caps(\syscall))$}
      }
    \]
    \\[-3mm]
    \[
      \Infer[AL][Load-Err][\textsc{ALoad-Error}]
      {\esstep
        {\ntce{\cmemread \vx \expr\sep\cmd}{\regmap}{\opt}{\st}{\mem}}
        {\err}
      }
      {\toAdd{\sem\expr_{\regmap, \lay}} = \add &
        \add \notin \underline \lay(\Ar[\opt])
      }
    \]
    \\[-3mm]
    \[
      \Infer[AL][Load-Unsafe][\textsc{ALoad-Unsafe}]
      {\esstep
        {\ntce{\cmemread \vx \expr\sep\cmd}{\regmap}{\km[\syscall]}{\st}{\mem}}
        {\unsafe}
      }
      {
        \toAdd{\sem{\expr}_{\regmap, \lay}} = \add &
        \add \in \underline \lay(\Ar[\km]) &
        \fbox{$\add \not\in \underline \lay(\caps(\syscall))$}
      }
    \]
    \\[-3mm]
    \[
      \Infer[AL][Store][\textsc{AStore}]
      { \esstep
        {\ntce{\cmemass \expr \exprtwo\sep\cmd}{\regmap}{\opt}{\st}{\mem}}
        {\ntce{\cmd}{\regmap}{\opt}{\st}{\update{\mem}{\add}{\sem \exprtwo_{\regmap, \lay}}}}
      }
      {
        \toAdd{\sem{\expr}_{\regmap, \lay}} = \add &
        \add \in \underline \lay(\Ar[\opt]) &
        \fbox{$\opt = \km[\syscall] \Rightarrow \add \in \underline \lay(\caps(\syscall))$}
      }
    \]
    \\[-3mm]
    \[
      \Infer[AL][Store-Err][\textsc{AStore-Error}]
      {\esstep
        {\ntce{\cmemass \expr \exprtwo\sep\cmd}{\regmap}{\opt}{\st}{\mem}}
        {\err}
      }
      {\toAdd{\sem\expr_{\regmap, \lay}} = \add &
        \add \notin \underline \lay(\Ar[\opt])
      }
    \]
    \\[-3mm]
    \[
      \Infer[AL][Store-Unsafe][\textsc{AStore-Unsafe}]
      {\esstep
        {\ntce{\cmemass \expr \exprtwo\sep\cmd}{\regmap}{\km[\syscall]}{\st}{\mem}}
        {\unsafe}
      }
      {
        \toAdd{\sem{\expr}_{\regmap, \lay}} = \add &
        \add \in \underline \lay(\Ar[\km]) &
        \fbox{$\add \not\in \underline \lay(\caps(\syscall))$}
      }
    \]
    \\[-3mm]
    \[
    \Infer[AL][Call][\textsc{ACall}]{
      \esstep
      {\ntce{\stat\sep\cmd}{\regmap}{\opt}{\st}{\mem}}
      {
        \ntc
        {\mem(\add)}
        {\regmap_0[\vec \vx \upd \sem{\vec \exprtwo}_{\regmap,\lay}]}
        {\opt}
        {\frame{\cmd}{\regmap}{\opt} : \st}
        {\mem}
      }
    }
    {
      \stat \in \{ \ccall \expr {\vec \exprtwo},\cscall \expr {\vec \exprtwo}\} &
      \toAdd{\sem{\expr}_{\regmap, \lay}}=\add &
      \add \in \underline \lay(\Fn[\opt]) &
      \fbox{$\opt = \km[\syscall] \Rightarrow \add \in \underline \lay(\caps(\syscall))$}
      }
    \]
    \\[-3mm]
    \[
      \Infer[AL][Call-Unsafe][\textsc{ACall-Unsafe}]{
        \esstep
        {\ntce{\stat\sep\cmd}{\regmap}{\km[\syscall]}{\st}{\mem}}
        {\unsafe}
      }
      {
        \stat \in \{ \ccall \expr {\vec \exprtwo},\cscall \expr {\vec \exprtwo}\} &
        \toAdd{\sem{\expr}_{\regmap, \lay}} = \add &
        \add \in \underline \lay(\Fn[\km]) &
        \fbox{$\add \not\in \underline \lay(\caps(\syscall))$}
      }
    \]
    \\[-3mm]
    \[
      \Infer[AL][Call-Err][\textsc{ACall-Error}]
      {\esstep
        {\ntce{\stat\sep\cmd}{\regmap}{\opt}{\st}{\mem}}
        {\err}
      }
      {
      \stat \in \{ \ccall \expr {\vec \exprtwo},\cscall \expr {\vec \exprtwo}\} &
        \toAdd{\sem\expr_{\regmap, \lay}} = \add &
        \add \notin \underline \lay(\Fn[\opt])
      }
    \]
    \\[-3mm]
    \[
      \Infer[AL][System-Call][\textsc{ASC}]{
        \esstep
        {\ntce{\csyscall{\syscall}{\vec \exprtwo}\sep\cmd}{\regmap}{\opt}{\st}{\mem}}
        {
          \ntce
          {\syss(\syscall)}
          {\regmap_0[\vec \vx\upd \sem{\vec \exprtwo}_{\regmap,\lay}]}
          {b'}
          {\frame{\cmd}{\regmap}{\opt} : \st}
          {\mem}
        }
      }
      {
      b = \um \Rightarrow b' = \km[\syscall] &
      b = \km[\syscalltwo] \Rightarrow b' = \km[\syscalltwo] 
      }
    \]

  \caption{Semantics of standard construct of $\SpCmd$ for the system $\system =(\rfs, \syss, \caps)$, part I.}
  \label{fig:scen2sem2bis}
\end{figure*}

\begin{figure*}[t]
  \centering
  \columnwidth=\linewidth
  \[
    \Infer[AL][Pop][\textsc{APop}]{
      \esstep
      {\ntce{\cnil}{\regmap}{\opt}{\frame {\cmd} {\regmap'} {\opt'}\cons \st}{\mem}}
      {\ntce{\cmd}{\update{\regmap'}{\ret} {\regmap(\ret)}}{\opt'}{\st}{\mem}}
    }
    {}
    \]
    \\[-3mm]
    \[
      \Infer[AL][Skip][\textsc{ASkip}]
      { \esstep
        {\ntce{\cskip\sep\cmd}{\regmap}{\opt}{\st}{\mem}}
        {\ntce{\cmd}{\regmap}{\opt}{\st}{\mem}}
      }
      {
      }
    \]
    \\[-3mm]
    \[
      \Infer[AL][Fence][\textsc{AFence}]
      { \esstep
        {\ntce{\cfence\sep\cmd}{\regmap}{\opt}{\st}{\mem}}
        {\ntce{\cmd}{\regmap}{\opt}{\st}{\mem}}
      }
      {
      }
    \]
    \\[-3mm]
    \[
      \Infer[AL][Op][\textsc{AOp}]
      { \esstep
        {\ntce{\vx \ass \expr\sep\cmd}{\regmap}{\opt}{\st}{\mem}}
        {\ntce{\cmd}{\update{\regmap}{\vx}{\sem \expr_{\regmap, \lay}}}{\opt}{\st}{\mem}}
      }
      {
      }
    \]  
    \\[-3mm]
    \[
      \Infer[AL][If][\textsc{AIf}]
      { \esstep
        {\ntce{\cif \expr {\cmd_\ctrue} {\cmd_\cfalse}\sep\cmdtwo}{\regmap}{\opt}{\st}{\mem}}
        {\ntce{\cmd_{\toBool{\sem \expr_{\regmap, \lay}}}\sep \cmdtwo}{\regmap}{\opt}{\st}{\mem}}
      }
      {
      }
    \]
    \\[-3mm]
    \[
      \Infer[AL][While][\textsc{AWhile}]
      { \esstep
        {\ntce{\cwhile {\expr} {\cmd}\sep\cmdtwo}{\regmap}{\opt}{\st}{\mem}}
        {\confone_{\toBool{\sem \expr_{\regmap, \lay}}}}
      }
      {
        \confone_{\ctrue} = {\ntce{\cmd\sep\cwhile {\expr} {\cmd}\sep \cmdtwo}{\regmap}{\opt}{\st}{\mem}} &
        \confone_{\cfalse} = {\ntce{\cmdtwo}{\regmap}{\opt}{\st}{\mem}}     }
    \]
  \caption{Semantics of standard construct of $\SpCmd$ for the system $\system =(\rfs, \syss, \caps)$, part II.}
  \label{fig:scen2sem2bisbis}
\end{figure*}

\subsubsection{Buffered Memories}
\label{sec:bufmem}

\begin{remark}
  \label{remark:onbuflookup}
  If $\bufread{\bm\buf\mem} \add i= \val, \bot$, then $\bufread{\bm\buf\mem} \add 0= \val, \bot$.  
\end{remark}
\begin{proof}
  The claim is:
  \[
    \forall \bm \buf\mem.\forall i.\forall \val.\bufread{\bm\buf\mem} \add i= \val, \bot \to \bufread{\bm\buf\mem} \add 0= \val, \bot
  \]
  We proceed by induction on the length of the buffer $\buf$.
  \begin{proofcases}
    \proofcase{$\nil$} The claim follows directly from the definition of lookup.
    \proofcase{$\bitem {\add'} {\overline\val}\cons \buf$} By the IH, we have:
    \[
      \forall i, \val.\bufread{\bm\buf\mem}\add i=\val, \bot \to \bufread{\bm\buf\mem}\add i=\bufread{\bm\buf\mem}\add 0.
    \]
    The goal is to prove:
    \begin{equation*}
      \forall i, \val.\bufread{(\bitem {\add'} {\overline\val}\cons \mu, \mem)}\add i=\val, \bot \Rightarrow\\ \bufread{(\bitem {\add'} {\overline\val}\cons\mu, \mem)}\add i=\bufread{(\bitem {\add'} {\overline\val}\cons\mu, \mem)}\add 0.
    \end{equation*}
    We proceed by case analysis on $i$.
    \begin{proofcases}
      \proofcase{$0$} The claim simplifies to:
      \begin{equation*}
        \forall \val.\bufread{\bm{\bitem {\add'} {\overline\val}\cons \mu}\mem}\add 0=\val, \bot \Rightarrow\\
        \bufread{\bm{\bitem {\add'} {\overline\val}\cons\mu} \mem}\add 0=\bufread{\bm{\bitem {\add'} {\overline\val}\cons\mu} \mem}\add 0.
      \end{equation*}
      The conclusion is trivial.
      \proofcase{$i+1$} The claim now states:
      \begin{equation*}
        \forall i.\forall \val.\bufread{\bm{\bitem {\add'} {\overline\val}\cons \mu}{\mem}}\add {i+1}=\val, \bot \Rightarrow\\
        \bufread{\bm{\bitem {\add'} {\overline\val}\cons \mu}{\mem}}\add {i+1}=\bufread{\bm{\bitem {\add'} {\overline\val}\cons \mu}\mem}\add 0.
      \end{equation*}
      Fix $i, \val$ and assume $\bufread{\bm{\bitem {\add'} {\overline\val}\cons \mu} \mem}\add {i+1}=\val, \bot$, call this assumption (H). The claim simplifies to:
      \[
        \bufread{\bm{\bitem {\add'} {\overline\val}\cons\mu} \mem}\add {i+1}=\bufread{\bm{\bitem {\add'} {\overline\val}\cons\mu} \mem}\add 0.
      \]
      Observe that having $\add' = \add$ would lead to a contradiction, as (H) and the definition of lookup would imply $\bot = \top$, which is impossible. Hence, we must have $\add' \neq \add$, allowing us to deduce from the definition of lookup that:
      \begin{align*}
        \bufread{\bm{\bitem {\add'} {\overline\val}\cons \mu} \mem}\add {i+1} &= \bufread{\bm\buf\mem}\add {i+1} & 
        \bufread{\bm{\bitem {\add'} {\overline\val}\cons \mu} \mem}\add {0} &= \bufread{\bm\buf\mem}\add {0} 
      \end{align*}
      The claim is a consequence of these equations, the IH and (H).
    \end{proofcases}
  \end{proofcases}
\end{proof}

\begin{remark}
  \label{rem:bufreadoverline}
  For every buffered memory $\bm\buf\mem$, and every address $\add$ we have that
  $\bufread{\bm\buf\mem}\add 0=\overline {\bm\buf\mem}(\add)$.
\end{remark}
\begin{proof}
  The proof proceeds by induction on $\buf$.
  \begin{proofcases}
    \proofcase{$\buf=\nil$} If $\buf$ is empty, the claim follows
    trivially from the definition of lookup.

    \proofcase{$\buf=\bitem{\add'}{\val} \cons \buf$} In this case, the claim states:
    \[
      \bufread{\bm{\bitem{\add'}{\val} \cons \buf} \mem} \add 0 = \overline{\bm{\bitem{\add'}{\val} \cons \buf} \mem} (\add).
    \]
    Rewriting this proposition, we obtain:
    \[
      \bufread{\bm{\bitem{\add'}{\val} \cons \buf} \mem} \add 0 = \update{\overline{\bm{\buf} \mem}}{\add'}{\val} (\add).
    \]
    Now, we consider two cases:
    \begin{proofcases}
      \proofcase{$\add = \add'$} In this case, the claim follows directly from the definitions of lookup and memory update.
      \proofcase{$\add \neq \add'$} In this case, the claim follows
      from the definition of lookup and the IH.
    \end{proofcases}
  \end{proofcases}
\end{proof}

\begin{remark}
  \label{rem:overlinewrtdom}
  For every buffered memory $\bm \buf{(\lay\lcomp \rfs)}$
  if $\dom(\buf) \subseteq \lay(\Ar)$,
  then we have that:
  \[
    \overline {\bm \buf{(\lay\lcomp \rfs)}} = \lay \lcomp \rfs'
  \]
  for some $\rfs' \eqon{\Fn} \rfs$
\end{remark}
\begin{proof}
  The proof is by induction on the length of the buffer.
\end{proof}

\subsubsection{Omitted Proofs and Results}
\label{sec:proofs2}

Throughout the following, we assume---without loss of
generality---that for any configuration reached during the evaluation
of an initial configuration with memory $\lay \lcomp \rfs$, the memory
$\mem$ in that configuration satisfies \( \mem = \lay \lcomp \rfs' \)
for some $\rfs'$ such that $\rfs' \eqon{\Fn} \rfs$.  This assumption
can be proven by induction on the length of the reduction, similarly to \Cref{rem:simispreserved}.

\begin{proof}[Proof of \Cref{lemma:cttosafe}]
  Assume that a system call $\syscall$  
  of a system $\system = (\syss, \rfs, \caps)$ is not \emph{speculative kernel safe}.  
  This means that there exist some natural number $\nat$, layout $\lay$,  
  register map $\regmap$, buffered memory $(\buf, \lay \lcomp \rfs')$  
  with $\rfs' \eqon{\Fn} \rfs$, a sequence of directives $\Ds$,  
  a sequence of observations $\Os$, and a mis-speculation flag $\boolms$  
  such that  
  \[
    \nsstep \nat {\sframe{\frame{\syss(\syscall)}{\regmap}{\km[\syscall]}}{\bm{\nil}{\lay \lcomp \rfs'}}\boolms} \unsafe \Ds \Os.
  \]
  By analyzing the applicable rules, we deduce that the rule applied  
  must be one of \ref{SI:Load-Unsafe}, \ref{SI:Store-Unsafe},  
  \ref{SI:Call-Unsafe}, or \ref{SIE:CallJmp-Unsafe}.  
  In all these cases, the rightmost observation in $\Os$  
  must be either $\omem \add$ or $\ojump \add$ for some address $\add \in \underline{\lay}(\Idk)$.  
  More precisely, if the rule used was \ref{SI:Call-Unsafe},  
  then $\add \in \underline{\lay}(\Fnk)$; otherwise, $\add \in \underline{\lay}(\Ark)$.  
  In the following, we analyze the latter case,  
  as the proof for the former is analogous.
  From the definition of $\underline{\lay}(\Ark)$,  
  there exist $\ar \in \Ark$ and $0 \leq i < \size{\ar}$  
  such that $\lay(\ar) + i = \add$.  
  Our goal is now to construct a layout $\lay'$  
  such that $p \notin \underline{\lay'}(\Idk)$,  
  meaning that $\add$ is not allocated in $\lay'$.  
  To define $\lay'$, we analyze different cases based on $\add' = \lay(\ar)$.  
  \begin{proofcases}
    \proofcase{$\kappa_\um$} In this case, the array is stored at the
    beginning of the kernel-space address range.  Given the assumption
    on the size of this address space, there are at least
    $2\cdot\max_{\id \in \Idk} \size\id \geq \size{\ar}$ free
    addresses in the set
    $\{\kappa_\um + \size{\ar}, \dots, \kappa_\um + \kappa_{\km} -
    1\}$.  Therefore, $\ar$ can be relocated within this space,
    ensuring that $\add$ is no longer allocated.  We define $\lay'$ as
    such a layout.  \proofcase{$\kappa_{\km}-1-\size{\ar}$} Analogous
    to the case above.
    \proofcase{$\kappa_\um < \add' <\kappa_{\km}-1-\size{\ar}$} By the
    pigeonhole principle, at least one of the two address ranges
    $\{\kappa_\um, \dots, \add' -1\}$ and
    $\{\add'+\size \ar, \dots, \kappa_{\km}+\kappa_{\um}-1\kappa_{\um}\}$ contains at
    least $\max_{\id \in \Idk} \size\id\ge \size \ar$ free addresses,
    meaning that $\ar$ can be moved to one of those ranges leaving the
    range $\lay(\ar), \dots, \lay(\ar)+i$ (and in particular $\add$)
    free.  We call $\lay'$ one such layout.
  \end{proofcases}
  From the \emph{speculative side-channel layout-non-interference}
  assumption, we deduce that there exists $\cfstack'$ such that
  \[
    \nsstep[\system][\lay'] \nat {\sframe{\frame{\syss(\syscall)}{\regmap}{\km[\syscall]}}{\bm{\buf}{\lay' \lcomp \ars'}}{\boolms}} {\cfstack'} \Ds \Os,
  \]
  for some $\cfstack'$. Observe that this reduction, in particular, produces  
  the sequence of observations $\Os$.  
  Applying \Cref{rem:noaddobs},  
  we conclude that $\omem \add$ does not appear in $\Os$.  
  However, this contradicts our earlier conclusion  
  that the last observation in $\Os$ was precisely $\omem \add$.  
  This contradiction completes the proof.  
\end{proof}

\begin{proof}[Proof of \Cref{thm:scenario2}]
  We fix a system $\system = (\rfs, \syss, \caps)$ and proceed by contraposition.  
  Assume that there exist an unprivileged command $\speccmd \in \SpCmd$,  
  an initial register map $\regmap$, a number of steps $\nat$, and a layout $\lay$  
  such that:  
  \[
    \nesstep \nat {\conf{\frame \speccmd \regmap \um, \lay \lcomp \rfs, \nil, \nil}} \unsafe.
  \]
  First, we observe that $\nat \neq 0$.  
  By analyzing the rules of the semantics, we deduce that  
  the last applied rule must be one of \ref{AL:Load-Unsafe},  
  \ref{AL:Store-Unsafe}, \ref{AL:Call-Unsafe}, or \ref{AL:Spec-Unsafe}.  
  The proof is analogous for the first three rules,  
  so we consider \ref{AL:Load-Unsafe} as a representative case.  

\begin{proofcases}
  \proofcase{\ref{AL:Load-Unsafe}} By examining this rule, we deduce
  that there exists a configuration
    \[
      \conf{\frame {\cmemread \vx \expr} {\regmap'}{\km[\syscall]}:\st, \lay \lcomp\rfs', \Ds, \Os} 
    \]
    such that
    \begin{equation*}
      \nesstep \nat {\conf{\frame \speccmd \regmap \um, \lay \lcomp \rfs,\nil, \nil}}
      {\conf{\frame {\cmemread \vx \expr} {\regmap'}{\km[\syscall]}:\st, \lay \lcomp\rfs', \Ds, \Os}}\\ \ato \unsafe.
    \end{equation*}
    with $\rfs' \eqon{\Fn} \rfs$. By \Cref{rem:stackinv1}, we observe that $\st$  
    is the concatenation of a kernel-mode stack $\st'$ and a user-mode stack $\st''$.  
    Applying \Cref{lemma:thereisasyscall1}, we conclude that  
    there exists a configuration  
    \[
      {\conf{\frame {\syss(\syscall)} {\regmap_0[\vx_1, \dots, \vx_k\upd \val_1, \dots, \val_k]}{\km[\syscall]}, \lay \lcomp\rfs'', \Ds', \Os'}},
    \]
     with $\rfs'' \eqon{\Fn} \rfs$ and $\nat'\in \Nat$ such that:
    \begin{equation*}
      \lay \red {\conf{\frame {\syss(\syscall)} {\regmap_0[\vx_1, \dots, \vx_k\upd \val_1, \dots, \val_k]}{\km[\syscall]}, \lay \lcomp\rfs'', \Ds', \Os'}} \\ \ato^{\nat'}
      {\conf{\frame {\cmemread \vx \expr} {\regmap'}{\km[\syscall]}:\st', \lay \lcomp\rfs', \Ds, \Os}}
    \end{equation*}

    By analyzing the rule \ref{AL:Load-Unsafe}, we further deduce that:  
    \begin{varitemize}
    \item $\sem \expr_{\regmap', \lay} \in \underline \lay(\Ark)$.
    \item $\sem \expr_{\regmap', \lay} \notin \underline \lay(\caps(\syscall))$.
    \end{varitemize}
    This implies that the same rule applies to
    \[
      {\conf{\frame {\cmemread \vx \expr} {\regmap'}{\km[\syscall]}:\st', \lay \lcomp\rfs', \Ds, \Os}}.
    \]
    Thus, we conclude:
    \[
      \lay \red {\conf{\frame {\syss(\syscall)} {\regmap_0[\vx_1, \dots, \vx_k\upd \val_1, \dots, \val_k]}{\km[\syscall]}, \lay \lcomp \rfs'', \Ds', \Os'}} \ato^{\nat' + 1}\unsafe.
    \]
    Applying \Cref{lemma:ordinatytospec},  
    we deduce that the corresponding speculative configuration
    \[
      {\sframe{\frame {\syss(\syscall)} {\regmap_0[\vx_1, \dots, \vx_k\upd \val_1, \dots, \val_k]}{\km[\syscall]}}{\lay \lcomp \rfs'}{\bot}}
    \]
    also reduces in $\nat' + 1$ steps to $\unsafe$,  
    using the sequence of directives ${\dstep^{\nat' + 1}}$.  
    This contradicts \Cref{lemma:cttosafe} when applied to the system call $\syscall$.
    
    \proofcase{\ref{AL:Spec-Unsafe}} By introspection of the rule,
    we deduce that there exists a hybrid configuration
    \[
      {\hconf\unsafe {\frame{\adversary}{\regmap}{\opt}\cons\st}\Ds\Os}
    \]
    such that 
    \[
      \nesstep \nat  {\conf{\frame \speccmd \regmap \um, \lay \lcomp \rfs, \nil, \nil}}{} {\hconf\unsafe {\frame{\adversary}{\regmap}{\opt}\cons\st}\Ds\Os} \ato \unsafe.
    \]
    With an application of \Cref{rem:thereisaspecon}, we deduce that
    there exists a configuration
    \[
      {\sframe{\frame {\cmd} {\regmap'}{\um}}{\bm{\buf}{\lay \lcomp \rfs'}}{\bot}},
    \]
    where $\rfs' \eqon{\Fn}\rfs$, along with a sequence of directives
    $\Ds'$, a sequence of observations $\Os'$, and a natural number
    $\nat' \leq \nat$, such that
    \begin{equation*}
      \nsstep {\nat'}
      {\sframe{\frame {\cmd} {\regmap'}{\um}}{\bm{\buf}{\lay \lcomp \rfs'}}{\bot}}
      \unsafe {\Ds'}{\Os'}.
    \end{equation*} 
    By \Cref{lemma:nobt}, we can assume without loss of generality
    that $\Ds'$ does not contain any $\dbt$ directive.  By applying
    \Cref{lemma:thereisasyscall2} and examining the rules of the
    semantics, we deduce the existence of configurations
    \begin{equation*}
      {\sframe{\frame {\syss(\syscall)} {\regmap''}{\km[\syscall]}}{\bm{\buf'}{\lay \lcomp \rfs''}}{\bool}}
      \quad
      \text{and}
      \quad
      {\sframe{\frame {\cmdtwo} {\regmap'''}{{\km}_\syscall}:\st''}{\bm{\buf''}{\lay \lcomp \rfs'''}}{\boolms}}
    \end{equation*}
    along with a sequence of directives $\Ds''$,  
    a sequence of observations $\Os''$, and a store $\rfs''$,  
    such that  
    \begin{equation*}
      \nsstep {\nat''}
      {\sframe{\frame {\syss(\syscall)} {\regmap''}{\km[\syscall]}}{\bm{\buf'}{\lay \lcomp \rfs''}}{\bool}}
      {\sframe{\frame {\cmdtwo} {\regmap'''}{\km[\syscall]}:\st''}{\bm{\buf''}{\lay \lcomp \rfs'''}}{\boolms}}
      {\Ds''}{\Os''}\sto \dir \obs \unsafe.
    \end{equation*}
    This conclusion contradicts \Cref{lemma:cttosafe}
    applied to $\syscall$.
  \end{proofcases}
\end{proof}

\begin{lemma}
  \label{lemma:thereisasyscall1}
  For every system $\system = (\rfs, \syss, \caps)$,
  natural number $\nat$, configurations
  \[
    \confone = \conf{\frame {\speccmd}{\overline \regmap}\um, \lay \lcomp\rfs', \overline \Ds, \overline \Os}\]
  with $\rfs \eqon\Fn \rfs'$ and
  \[
    \conf{\frame{\cmd}{\regmap}{\km[\syscall]}:\st_{\km}:\st_\um, \lay \lcomp\rfs'', \Ds, \Os},
  \]
  where $\km(\st_{\km})$
  and $\um(\st_\um)$ hold, and
  \[
    \nesstep \nat \confone 
    {\conf{\frame{\cmd}{\regmap}{\km[\syscall]}:\st_{\km}:\st_\um,
          \lay \lcomp\rfs'', \Ds, \Os}},
  \]
  there exists a configuration
  \[
    \conf{\frame{\syss(\syscall)}{\regmap'}{\km[\syscall]}, \lay \lcomp\rfs''', \Ds, \Os},
  \]
  and a natural number $\nat'$ such that
  \[
    \nesstep {\nat'}  
    {\conf{\frame{\syss(\syscall)}{\regmap'}{\km[\syscall]}, \lay \lcomp\rfs''', \Ds, \Os}}
    {\conf{\frame{\cmd}{\regmap}{\km[\syscall]}:\st_{\km}, \lay \lcomp\rfs'', \Ds, \Os}}.
  \]
\end{lemma}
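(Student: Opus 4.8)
The plan is to prove the statement by induction on the reduction length $\nat$, peeling off the \emph{last} reduction step and reconstructing the computation of the kernel block on top of the isolated system-call frame. The base case $\nat = 0$ is vacuous: it would force $\confone$, whose top frame is in user mode $\um$, to coincide with the target configuration, whose top frame is in kernel mode $\km[\syscall]$. The crucial preliminary observation, which drives the whole case analysis, is that every kernel-mode frame carries a command of $\Cmd$ rather than of $\SpCmd$: system-call bodies and the procedures they reach contain neither $\cspec$, $\cpoison$, nor $\vx \ass \eobs$, since those belong to $\SpInstr \setminus \Instr$ and are only ever executed in user mode. Consequently, once execution is inside a kernel-mode frame, only the \emph{ordinary} (non-hybrid) attacker rules of \Cref{fig:scen2sem2bis,fig:scen2sem2bisbis} can fire, and none of these rules reads or modifies the directive stack $\Ds$ or the observation stack $\Os$. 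This is exactly why $\Ds$ and $\Os$ appear unchanged both in the isolated entry configuration and in the target of the conclusion.

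For the inductive step I would write the reduction from $\confone$ as a length-$\nat$ prefix reaching some $\conftwo$ followed by one final step into the target, and inspect that last step. By \Cref{rem:stackinv1} every reachable stack decomposes as a kernel-mode segment sitting above a user-mode segment, and since the target's top frame is $\km[\syscall]$ the last rule is forced to be one of the ordinary $\Cmd$-rules. If this rule is \ref{AL:System-Call} fired from a \emph{user}-mode caller, then---because a user-to-kernel system call sets the attribution of the new frame to the invoked call---the invoked call is precisely $\syscall$, so $\cmd = \syss(\syscall)$ and $\st_{\km} = \nil$; I conclude with $\nat' = 0$ and $\regmap' = \regmap$, the entry configuration being syntactically equal to the target with $\st_\um$ deleted. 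In every other admissible case (\ref{AL:Op}, \ref{AL:Skip}, \ref{AL:Fence}, \ref{AL:If}, \ref{AL:While}, \ref{AL:Load}, \ref{AL:Store}, \ref{AL:Call}, a nested \ref{AL:System-Call} fired from kernel mode, or \ref{AL:Pop} returning between two kernel frames) the predecessor $\conftwo$ has a $\km[\syscall]$ top frame sitting on the \emph{same} user segment $\st_\um$; I apply the induction hypothesis to $\conftwo$ to obtain a computation from the entry configuration to $\conftwo$ restricted to its kernel block, and then re-apply the very same last rule with $\st_\um$ removed to extend the computation to the target's kernel block.

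The main obstacle is this final re-application step: I must justify that a transition applicable to a stack of the form $\overline{\st}_{\km} : \st_\um$ that does not touch $\st_\um$ remains applicable, with the identical effect, to $\overline{\st}_{\km}$ alone. This is a frame-stack irrelevance property, the speculative-model analogue of \Cref{lemma:contextpluggingtech}, verified by routine inspection of the rules, observing that each of them reads and rewrites only the topmost one or two frames. The one point needing genuine care is \ref{AL:Pop}: a pop crossing the kernel/user boundary would return control to a user frame and hence yield a configuration whose top is not $\km[\syscall]$, so it can never be the last step into the target; every pop encountered therefore operates strictly within the kernel block and lifts cleanly. Together with the attribution bookkeeping of \ref{AL:System-Call} (user-to-kernel sets $\km[\syscall]$, kernel-to-kernel preserves it), this guarantees that the kernel segment I isolate is exactly the one created by the originating user-mode invocation of $\syscall$.
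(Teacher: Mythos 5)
Your proposal is correct and follows essentially the same route as the paper's own proof: induction on $\nat$ with a case analysis on the last applied rule, a direct $\nat'=0$ conclusion when that rule is a user-to-kernel \ref{AL:System-Call}, the stack invariant of \Cref{rem:stackinv1} to force kernel mode on the predecessor and to exclude the hybrid/\ref{AL:Spec-Term} cases, and re-application of the last rule to the stack with $\st_\um$ stripped off (the frame-stack irrelevance you identify as the key obstacle is exactly what the paper verifies rule by rule, with \ref{AL:Pop} as the delicate case).
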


\begin{proof}
  The proof proceeds by induction on $\nat$. The base case holds trivially due to vacuity of the premise.  
  For the inductive step, we analyze the last transition using case analysis on the rule applied in the last transition.
  Most rules that produce have a uniform behavior and follow a
  similar argument, except for
  \ref{AL:Pop}, \ref{AL:Spec-Term}, and \ref{AL:System-Call}.  We illustrate the proofs for the representative case of rule
  \ref{AL:Load}, and the rules we mentioned above.
  \begin{proofcases}
    \proofcase{\ref{AL:Load}} In this case, our main assumption rewrites as follows:  
    \begin{equation*}
      \nesstep \nat \confone 
      {\conf{\frame{\cmemread \vx \expr\sep \cmd}{\regmap''}{\opt}:\st, \lay \lcomp\rfs'', \Ds, \Os}} \ato
      {\conf{\frame{\cmd}{\update{\regmap''}\vx{\sem\expr_{\regmap'', \lay}}}{\km[\syscall]}:\st_{\km}:\st_\um, \lay \lcomp\rfs'', \Ds, \Os}}.
    \end{equation*}
    By analyzing the rule \ref{AL:Load}, we deduce that the mode flag $\opt$ is $\km[\syscall]$. Thus, we can apply the IH, and we deduce the existence of a configuration
    \[
      \conf{\frame{\syss(\syscall)}{\regmap'}{\km[\syscall]}, \lay \lcomp\rfs''', \Ds, \Os},
    \]
    and a natural number $\nat'$ such that:
    \begin{equation*}
      \nesstep {\nat'}  
      {\conf{\frame{\syss(\syscall)}{\regmap'}{\km[\syscall]}, \lay \lcomp\rfs''', \Ds, \Os}} {}\\
      {\conf{\frame{\cmemread \vx \expr\sep \cmd}{\regmap''}{\km[\syscall]}:\st_{\km}, \lay \lcomp\rfs'', \Ds, \Os}}.
    \end{equation*}
    Finally, applying rule \ref{AL:Load}, we conclude:
    \[\lay \red
      {\conf{\frame{\cmemread \vx \expr\sep \cmd}{\regmap''}{\km[\syscall]}:\st_{\km}, \lay \lcomp\rfs'', \Ds, \Os}}\ato
      {\conf{\frame{\cmd}{\update{\regmap''}\vx{\sem\expr_{\regmap'', \lay}}}{\km[\syscall]}:\st_{\km}, \lay \lcomp\rfs'', \Ds, \Os}}
    \]
    and this shows the claim.
    \proofcase{\ref{AL:System-Call}} In this case, the assumption is that  
    \begin{multline*}
      \nesstep \nat \confone 
      {\conf{\frame{\csyscall \syscalltwo {\expr_1, \dots, \expr_k}\sep \cmd}{\regmap''}{\opt}:\st, \lay \lcomp\rfs', \Ds, \Os}} \ato\\
      {\conf{\frame{\syss(\syscalltwo)}{\regmap}{\km[\syscalltwo]}: \frame{\cmd}{\regmap''}{\opt}:\st_{\km}:\st_\um, \lay \lcomp\rfs'', \Ds, \Os}}
    \end{multline*}
    From \Cref{rem:stackinv1}, we deduce that $\st = \st_{\km}\cons \st_\um$.
    We proceed by case analysis on $b$.
    
    \begin{proofcases}
      \proofcase{$b = \um$}
      In this case, we have  $\st_{\km}= \nil$ and
      $\st_{\um}= \st$, so we can verify that:
      \begin{equation*}
        \esstep
        {\conf{\frame{\csyscall \syscalltwo {\expr_1, \dots, \expr_k}\sep \cmd}{\regmap''}{\opt}, \lay \lcomp\rfs', \Ds, \Os}} {}\\
        {\conf{\frame{\syss(\syscalltwo)}{\regmap}{\km[\syscalltwo]}: \frame{\cmd}{\regmap''}{\opt}, \lay \lcomp\rfs'', \Ds, \Os}}.
      \end{equation*}
      Thus, the claim is valid for $\nat' = 0$ and $\syscalltwo=\syscall$.
      \proofcase{$b = \km[\syscall]$} In this case, we have a nested system call, and we can apply the IH to the reduction:
      \begin{equation*}
        \nesstep \nat \confone 
        {\conf{\frame{\csyscall \syscalltwo {\expr_1, \dots, \expr_k}\sep \cmd}{\regmap''}{\km[\syscall]}:\st_{\km}\cons \st_\um, \lay \lcomp\rfs'', \Ds, \Os}}.
    \end{equation*}
    By the IH there exist a configuration
    \[
      \conf{\frame{\syss(\syscall)}{\regmap'}{\km[\syscall]}, \lay \lcomp\rfs''', \Ds, \Os},
    \]
    and a natural number $\nat'$ such that:
    \begin{equation*}
      \nesstep {\nat'}  
      {\conf{\frame{\syss(\syscall)}{\regmap'}{\km[\syscall]}, \lay \lcomp\rfs''', \Ds, \Os}} {}\\
      {\conf{\frame{\csyscall \syscalltwo {\expr_1, \dots, \expr_k}\sep \cmd}{\regmap''}{\km[\syscall]}:\st_{\km}, \lay \lcomp\rfs'', \Ds, \Os}}.
    \end{equation*}
    Finally, applying rule \ref{AL:System-Call}, we conclude:
    \begin{equation*}
      \nesstep {\nat'+1}  
      {\conf{\frame{\syss(\syscall)}{\regmap'}{\km[\syscall]}, \lay \lcomp\rfs''', \Ds, \Os}} {}\\
      {\conf{\frame {\syss(\syscalltwo)} {\regmap}{ \km[\syscall]}\cons \frame{ \cmd}{\regmap''}{\km[\syscall]}:\st_{\km}, \lay \lcomp\rfs'', \Ds, \Os}},
    \end{equation*}
    which establishes the claim.
    \end{proofcases}
    \proofcase{\ref{AL:Pop}} In this case, the assumption is that  
    \begin{equation*}
      \nstep \nat \confone 
      {\conf{\frame \cnil {\regmap''}{\opt}:\frame{\cmd}{\regmap}{\km[\syscall]}:\st, \lay \lcomp\rfs', \Ds, \Os}} \ato\\
      {\conf{\frame{\cmd}{\update{\regmap'}{\ret}{\regmap''(\ret)}}{\km[\syscall]}:\st, \lay \lcomp\rfs', \Ds, \Os}},
    \end{equation*}
    Applying \Cref{rem:stackinv1}, we rewrite $\st$ as $\st_{\km}:\st_\um$ and deduce that $\opt = \km[\syscall]$,  allowing us to apply the IH.  By the IH, there exists a configuration:
    \[
      \conf{\frame{\syss(\syscall)}{\regmap'}{\km[\syscall]}, \lay \lcomp\rfs''', \Ds, \Os},
    \]
    a natural number $\nat'$ such that:
    \begin{equation*}
      \nstep {\nat'}  
      {\conf{\frame{\syss(\syscall)}{\regmap'}{\km[\syscall]}, \lay \lcomp\rfs''', \Ds, \Os}} {}\\
      {\conf{\frame \cnil {\regmap''}{\km[\syscall]}:\frame{\cmd}{\regmap}{\km[\syscall]}:\st_{\km}, \lay \lcomp\rfs'}}.
    \end{equation*}
    To conclude the proof it suffices to verify that:
    \begin{multline*}
      \esstep  
      {\conf{\frame \cmdtwo {\regmap''}{\opt}:\frame{\cmd}{\regmap}{\km[\syscall]}:\st_{\km}, \lay \lcomp\rfs', \Ds, \Os}} {} \\
      {\conf{\frame{\cmd}{\update{\regmap'}{\ret}{\regmap''(\ret)}}{\km[\syscall]}:\frame{\cmd}{\regmap}{\km[\syscall]}:\st_{\km}, \lay \lcomp\rfs', \Ds, \Os}}.
    \end{multline*}
    \proofcase{\ref{AL:Spec-Term}} By \Cref{rem:stackinv1} and by introspection of the rules, we conclude that this case is absurd.
  \end{proofcases}
\end{proof}

\begin{lemma}
  \label{lemma:thereisasyscall0}
  For every system $\system = (\rfs, \syss, \caps)$, and
  natural number $\nat$, configurations
  \(
    \confone = \conf{\frame {\cmd}{\overline \regmap}\um, \lay \lcomp\rfs'}
  \)
  with $\rfs \eqon\Fn \rfs'$ and
  \(
    \conf{\frame{\cmdtwo}{\regmap}{\km[\syscall]}:\st, \lay \lcomp\rfs''},
  \)
  if 
  \[
    \nstep \nat {\conf{\frame {\cmd}{\overline \regmap}\um, \lay \lcomp\rfs'}} 
    \conf{\frame{\cmdtwo}{\regmap}{\km[\syscall]}:\st, \lay \lcomp\rfs''},
  \]
  then there exist a configuration
  \[
    {\conf{\frame{\csyscall \syscall {\expr_1, \ldots, \expr_k}\sep \cmd}{\regmap'}{\um}:\st', \lay \lcomp\rfs'''}},
  \]
  and a natural number $\nat'<\nat$ such that
  \[
    \nstep {\nat'}
    \confone
    {\conf{\frame{\csyscall \syscall {\expr_1, \ldots, \expr_k}\sep \cmd}{\regmap'}{\um}:\st', \lay \lcomp\rfs'''}}
    \to^{\nat-\nat'}
    \conf{\frame{\cmdtwo}{\regmap}{\km[\syscall]}:\st, \lay \lcomp\rfs''},
  \]
  and
  \[
    \nstep
    {\nat-\nat'-1}
    {\conf{\frame{\syss(\syscall) }{\update{\regmap_0} {\vx_1, \ldots, \vx_k}{\sem{\expr_1}_{\regmap', \lay}, \ldots, \sem{\expr_k}_{\regmap', \lay}}}{\km[\syscall]}, \lay \lcomp\rfs'''}}
    \conf{\frame{\cmdtwo}{\regmap}{\km[\syscall]}\cons\st'', \lay \lcomp\rfs''},
  \]
  where $\km[\syscall](\st'')$ and $\st''\cons \frame{\cmd}{\regmap'}{\um}\cons\st'=\st$.  
\end{lemma}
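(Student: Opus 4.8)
The plan is to proceed by induction on the reduction length $\nat$, inspecting the \emph{last} transition $\confone_{\nat-1}\to\conf{\frame{\cmdtwo}{\regmap}{\km[\syscall]}\cons\st,\lay\lcomp\rfs''}$ and tracing backwards to the outermost invocation of $\syscall$. The whole argument rests on a routine stack-shape invariant, provable by induction on the reduction exactly like its speculative analogue \Cref{rem:stackinv1}: every configuration reachable from a user-mode singleton $\conf{\frame{\cmd}{\overline\regmap}{\um},\lay\lcomp\rfs'}$ has a frame stack of the form $\st_\km\cons\st_\um$ with $\km(\st_\km)$ uniform in its label and $\um(\st_\um)$. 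In particular a user frame never sits directly above a kernel frame, and \ref{WL:SystemCall} fired from user mode is the \emph{only} rule that can turn a user-mode top frame into a kernel-mode one. Applying this invariant to the target already yields the decomposition $\st=\st''\cons\frame{\cmd}{\regmap'}{\um}\cons\st'$ with $\km[\syscall](\st'')$ demanded by the statement, the frame $\frame{\cmd}{\regmap'}{\um}$ being precisely the user continuation pushed when $\syscall$ was entered. I will also tacitly use \Cref{rem:simispreserved}, so that all intermediate memories stay of the form $\lay\lcomp\rfs$.

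In the base case the last transition is \ref{WL:SystemCall} fired from user mode, i.e. $\confone_{\nat-1}=\conf{\frame{\csyscall{\syscall}{\expr_1,\dots,\expr_k}\sep\cmd}{\regmap'}{\um}\cons\st',\lay\lcomp\rfs''}$ reducing to the target with $\cmdtwo=\syss(\syscall)$, $\regmap=\regmap_0[\vec\vx\upd\sem{\vec\exprtwo}_{\regmap',\lay}]$, $\st=\frame{\cmd}{\regmap'}{\um}\cons\st'$, $\st''=\nil$, and $\rfs'''=\rfs''$ (the rule leaves memory untouched). Choosing $\nat'=\nat-1<\nat$ discharges all three goals at once: the witness configuration is $\confone_{\nat-1}$ itself, clause (ii) is the single last step, and clause (iii) is the $0$-step reduction, since $\nat-\nat'-1=0$ and its two endpoints coincide.

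Otherwise the last rule is one of \ref{WL:Call}, \ref{WL:SystemCall} as a \emph{nested} call (forced by the invariant to carry the same label $\syscall$), \ref{WL:Pop}, or a top-frame-local rule (Op, Skip, If, While, Load, Store); the error and unsafe rules are excluded because the target is a genuine configuration. In each case the invariant forces $\confone_{\nat-1}$ to again carry a kernel-mode top frame labelled $\km[\syscall]$ — for \ref{WL:Pop} this uses that the exposed frame is kernel, whence by the invariant the popped top frame is kernel as well — so the induction hypothesis applies to $\confone\to^{\nat-1}\confone_{\nat-1}$ and returns some $\nat''<\nat-1$, the syscall-invocation configuration, and a decomposition $\st_1=\st_1''\cons\frame{\cmd}{\regmap'}{\um}\cons\st'$ of the predecessor's tail. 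Setting $\nat'=\nat''$, clauses (i) and (ii) are inherited with one extra step appended. For clause (iii) I append to the projected reduction supplied by the hypothesis the \emph{stripped} version of the last step: the same rule applied after deleting the common, never-exposed suffix $\frame{\cmd}{\regmap'}{\um}\cons\st'$, which recomputes $\st''$ from $\st_1''$ (unchanged for local rules, extended by one kernel frame for \ref{WL:Call} and nested \ref{WL:SystemCall}, shortened by one for \ref{WL:Pop}).

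I expect this suffix-stripping step to be the main obstacle. It is the converse direction of \Cref{lemma:contextpluggingtech}, and is legitimate because every rule's side conditions depend only on the top frame, the memory, the layout and the capabilities, never on the stack tail, and because the target being in kernel mode guarantees the user frame $\frame{\cmd}{\regmap'}{\um}$ is never exposed along the projected segment. Establishing that the invariant really pins down the mode of $\confone_{\nat-1}$ in the \ref{WL:Pop} case is the other point requiring care; all remaining cases are mechanical rule bookkeeping.
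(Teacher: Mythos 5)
Your proposal is correct and follows essentially the same route as the paper's proof: induction on $\nat$ with a case analysis on the last applied rule, the user-mode firing of \ref{WL:SystemCall} as the terminating case (witnessed by the penultimate configuration, with the third clause degenerating to a $0$-step reduction), and the induction hypothesis plus a parallel ``stripped'' replay of the last step for all other rules, with \ref{WL:Pop} and nested \ref{WL:SystemCall} singled out as the delicate cases. The only difference is presentational: you isolate the stack-shape invariant and the suffix-stripping principle as explicit auxiliary facts, whereas the paper re-derives the mode information and re-applies each rule to the truncated stack case by case.
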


\begin{proof}
  The proof proceeds by induction on $\nat$. The base case holds
  trivially since the premise is vacuous.
  For the inductive step, we analyze the last transition using case
  analysis on the applied rule.
  Most rules follow a uniform argument, except for \ref{WL:Pop} and
  \ref{WL:SystemCall}. We illustrate the proof for the representative
  case of rule \ref{WL:Load}, along with the exceptional cases.
  \begin{proofcases}
    \proofcase{\ref{WL:Load}} In this case, we assume:  
    \begin{equation*}
      \nstep \nat \confone 
      {\conf{\frame{\cmemread \vx \expr\sep \cmdtwo}{\regmap''}{\opt}:\st, \lay \lcomp\rfs''}} \to
      {\conf{\frame{\cmdtwo}{\update{\regmap''}\vx{\sem\expr_{\regmap'', \lay}}}{\km[\syscall]}:\st, \lay \lcomp\rfs''}}.
    \end{equation*}
    By analyzing rule \ref{WL:Load}, we conclude that the mode flag
    $\opt$ must be $\km[\syscall]$.  Thus, we apply the IH and deduce
    the existence of a configuration
    \[
      {\conf{\frame{\csyscall \syscall {\vec \expr}\sep \cmd}{\regmap'}{\um}:\st', \lay \lcomp\rfs'''}},
    \]
    and a natural number $\nat'$ such that:
    \begin{equation*}
      \nstep {\nat'}  
      {\conf{\frame{\csyscall \syscall {\vec \expr}\sep \cmd}{\regmap'}{\um}:\st', \lay \lcomp\rfs'''}}
      {\conf{\frame{\cmemread \vx \expr \sep \cmdtwo}{\regmap''}{\km[\syscall]}:\st, \lay \lcomp\rfs''}},
    \end{equation*}
    and
    \[
      \nstep
      {\nat-\nat'-1}
      {\conf{\frame{\syss(\syscall) }{\update{\regmap_0} {\vec\vx}{\sem{\vec \expr}_{\regmap, \lay}}}{\um}, \lay \lcomp\rfs'''}}
      {\conf{\frame{\cmemread \vx \expr \sep \cmdtwo}{\regmap''}{\km[\syscall]}:\st'', \lay \lcomp\rfs''}},
    \]
    where $\km[\syscall](\st'')$ and  $\st''\cons\frame{\cmd}{\regmap'}{\um}\cons\st'=\st$.
    Finally, applying rule \ref{WL:Load}, we conclude:
    \begin{equation*}
      \step
      {\conf{\frame{\cmemread \vx \expr \sep \cmdtwo}{\regmap''}{\km[\syscall]}:\st, \lay \lcomp\rfs''}}
      {\conf{\frame{\cmdtwo}{\update{\regmap''}\vx{\sem\expr_{\regmap'', \lay}}}{\km[\syscall]}:\st, \lay \lcomp\rfs''}},
    \end{equation*}
    and
    \begin{equation*}
      \step
      {\conf{\frame{\cmemread \vx \expr \sep \cmdtwo}{\regmap''}{\km[\syscall]}:\st'', \lay \lcomp\rfs''}}
      {\conf{\frame{\cmdtwo}{\update{\regmap''}\vx{\sem\expr_{\regmap'', \lay}}}{\km[\syscall]}:\st'', \lay \lcomp\rfs''}},
    \end{equation*}
    which prove the claim.
    \proofcase{\ref{WL:SystemCall}} In this case, we can rewrite the main assumption as follows:  
    \begin{multline*}
      \nstep \nat \confone 
      {\conf{\frame{\csyscall \syscalltwo {\expr_1, \dots, \expr_k}\sep \cmd}{\regmap''}{\opt}:\st, \lay \lcomp\rfs'}} \to
      {\conf{\frame{\syss(\syscalltwo)}{\regmap}{\km[\syscalltwo]}: \frame{\cmd}{\regmap''}{\opt}:\st, \lay \lcomp\rfs''}}
    \end{multline*}
    We proceed by case analysis on $b$.    
    \begin{proofcases}
      \proofcase{$b = \um$}
      In this case, the first claim follows trivially. For the second one, we can verify that:
      \begin{equation*}
        \nstep 0
        {\conf{\frame{\syss(\syscalltwo)}{\regmap}{\km[\syscalltwo]}, \lay \lcomp\rfs''}}
        {\conf{\frame{\syss(\syscalltwo)}{\regmap}{\km[\syscalltwo]}, \lay \lcomp\rfs''}}.
      \end{equation*}

      Thus, the second claim holds for $\nat' = \nat$, $\syscalltwo=\syscall$, $\st' = \st$, and $\st'' = \nil$.
      \proofcase{$b = \km[\syscall]$} Here, we have a nested system call. We can apply the IH to the reduction:
      \begin{equation*}
        \nstep \nat \confone 
        {\conf{\frame{\csyscall \syscalltwo {\expr_1, \dots, \expr_k}\sep \cmd}{\regmap''}{\km[\syscall]}:\st, \lay \lcomp\rfs''}}.
      \end{equation*}
  The remaining steps follow similarly to the corresponding case in rule \ref{WL:Load}.
  \end{proofcases}
    \proofcase{\ref{WL:Pop}} In this case, we can rewrite the main assumption as follows:  
    \begin{equation*}
      \nstep \nat \confone 
      {\conf{\frame \cnil {\regmap''}{\opt}:\frame{\cmd}{\regmap}{\km[\syscall]}:\st, \lay \lcomp\rfs'}} \to\\
      {\conf{\frame{\cmd}{\update{\regmap'}{\ret}{\regmap''(\ret)}}{\km[\syscall]}:\st, \lay \lcomp\rfs'}}.
    \end{equation*}
    By analyzing rule \ref{WL:Pop}, we conclude that $\opt = \km[\syscall]$. This allows us to apply the induction hypothesis (IH). The remainder of the proof follows the same reasoning as in the case of rule \ref{WL:Load}.
  \end{proofcases}
\end{proof}

\begin{lemma}
  \label{lemma:ordinatytospec}
  For every system $\system = (\overline \rfs, \syss, \caps)$,
  store $\rfs \eqon{\Fn}\overline \rfs$,
  configuration $\conf{\frame{\cmd}{\regmap}{\km[\syscall]}, {\lay \lcomp \rfs}, \Ds, \Os}$,
  $\nat \in \Nat$,
  and every configuration $\conf{\st, {\lay \lcomp \rfs'}, \Ds, \Os}$
  such that
  \[
    \nesstep \nat {\conf{\frame{\cmd}{\regmap}{\km[\syscall]}, {\lay \lcomp \rfs}, \Ds, \Os}}
    {\conf{\st, {\lay \lcomp \rfs'}, \Ds, \Os}},
  \]
  there is a buffered memory $\bm \buf {(\lay \lcomp \rfs'')}$ and a sequence of observations $\Os$ such that 
  \[
    \nsstep \nat {\sframe {\frame{\cmd}{\regmap}{\km[\syscall]}} {\bm \nil {\lay \lcomp \rfs}} \bot} 
    {\sframe \st {\bm \buf {(\lay \lcomp \rfs'')}} \bot} {\dstep^\nat} \Os
  \]
  and $\overline {\bm\buf{(\lay \lcomp \rfs'')}} = {\lay \lcomp \rfs'}$.
\end{lemma}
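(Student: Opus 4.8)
The plan is to prove this simulation statement by induction on the number of steps $\nat$, after first generalising it so that the induction can pass cleanly through calls and stores. The generalised claim replaces the single initial frame $\frame{\cmd}{\regmap}{\km[\syscall]}$ by an arbitrary kernel-mode stack $\st$ with $\km(\st)$, and replaces the empty buffer by an arbitrary buffer $\buf$ with $\dom(\buf)\subseteq\underline\lay(\Ar)$ over a base memory $\mem_0$, under the invariant that the \emph{committed} memory already matches the non-speculative one, that is $\overline{\bm\buf{\mem_0}}=\lay\lcomp\rfs$. The lemma is then the instance $\buf=\nil$ and $\mem_0=\lay\lcomp\rfs$, for which $\overline{\bm\nil{(\lay\lcomp\rfs)}}=\lay\lcomp\rfs$ holds by definition of $\overline{\cdot}$.

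First I would record a preliminary observation pinning down which rules can fire. Since the reduction starts in kernel mode with a command in $\Cmd$ and the $\km$-discipline is preserved by procedure and system calls, the adversarial constructs $\cspec\cmd$, $\cpoison\dir$ and $\vx\ass\eobs$ never occur; and since both endpoints keep the same $\Ds$ and $\Os$ and stay inside the non-hybrid configurations $\aconf{\cdots}{\cdots}{}{}$, rules \ref{AL:Poison} and \ref{AL:Spec-Init} cannot fire either (they respectively grow $\Ds$ and leave the $\aconf$ fragment). Consequently every step is one of the standard $\Cmd$-rules, each of which has an exact $\dstep$ counterpart in the speculative semantics. The inductive step then peels off the first step and proceeds by case analysis on the rule used, matching it against the corresponding speculative $\dstep$ step and checking the invariant before invoking the induction hypothesis on the remaining $\nat$ steps.

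The routine cases are the control-flow and register rules (\ref{AL:Op}, \ref{AL:Skip}, \ref{AL:If}, \ref{AL:While}, \ref{AL:Pop}, \ref{AL:System-Call}), whose $\dstep$ counterparts perform the identical stack update and touch neither the buffer nor the base memory, and the kernel load \ref{AL:Load}: here I would match \ref{SI:Load-Step} and use \Cref{rem:bufreadoverline} to rewrite $\bufread{\bm\buf{\mem_0}}{\add}{0}=\overline{\bm\buf{\mem_0}}(\add)=(\lay\lcomp\rfs)(\add)$, so the speculative load reads exactly the non-speculative value. A fence is mirrored by \ref{SI:Fence}, which is applicable precisely because the mis-speculation flag is $\bot$ by the invariant, and which commits the buffer without altering the committed memory. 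Throughout, the flag stays $\bot$ because only the ``-Step'' speculative rules fire under $\dstep$, none of which raises it, so the final configuration carries $\bot$ as required.

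The hard part will be the write-buffer bookkeeping in the store and call cases, which is exactly where the strengthened invariant earns its keep. For a store I would match \ref{AL:Store} with \ref{SI:Store}: the non-speculative memory becomes $\update{(\lay\lcomp\rfs)}{\add}{\val}$ while the buffer gains the entry $\bitem\add\val$, and the case closes via the defining identity $\overline{\bm{\bitem\add\val\cons\buf}{\mem_0}}=\update{\overline{\bm\buf{\mem_0}}}{\add}{\val}$ together with \Cref{rem:memupdtostupd}, the side condition $\add\in\underline\lay(\Ar[\opt])$ keeping the enlarged buffer's domain inside $\underline\lay(\Ar)$. For a call the subtlety is that \ref{SI:Call-Step} fetches the callee from the \emph{base} memory $\mem_0(\add)$ rather than from the committed memory; but the W\^{}X discipline, encoded as the invariant $\dom(\buf)\subseteq\underline\lay(\Ar)$, guarantees that $\buf$ has no entry at the code address $\add$, so $\mem_0(\add)=\overline{\bm\buf{\mem_0}}(\add)=(\lay\lcomp\rfs)(\add)$ and the fetched procedure agrees with the one read by \ref{AL:Call}. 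Finally \Cref{rem:overlinewrtdom} confirms that committing an array-only buffer yields a memory of the form $\lay\lcomp(\cdots)$, so both the base memory maintained across fences and the final committed memory keep that shape, discharging the last requirement of the conclusion.
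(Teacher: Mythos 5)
Your proposal is correct and follows essentially the same route as the paper's proof: induction on the step count, matching each non-speculative rule with its $\dstep$ counterpart while maintaining the invariant that the committed buffered memory equals the non-speculative memory, with loads handled via \Cref{rem:bufreadoverline}, stores via the buffer-append identity for $\overline{\cdot}$, calls via the fact that the buffer's domain lies in $\underline\lay(\Ar)$ so code fetches agree, and fences via buffer commit. The only difference is cosmetic: you peel the first step (which forces the explicit generalisation to an arbitrary kernel-mode stack and non-empty buffer), whereas the paper peels the last step and so can keep the fixed initial configuration in its induction hypothesis; both are sound.
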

\begin{proof}
  Notice that for \Cref{rem:stackinv1},
  the last transition cannot be derived using \ref{AL:Spec-Term}.
  By introspection of the rules, we also deduce that it cannot be derived using
  \ref{AL:Spec-Unsafe} or \ref{AL:Spec-Error}. With this additional observation, 
  we proceed by induction on $\nat$. In the inductive step, we assume that
  the last transition was not derived using any of the aforementioned rules.

  \begin{proofcases}
    \proofcase{0} Trivial.  \proofcase{$\nat+1$} In this case, the
    last rule used cannot be one of \ref{AL:Spec-Term},
    \ref{AL:Spec-Unsafe}, or \ref{AL:Spec-Error}.  Therefore, the
    $\nat$-th configuration cannot be a hybrid one.  Thus, we can
    restate the premise as follows:
    \begin{equation*}
      \nesstep \nat {\conf{\frame{\cmd}{\regmap}{\km[\syscall]}, {\lay \lcomp \rfs}, \Ds, \Os}}
      {\conf{\frame {\cmd'} {\regmap'} {\opt'}: \st', {\lay \lcomp \rfs''}, \Ds, \Os}}\ato\\
      {\conf{\st,{\lay \lcomp \rfs'}, \Ds, \Os}}.
    \end{equation*}
    We apply the IH to the first $\nat$ steps, showing that there
    exist $\buf$ and $\Os$ such that:
    \[
      \nsstep \nat {\sframe {\frame{\cmd}{\regmap}{\km[\syscall]}} {\lay \lcomp \rfs} \bot} 
      {\sframe {\frame {\cmd'} {\regmap'} {\km[\syscall]}:\st'} {\bm {\buf} {(\lay \lcomp \omega)}} \bot} {\dstep^\nat} \Os
    \]
    where
    $\overline {\bm\buf{(\lay \lcomp \omega)}} = {\lay \lcomp \rfs''}$
    and
    $\km[\syscall]({\frame {\cmd'} {\regmap'} {\km[\syscall]}:\st'})$.
    We need to show that
    \[
      \sstep {\sframe {\frame {\cmd'} {\regmap'} {\km[\syscall]}:\st'} {\bm {\buf} {(\lay \lcomp \omega)}} \bot}
      {\sframe{\st}{\bm{\buf'}{(\lay \lcomp \omega')}}{\bot}}{\dstep} \obs,
    \]
    and that
    $\overline {\bm{\buf'}{(\lay \lcomp \omega')}} = {\lay \lcomp
      \rfs'}$.  The proof proceeds by cases on the $\ato$ relation.
    Many cases are similar, so we present only the most significant
    ones.  Here, we also note that the last step cannot
    be derived using \ref{AL:Poison} or \ref{AL:Obs}, because this would contradict 
    \Cref{rem:stackinv1}, which shows that $\km(\frame {\cmd'} {\regmap'} {\opt'}: \st')$.
    In particular, this means that $\cmd'$ cannot be an attacker program.

    \begin{proofcases}
      \proofcase{\ref{AL:Fence}} In this case, the assumption rewrites as follows:
      \begin{equation*}
        \esstep  {\conf {\frame{\cfence\sep \cmdtwo}{\regmap}{\km[\syscall]}:\st',  {\lay \lcomp \rfs''}, \Ds, \Os}}{} \\
        {\conf {\frame{\cmdtwo}{
              \regmap
            }{\km[\syscall]}:\st',  {\lay \lcomp \rfs''}, \Ds, \Os}}.
      \end{equation*}
      The goal is to show that there exist an observation $\obs$, a buffer $\buf'$ and a store $\omega'$ such that:
      \begin{equation*}
        \sstep  {\sframe {\frame{\cfence\sep \cmdtwo}{\regmap}{\km[\syscall]}:\st'}{\bm{\buf}{(\lay \lcomp \omega)}}{\bot}}{} \dstep \obs \\
        {\sframe {\frame{\cmdtwo}{
              \regmap
            }{\km[\syscall]}:\st'}{ \bm{\buf'}{(\lay \lcomp \omega')}}{\bot}},
      \end{equation*}
      and
      $\overline {\bm{\buf'}{(\lay \lcomp \omega')}} = {\lay \lcomp
        \rfs''}$. Suitable buffers and stores for the target
      configuration are $\buf'=\nil$ and
      $\omega'=\overline {\bm{\buf} {(\lay \lcomp \omega)}}$, and the
      transition produces the observation $\onone$. Since
      ${\lay \lcomp \rfs''}=\overline {\bm{\nil}{\overline {\bm{\buf}
            {(\lay \lcomp \omega)}}}} $, which follows from the IH and
      the definition of $\overline \cdot$ on memories, the conclusion
      on stores holds.  Finally, we must prove that
      $\km[\syscall]({\frame{\cmdtwo}{\regmap}{\km[\syscall]}:\st'})$.
      This follows directly from the IH and the fact that $\cmdtwo$ is
      a sub-term of $\cfence\sep \cmdtwo$.

      \proofcase{\ref{AL:Load}} In this case, the assumption rewrites as follows:
      \begin{equation*}
        \esstep  {\conf {\frame{\cmemread \vx \expr\sep \cmdtwo}{\regmap}{\km[\syscall]}:\st',  {\lay \lcomp \rfs''}, \Ds, \Os}}{} \\
        {\conf {\frame{\cmdtwo}{
              \update\regmap\vx{{\lay \lcomp \rfs''}(\toAdd{\sem \expr_{\regmap, \lay}})}
            }{\km[\syscall]}:\st',  {\lay \lcomp \rfs''}, \Ds, \Os}}.
      \end{equation*}
      Our goal is to show that there exist an observation $\obs$ and a buffer $\buf'$ such that:
      \begin{multline*}
        \sstep  {\sframe {\frame{\cmemread \vx \expr\sep \cmdtwo}{\regmap}{\km[\syscall]}:\st'}{\bm{\buf}{(\lay \lcomp \omega)}}{\bot}}{} \dstep \obs \\
        {\sframe {\frame{\cmdtwo}{
              \update\regmap\vx{{\lay \lcomp \rfs''}(\toAdd{\sem \expr_{\regmap, \lay}})}
            }{\km[\syscall]}:\st'}{ \bm{\buf'}{(\lay \lcomp \omega')}}{\bot}},
      \end{multline*}
      where
      $\overline {\bm{\buf'}{(\lay \lcomp \omega)}} = {\lay \lcomp
        \rfs''}$. We choose $\buf'=\buf$ and $\omega=\omega'$ and the
      equality of the stores is a consequence of the IH. Because of
      our assumption on the target configuration, the only applicable rule
      is \ref{SI:Load-Step}, which produces the observation
      $\omem {\toAdd{\sem \expr_{\regmap, \lay}}}$.
      To prove the applicability of the speculative rule, we need to show that
      ${\lay \lcomp \rfs''}(\toAdd{\sem \expr_{\regmap, \lay}})=
      \bufread {\bm{\buf'}{(\lay \lcomp \omega)}} {\toAdd{\sem
          \expr_{\regmap, \lay}}} 0$, which follows from
      \Cref{rem:bufreadoverline}. Finally, we must observe that
      \[
        \km[\syscall]({\frame{\cmdtwo}{
            \update\regmap\vx{{\lay \lcomp \rfs''}(\toAdd{\sem \expr_{\regmap, \lay}})}
          }{\km[\syscall]}:\st'}),
      \]
      which is a direct consequence of the IH
      and of the fact that $\cmdtwo$ is a sub-term of $\cmemread \vx \expr\sep \cmdtwo$.

      \proofcase{\ref{AL:Store}} In this case, the assumption rewrites as follows:
      \begin{equation*}
        \esstep {\conf {\frame{\cmemass\expr\exprtwo\sep \cmdtwo}{\regmap}{\km[\syscall]}:\st',  {\lay \lcomp \rfs''}, \Ds, \Os}}{} \\
        {\conf {\frame{\cmdtwo}{\regmap}
            {\km[\syscall]}:\st',  \update{{\lay \lcomp \rfs''}} {\toAdd{\sem \expr_{\regmap, \lay}}} {\sem \exprtwo_{\regmap, \lay}}, \Ds, \Os}}.
      \end{equation*}
      The goal is to show that there exist an observation $\obs$ and a buffer $\buf'$ such that:
      \begin{equation*}
        \sstep  {\sframe {\frame{\cmemass \vx \expr\sep \cmdtwo}{\regmap}{\km[\syscall]}:\st'}{\bm{\buf}{(\lay \lcomp \omega)}}{\bot}}{} \dstep \obs \\
        {\sframe {\frame{\cmdtwo}{
              \regmap
            }{\km[\syscall]}:\st'}{ \bm{\buf'}{(\lay \lcomp \omega)}}{\bot}}
      \end{equation*}
      and we have $\overline {\bm{\buf'}{(\lay \lcomp \omega)}} = \update{{\lay \lcomp \rfs''}} {\toAdd{\sem \expr_{\regmap, \lay}}} {\sem \exprtwo_{\regmap, \lay}}$.
      By applying the rule \ref{SI:Store} to the speculative configuration, we observe that:
      \(
        \buf' = \bitem{\toAdd{\sem \expr_{\regmap, \lay}}} {\sem \exprtwo_{\regmap, \lay}} \cons \buf
      \).
      Therefore, the conclusion
      \[
        \overline {\bm{\bitem{\toAdd{\sem \expr_{\regmap, \lay}}} {\sem \exprtwo_{\regmap, \lay}} \cons \buf}{(\lay \lcomp \omega)}}= \update{{\lay \lcomp \rfs''}} {\toAdd{\sem \expr_{\regmap, \lay}}} {\sem \exprtwo_{\regmap, \lay}}
      \]
      is a direct consequence of the IH and the definition of $\overline \cdot$.
      
      \proofcase{\ref{AL:Call}} In this case, we observe that both
      this rule and \ref{SI:Call} share the same premises, meaning
      that \ref{SI:Call} can also be applied. By analyzing these
      rules, we deduce that if
      $\cmd' = \ccall \exprtwo {\expr_1, \dots, \expr_k} \sep
      \cmdtwo$, then the resulting target configurations are respectively:
      \[
        {\conf{\frame {{\lay \lcomp \rfs''}(\sem \exprtwo_{\regmap', \lay})} {\regmap_0'} {\km[\syscall]}: \frame {\cmdtwo} {\regmap'} {\km[\syscall]}: \st', {\lay \lcomp \rfs''}, \Ds, \Os}}
      \]
      and
      \[
        {\sframe{\frame {{\lay \lcomp \rfs''}(\sem \exprtwo_{\regmap', \lay})} {\regmap_0'} {\km[\syscall]}: \frame {\cmdtwo} {\regmap'} {\km[\syscall]}: \st'}{\bm\buf{(\lay \lcomp \omega)}}\bot},
      \]
      where $\regmap_0'=\update{\regmap_0}{\vx_1,\dots, \vx_k}{\sem {\expr_1}{\regmap', \lay}, \dots,  {\expr_k}{\regmap', \lay}}$.
      The conclusion regarding the buffered memory follows directly from the IH. Moreover, we can deduce that
      \[
        \km[\syscall](\frame {{\lay \lcomp \rfs''}(\sem \exprtwo_{\regmap', \lay})} {\regmap_0'} {\km[\syscall]}: \frame {\cmdtwo} {\regmap'} {\km[\syscall]}: \st')
      \]
      holds by the IH and the premises of the rule \ref{AL:Call}, which ensure that  $\sem \exprtwo_{\regmap', \lay}\in \underline \lay(\Fnk)$. This, in turn, implies the existence of some procedure $\fn \in \Fnk$ such that $\lay(\fn)=\sem \exprtwo_{\regmap', \lay}$. By the definition of $\cdot \lcomp \cdot$, we conclude that
      \[
        {\lay \lcomp \rfs''}(\sem \exprtwo_{\regmap', \lay}) =\omega(\fn) = \overline \rfs(\fn).
      \]
      Thus, it remains only to observe that $\km[\syscall](\overline \rfs(\fn))$ holds by the definition of the system, completing the proof.
    \end{proofcases}
  \end{proofcases}
\end{proof}

\begin{lemma}
  \label{lemma:thereisasyscall2}
  Consider a system $\system = (\rfs, \syss, \caps)$ and a configuration  
  \(
  \sframe {\frame \cmd \regmap \um} {{\psi}} {\bot}
  \)
  and $\psi =  \lay \lcomp {\rfs'}$
  with $\rfs' \eqon\Fn \rfs$. Let there be a speculative stack  
  \(
  \cfstack = \sframe {\frame \cmd {\regmap_1} {\km[\syscall]}:\st_1} {\psi_1} {\boolms}:\cfstack'
  \),
  a sequence of directives $\Ds$ that does not contain any $\dbt$ directives, a sequence of observations $\Os$, and a layout $\lay$ such that  
  \[
    \nsstep \nat {\sframe {\frame \cmd \regmap \um} {{\psi}} {\bot}} \cfstack \Ds \Os.
  \]
  Then, there exist a configuration  
  \(
  \sframe {\frame {\syss(\syscall)} {\regmap_2} {\km[\syscall]}:\st_2} {\psi_2} {\boolms'}
  \),
  a sequence of directives $\Ds'$, a sequence of observations $\Os'$, and a natural number $\nat' \leq \nat$ such that  
  \begin{equation*}
    \nsstep {\nat'} {\sframe {\frame {\syss(\syscall)} {\regmap_2} {\km[\syscall]}} {\psi_2} {\boolms'}}{}{\Ds'}{\Os'}\\   
    {\sframe {\frame \cmd {\regmap_1} {\km[\syscall]}:\overline \st} {\psi_1} {\boolms}:\overline \cfstack},
  \end{equation*}
  for some stacks $\overline \cfstack$ and $\overline \st$. Such that, in particular:  
  \begin{itemize}
  \item $\km[\syscall](\frame \cmd {\regmap_1} {\km[\syscall]}:\overline \st)$ holds.  
  \item There exists a stack $\st_2$ such that $\um(\st_2)$ and $\st_1 = \overline \st : \st_2$.  
  \end{itemize}

\end{lemma}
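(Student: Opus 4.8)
The plan is to mirror the structure of the classic-semantics result \Cref{lemma:thereisasyscall1}, proceeding by induction on the reduction length $\nat$ and performing a case analysis on the rule applied in the final transition that reaches $\cfstack$. The base case $\nat = 0$ is vacuous: at step $0$ the only frame is $\frame{\cmd}{\regmap}{\um}$, whose top frame runs in user mode, contradicting the hypothesis that the top frame of $\cfstack$ runs in $\km[\syscall]$. Since $\Ds$ contains no $\dbt$ directive, rules \ref{SI:Backtrack-Top} and \ref{SI:Backtrack-Bot} never fire, so the speculative stack only ever grows; this is exactly what keeps the bookkeeping of the accompanying stack $\overline{\cfstack}$ tractable. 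I would also carry along, as an invariant proved alongside the main claim and appealing to \Cref{rem:stackinv1}, that in every configuration of the actual reduction the frame stack decomposes as a $\km[\syscall]$-prefix followed by a fixed user-mode tail $\st_2$, and that $\psi$ always has the form $\lay \lcomp \rfs'$ with $\rfs' \eqon{\Fn} \rfs$.

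For the inductive step I would split on the last rule. The genuinely new, ``base'' sub-case is \ref{SI:System-Call} invoking $\syscall$ for the first time from user mode: here the freshly pushed top frame is precisely $\frame{\syss(\syscall)}{\regmap_2}{\km[\syscall]}$ sitting above a user-mode tail, so I take $\nat' = 0$, $\overline{\st} = \nil$, $\overline{\cfstack} = \nil$, reading off $\regmap_1 = \regmap_2$, $\psi_1 = \psi_2$, $\boolms = \boolms'$ directly, and $\st_1 = \st_2$ is user-mode. Every other rule is an ordinary kernel-computation step: the deterministic ones \ref{SI:Op}, \ref{SI:Skip}, \ref{SI:Load-Step}, \ref{SI:Store}, \ref{SI:Call-Step}, \ref{SI:If}, \ref{SI:Loop-Step}, \ref{SI:Fence}, and the pop \ref{SI:Pop} uncovering a further $\km[\syscall]$ frame. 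For each, I apply the IH to the first $\nat$ steps to obtain a fresh kernel reduction matching $\cfstack$ at step $\nat$ with the user tail $\st_2$ stripped from every frame stack, and then replay the very same rule in the fresh reduction. The crucial observation, analogous to \Cref{lemma:contextpluggingtech}, is that each such rule acts only on the top frame, on the buffered memory $\psi$, and on the flag $\boolms$, none of which depend on the suffix $\st_2$; hence the rule stays applicable after stripping $\st_2$ and yields the matching target. A nested procedure call or system call keeps the mode $\km[\syscall]$ fixed by the side conditions of \ref{SI:System-Call} and \ref{SI:Call}, preserving the layering; a pop that would uncover a user-mode frame is excluded, as it would leave a user-mode top frame and contradict the hypothesis on $\cfstack$.

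The main obstacle I anticipate is the bookkeeping for the speculative configurations pushed onto $\cfstack$ by the speculation rules \ref{SI:Load}, \ref{SI:Call}, \ref{SI:If-Branch} and \ref{SI:Loop-Branch}: each records a backtrack copy of the current configuration, so I must show the fresh kernel reduction pushes the correspondingly stripped copies and that $\overline{\cfstack}$ stays in lockstep with $\cfstack'$. To make the induction carry this, I would strengthen the statement to a whole-stack invariant asserting that the entire fresh speculative stack equals $\cfstack$ with the user-mode tail $\st_2$ removed from every frame stack, with identical memories and flags in every layer; the desired conclusion is then just its top-of-stack instance. Showing this correspondence is preserved by the pushes—while noting that, since backtracking never fires, no pushed layer is ever inspected again—is the delicate part; the remaining cases reduce to the same local, context-insensitive replay argument used for the non-speculative frames.
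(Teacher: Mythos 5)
Your proposal follows essentially the same route as the paper's proof: induction on $\nat$ with a vacuous base case, case analysis on the last applied rule, the \ref{SI:System-Call}-from-user-mode case serving as the anchor with $\nat'=0$ and $\overline{\st}=\nil$, and all other kernel-mode rules handled by applying the IH to the first $\nat$ steps and replaying the same (context-insensitive) rule in the stripped reduction, with \Cref{rem:simispreservedspec} supplying the mode invariants. The whole-stack lockstep invariant you propose for the pushed speculative copies is exactly what the paper's proof establishes implicitly in the \ref{SI:Load} and \ref{SI:Call} cases, so no substantive difference remains.
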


\begin{proof}
  By induction on $\nat$.
  \begin{proofcases}
    \proofcase{0} The claim holds by vacuity of the premise.
    \proofcase{$\nat+1$} The premise can be rewritten as follows:
    \[
      \nsstep \nat {\sframe {\frame \cmd \regmap \um} {{\psi}} {\bot}} T \Ds \Os \sto \dir  \obs \cfstack.
    \]
    We proceed by cases analysis on the rule that has been applied to show the last transition.
    Many of these cases are structurally similar, so we focus on the most representative and interesting ones.  
    Since $\Ds$ does not contain any $\dbt$ directives, we can assume without loss of generality that backtracking rules are not involved. 
    \begin{proofcases}
      \proofcase{\ref{SI:Op}}
        In this case we can assume that
        \[
          T = \sframe {\frame {\vx \ass \expr\sep\cmdtwo} {\regmap_3} {\opt}:\st_3} {{\psi_3}} {\boolms''}: T'.
        \]
        By introspection of the rule, we observe that the mode flag of
        the target configuration remains unchanged from the source
        configuration, implying that $\opt = \km[\syscall]$.  This
        allows us to apply the induction hypothesis (IH) to the first
        $\nat$ steps, yielding:
        \begin{equation*}
          \nsstep {\nat'} {\sframe {\frame {\syss(\syscall)} {\regmap_2} {\km[\syscall]}} {\psi_2} {\boolms'}}{}{\Ds'}{\Os'}\\
          {\sframe {\frame {\vx \ass \expr\sep\cmdtwo} {\regmap_3} {\opt}:\overline \st_3} {{\psi_3}} {\boolms''}: \overline T'},
        \end{equation*}
        where $\overline \st_3$ is a prefix of $\st_3$
        such that $\km[\syscall](\overline \st_3)$.
        Additionally, there exists a stack $G$ satisfying $\um(G)$ and $\st_3 = \overline \st_3 : G$.
        Then, we observe that:
        \begin{multline*}
          \sstep {\sframe {\frame {\vx \ass \expr\sep\cmdtwo} {\regmap_3} {\km[\syscall]}:\overline \st_3} {{\psi_3}} {\boolms''}: \overline T'}{}  \dstep\onone
          {\sframe {\frame {\cmdtwo} {\update{\regmap_3}\vx {\sem\exprtwo_{\regmap_3, \lay}}} {\km[\syscall]}:\overline \st_3} {{\psi_3}} {\boolms''}: \overline T'},
        \end{multline*}
        and similarly:
        \begin{multline*}
          \sstep {\sframe {\frame {\vx \ass \expr\sep\cmdtwo} {\regmap_3} {\km[\syscall]}:\st_3} {{\psi_3}} {\boolms''}: T'}{}  \dstep\onone
          {\sframe {\frame {\cmdtwo} {\update{\regmap_3}\vx {\sem\exprtwo_{\regmap_3, \lay}}} {\km[\syscall]}:\st_3} {{\psi_3}} {\boolms''}: T'},
        \end{multline*}
        These transitions establish respectively the existence of the
        reduction sequence starting from
        ${\sframe {\frame {\syss(\syscall)} {\regmap_2}
            {\km[\syscall]}} {\psi_2} {\boolms'}}$ and the structure
        of $\cfstack$.  We already established the properties of
        $\st_3, \overline \st_3$ and $G$ by the IH, so we conclude by
        observing that $\km[\syscall](\cmdtwo)$ follows from
        $\km[\syscall](\vx \ass \expr\sep\cmdtwo)$.

        \proofcase{\ref{SI:Load}}
        In this case we can assume that
        \[
          T = \sframe {\frame {\cmemread[\lbl] \vx \expr\sep\cmdtwo} {\regmap_3} {\opt}:\st_3} {{\psi_3}} {\boolms''}: T'.
        \]
        By introspection of the rule $\ref{SI:Load}$, we observe that
        the execution mode flag of the target configuration remains
        the same as that of the source configuration.  Thus, we
        conclude that $\opt = \km[\syscall]$. This allows us to apply
        the IH to the first $\nat$ steps,
        yielding:
        \begin{equation*}
          \nsstep {\nat'} {\sframe {\frame {\syss(\syscall)} {\regmap_2} {\km[\syscall]}} {\psi_2} {\boolms'}}{}{\Ds'}{\Os'}\\
          {\sframe {\frame {\cmemread[\lbl] \vx \expr\sep\cmdtwo} {\regmap_3} {\opt}:\overline \st_3} {{\psi_3}} {\boolms''}: \overline T'},
        \end{equation*}
        where $\overline \st_3$ is a prefix of $\st_3$ such that $\km[\syscall](\overline \st_3)$.  
        Furthermore, there exists a stack $G$ satisfying $\um(G)$ and $\st_3 = \overline \st_3 : G$.
        Next,  we define $\add$ as the value of $\toAdd{\sem \expr_{\regmap_3, \lay}}$ and  
        $(\val, \bool')$ as the pair returned by $\bufread{{\psi_3}} {\add} {i}$.
        We then observe the transitions:
        \begin{multline*}
          \sstep {\sframe {\frame {\cmemread[\lbl] \vx \expr\sep\cmdtwo} {\regmap_3} {\opt}:\st_3} {{\psi_3}} {\boolms''}: T'}{}  {\dload[\lbl]{i}}{\omem \add}\\
          \sframe {\frame {\cmdtwo} {\update{\regmap_3}\vx {\val}} {\opt}: \st_3} {{\psi_3}} {\boolms''\lor \bool'} :\\
          \sframe {\frame {\cmemread[\lbl] \vx \expr\sep\cmdtwo} {\regmap_3} {\opt}: \st_3} {{\psi_3}} {\boolms''}:  T'
        \end{multline*}
        and
        \begin{multline*}
          \sstep {\sframe {\frame {\cmemread[\lbl] \vx \expr\sep\cmdtwo} {\regmap_3} {\opt}:\overline \st_3} {{\psi_3}} {\boolms''}: \overline T'}{}  {\dload[\lbl] i}{\omem \add}\\
          \sframe {\frame {\cmdtwo} {\update{\regmap_3}\vx {\val}} {\opt}: \overline \st_3} {{\psi_3}} {\boolms''\lor \bool'} :\\
          \sframe {\frame {\cmemread[\lbl] \vx \expr\sep\cmdtwo} {\regmap_3} {\opt}: \overline \st_3} {{\psi_3}} {\boolms''}:  \overline T'.
        \end{multline*}
        As before, these transitions establish respectively the shape of the target stack $\cfstack$ and the existence of the reduction sequence starting from ${\sframe {\frame {\syss(\syscall)} {\regmap_2} {\km[\syscall]}} {\psi_2} {\boolms'}}$.
        We already established the properties of $\st_3, \overline \st_3$ and $G$ by the IH,
        so we conclude by observing that $\km[\syscall](\cmdtwo)$ follows from $\km[\syscall](\vx \ass \expr\sep\cmdtwo)$.

        \proofcase{\ref{SI:Call}} In this case, we assume that 
        \[
          T = \sframe {\frame {\ccall \expr {\exprtwo_1, \dots, \exprtwo_k}\sep\cmdtwo} {\regmap_3} {\opt}:\st_3} {{\psi_3}} {\boolms''}: T'.
        \]
        By analyzing the applied rule and noting that the execution
        flag of the target configuration is $\km[\syscall]$, we deduce
        that the same must hold for the source configuration. Thus, we
        conclude that $\opt = \km[\syscall]$, allowing us to apply the
        IH on the first $\nat$ steps, yielding:
        \begin{equation*}
          \nsstep {\nat'} {\sframe {\frame {\syss(\syscall)} {\regmap_2} {\km[\syscall]}} {\psi_2} {\boolms'}}{}{\Ds'}{\Os'}\\
          {\sframe {\frame {\ccall \expr {\exprtwo_1, \dots, \exprtwo_k}\sep\cmdtwo} {\regmap_3} {\opt}:\overline \st_3} {{\psi_3}} {\boolms''}: \overline T'},
        \end{equation*}
        where $\overline \st_3$ is a prefix of $\st_3$ such that
        $\km[\syscall](\overline \st_3)$ and there exists a stack $G$
        such that $\um(G)$ and $\st_3=\overline \st_3:G$.  Next, we
        define $\add$ as the value of
        $\toAdd{\sem \expr_{\regmap_3, \lay}}$.  From the premises of
        the rule \ref{SI:Call} and the definition of
        $\underline{\lay}$, we deduce that there exists some
        $\fn \in \Fn[\km]$ such that $\lay(\fn) = \add$.
        By \Cref{rem:simispreservedspec}, we infer that
        $\psi_3 = \bm{\buf'}{\lay\lcomp \rfs''}$ for some
        $\rfs'\eqon{\Fn}\rfs$.  From the definition of
        $\cdot \lcomp \cdot$ and the previous observations, we
        conclude that the executed procedure is precisely $\rfs(\fn)$.

        We now define $\overline{\regmap}$ as the register map
        obtained by evaluating the argument expressions in
        $\regmap_3$ and updating the argument registers of
        $\regmap_0$ accordingly.  By using the rule \ref{SI:Call}, we
        establish the existence of the following transitions:
        \begin{multline*}
          \sstep {\sframe {\frame {\ccall \expr {\exprtwo_1, \dots, \exprtwo_k}\sep\cmdtwo} {\regmap_3} {\km[\syscall]}: \st_3} {{\psi_3}} {\boolms''}:  T'}{}  {\dstep}{\ojump \add}\\
          {\sframe {\frame {\rfs(\fn)} {{\overline \regmap}} {\km[\syscall]}:\frame {\cmdtwo} {{\regmap_3}} {\km[\syscall]}: \st_3} {{\psi_3}} {\boolms''}:  T'},
        \end{multline*}
        and
        \begin{multline*}
          \sstep {\sframe {\frame {\ccall \expr {\exprtwo_1, \dots, \exprtwo_k}\sep\cmdtwo} {\regmap_3} {\km[\syscall]}:\overline \st_3} {{\psi_3}} {\boolms''}: \overline T'}{}  {\dstep}{\ojump \add}\\
          {\sframe {\frame {\rfs(\fn)} {{\overline \regmap}} {\km[\syscall]}:\frame {\cmdtwo} {{\regmap_3}} {\km[\syscall]}:\overline \st_3} {{\psi_3}} {\boolms''}: \overline T'}.
        \end{multline*}
        As in previous cases, these transitions establish both the
        structure of the target stack $\cfstack$ and the existence of
        the reduction sequence originating from
        ${\sframe {\frame {\syss(\syscall)} {\regmap_2}
            {\km[\syscall]}} {\psi_2} {\boolms'}}$.  We have
        already established the required properties of
        $\st_3, \overline{\st}''', G$ using the IH.  To lift these
        properties to satisfy the claim, we note that
        $\km[\syscall](\cmdtwo)$ follows from
        $\km[\syscall](\vx \ass \expr\sep\cmdtwo)$, and that
        $\km[\syscall](\rfs(\fn))$ holds by definition of $\rfs$,
        since $\fn \in \Fnk$.

        \proofcase{\ref{SI:System-Call}} In this case we can assume that
        \[
          T= \sframe {\frame {\csyscall \syscalltwo {\exprtwo_1, \dots, \exprtwo_k}\sep\cmdtwo} {\regmap_3} {\opt}:\st_3} {{\psi_3}} {\boolms''}: T'.
        \]
        We proceed by cases on $\opt = \um$:
        \begin{proofcases}
          \proofcase{$\opt = \um$}
          By introspection of the rule and the target configuration, we deduce that
          $\syscalltwo = \syscall$,  and that
          $\cfstack$ has the following shape:
          \[
            \sframe {\frame {\syss(\syscall)}{\regmap_0'}{\km[\syscall]}:\frame {\cmdtwo} {\regmap_3} {\um}:\st_3} {{\psi_3}} {\boolms''}: T',
          \]
          where $\regmap_0'$ is obtained by updating the argument
          registers of $\regmap_0$ with
          the evaluation of $\expr_1, \dots, \expr_k$. 
          To show the claim, it suffices to set $\nat'=0$, $\Ds'=\nil$, $\Os=\nil$,
          $\overline \st = \nil$, $\st_2=\st_1$.
          Additionally, we note that
          $\km[\syscall]({\frame {\syss(\syscall)}{\regmap_0'}{\km[\syscall]}})$
          holds for definition of $\syss$. Moreover $\um({\frame {\cmdtwo} {\regmap_3} {\opt}:\st_3})$ is a consequence of \Cref{rem:simispreservedspec}.
          \proofcase{$\opt \neq \um$} This case is analogous to the case of procedure calls. 
        \end{proofcases}
        \proofcase{\ref{SI:Pop}} In this case we assume:
        \[
          T = \sframe {\frame {\cnil} {\regmap_3} {\opt}:\st_3} {{\psi_3}} {\boolms''}: T'.
        \]
        By introspection of the rule, we deduce that
        \[
          \st_3= \frame \cmd {\update{\regmap_1}\ret\val} {\km[\syscall]}:\st_1
        \]
        for some $\val$.
        From this observation, and using \Cref{rem:simispreservedspec}, we deduce that $\opt = \km[\syscall]$. This allows us to apply the IH to the first $\nat$ steps, and we obtain:
        \begin{equation*}
          \nsstep {\nat'} {\sframe {\frame {\syss(\syscall)} {\regmap_2} {\km[\syscall]}} {\psi_2} {\boolms'}}{}{\Ds'}{\Os'}\\
          {\sframe {\frame {\cnil} {\regmap_3} {\opt}:\overline \st_3} {{\psi_3}} {\boolms''}: \overline T'},
        \end{equation*}
        where $\overline \st_3$ is a prefix of $\st_3$ such that $\km[\syscall](\overline \st_3)$ and there is a stack $G$ such that $\um(G)$ and $\st_3 = \overline \st_3:G$.
        This shows that, in particular,
        the topmost frame of $\overline \st_3$ must also be
        $\frame \cmd {\update{\regmap_1}\ret\val} {\km[\syscall]}$,
        so we have $\overline \st_3 = \frame \cmd {\update{\regmap_1}\ret\val} {\km[\syscall]}: \overline \st_1$ for some $\overline \st_1$.
        Thanks to this observation, and
        by introspection of the rule \ref{SI:Pop}, we observe that:
        \begin{equation*}
          \sstep {\sframe {\frame {\cnil} {\regmap_3} {\km[\syscall]}:\frame \cmd {\update{\regmap_1}\ret\val} {\km[\syscall]}:\st_1} {{\psi_3}} {\boolms''}: T'}{}  {\dstep}{\onone}\\
          {\sframe {\frame \cmd {\regmap_1} {\km[\syscall]}:\st_1} {{\psi_3}} {\boolms''}:  T'},
        \end{equation*}
        and also:
        \begin{equation*}
          \sstep {\sframe {\frame {\cnil} {\regmap_3} {\km[\syscall]}:\frame \cmd {\update{\regmap_1}\ret\val} {\km[\syscall]}: \overline \st_1} {{\psi_3}} {\boolms''}: \overline T'}{}  {\dstep}{\onone}\\
          {\sframe {\frame \cmd {\regmap_1} {\km[\syscall]}:\overline \st_1} {{\psi_3}} {\boolms''}:  \overline T'}.
        \end{equation*}
        Notice that we have $\km[\syscall](\overline \st_3)$ and $\overline \st_3 = \frame \cmd {\update{\regmap_1}\ret\val} {\km[\syscall]}: \overline \st_1$, so we deduce  $\km[\syscall](\overline \st_1)$. Also, observe $\st_1 = \overline \st_1 :G$. This concludes the proof.
      \end{proofcases}    
  \end{proofcases}
\end{proof}

\subsection{Proofs from \Cref{sec:sksenforcement} }
\label{sec:proofs3}

In this section, the language is extended with the instruction
$\cfence$ and with non-speculative call instructions. In order to make
our non-speculative semantics compatible with the extended language,
we enrich the non-speculative semantics of
\Cref{fig:stepexcerpt1,fig:stepexcerpt2} with the following rule for the
$\cfence$ instruction:
\[
  \Infer[WL][Fence]
  { \step
    {\ntc{\cfence\sep\cmd}{\regmap}{\opt}{\st}{\mem}}
    {\ntc{\cmd}{\regmap}{\opt}{\st}{\mem}}
  }
  {
  }
\]
and the following rules for safe calls:

\[
    \Infer[WL][SCall]{
    \step
    {\ntc{\cscall{\expr}{\vec \exprtwo}\sep\cmd}{\regmap}{\opt}{\st}{\mem}}
    {
      \ntc
      {\mem(\add)}
      {\regmap_0[\vec \vx \upd \sem{\vec \exprtwo}_{\regmap,\lay}]}
      {\opt}
      {\frame{\cmd}{\regmap}{\opt} : \st}
      {\mem}
    }
  }
  {\toAdd{\sem{\expr}_{\regmap, \lay}}=\add &
    \add \in \underline \lay(\Fn[\opt]) &
    \fbox{$\opt = \km[\syscall] \Rightarrow \add \in \underline \lay(\caps(\syscall))$}
  }
\]

\[
  \Infer[WL][SCall-Unsafe]{
    \step
    {\ntc{\cscall{\expr}{\vec \exprtwo}\sep\cmd}{\regmap}{\km[\syscall]}{\st}{\mem}}
    {\unsafe}
  }
  {
    \toAdd{\sem{\expr}_{\regmap, \lay}} = \add &
    \add \in \underline \lay(\Fn[\km]) &
    \fbox{$\add \not\in \underline \lay(\caps(\syscall))$}
  }
\]

\[
  \Infer[WL][SCall-Error]
  {\step
    {\ntc{\cscall \expr {\vec \exprtwo}\sep\cmd}{\regmap}{\opt}{\st}{\mem}}
    {\err}
  }
  {\toAdd{\sem\expr_{\regmap, \lay}} = \add &
    \add \notin \underline \lay(\Fn[\opt])
  }
\]

\noindent
The following remark is useful for the proof of \Cref{prop:mitigation}

\begin{remark}
  \label{rem:skstrans}
  If $\system$ is kernel safe, $\systrans$ is user-space semantics preserving, and
  $\systrans(\system) = \system'$, then $\system'$ is also kernel safe.
\end{remark}

\begin{proof}[Proof of \Cref{prop:mitigation}]
  We fix a system $\system = (\rfs, \syss, \caps) = \zeta(\system')$
  and a transformation $\systrans$ that preserves the system's
  semantics and enforces speculative safety. Claim (ii) follows
  directly from \Cref{lemma:ctxsemequiv}, so we focus on proving claim
  (i). The proof proceeds by contraposition.
  We assume there exists an unprivileged command $\speccmd \in \SpCmd$, an initial register map $\regmap$, a number of steps $\nat$, and a layout such that:
  \[
    \nesstep \nat {\conf{\frame \speccmd \regmap \um, \lay \lcomp \rfs, \nil, \nil}} \unsafe.
  \]
  First, we observe that $\nat \neq 0$. By inspecting the rules of the
  semantics, we note that the last rule applied must be one of the
  following: \ref{AL:Load-Unsafe}, \ref{AL:Store-Unsafe},
  \ref{AL:Call-Unsafe}, or \ref{AL:Spec-Unsafe}. We proceed by case
  analysis on these rules. In particular, the proofs for the first
  three rules are analogous, so we take the case of the rule
  \ref{AL:Load-Unsafe} as an example.
  \begin{proofcases}
    \proofcase{\ref{AL:Load-Unsafe}} By introspecting the rule,
    we deduce that there exists a configuration
    \[
      \conf{\frame {\cmemread \vx \expr} {\regmap'}{\km[\syscall]}:\st, \lay \lcomp\rfs', \Ds, \Os} 
    \]
    such that
    \begin{equation*}
      \nesstep \nat {\conf{\frame \speccmd \regmap \um, \lay \lcomp \rfs,\nil, \nil}}
      {\conf{\frame {\cmemread \vx \expr} {\regmap'}{\km[\syscall]}:\st, \lay \lcomp\rfs', \Ds, \Os}}\\ \ato \unsafe,
    \end{equation*}
    with $\rfs' \eqon{\Fn} \rfs$.  Using \Cref{rem:stackinv1}, we
    observe that $\st$ is the concatenation of a kernel mode stack
    $\st'$ and a user mode stack $\st''$. Therefore, we can apply
    \Cref{lemma:thereisasyscall1} and deduce that there exists a
    configuration
    \[
      {\conf{\frame {\syss(\syscall)} {\regmap_0[\vx_1, \dots, \vx_k\upd \val_1, \dots, \val_k]}{\km[\syscall]}, \lay \lcomp\rfs'', \Ds', \Os'}},
    \]
    with $\rfs'' \eqon{\Fn} \rfs$, a prefix $\st'$ of $\st$, and $\nat' \in \Nat$, such that:
    \begin{multline*}
      \lay \red {\conf{\frame {\syss(\syscall)} {\regmap_0[\vx_1, \dots, \vx_k\upd \val_1, \dots, \val_k]}{\km[\syscall]}, \lay \lcomp\rfs'', \Ds', \Os'}} \\ \ato^{\nat'}
      {\conf{\frame {\cmemread \vx \expr} {\regmap'}{\km[\syscall]}:\st', \lay \lcomp\rfs', \Ds, \Os}} \ato \unsafe.
    \end{multline*}
    With the application of \Cref{lemma:syscsym}, we deduce:
    \begin{equation*}
      \lay \red {\conf{\frame {\syss(\syscall)} {\regmap_0[\vx_1, \dots, \vx_k\upd \val_1, \dots, \val_k]}{\km[\syscall]}, \lay \lcomp\rfs''}} \\ \to^{\nat'}
      {\conf{\frame {\cmemread \vx \expr} {\regmap'}{\km[\syscall]}:\st', \lay \lcomp\rfs'}} \to \unsafe.
    \end{equation*}
    This shows that $\system$ is not kernel safe, which contradicts
    \Cref{rem:skstrans}, which states that $\system$ is safe.
    
    \proofcase{\ref{AL:Spec-Unsafe}} By inspecting the rule, we
    deduce that there is a hybrid configuration
    \[
      {\hconf\unsafe {\frame{\adversary}{\regmap}{\opt}\cons\st}\Ds\Os}
    \]
    such that 
    \[
      \esstep {\hconf\unsafe {\frame{\adversary}{\regmap}{\opt}\cons\st}\Ds\Os} \unsafe
    \]
    and
    \[
      \nesstep \nat  {\conf{\frame \speccmd \regmap \um, \lay \lcomp \rfs, \nil, \nil}} 
      {\hconf\unsafe {\frame{\adversary}{\regmap}{\opt}\cons\st}\Ds\Os}
    \]
    Using \Cref{rem:thereisaspecon}, we deduce that there exists a
    configuration
    \[
      {\sframe{\frame {\cmd} {\regmap'}{\um}}{\bm{\buf}{\lay \lcomp \rfs'}}{\bot}},
    \]
    with $\rfs' \eqon{\Fn} \rfs$, a sequence of directives $\Ds'$, a
    sequence of observations $\Os'$, and a natural number
    $\nat' \le \nat$, such that:
    \begin{equation*}
      \nsstep {\nat'}
      {\sframe{\frame {\cmd} {\regmap'}{\um}}{\bm{\buf}{\lay \lcomp \rfs'}}{\bot}}
      \unsafe {\Ds'}{\Os'}.
    \end{equation*}
    By \Cref{lemma:nobt}, we can assume without loss of generality
    that $\Ds'$ does not contain any $\dbt$ directive.  From
    \Cref{lemma:thereisasyscall2}, and by inspecting the semantics, we
    deduce that there are configurations
    \begin{equation*}
      {\sframe{\frame {\syss(\syscall)} {\regmap''}{\km[\syscall]}}{\bm{\buf'}{\lay \lcomp \rfs''}}{\bool}}
      \quad
      \text{and}
      \quad
      {\sframe{\frame {\cmdtwo} {\regmap'''}{{\km}_\syscall}:\st''}{\bm{\buf''}{\lay \lcomp \rfs'''}}{\boolms}},
    \end{equation*}
    a sequence of directives $\Ds''$, a sequence of observations
    $\Os''$ and a store $\rfs''$ such that:
    \begin{equation*}
      \nsstep {\nat''}
      {\sframe{\frame {\syss(\syscall)} {\regmap''}{\km[\syscall]}}{\bm{\buf'}{\lay \lcomp \rfs''}}{\bool}}
      {\sframe{\frame {\cmdtwo} {\regmap'''}{\km[\syscall]}:\st''}{\bm{\buf''}{\lay \lcomp \rfs'''}}{\boolms}}
      {\Ds''}{\Os''}\sto \dir \obs \unsafe .
    \end{equation*}
    We apply our assumption on $\zeta$, which gives:
    \begin{equation*}
      \nstep {*}
      {\sframe{\frame {\syss(\syscall)} {\regmap''}{\km[\syscall]}}{\overline{\bm{\buf'}{\lay \lcomp \rfs''}}}{\bool}}
      \unsafe.
    \end{equation*}
    This shows that $\system$ is unsafe, which contradicts
    \Cref{rem:skstrans}, thereby proving that the system is actually
    safe.
  \end{proofcases}
\end{proof}

\begin{proof}[Proof of \Cref{thm:mitigation}] The claim is a direct
consequence of \Cref{cor:fencesem2}, \Cref{lemma:fencesafeimpo} and
\Cref{prop:mitigation}.
\end{proof}

\begin{proof}[Proof of \Cref{thm:optmitigation}] The claim is a
direct consequence of \Cref{cor:fencesem2},
\Cref{lemma:optfencesafeimpo} and \Cref{prop:mitigation}.
\end{proof}

\begin{proof}[Proof of \Cref{thm:nospecmitigation}] The claim is a
direct consequence of \Cref{cor:fencesem2},
\Cref{lemma:nospecsafeimpo} and \Cref{prop:mitigation}.
\end{proof}


To aid in proving the upcoming results, we introduce the concept of
\emph{well-formedness} for the transformations $\fencetrans$,
$\optfencetrans$, and $\nospec$. These relations are designed to
express invariant properties that hold when the execution of a system
call begins and are preserved during kernel-space evaluation. The
corresponding predicates can be found in
\Cref{fig:transrel,fig:transrel2,fig:transrel3}.

For the well-formedness relation of $\optfencetrans$ in
\Cref{fig:transrel2}, we use the auxiliary predicate
$\Sigma(\cmd, m, e)$, which holds whenever the command has the form:
$\cmd = \optfencetrans[m][e](\cmd');
\optfencetrans[\top][\bot](\cmdtwo_1);\ldots;\optfencetrans[\top][\bot](\cmdtwo_h)$.
In addition, in the rule for configurations, we overload the operator
$\sep$ to represent the lexical concatenation of two programs $\cmd$
and $\cmdtwo$, with $\cnil$ serving as the null element for this
operation. Finally, note that since $k$ can be any natural number, we
also account for the case where $\cmdtwo_1, \ldots, \cmdtwo_k$ is an
empty sequence when $k = 0$.

\begin{figure*}[t]
  \flushleft
  \[
    \infer{\fwf {\rfs'}}{\forall \fn\in\Fnk. \exists \cmd. \rfs'(\fn)=\fencetrans(\cmd) & {\rfs} \eqon{\Fn} \rfs'}
    \quad
    \infer{\fwf{\frame \cmd \regmap {\km[\syscall]}:\st}}{\fwf{\st} & \exists \cmd'.\cmd=\fencetrans (\cmd')}
  \]
  
  \[
    \infer{\fwf\err}{}\quad
    \infer{\fwf\unsafe}{}
  \]
  
  \[
    \infer{\fwf{\sframe\st{\bm\buf{\lay\lcomp \rfs'}}\boolms}}{\fwf {\rfs'} & \fwf\st & \dom(\buf) \subseteq \underline \lay(\Ar)}
    \quad
    \infer{\fwf{\specconfone:\cfstack}}{\fwf \specconfone }\quad
    \infer{\fwf{\nil}}{}
  \]

  \caption{Well-formedness relation with respect to a system $\system=(\rfs, \syss, \caps)$ for the transformation $\fencetrans$.}
  \label{fig:transrel}
\end{figure*}

\begin{figure*}[t]
  \[
    \infer{\optfwf {\rfs'}}{\forall \fn\in\Fnk. \exists \cmd. \rfs'(\fn)=\optfencetrans[\top][\bot](\cmd) & {\rfs} \eqon{\Fn} \rfs'}
    \quad
    \infer{\optfwf{\sframe\nil{\bm\buf{\lay\lcomp \rfs'}}\boolms}}{\optfwf {\rfs'}  & \dom(\buf) \subseteq \underline \lay(\Ar)}
  \]
  
  \resizebox{\textwidth}{!}{$
    \infer{\optfwf{\sframe{\frame \cmd \regmap {\km[\syscall]}:\frame {\cmd_1} {\regmap_1} {\km[\syscall]}\ldots \frame {\cmd_h} {\regmap_h} {\km[\syscall]}}{\bm\buf{\lay\lcomp \rfs'}}\boolms}}
    {\optfwf {\rfs'} &
      \Sigma(\cmd, m, e), \land \Sigma(\cmd_1, \top, \bot)\land\ldots \land \Sigma(\cmd_k, \top, \bot)
      & \boolms \Rightarrow m
      & e \Rightarrow \buf = \nil
      & h, k \in \Nat
      & \dom(\buf) \subseteq \underline \lay(\Ar)}
  $}
  
  \[
    \infer{\optfwf{\specconfone:\cfstack}}{\optfwf \specconfone }\quad
    \infer{\optfwf{\nil}}{}
    \quad
    \infer{\optfwf\err}{}\quad
    \infer{\optfwf\unsafe}{}
  \]

    \caption{Well-formedness relation with respect to a system $\system=(\rfs, \syss, \caps)$ for the transformation $\optfencetrans$.}
  \label{fig:transrel2}
\end{figure*}

\begin{figure*}[t]
  \[
    \infer{\nsfwf {\rfs'}}{\forall \fn\in\Fnk. \exists \cmd. \rfs'(\fn)=\nospec(\cmd) & {\rfs} \eqon{\Fn} \rfs'}
    \quad
    \infer{\nsfwf{\frame \cmd \regmap {\km[\syscall]}:\st}}{\nsfwf{\st} & \exists \cmd'.\cmd=\nospec (\cmd')}
  \]

  \[
    \infer{\nsfwf\err}{}\quad
    \infer{\nsfwf\unsafe}{}
  \]
  
  \[
    \infer{\nsfwf{\sframe\st{\bm\nil{\lay\lcomp \rfs'}}\bot}}{\nsfwf {\rfs'} & \nsfwf\st}
    \quad
    \infer{\nsfwf{\specconfone:\cfstack}}{\nsfwf \specconfone }\quad
    \infer{\nsfwf{\nil}}{}
  \]
  \caption{Well-formedness relation with respect to a system $\system=(\rfs, \syss, \caps)$ for the system transformation $\nospec$.}
  \label{fig:transrel3}
\end{figure*}

\begin{lemma}
  \label{lemma:fencesafeimpo}
  The transformation $\fencetrans$  \emph{imposes speculative kernel safety}.
\end{lemma}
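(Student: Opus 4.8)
The plan is to establish the implication defining \Cref{def:cbu} for $\zeta = \fencetrans$ by showing that any speculative run of a $\fencetrans$-transformed system call ending in $\unsafe$ can be mirrored step for step by a non-speculative run on the committed memory. First I would dispose of the case in which the initial mis-speculation flag $\boolms$ is $\top$. In a transformed body every load, store and (safe) call is immediately preceded by a $\cfence$, and by Rule~\ref{SIE:Fence} a fence fires only when the flag is $\bot$; since the flag is never lowered except by backtracking (it is only raised by the $\boolms \lor (\dots)$ updates, and speculation propagates it to freshly pushed configurations), a computation whose top flag is $\top$ can never execute a fence, hence never reach any of the \textsc{*-Unsafe} rules. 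Consequently, when $\boolms = \top$ the premise of the implication is unsatisfiable and the statement holds vacuously, so I may assume $\boolms = \bot$. Throughout I would carry the well-formedness predicate $\fwf{\cdot}$ of \Cref{fig:transrel} as a reduction invariant, so that the ``fence-before-every-unsafe-operation'' property persists when execution descends into callees, which are transformed too (all of $\Fnk$ is rewritten by $\fencetrans(\rfs)$) and are entered only through $\cscall$ or $\csyscall$, so no $\djump$/BTB speculation can occur.

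Next I would appeal to \Cref{lemma:nobt} to replace the given reduction to $\unsafe$ by one driven by a directive sequence $\Ds$ free of $\dbt$. The heart of the argument is then the invariant that, along a $\dbt$-free run of transformed code reaching $\unsafe$, \emph{every} configuration carries flag $\bot$. I would package the reasoning above as a sub-lemma: a configuration with flag $\top$ cannot reach $\unsafe$ without backtracking, because all of its non-backtracking successors inherit flag $\top$ (speculation raises the flag, ordinary steps preserve it) and every unsafe transition is guarded by a fence requiring $\bot$. Hence any step on the path that raised the flag to $\top$ — an incorrect branch prediction in Rule~\ref{SIE:If-Branch}, or a stale read ($f=\top$) in Rule~\ref{SIE:SLoad} — would render $\unsafe$ unreachable in the remaining $\dbt$-free suffix, contradicting the hypothesis. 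This pins the flag to $\bot$ everywhere, and in particular makes the final \ref{SIE:SLoad-Unsafe} (resp.\ store/call) step fire from a $\bot$-flagged configuration.

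With the flag constantly $\bot$, I would conclude by a simulation step. Such a run only ever takes correct branch predictions and reads the most recent buffered value (reading any stale entry would force $f=\top$), so on the active top configuration it performs exactly the moves of the non-speculative semantics of \Cref{sec:language}; the pushed ancestor configurations are never revisited and are irrelevant. Writes accumulate in the buffer, but since $\overline{\cdot}$ commits them in order, the committed buffered memory at each point coincides with the memory the non-speculative semantics would hold (cf.\ \Cref{rem:bufreadoverline,rem:overlinewrtdom}). This produces a non-speculative reduction $\lay \red \conf{\conf{\syss(\syscall),\regmap,\km[\syscall]}, \overline{\bm\buf{(\lay\lcomp\rfs')}}} \to^* \unsafe$, whose final unsafe load/store/call maps directly to its non-speculative counterpart (\ref{WL:Load-Unsafe}, \ref{WL:Store-Unsafe}, or \ref{WL:SCall-Unsafe}), as required.

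The main obstacle is the flag-$\bot$ invariant: it is precisely where the correctness of $\fencetrans$ lives, and it must be argued from the interplay of three facts — that $\fencetrans$ guards each potentially-unsafe operation by a fence, that fences require and preserve $\bot$ while emptying the buffer, and that within a $\dbt$-free run the flag is monotone, so a single wrong speculation permanently forecloses reaching $\unsafe$. The remaining buffer bookkeeping (that reading the most recent value keeps $f=\bot$, and that $\overline{\cdot}$ faithfully reflects the sequence of buffered writes) is routine but delicate, and is where I expect the bulk of the mechanical case analysis to go. Finally, \emph{system call semantics preservation} comes for free from \Cref{rem:seq}, since $\fencetrans$ only inserts $\cfence$ instructions and replaces $\kwd{call}$ by $\cscall$.
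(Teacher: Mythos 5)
Your proposal is correct and follows essentially the same route as the paper's proof: reduce to a $\dbt$-free directive sequence via \Cref{lemma:nobt}, use the well-formedness of $\fencetrans$-transformed code to show that every potentially unsafe operation is immediately fence-guarded (and that no ordinary $\ccall$/$\djump$ step can occur), conclude that the mis-speculation flag is $\bot$ and the buffer is flushed at the offending step, then replay the run with $\dstep$ directives and simulate it non-speculatively on the committed memory (\Cref{lemma:nobtsteponly,lemma:stepsemsim}), with semantics preservation from \Cref{rem:seq}. The only cosmetic difference is that you phrase the flag-$\bot$ property as a global monotonicity invariant argued by contraposition, whereas the paper's appendix proof derives it locally at the penultimate configuration from the fence rule.
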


\begin{proof}
  We assume that there is a system $\system = (\rfs, \syss, \funs)$, a
  system call $\syscall$, a register map $\regmap$, a buffer $\buf$,
  an array store $\rfs' \eqon{\Fn} \rfs$, and a natural number $\nat$
  such that
  \[
    \nsstep[\system][\lay] \nat {\sframe{\frame {\syss(\syscall)}{\regmap} {\km[\syscall]}}  {\bm\buf{(\lay\lcomp\rfs')}}{\boolms}} \unsafe \Ds \Os
  \]
  for a sequence of directives $\Ds$ (that we assume to be free of
  $\dbt$ due to \Cref{lemma:nobt}) producing a sequence of
  observations $\Os$. By case analysis on $\nat$, we rule out the case
  where $\nat = 0$, as this would imply
  \[
    {\sframe{\frame {\syss(\syscall)} {\regmap} {\km[\syscall]}}{\bm\buf{(\lay\lcomp\rfs')}}{\boolms}} = \unsafe.
  \]
  Therefore, we assume that $\nat > 0$. By case analysis on the proof relation, we deduce that the rule used to show the last transition must be one of \ref{SI:Load-Unsafe}, \ref{SI:Store-Unsafe}, \ref{SI:Call-Unsafe}, or \ref{SI:Call-Step-Unsafe}.
  We will omit the case of rule \ref{SI:Store-Unsafe}, which is analogous to that of rule \ref{SI:Load-Unsafe}, treated below.
  \begin{proofcases}
    \proofcase{\ref{SI:Load-Unsafe}}
    In this case, our assumption rewrites as follows:
    \begin{multline*}
      \nsstep[\system][\lay] {\nat-1} {\sframe{\frame {\syss(\syscall)} {\regmap} {\km[\syscall]}}{\bm\buf{(\lay\lcomp\rfs')}}{\boolms}} {} {\Ds'} {\Os'} \\
      {\sframe{\frame {\cmemread \vx \expr\sep\cmdtwo} {\regmap'} {\km[\syscall]}:\st}{\bm{\buf'}{(\lay\lcomp\rfs'')}}{\boolms'}}:\cfstack  \sto {\dstep} {\omem\add} 
      \unsafe.
    \end{multline*}
    Observe that $\nat >1$; otherwise, if $\nat=1$, we would have
    ${\cmemread \vx \expr\sep\cmdtwo}= \syss(\syscall)$, but from
    $\system \in \im(\fencetrans)$, we would deduce that there exists
    a system $\system'=(\rfs'', \syss',\caps')$ such that
    $\fencetrans (\system') = \system$.  This would imply that
    $\syss(\syscall) = \fencetrans(\syss'(\syscall))$, so there is a
    command $\cmd \in \Cmd$ such that
    $\syss(\syscall) = \fencetrans(\cmd)$. However, by induction on
    the syntax of the command, we observe that this is not possible.
    We apply \Cref{lemma:fwf}, which leads us to deduce that there
    must be a stack of configurations
    \[
      \sframe{\frame{\cmd} {\regmap''} {\km[\syscall]}:\st''}{\bm{\buf''}{(\lay\lcomp\rfs''')}}{\boolms''}:\cfstack'
    \]
    such that:
    \begin{multline*}
      \nsstep[\system][\lay] {\nat-2} {\sframe{\frame {\syss(\syscall)} {\regmap} {\km[\syscall]}}{\bm\buf{(\lay\lcomp\rfs')}}{\boolms}} {}{\Ds''} {\Os''}\\
      {\sframe{\frame{\cmd} {\regmap''} {\km[\syscall]}:\st''}{\bm{\buf''}{(\lay\lcomp\rfs''')}}{\boolms''}:\cfstack'}  \sto \dir \obs\\
      {\sframe{\frame {\cmemread \vx \expr\sep\cmdtwo} {\regmap'} {\km[\syscall]}:\st}{\bm{\buf'}{(\lay\lcomp\rfs'')}}{\boolms'}}:\cfstack  \sto {\dstep} {\omem\add}.
      \unsafe.
    \end{multline*}
    In particular,
    $\fwf[\fencetrans]{\frame{\cmd} {\regmap''}
      {\km[\syscall]}:\st''}$ holds.  From this observation, we deduce
    that $\cmd = \fencetrans(\cmdtwo)$ for some $\cmdtwo \in \Cmd$,
    and that $\fwf[\fencetrans]{\st''}$ holds.  For these reasons, we
    can now proceed by cases on $\cmdtwo$ and the rule used to show
    the transition with the directive $\dir$ (knowing that
    $\dir \neq \dbt$ by assumption). We can rewrite the reduction
    above as follows:
    \begin{multline*}
      \nsstep[\system][\lay] {\nat-2} {\sframe{\frame {\syss(\syscall)} {\regmap} {\km[\syscall]}}{\bm\buf{(\lay\lcomp\rfs')}}{\boolms}} {}{\Ds''} {\Os''}\\
      {\sframe{\frame{\cfence\sep\cmemread \vx \expr\sep\cmdtwo} {\regmap''} {\km[\syscall]}:\st''}{\bm{\buf''}{(\lay\lcomp\rfs''')}}{\boolms''}:\cfstack}  \sto \dir \obs\\
      {\sframe{\frame {\cmemread \vx \expr\sep\cmdtwo} {\regmap'} {\km[\syscall]}:\st}{\overline{\bm{\buf''}{(\lay\lcomp\rfs''')}}}{\bot}}:\cfstack  \sto {\dstep} {\omem\add} 
      \unsafe.
    \end{multline*}
    In particular, none of the rules except for \ref{SI:Fence} can have been used. This also means that $\boolms' = \boolms'' = \bot$.
    From \Cref{lemma:nobtsteponly}, we deduce that there is $\nat'$ such that:
    \begin{multline*}
      \nsstep[\system][\lay] {\nat'} {\sframe{\frame {\syss(\syscall)} {\regmap} {\km[\syscall]}}{\bm\buf{(\lay\lcomp\rfs')}}{\boolms}} {}{\dstep^{\nat'}} {\Os'''}\\
      {\sframe{\frame {\cmemread \vx \expr\sep\cmdtwo} {\regmap'} {\km[\syscall]}:\st}{\overline{\bm{\buf''}{(\lay\lcomp\rfs''')}}}{\bot}}:\cfstack  \sto {\dstep} {\omem\add} 
      \unsafe.
    \end{multline*}
    Finally, we apply \Cref{lemma:stepsemsim} that shows:
    \begin{equation*}
      \nstep[\system][\lay] {\nat'} {\conf{\frame {\syss(\syscall)} {\regmap} {\km[\syscall]}, \overline{\bm\buf{(\lay\lcomp\rfs')}}}}{}\\
      {\conf{\frame {\cmemread \vx \expr\sep\cmdtwo}{\regmap'} {\km[\syscall]}:\st, \overline{\bm{\buf''}{(\lay\lcomp\rfs''')}}}}.
    \end{equation*}
    Finally, by assumption, we know that the rule \ref{SI:Load-Unsafe}
    has been applied to show the transition in the speculative
    semantics. By introspection of that rule, we conclude that its
    premises are also verified by the configuration
    \[
      {\conf{\frame {\cmemread \vx \expr\sep\cmdtwo}{\regmap'} {\km[\syscall]}:\st, \overline{\bm{\buf''}{(\lay\lcomp\rfs''')}}}}.
    \]
    This shows that \ref{WL:Load-Unsafe} applies, proving:
    \begin{equation*}
      \nstep[\system][\lay] {\nat'} 
      {\conf{\frame {\cmemread \vx \expr\sep\cmdtwo}{\regmap'} {\km[\syscall]}:\st, \overline{\bm{\buf''}{(\lay\lcomp\rfs''')}}}}
      \unsafe,
    \end{equation*}
    and, therefore, establishing the claim.

    \proofcase{\ref{SI:Call-Unsafe}}
    In this case, by following a similar approach to the previous case, we observe that
    \begin{multline*}
      \nsstep[\system][\lay] {\nat-1} {\sframe{\frame {\syss(\syscall)} {\regmap} {\km[\syscall]}}{\bm\buf{(\lay\lcomp\rfs')}}{\boolms}} {} {\Ds'} {\Os'} \\
      {\sframe{\frame {\ccall \expr {\expr_1, \ldots, \expr_n};\cmd'} {\regmap'} {\km[\syscall]}:\st}{\bm{\buf'}{(\lay\lcomp\rfs'')}}{\boolms'}}:\cfstack  \sto {\djump \add} {\ojump\add} 
      \unsafe.
    \end{multline*}
    As we did before, we observe that we must have $\nat > 1$. By
    applying \Cref{lemma:fwf}, we deduce that there must be a stack of
    configurations
    \[
      \sframe{\frame{\cmd} {\regmap''} {\km[\syscall]}:\st''}{\bm{\buf''}{(\lay\lcomp\rfs''')}}{\boolms''}:\cfstack'
    \]
    such that:
    \begin{multline*}
      \nsstep[\system][\lay] {\nat-2} {\sframe{\frame {\syss(\syscall)} {\regmap} {\km[\syscall]}}{\bm\buf{(\lay\lcomp\rfs')}}{\boolms}} {}{\Ds''} {\Os''}\\
      {\sframe{\frame{\cmd} {\regmap''} {\km[\syscall]}:\st''}{\bm{\buf''}{(\lay\lcomp\rfs''')}}{\boolms''}:\cfstack'}  \sto \dir \obs\\
      {\sframe{\frame {\ccall \expr {\expr_1, \ldots, \expr_n};\cmd'} {\regmap'} {\km[\syscall]}:\st}{\bm{\buf'}{(\lay\lcomp\rfs'')}}{\boolms'}}:\cfstack  \sto {\djump \add} {\ojump\add} 
      \unsafe,
    \end{multline*}
    In particular, we observe that
    $\fwf[\fencetrans]{\frame{\cmd} {\regmap''}
      {\km[\syscall]}:\st''}$ holds.  From this observation, we deduce
    that $\cmd = \fencetrans(\cmdtwo)$ for some $\cmdtwo \in \Cmd$,
    and that $\fwf[\fencetrans]{\st''}$ holds. However, this
    conclusion is absurd because, by examining the syntax of
    $\cmdtwo$, we can easily observe that none of the images of a
    command may reduce in one step to
    $\ccall \expr {\expr_1, \ldots, \expr_n}$.

    \proofcase{\ref{SI:Call-Step-Unsafe}} In this case, the command
    that caused the unsafe memory access can either be a
    $\ccall \cdot \cdot$ or a $\cscall \cdot \cdot$. By reasoning
    similar to the previous case, we can exclude the possibility of a
    $\ccall \cdot \cdot$, so we are left with:
    \begin{multline*}
      \nsstep[\system][\lay] {\nat-1} {\sframe{\frame {\syss(\syscall)} {\regmap} {\km[\syscall]}}{\bm\buf{(\lay\lcomp\rfs')}}{\boolms}} {} {\Ds'} {\Os'} \\
      {\sframe{\frame {\cscall \expr {\expr_1, \ldots, \expr_n};\cmd'} {\regmap'} {\km[\syscall]}:\st}{\bm{\buf'}{(\lay\lcomp\rfs'')}}{\boolms'}}:\cfstack  \sto {\dstep} {\ojump\add} 
      \unsafe.
    \end{multline*}
    Observe that $\nat > 1$; otherwise, if $\nat = 1$, we would have
    ${\cscall \expr {\expr_1, \ldots, \expr_n}} =
    \syss(\syscall)$. However, from $\system \in \im(\fencetrans)$, we
    deduce that there exists a system
    $\system' = (\rfs'', \syss', \caps')$ such that
    $\fencetrans (\system') = \system$. This would imply that
    $\syss(\syscall) = \fencetrans(\syss'(\syscall))$, and thus there
    is a command $\cmd \in \Cmd$ such that
    $\syss(\syscall) = \fencetrans(\cmd)$. But, by examining
    the syntax of the command, we observe that this is not
    possible. Therefore, we can apply \Cref{lemma:fwf}, which allows
    us to deduce that there is a stack of configurations
    \[
      \sframe{\frame{\cmd} {\regmap''} {\km[\syscall]}:\st''}{\bm{\buf''}{(\lay\lcomp\rfs''')}}{\boolms''}:\cfstack'
    \]
    such that:
    \begin{multline*}
      \nsstep[\system][\lay] {\nat-2} {\sframe{\frame {\syss(\syscall)} {\regmap} {\km[\syscall]}}{\bm\buf{(\lay\lcomp\rfs')}}{\boolms}} {}{\Ds''} {\Os''}\\
      {\sframe{\frame{\cmd} {\regmap''} {\km[\syscall]}:\st''}{\bm{\buf''}{(\lay\lcomp\rfs''')}}{\boolms''}:\cfstack'}  \sto \dir \obs\\
      {\sframe{\frame {\cscall \expr {\expr_1, \ldots, \expr_n};\cmd'} {\regmap'} {\km[\syscall]}:\st}{\bm{\buf'}{(\lay\lcomp\rfs'')}}{\boolms'}}:\cfstack  \sto {\dstep} {\ojump\add} 
      \unsafe.
    \end{multline*}
    Here, in particular,
    $\fwf[\fencetrans]{\frame{\cmd} {\regmap''}
      {\km[\syscall]}:\st''}$ holds.  From this observation, we deduce
    that $\cmd = \fencetrans(\cmdtwo)$ for some $\cmdtwo \in \Cmd$,
    and that $\fwf[\fencetrans]{\st''}$ holds. Reasoning similarly to
    what we did for the \ref{SI:Load-Unsafe} rule, we can rewrite the
    reduction above as follows:
    \begin{multline*}
      \nsstep[\system][\lay] {\nat-2} {\sframe{\frame {\syss(\syscall)} {\regmap} {\km[\syscall]}}{\bm\buf{(\lay\lcomp\rfs')}}{\boolms}} {}{\Ds''} {\Os''}\\
      {\sframe{\frame{\cfence\sep\cscall \expr {\expr_1, \ldots, \expr_n};\cmd'} {\regmap''} {\km[\syscall]}:\st''}{\bm{\buf''}{(\lay\lcomp\rfs''')}}{\boolms''}:\cfstack}  \sto \dir \obs\\
      {\sframe{\frame {\cscall \expr {\expr_1, \ldots, \expr_n};\cmd'} {\regmap'} {\km[\syscall]}:\st}{\overline{\bm{\buf''}{(\lay\lcomp\rfs''')}}}{\bot}}:\cfstack  \sto {\dstep} {\omem\add} 
      \unsafe.
    \end{multline*}
    In particular, none of the rules except for \ref{SI:Fence} can have been used.
    This also means that $\boolms'=\boolms''=\bot$.
    From \Cref{lemma:nobtsteponly}, we deduce that there is $\nat'$ such that:
    \begin{multline*}
      \nsstep[\system][\lay] {\nat'} {\sframe{\frame {\syss(\syscall)} {\regmap} {\km[\syscall]}}{\bm\buf{(\lay\lcomp\rfs')}}{\boolms}} {}{\dstep^{\nat'}} {\Os'''}\\
      {\sframe{\frame {\cscall \expr {\expr_1, \ldots, \expr_n};\cmd'} {\regmap'} {\km[\syscall]}:\st}{\overline{\bm{\buf''}{(\lay\lcomp\rfs''')}}}{\bot}}:\cfstack  \sto {\dstep} {\omem\add} 
      \unsafe.
    \end{multline*}
    We apply \Cref{lemma:stepsemsim}, which. shows:
    \begin{equation*}
      \nstep[\system][\lay] {\nat'} {\conf{\frame {\syss(\syscall)} {\regmap} {\km[\syscall]}, \overline{\bm\buf{(\lay\lcomp\rfs')}}}}{}\\
      {\conf{\frame {\cscall \expr {\expr_1, \ldots, \expr_n};\cmd'}{\regmap'} {\km[\syscall]}:\st, \overline{\bm{\buf''}{(\lay\lcomp\rfs''')}}}}.
    \end{equation*}
    By assumption, we know that the rule \ref{SI:Call-Step-Unsafe} has
    been applied to show the transition in the speculative
    semantics. Upon introspection of that rule, we conclude that its
    premises are also satisfied by the configuration
    \[
      {\conf{\frame {\cscall \expr {\expr_1, \ldots, \expr_n};\cmd'}{\regmap'} {\km[\syscall]}:\st, \overline{\bm{\buf''}{(\lay\lcomp\rfs''')}}}}.
    \]
    This shows that \ref{WL:SCall-Unsafe} applies to the transition
    \begin{equation*}
      \nstep[\system][\lay] {\nat'} 
      {\conf{\frame {\cscall \expr {\expr_1, \ldots, \expr_n};\cmd'}{\regmap'} {\km[\syscall]}:\st, \overline{\bm{\buf''}{(\lay\lcomp\rfs''')}}}}
      \unsafe,
    \end{equation*}
    and establishing the claim.
  \end{proofcases}
\end{proof}

\subsubsection{Technical Observations on the Speculative Semantics}

We begin by defining a predicate $\wf{\cdot}$ that captures the
well-formedness of a buffered memory with respect to a layout
$\lay$. For a system $\system = (\rfs, \syss, \caps)$ and a layout
$\lay$, this predicate is defined as follows:

\[
  \infer{\wf{\sconf{\err, \boolms}}}{}
  \quad\infer{\wf\unsafe}{}
  \quad\infer{\wf \nil}{}
  \quad\infer{\wf{\confone:\cfstack} }{\wf \confone &
    \wf \cfstack}
\]

\[
  \infer{\wf{\sframe\st {\bm\buf\mem} \boolms}}{\dom(\buf)\subseteq \lay(\Ar) & \mem = \lay \lcomp \rfs' & \rfs' \eqon{\Fn} \rfs &
  \st = \st_{\km}:\st_\um & \km(\st_{\km}) & \um(\st_\um)}
\]

\begin{remark}
  \label{rem:simispreservedspec}
  For every system $\system = (\rfs, \syss, \caps)$,  layout $\lay\in \Lay$, pair of configurations $\cfstack$ and $\cfstack'$, sequences of directives  $\Ds$ and of observations $\Os$, if $\wf\cfstack$ and $\nsstep*{\cfstack}{\cfstack'} \Ds \Os$, then we have $\wf{\cfstack'}$.
\end{remark}
\begin{proof}
  The proof goes by induction on the number of steps.
  \begin{proofcases}
    \proofcase{0} Trivial.
    \proofcase{$\nat + 1$}
    The claim is established through a case analysis of the rule used in the last transition. In most cases, the result follows directly from the IH. Therefore, we focus on the most relevant cases.
    \begin{proofcases}
      \proofcase{\ref{SI:Pop}}
        Since the target configuration contains a frame stack that is a suffix of the source configuration's frame stack, the conclusion directly follows from the IH.
      \proofcase{\ref{SI:Fence}} The configuration reached after $\nat$ steps can be expressed as follows:
    \begin{equation*}
      {\sframe{\frame {\cfence\sep\cmdtwo} \regmap \opt:\st}{\bm{\buf}{\lay\lcomp{\rfs'}}} \boolms\cons\cfstack''}.
    \end{equation*}
    By the IH, we know that this configuration satisfies $\wfp$. The target configuration obtained from this step has the following form:
    \begin{equation*}
      {\sframe{\frame {\cmdtwo} \regmap \opt:\st}{\bm \nil {\overline {\bm{\buf}{\lay\lcomp{\rfs'}}}}} \boolms\cons\cfstack''}.
    \end{equation*}
    The claim then follows from the IH and \Cref{rem:overlinewrtdom}.
    \proofcase{\ref{SI:Store}} We rewrite the configuration reached
      after $\nat$ steps as follows:
    \begin{equation*}
      {\sframe{\frame {\cmemass \expr \exprtwo\sep\cmdtwo} \regmap \opt:\st}{\bm{\buf}{\lay\lcomp{\rfs'}}} \boolms\cons\cfstack''}
    \end{equation*}
    By the IH, we know that this configuration satisfies the predicate $\wfp$. The target configuration after the transition takes the form:
    \begin{equation*}
      {\sframe{\frame {\cmdtwo} \regmap \opt:\st}{\bm{\bitem{\toAdd{\sem \expr_{\regmap, \lay}}}{\sem \exprtwo_{\regmap, \lay}}\cons \buf}{\lay\lcomp{\rfs'}}} \boolms\cons\cfstack''}.
    \end{equation*}
    From the premises of the rule, we have $\toAdd{\sem \expr_{\regmap, \lay}} \in \underline \lay(\Ar[\opt])$. The remaining conditions required to establish $\wfp$ follow directly from the IH.
    \proofcase{\ref{AL:Call}} The configuration reached after $\nat$ steps rewrites as follows:
    \begin{equation*}
      {\sframe{\frame {\ccall \expr{\expr_1,\dots,\expr_k}\sep\cmdtwo} \regmap \opt:\st}{\bm{\buf}{\lay\lcomp{\rfs'}}} \boolms\cons\cfstack''}.
    \end{equation*}
    According to the IH, this configuration satisfies $\wfp$. The
    target configuration after the call looks as follows:
    \begin{equation*}
      \sframe{\frame {\lay\lcomp{\rfs'}(\add)} {\regmap'} \opt:\frame {\cmdtwo} \regmap \opt:\st}{\bm{\buf}{\lay\lcomp{\rfs'}}} \boolms\cons\cfstack''
    \end{equation*}
    for some $\regmap'$ and with
    $\add = \toAdd{\sem \expr_{\regmap, \lay}}$. From the rule's
  premises, we know $\add \in \underline \lay(\Fn[\opt])$, indicating
  the existence of a function $\fn \in \Fn[\opt]$ such that
  $\lay(\fn) = \add$. Using the definition of $\cdot\lcomp \cdot$, we
  obtain $\lay\lcomp{\rfs'}(\add) = {\rfs'}(\fn)= \rfs(\fn)$, where
  the last equation is a consequence of the IH.  Next, we analyze the
  cases based on $\opt$.  If $\opt=\um$, the function body is
  unprivileged, which proves $\um(\lay\lcomp{\rfs'}(\add))$, as
  required.  If $\opt=\km$, we observe that
  $\lay\lcomp{\rfs'}(\add) \in \Cmd$, thus showing
  $\km[\syscall](\lay\lcomp{\rfs'}(\add))$, as needed.  The remaining
  conditions follow directly from the IH.

  \proofcase{\ref{AL:System-Call}} We begin by applying the IH, which establishes that the configuration reached after $\nat$ steps satisfies $\wfp$:
    \begin{equation*}
      \wf {\sframe{\frame {\csyscall \syscall{\expr_1,\dots,\expr_k}\sep \cmdtwo} \regmap {\opt}:\st}{\bm{\buf'}{\lay\lcomp{\rfs'}}} \boolms\cons\cfstack''}.
    \end{equation*}
    The proof proceeds by case analysis on $\opt$.
    \begin{proofcases}
      \proofcase{$\opt=\um$} From the IH we have:
      \[
        \um({\frame {\csyscall \syscall{\expr_1,\dots,\expr_k}\sep \cmdtwo} \regmap \um:\st}).
      \]
      The target configuration takes the form:
    \begin{equation*}
      {\sframe{\frame {\syss(\syscall)} {\regmap'} {\km[\syscall]}\cons\frame {\cmdtwo} \regmap {\um}\cons\st}{\bm{\buf'}{\lay\lcomp{\rfs'}}} \boolms\cons\cfstack''}.
    \end{equation*}
    for some $\regmap'$. To complete the proof, it suffices to observe that $\km(\frame {\syss(\syscall)} {\regmap'} {\km[\syscall]})$, which follows from the definition of $\syss$.
    \proofcase{$\opt=\km[\syscalltwo]$} By the IH, we know that $\frame {\csyscall \syscall{\expr_1,\dots,\expr_k}\sep \cmdtwo} \regmap {\km[\syscalltwo]} \cons \st = \frame {\syss(\syscall)} {\regmap'} {\km[\syscalltwo]}\cons\frame {\cmdtwo} \regmap {\km[\syscalltwo]}\cons \st_{\km}\cons\st_\um$ with and $\um(\st_\um)$ and $\km[\syscalltwo](\st_{\km})$. The target configuration is:
    \begin{equation*}
      {\sframe{\frame {\syss(\syscall)} {\regmap'} {\km[\syscalltwo]}\cons\frame {\cmdtwo} \regmap {\km[\syscalltwo]}\cons\st_{\km}\cons\st_\um}{\bm{\buf'}{\lay\lcomp{\rfs'}}} \boolms\cons\cfstack''}.
    \end{equation*}
   Observe that $\km[\syscalltwo](\frame {\syss(\syscall)} {\regmap'} {\km[\syscalltwo]}\cons\frame {\csyscall \syscall{\expr_1,\dots,\expr_k}\sep \cmdtwo} \regmap {\km[\syscalltwo]}\cons\st_{\km})$ holds, establishing the claim.
    \end{proofcases}
    \end{proofcases}
  \end{proofcases}
\end{proof}

\begin{remark}
  \label{rem:stackinv0}
  For every system $\system = (\rfs, \syss, \caps)$, every store $\rfs \eqon{\Fn} \rfs'$, program configuration $\confone = \conf{\st_{\km} : \st_\um, \lay \lcomp \rfs'}$ where $\km(\st_{\km})$ and $\um(\st_\um)$, and configuration $\conf{\st', \lay \lcomp \rfs''}$ that is reachable in $\nat$ steps from $\confone$, there exist a pair of stacks $\st_{\km}', \st_\um'$ such that $\st' = \st_{\km}' : \st_\um'$, and $\km(\st_{\km}')$ and $\um(\st_\um')$ hold.
\end{remark}

\begin{proof}
  We prove this by induction on the number of steps, $\nat$.
  \begin{proofcases}
    \proofcase{0} The base case is trivial.

    \proofcase{$\nat + 1$} The proof proceeds by cases on the rule applied to establish the last transition. Most rules are straightforward, so we focus on the more interesting cases:

    \begin{proofcases}
      \proofcase{\ref{WL:Pop}} The target configuration contains a stack that is a suffix of the stack in the source configuration. Hence, the conclusion follows trivially from the IH.

      \proofcase{\ref{WL:SystemCall}} In this case, we first apply the IH, which shows:
      \[
        \nstep \nat {\conf{\st_{\km} : \st_\um, \lay \lcomp \rfs}}{} \conf{\frame {\csyscall \syscall{\expr_1, \dots,\expr_k} \sep \cmdtwo} {\regmap'} \opt \cons \st', \lay \lcomp \rfs}
      \]
      and that the stack of frames $\frame {\csyscall \syscall{\expr_1, \dots,\expr_k}} {\regmap'} \opt \cons \st'$ is a concatenation of a pair of stacks $\st_{\km}', \st_\um'$ such that $\km(\st_{\km}')$ and $\um(\st_\um')$ hold. Now, consider the next transition:
      \[
        \step
        {\conf{\frame {\csyscall \syscall{\expr_1, \dots,\expr_k} \sep \cmdtwo} {\regmap'} \opt \cons \st_{\km}' \cons \st_\um', \lay \lcomp \rfs}}{}
        {\conf{\frame {\syss(\syscall)} {\regmap''} {\km[\syscalltwo]} : \frame {\cmdtwo} {\regmap'} \opt \cons \st_{\km}' \cons \st_\um', \lay \lcomp \rfs}}
      \]
      for an appropriate $\regmap''$. The claim requires verifying that the target configuration satisfies the desired properties. In particular, if $\opt = \km[\syscalltwo]$, then $\syscalltwo = \syscall$, and $\km(\frame {\syss(\syscall)} {\regmap''} {\km[\syscall]} \cons \frame {\cmdtwo} {\regmap'} \opt \cons \st_{\km}')$ holds trivially by our case analysis rule \ref{WL:SystemCall}. Otherwise, if $\opt = \um$, then $\st_{\km}' = \nil$ by the IH, and we simply need to observe that $\km(\frame {\syss(\syscall)} {\regmap''} {\km[\syscall]})$ holds.

      \proofcase{\ref{WL:Call}} This case is analogous to the previous one. Instead of setting the flag to $\km$ in the new frame, the rule copies it from the topmost frame of the source configuration, which does not invalidate the invariant. The target configuration looks as follows:
      \[
      \langle \frame {\lay \lcomp \rfs''(\add)} {\regmap_0[\vx_1 \upd \sem{\exprtwo_1}_{\regmap,\lay}, \dots, \vx_n \upd \sem{\exprtwo_n}_{\regmap,\lay}]} {\opt} :
      \frame {\cmd} {\regmap} {\opt} : \st, \lay \lcomp \rfs'' \rangle.
      \]
      From \Cref{rem:simispreservedspec}, we deduce that $\rfs'' \eqon{\Fn} \rfs$. Therefore, we have $\lay \lcomp \rfs''(\add)$, because from the premises of the rule, we know that $\add \in \underline \lay(\Fn[\opt])$. This implies that there exists a function $\fn \in \Fn[\opt]$ such that $\lay(\fn) = \add$. Using this observation and the definition of $\cdot \lcomp \cdot$, we deduce that $\lay \lcomp \rfs''(\add) = \rfs(\fn)$. Now, we proceed by cases on $\opt$. If $\opt = \um$, the body of the function is unprivileged, and this shows that $\um(\lay \lcomp \rfs''(\add))$ holds. If $\opt = \km$, we observe that the program is in $\Cmd$, which shows that $\km[\syscall](\lay \lcomp \rfs''(\add))$ holds, as required.
    \end{proofcases}
  \end{proofcases}
\end{proof}

\begin{lemma}
  \label{rem:hasu}
  Let $\mathsf{hasu}$ be defined on frame stacks as follows:
  \begin{align*}
    \mathsf{hasu}(\nil)&\defsym \bot & \mathsf{hasu}(f\cons\st)&\defsym \um(f) \lor \mathsf{hasu}(\st),
  \end{align*}
  and on configuration stacks as follows:
  \begin{align*}
    \mathsf{hasu}(\nil)&\defsym \top & \mathsf{hasu}(\conf{\st, \bmvar, \boolms} \cons \cfstack)&\defsym \um(f) \land \mathsf{hasu}(\cfstack).
  \end{align*}
  Let $\cfstack, \cfstack'$ be configuration stacks, $\dir$ a directive and $\obs$ an observation,
  if we have $\wf \cfstack, \mathsf{hasu}(\cfstack)$ and
  \[
    \sstep \cfstack {\cfstack'} \dir \obs, 
  \]
  then $\mathsf{hasu}(\cfstack')$ holds.
\end{lemma}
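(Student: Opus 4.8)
The plan is to prove the statement by a direct case analysis on the rule used to derive the single speculative step $\sstep{\cfstack}{\cfstack'}{\dir}{\obs}$; no induction is required since only one transition is involved. First I would fix the reading of the definition: in the clause for configuration stacks the occurrence of $\um(f)$ must be understood as $\mathsf{hasu}(\st)$ applied to the frame stack $\st$ of the configuration, so that $\mathsf{hasu}(\cfstack)$ asserts that \emph{every} configuration on the stack has a frame stack containing at least one user-mode frame. From the two hypotheses I then extract two structural facts that drive every case: well-formedness of $\cfstack$ guarantees that each configuration's frame stack decomposes as $\st = \st_{\km} : \st_\um$ with $\km(\st_{\km})$ and $\um(\st_\um)$; and $\mathsf{hasu}(\cfstack)$ guarantees that each such $\st_\um$ is non-empty (indeed, since all user-mode frames are collected into $\st_\um$, the predicate $\mathsf{hasu}(\st)$ is equivalent to $\st_\um \neq \nil$).

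The bulk of the rules are routine. The rules that only rewrite the command and register environment of the topmost frame of the topmost configuration (\ref{SI:Op}, \ref{SI:Skip}, \ref{SI:Load-Step}, \ref{SI:Store}, \ref{SI:If}, \ref{SI:Loop-Step}, \ref{SI:Fence}) leave the execution mode of every frame unchanged, so $\mathsf{hasu}$ is preserved verbatim. The frame-pushing rules that stay within the same configuration (\ref{SI:Call-Step}, \ref{SI:System-Call}, and the safe-call rule) only add a frame on top of an existing frame stack, leaving the user-mode suffix of that stack intact. The speculation-initiating rules (\ref{SI:Load}, \ref{SI:If-Branch}, \ref{SI:Loop-Branch}, \ref{SI:Call}) push a fresh configuration whose frame stack is obtained from the old topmost one by at most such an addition, and they retain the old configuration together with the tail $\cfstack$; hence both the new configuration and every configuration below it keep a user-mode frame, and $\mathsf{hasu}(\cfstack')$ follows from $\mathsf{hasu}(\cfstack)$. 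For the backtracking rules I use that $\mathsf{hasu}$ on configuration stacks is a conjunction over all configurations: \ref{SI:Backtrack-Top} discards the topmost configuration, and \ref{SI:Backtrack-Bot} replaces the stack by $\specconfone \cons \nil$, so in both cases the goal is a logical weakening of the hypothesis.

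The delicate case, and the one where well-formedness is indispensable, is \ref{SI:Pop}, which removes the topmost frame $\frame{\cnil}{\regmap}{\opt}$ from a frame stack of the form $\frame{\cnil}{\regmap}{\opt}\cons\st_0$ with $\st_0$ non-empty. Here $\mathsf{hasu}(\cfstack)$ alone would not suffice, because if the popped frame were the \emph{only} user-mode frame the property would be destroyed; this is precisely what well-formedness rules out. If $\opt = \km$, the popped frame lies in the kernel prefix $\st_{\km}$, the user-mode suffix $\st_\um$ is left untouched and stays non-empty, so $\mathsf{hasu}$ is preserved. If $\opt = \um$, then since $\st_{\km}$ is the prefix consisting of all kernel-mode frames and the popped frame is the very first frame, well-formedness forces $\st_{\km} = \nil$; the whole frame stack is then user-mode, and as the \ref{SI:Pop} pattern requires $\st_0$ to be non-empty, at least one user-mode frame survives the pop, so $\mathsf{hasu}$ again holds. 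Finally, the error- and unsafe-producing rules either retain only the tail $\cfstack$ beneath a new error configuration (for which $\mathsf{hasu}$ already holds by hypothesis) or step directly to $\unsafe$; under the natural conventions $\mathsf{hasu}(\sconf{\err,\boolms}\cons\cfstack) \defsym \mathsf{hasu}(\cfstack)$ and $\mathsf{hasu}(\unsafe) \defsym \top$ these cases are immediate. The main obstacle throughout is thus the bookkeeping in the \ref{SI:Pop} case, which hinges entirely on the layered shape of frame stacks supplied by well-formedness.
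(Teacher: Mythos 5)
Your proof is correct and takes essentially the same approach as the paper, whose entire argument is the one-liner ``the claim follows by case analysis on the transition rules.'' Your treatment simply fills in the details the paper leaves implicit, including the correct reading of the $\um(f)$ typo in the configuration-stack clause and the observation that the \ref{SI:Pop} case is the only one where well-formedness is actually needed.
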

\begin{proof}
  The claim follows by case analysis on the transition rules. 
\end{proof}

\begin{remark}
  \label{rem:stackinv1}
  Let $\system = (\rfs, \syss, \caps)$ be a system. For every program configuration $\confone = \conf{\st_{\km} : \st_\um, \lay \lcomp \rfs', \Os, \Ds}$ where $\rfs \eqon{\Fn} \rfs'$, $\km(\st_{\km})$ and $\um(\st_\um)$, configuration $\conf{\st', \lay \lcomp {\rfs''}, \Os', \Ds'}$, and hybrid configuration $\hconf \cfstack {\st'}{\Os'}{\Ds'}$, that are reachable in $\nat$ steps from $\confone$, there exist a pair of stacks $\st_{\km}', \st_\um'$ such that $\st' = \st_{\km}' : \st_\um'$, and $\km(\st_{\km}')$, $\um(\st_\um')$ hold. Moreover, if the configuration is hybrid, we also have $\wf \cfstack$, $\um(\st')$, and all the frame stacks within $\cfstack$ contain at least a frame $f$ such that $\um(f)$ holds, i.e. $\mathsf{hasu}(\cfstack)$ (see \Cref{rem:hasu}).
\end{remark}
\begin{proof}
  The proof is by induction on the reduction step, and by case
  analysis on the rule used for the last transition step.  This proofs
  for the cases not dealing with hybrid configurations are analogous
  to the corresponding one \Cref{rem:stackinv0}.  Therefore, in the
  following, we only focus on the cases where one of the two configurations is hybrid.
  \begin{proofcases}
    \proofcase{\ref{AL:Spec-Init}} We rewrite the reduction as follows:
    \[
      \nesstep \nat \confone  \conf{\frame {\cspec \cmd;\cmdtwo} \regmap \opt\cons \st'', \lay \lcomp {\rfs''},\Ds', \Os'} \to \hconf{\conf{\frame \cmd \regmap \opt, \bm \nil {\lay \lcomp {\rfs''}, \bot}}} {\frame {\cmdtwo} \regmap \opt\cons \st''}{\Ds'} {\Os'} 
    \]
    From the IH, we deduce that
    $\um{\frame {\cspec \cmd;\cmdtwo} \regmap \opt\cons \st''}$. In
    particular, $\km(\frame {\cspec \cmd;\cmdtwo} \regmap \opt)$
    cannot hold as $\cspec \cmd \notin \Cmd$, proving all the
    additional claims we needed.
    \proofcase{\ref{AL:Spec-Dir}, \ref{AL:Spec-Step}, \ref{AL:Spec-Bt}} The claim is a direct consequence of the IH and \Cref{rem:hasu}.
    \proofcase{\ref{AL:Spec-Term}} The claim is a direct consequence of the IH. In particular, from the IH, we deduce that the source configuration of the last transition has the following form:
    \[
      \hconf{\conf{\frame \cnil \regmap \um, \bmvar, \bot}} {\st''}{\Ds'} {\Os'},
  \]
  where, in particular the mode-flag is $\um$. By combining this observation with the IH, we can deduce that the frame stack of the target configuration satisfies the predicate $\um(\cdot)$, establishing the claim.
  \end{proofcases}
\end{proof}

\begin{remark}
  \label{rem:speconum}
  For every system $\system = (\overline \rfs, \syss, \caps)$, number of steps $\nat$,
  configurations $\conf{\st_{\km}\cons\st_\um, {\lay \lcomp \rfs}, \Ds, \Os}$ and $\conf{\st, {\lay \lcomp \rfs'}, \Ds', \Os'}$
  such that $\rfs \eqon{\Fn}\overline \rfs$,
  if
  \[
    \nesstep \nat {\conf{\st_{\km}\cons\st_\um, {\lay \lcomp \rfs}, \Ds, \Os}}
    {\conf{\frame{\cspec \cmd;\adversary'}{\regmap'}{\opt}:\st, {\lay \lcomp \rfs'}, \Ds', \Os'}},
  \]
  then $\um(\frame{\cspec \cmd;\adversary'}{\regmap'}{\opt}:\st)$.
\end{remark}
\begin{proof}
  This result is a direct consequence of \Cref{rem:stackinv1}. According to the invariant property stated there, if the top frame on the stack of the target configuration is a kernel-mode stack, it implies that $\cspec \cmd;\adversary'\in\Cmd$, thus leading to a contradiction. Therefore, the claim holds.
\end{proof}

\begin{remark}
  \label{rem:noaddobs}
  Let  $\system=(\rfs, \syss, \caps)$ be a system. For every pair of speculative stacks $\cfstack, \cfstack'$, sequences of directives and observations $\Ds, \Os$, layout $\lay$, and number of steps $\nat$ such that:
  \[
    \nsstep \nat {\cfstack} {\cfstack'} \Ds {\Os},
  \]
   if $\add$ is an address such that $\add \notin \underline\lay(\Id)$, then $\Os$ does not contain the observations $\omem \add$ and $\ojump \add$.
 \end{remark}

 \begin{proof}
  We proceed by induction on $\nat$.
  \begin{proofcases}
    \proofcase{0} The base case is trivial.
    \proofcase{$\nat+1$} We apply IH and analyze the rule used in the last transition. For rules that do not produce the observations $\omem \add'$ and  $\ojump \add'$ for some $\add'$, the claim holds trivially.
    The rules that may produce such a transition are \ref{SI:Load-Step}, \ref{SI:Load}, \ref{SI:Load-Unsafe}, \ref{SI:Store}, \ref{SI:Store-Unsafe}, \ref{SI:Call}, and \ref{SI:Call-Unsafe}. Each of these rules has premises requiring that $\add' \in \underline\lay(\Ark)$, $\add' \in \underline\lay(\Fnk)$, $\add' \in \underline\lay(\Aru)$, or $\add' \in \underline\lay(\Fnu)$. 
    However, since it is given that $\add \notin \underline\lay(\Id)$, it follows that $\add$ does not belong to any of these sets. Consequently, we conclude that $\add' \neq \add$.
  \end{proofcases}
\end{proof}


\begin{remark}
  \label{rem:boolmsinc}
  Let $\sqsubseteq$ be the smallest partial order such that $\bot \sqsubseteq \top$. For every number of steps $\nat$, speculative configuration $\confone$, speculative stack $\cfstack$, and $\Ds$ without $\dbt$ directives, if there is a target configuration such that:
  \[
    \sstep[\system][\lay] {\sframe{\st}{\bmvar}{\boolms}:\cfstack} {{\sframe{\st'}{\bmvar}{\boolms'}}:\cfstack'}{\Ds} {\Os},
  \]
   then $\boolms \sqsubseteq \boolms'$.
\end{remark}
\begin{proof}
  The proof goes by induction on $\nat$. The base case is trivial, as $\sqsubseteq$ is reflexive. The inductive claim follows from the IH and by introspection of the rules.
\end{proof}

\begin{remark}
  \label{rem:thereisaspecon}
  Let $\system = (\rfs, \syss, \caps)$ be a system. Consider any configuration of the form  
  \[
    \conf{\frame \speccmd \regmap \um, \lay \lcomp \rfs, \nil, \nil}
  \]
  and any number of steps $\nat$, as well as sequences of directives and observations $\Ds$ and $\Os$. If  
  \[
    \nesstep \nat {\conf{\frame \speccmd \regmap \um, \lay \lcomp \rfs, \nil, \nil}} {\hconf\cfstack \st \Ds \Os},
  \]  
  then there exist $\nat' \le \nat$, a configuration  
  \[
    \sframe{\frame{\cmd'}{\regmap'}{\um}}{\lay\lcomp\rfs'}{\bot},
  \]  
  and sequences $\Ds'$ and $\Os'$ such that  
  \[
    \nsstep {\nat'}
    {\sframe{\frame{\cmd'}{\regmap'}{\um}}{\lay\lcomp\rfs'}{\bot}}
    \cfstack {\Ds'} {\Os'}.
  \]
\end{remark}

\begin{proof}
  The proof proceeds by induction on $\nat$.
  \begin{proofcases}
    \proofcase{0} This case follows from vacuity of the premise.
    \proofcase{$\nat+1$} The premise is:
      \[
        \nesstep {\nat} {\conf{\frame \speccmd \regmap \um, \lay \lcomp \rfs, \nil, \nil}} \conftwo \ato  {\hconf \cfstack \st \Ds \Os} 
      \]
      We proceed by cases on the rule that was applied to show the last transition, showing only some representative cases.
      \begin{proofcases}
        \proofcase{\ref{AL:Spec-Init}} We observe that
        \[
          \conftwo = \conf{\frame{\cspec {\cmd''}\sep \cmdtwo}{  \regmap''}\opt\cons \st, \lay\lcomp\rfs''}.
        \]
        The claim follows by introspection of the rule.  In
        particular, $\opt = \um$ follows from \Cref{rem:stackinv1}. It
        cannot be that $\opt=\km[\syscall]$ because, in that case, we
        the configuration could not contain the command $\cspec {\cmd'}\sep \cmdtwo$.
        \proofcase{\ref{AL:Spec-Step}} The IH proves the existence of the following reduction.
        \[
          \nsstep {\nat'}
          {\sframe{\frame{\cmd'}{\regmap'}{\um}}{\lay\lcomp\rfs'}{\bot}}
          \cfstack    {\Ds'} {\Os'}.
        \]
        from the premise of the rule, we conclude
        \[
          \sstep
          \cfstack {\cfstack'}    {\dstep} \obs,
        \]
        which establishes the claim.
      \end{proofcases}
  \end{proofcases}
\end{proof}

\begin{remark}
  \label{remark:nobttech}
  Let $\system$ be a system, $\lay$ a layout and $\confone$ a
  configuration. For every target stack
  $\cfstack$ such that $\cfstack = \specconftwo\cons\cfstack'$, for some non-empty $\cfstack'$,
  every sequences of directives $\Ds$ and
  observations $\Os$, number of steps $\nat$, if 
  \[
    \nsstep\nat  {\specconfone} {\cfstack}\Ds \Os,
  \]
  then there are $\nat'<\nat$, ${\Ds'}$, ${\Os'}$ such that: 
  \[
    \nsstep{\nat'} {\specconfone} {\cfstack'} {\Ds'} {\Os'}.
  \]
\end{remark}
\begin{proof}
  By induction on $\nat$.
  \begin{proofcases}
    \proofcase{$\nat=0$} The claim we need to establish in this derivation is:
    for every $\cfstack = \specconftwo\cons\cfstack'$ for some non-empty $\cfstack'$,
    sequences of directives $\Ds$ and
    observations $\Os$, if 
    \[
      \nsstep0  {\specconfone} {\cfstack}\Ds \Os,
    \]
    there are $\nat'<0$, ${\Ds'}$, ${\Os'}$ such that: 
    \[
      \nsstep{\nat'} {\specconfone} {\cfstack'} {\Ds'} {\Os'}.
    \]
    The conclusion follows by vacuity of the premises, in particular,
    it cannot be that
    $\specconfone \cons \nil = \cfstack = \specconftwo\cons\cfstack'$ with
    $\cfstack'$ non-empty.
    
    \proofcase{$\nat+1$} Here, the claim is the following: for every
    $\cfstack$ such that $\cfstack = \specconftwo\cons\cfstack'$ for
    some non-empty $\cfstack'$, sequences of directives $\dir:\Ds$ and
    observations $\obs:\Os$, if
    \[
      \nsstep\nat  {\specconfone} {\cfstack''}\Ds \Os\sto \dir \obs {\cfstack},
    \]
    there are $\nat'<\nat+1$, ${\Ds'}$, ${\Os'}$ such that: 
    \[
      \nsstep{\nat'} {\specconfone} {\cfstack'} {\Ds'} {\Os'}.
    \]
    We go by cases on the directive used for the last step:
    \begin{proofcases}
      \proofcase{$\dstep$} Observe that $\cfstack'' = \specconftwo'\cons\cfstack'$ for some $\specconftwo'$, so the claim is a consequence of the IH.
      
      \proofcase{$\dload i, \dbranch b, \djump \add$} The witness we need to introduce is the $\nat$ step transition 
      \[
        \nsstep\nat {\specconfone} {\cfstack''}\Ds \Os.
      \]

      \proofcase{$\dbt$} We go by cases on the rule that has been applied. It must be one of \ref{SI:Backtrack-Top}, \ref{SI:Backtrack-Bot}.
      \begin{proofcases}
        \proofcase{\ref{SI:Backtrack-Top}} In this case, we go by cases on the $\nat+1$-th target stack. If it is empty or it has one element, the claim holds by vacuity of the premise. If it has more than one element, the claim follows by two applications of the IH.
        
        \proofcase{\ref{SI:Backtrack-Bot}} The claim holds by vacuity of the premise: the $\nat+1$-th target stack has just one element.
      \end{proofcases}
    \end{proofcases}
  \end{proofcases}
\end{proof}

\begin{remark}
  \label{lemma:misspecunset}
  For every $\nat$, initial configuration $\specconfone=\sframe{\frame \cmd \regmap \um} {\mem} \bot$, if:
  \[
    \nsstep \nat  {\specconfone} {\specconftwo} {\Ds} {\Os},
  \]
  and $\specconftwo$ has a mis-speculation flag ($\specconftwo \neq \unsafe$) then the mis-speculation flag of $\specconftwo$ must be $\bot$.
\end{remark}
\begin{proof}
  The proof proceeds by induction. The base case is trivial. In the inductive step, we perform a case analysis on the directive used for the last transition:
  \begin{proofcases}
    \proofcase{$\dstep$} By inspecting this fragment of the semantics, we deduce that the $\nat$-th and the $\nat+1$-th target configurations have the same number of configurations. Therefore, the $\nat$-th target stack must contain exactly one configuration. As a result, we can apply the IH  and deduce that the mis-speculation flag of the $\nat$-th target configuration is $\bot$. Furthermore, we observe that for all the rules producing a transition with the $\dstep$ directive, the flag of the source configuration and the target configuration remains the same. Hence, we conclude the proof for this case.
    
    \proofcase{$\dload i, \dbranch b, \djump \add$} For all the rules that match these directives, the claim holds due to the vacuity of the premise: the $\nat+1$-th target configuration stack has a height greater than $1$. The only exception is \ref{SI:Load-Err}. In this case, the proof follows the same reasoning as the case where the rule uses the $\dstep$ directive, which has already been considered in the previous step.  
    \proofcase{$\dbt$} We go by case analysis on the applied rule, which must be one of \ref{SI:Backtrack-Top} or \ref{SI:Backtrack-Bot}.
    \begin{proofcases}
          \proofcase{\ref{SI:Backtrack-Bot}} In this case, the claim follows directly from the definition of the rule.
          \proofcase{\ref{SI:Backtrack-Top}} Let $\cfstack_\nat$ denote the $\nat$-th target configuration stack. Note that its height can neither be $0$ nor $1$. In the first case, there would be no subsequent transition, while in the second case, applying the IH would imply that the mis-speculation flag is $\bot$, contradicting the assumption about the applied rule.
          If the height of $\cfstack_\nat$ is greater than $2$, then the height of the $\nat+1$-th target stack is greater than $1$, so the claim holds due to the vacuity of the premise. 
          Finally, if the height is exactly $2$, then $\cfstack_\nat$ must have the shape $\specconftwo_\nat \cons \specconftwo_{\nat}'$. By \Cref{remark:nobttech}, we know that there is a sequence of transitions from $\specconfone$ to $\specconftwo_\nat'$ of length $\nat' \le \nat$. By applying the IH to this sequence, we can show that $\specconftwo_\nat'$ has the mis-speculation flag unset. We conclude by noting that the $\nat+1$-th target configuration stack is precisely $\specconftwo_\nat'$.  
    \end{proofcases}
  \end{proofcases}
\end{proof}

\begin{lemma}
  \label{lemma:nobtsteponly}
  Let $\nat$ be a natural number, $\confone$ a configuration, $\cfstack$ a stack, and $\Ds$, $\Os$ sequences of directives and observations, respectively. Suppose that:
  \[
    \nsstep{\nat} {\confone} {\conftwo : \cfstack} {\Ds} {\Os}.
  \]
  If the mis-speculation flag of $\conftwo$ is $\bot$, then there exist a sequence of observations $\overline \Os$ and a natural number $\nat' \le \nat$ such that:
  \[
    \nsstep{\nat'} {\confone} {\conftwo} {\dstep^{\nat'}} {\overline \Os}.
  \]
\end{lemma}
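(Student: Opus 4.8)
The plan is to argue by induction on the reduction length $\nat$, regarding $\confone$ — exactly as in \Cref{lemma:misspecunset} and \Cref{remark:nobttech} — as a single backtrackable configuration, i.e.\ a stack of height one. (Note that with only $\dstep$ directives the height of the stack never changes, so the target of the claimed reduction can only ever be a singleton; this is precisely why the conclusion is $\conftwo$ alone rather than $\conftwo : \cfstack$.) The base case $\nat = 0$ is immediate: then $\conftwo : \cfstack = \confone$ forces $\cfstack = \nil$ and $\conftwo = \confone$, and the empty reduction witnesses the claim. For the inductive step I would split the reduction as $\nsstep{\nat-1}{\confone}{\cfstack''}{\Ds'}{\Os'}$ followed by one step $\cfstack'' \sto{\dir}{\obs} \conftwo : \cfstack$, and do a case analysis on $\dir$. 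Two structural facts drive every case: (i) every $\dstep$ rule preserves both the mis-speculation flag and the stack height, rewriting only the topmost configuration and ignoring the tail; and (ii) a \emph{correct} speculation yields the same top configuration, with unchanged flag, as its $\dstep$ counterpart — the coincidence of \ref{SI:If-Branch} with \ref{SI:If} (and of \ref{SI:Loop-Branch} with \ref{SI:Loop-Step}) when $d = \toBool{\sem\expr_{\regmap,\lay}}$, of \ref{SI:Call} with \ref{SI:Call-Step} when $\add = \toAdd{\sem\expr_{\regmap,\lay}}$, and of \ref{SI:Load} with \ref{SI:Load-Step}, the latter using \Cref{remark:onbuflookup} to pass from the index-$i$ value to the most-recent value once the returned flag is $\bot$. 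Throughout, the produced observation sequence is immaterial, since it is existentially quantified.

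When $\dir = \dstep$, fact (i) gives $\cfstack'' = \conftwo' : \cfstack$ with $\conftwo'$ carrying the same flag $\bot$ as $\conftwo$; the inductive hypothesis on the length-$(\nat-1)$ prefix yields a $\dstep$-reduction from $\confone$ to the singleton $\conftwo'$, and appending the same rule (now at the singleton, which is legitimate because $\dstep$ rules ignore the tail) reaches $\conftwo$. When $\dir$ is a speculation directive ($\dbranch$, $\djump$, or $\dload{i}$), the step pushes a fresh top above the branch/call/load configuration $\specconfone$, whose flag is some $\boolms$, and the new flag is $\boolms \lor (\cdot)$ where the disjunct records a misprediction. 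The hypothesis that $\conftwo$ has flag $\bot$ therefore forces simultaneously $\boolms = \bot$ \emph{and} correctness of the speculation. Hence $\cfstack'' = \specconfone : \cfstack_0$ with $\specconfone$ a $\bot$-flagged top, the inductive hypothesis gives a $\dstep$-reduction from $\confone$ to the singleton $\specconfone$, and by fact (ii) the appropriate $\dstep$ rule sends $\specconfone$ to precisely $\conftwo$, closing the case with $\nat' \le \nat$.

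The backtracking directives are the delicate cases, and one of them is the main obstacle. For \ref{SI:Backtrack-Bot} the last step is $\specconfone : \cfstack_0 \sto{\dbt}{\obt\bot} \specconfone$ with $\cfstack_0 \neq \nil$ and $\specconfone$ already $\bot$-flagged, so $\conftwo = \specconfone$, $\cfstack = \nil$, and the inductive hypothesis applies directly to the length-$(\nat-1)$ prefix, whose top is already $\conftwo$ with flag $\bot$. The real difficulty is \ref{SI:Backtrack-Top}: here the step $\specconfone : \conftwo : \cfstack \sto{\dbt}{\obt\top} \conftwo : \cfstack$ discards a $\top$-flagged configuration $\specconfone$, so immediately before the backtrack $\conftwo$ is \emph{not} the top of the stack and the inductive hypothesis is unavailable. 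I would resolve this with \Cref{remark:nobttech}: since after $\nat - 1$ steps the state is $\specconfone : (\conftwo : \cfstack)$ with non-empty tail $\conftwo : \cfstack$, that result furnishes an earlier step count $\nat'' < \nat - 1$ at which the whole state is exactly $\conftwo : \cfstack$, with $\conftwo$ on top. The inductive hypothesis applied to this strictly shorter reduction (valid since $\conftwo$ has flag $\bot$) yields the required $\dstep$-reduction to the singleton $\conftwo$, with $\nat' \le \nat'' < \nat$. Combining this rewinding past a discarded speculation frame via \Cref{remark:nobttech} with the per-directive verification that correct speculation coincides with the sequential $\dstep$ transition is the crux of the argument.
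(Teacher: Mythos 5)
Your proof is correct and follows essentially the same route as the paper's: induction on $\nat$ with a case analysis on the final directive, replacing each correctly-resolved speculation by its $\dstep$ counterpart (using \Cref{remark:onbuflookup} for loads) and handling \ref{SI:Backtrack-Top} by rewinding to an earlier state via \Cref{remark:nobttech} before invoking the (strong) induction hypothesis. The only cosmetic difference is that you bypass the paper's explicit height case analysis in the backtracking case by observing that the non-emptiness of the target stack already guarantees the precondition of \Cref{remark:nobttech}.
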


\begin{proof}
  The proof proceeds by induction on $\nat$. The base case is trivial. 
  For the inductive case, we consider the directive used in the $\nat+1$-th transition.
  The premise tells us that:
  \[
    \nsstep{\nat} {\confone} {\conftwo: \cfstack} {\Ds} {\Os} \sto \dir \obs {\conftwo': \cfstack'}  
  \]
  and we know that the mis-speculation flag of $\conftwo'$ is $\bot$. 
  Our goal is to show that there exist $\overline \Os'$ and $\nat'' \le \nat+1$ such that:
  \[
    \nsstep{\nat''} {\conftwo} {\conftwo'} {{\dstep^{\nat''}}} {{\overline \Os}}
  \]

  \begin{proofcases}
    \proofcase{$\dstep$} 
    Since the directive is $\dstep$, we observe that the mis-speculation flag of $\conftwo$ must be $\bot$, because no rule associated with $\dstep$ changes this flag. 
    Therefore, we can directly apply the IH, stating that:
    \[
      \nsstep{\nat'} {\confone} {\conftwo} {{\dstep^{\nat'}}} {{\overline \Os}}
    \]

    Next, we examine the rules matching the transition:
    \[
      \sstep {\conftwo: \cfstack} {\conftwo': \cfstack'} \dir \obs
    \]
    By analyzing all the applicable rules, we note that their premises
    depend solely on $\conftwo$ and not on $\cfstack$. Consequently,
    if any of these rules applies to the transition above, it must
    also apply to the configuration $\conftwo$. A case analysis of
    these rules reveals that the resulting configuration is precisely
    $\conftwo''$.
    \proofcase{$\dload i$} The only two rules matching
    this directive, given our premises, are \ref{SI:Load-Err} and
    \ref{SI:Load}. We only take in exam the latter rule, as the former
    one can be reduced using the same rule with the $\dstep$
    directive.
    If the rule \ref{SI:Load} is applied, we can express the transition  
    \[
      \sstep {\conftwo: \cfstack}  {\conftwo': \cfstack'}  \dir \obs
    \]  
    as follows:
    \begin{multline*}
      \sstep {\sframe{\frame{\cmemread[\lbl] \vx  \expr\sep\cmd}{\regmap}{\opt}\cons \st} {\bm\buf\mem}{\boolms} \cons \cfstack} {}{{{\dload[\lbl] i}}} {\omem \add} \\
      \sframe{\frame{\cmd}{\update \regmap x \val}{\opt}\cons\st}{\bm\buf\mem}{\boolms\lor\bool'} \cons
      \sframe{\frame{\cmemread[\lbl] \vx  \expr\sep\cmd}{\regmap}{\opt}\cons\st}{\bm\buf\mem}{\boolms} \cons \cfstack
    \end{multline*}
    From the premises of the rule, we deduce that the following conditions hold:
    \begin{varitemize}
    \item $\toAdd{\sem\expr_{\regmap, \lay}} = \add$
    \item $\bufread {\bm\buf\mem} \add i = \val, \bool'$
    \item $\add \in \underline \lay(\Ar[\opt])$
    \item If $\opt = \km[\syscall]$, then $\add \in \underline \lay(\caps(\syscall))$
    \end{varitemize}
    In particular, from the main premise of the claim, we know that $\bool'$ cannot be $\top$. Hence, it follows that $\bufread {\bm\buf\mem} \add i = \val, \bot$. 
    Our goal is to show that the following transition holds:
    \begin{equation*}
      \sstep {\sframe{\frame{\cmemread[\lbl] \vx  \expr\sep\cmd}{\regmap}{\opt}\cons \st} {\bm\buf\mem}{\boolms}} {}{{{\dstep}}} {\omem \add}
      \sframe{\frame{\cmd}{\update \regmap x \val}{\opt}\cons\st}{\bm\buf\mem}{\boolms\lor\bool'}
    \end{equation*}
    To establish this, we note that if we can show $\bufread {\bm\buf\mem} \add 0 = \val, \bot$, the premises of the rule \ref{SI:Load-Step} are satisfied, establishing the transition the proof. The conclusion $\bufread {\bm\buf\mem} \add 0 = \bufread {\bm\buf\mem} \add i=\val, \bot$  follows from \Cref{remark:onbuflookup}.
    \proofcase{$\djump \add$} The only rules matching
    this directive, given our premises, are \ref{SI:Call-Err} and
    \ref{SI:Call}. We only take in exam the latter rule, as proof for the former
    one is analogous to the latter.
    If the rule \ref{SI:Call} is applied, we can express the transition  
    \[
      \sstep {\conftwo: \cfstack}  {\conftwo': \cfstack'}  \dir \obs
    \]  
    as follows:
    \begin{multline*}
      \lay \red \sframe{\frame{\ccall \expr {\vec\exprtwo}\sep\cmd}{\regmap}{\opt}\cons\st}{\bmem}{\boolms}\cons\cfstack
      \sto{\djump \add}{\ojump \add} \\\sframe{\frame{\mem(\add)} {\regmap_0'}{\opt}\cons\frame{\cmd}{\regmap}{\opt} \cons \st}{\bmem}{\boolms \lor (\add \neq \add')}\cons \sframe{\frame{\ccall \expr {\vec\exprtwo}\sep\cmd}{\regmap}{\opt}\cons\st}{\bmem}{\boolms}\cons \cfstack
    \end{multline*}
    From the premises of the rule, we deduce that the following properties hold:
    \begin{varitemize}
    \item $\toAdd{\sem\expr_{\regmap, \lay}} = \add$
    \item $\add \in \underline \lay(\Ar[\opt])$
    \item If $\opt = \km[\syscall]$, then $\add \in \underline \lay(\caps(\syscall))$
    \end{varitemize}
    In particular, from the main premise of the claim, we know that $\add= \add'$. Our goal is to show that the following transition holds:
    \begin{equation*}
      \lay \red \sframe{\frame{\ccall \expr {\vec\exprtwo}\sep\cmd}{\regmap}{\opt}\cons\st}{\bmem}{\boolms}
      \sto{\djump \add}{\ojump \add} \\\sframe{\frame{\mem(\add)} {\regmap_0'}{\opt}\cons\frame{\cmd}{\regmap}{\opt} \cons \st}{\bmem}{\boolms \lor (\add \neq \add')}
    \end{equation*}
    To establish this, we simply observe that premises of the rule \ref{SI:Call-Step} are satisfied, establishing the transition the proof.
    \proofcase{$\dbranch \bool$} Analogous to the previous case.
    \proofcase{$\dbt$}
    If the applied rule is \ref{SI:Backtrack-Bot}, the claim directly follows from the IH. 
    Otherwise, the applied rule is \ref{SI:Backtrack-Top}. We go by case analysis on the height of the configuration stack $\conftwo: \cfstack$.
    If the height is $0$, we reach a terminal configuration, which is a contradiction.
    If the height is $1$, then the topmost configuration must have its mis-speculation flag unset due to \Cref{lemma:misspecunset}. This contradicts the premise of the applied rule, which requires the flag to be set to $\top$. 
    If the height is greater than $1$, we can express the configuration stack as $\conftwo: \cfstack = \conftwo \cons \conftwo' \cons \cfstack'$. By introspection of the rule and the main assumption of this claim, we deduce that the mis-speculation flag of $\conftwo'$ is $\bot$. 
    Applying \Cref{remark:nobttech} to the $\nat$-step reduction from $\confone$ to $\conftwo \cons \conftwo' \cons \cfstack'$, we establish the existence of a $\nat' \le \nat$ reduction from $\confone$ to $\conftwo' \cons \cfstack'$. 
    By the IH, there exists a reduction of length $\nat'' \le \nat'$ from $\confone$ to $\conftwo'$, which uses only the $\dstep$ directive. This completes the proof.
  \end{proofcases}
\end{proof}

\begin{lemma}
  \label{lemma:nobt}
  For every system $\system = (\rfs, \syss, \caps)$,
  speculative configuration $\specconfone = \sframe {\frame \cmd \regmap \opt} {\mem} \bot$
  non-empty speculative stack $\cfstack$,
  sequence of directives $\Ds$ and
  sequence of observations $\Os$,
  if:
  \[
    \nsstep \nat \specconfone {\cfstack} \Ds \Os,
  \]
  then there are $\nat'\le\nat$, $\Ds'$, $\Os'$
  such that 
  \[
    \nsstep {\nat'} \specconfone {\cfstack} {\Ds'} {\Os'}.
  \]
  and $\Ds'$ does not contain any $\dbt$ directive.
\end{lemma}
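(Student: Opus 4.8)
The plan is to argue by strong induction on the number of steps $\nat$, doing a case analysis on the directive $\dir$ used in the \emph{last} transition of the given reduction. The base case $\nat = 0$ is immediate: then the reduction is $\nsstep {0} {\specconfone} {\cfstack} {\nil} {\nil}$, so $\cfstack = \specconfone$ and the empty directive sequence is already free of $\dbt$. For the inductive step I would write the reduction as $\nsstep {\nat'} {\specconfone} {\cfstack_0} {\Ds_0} {\Os_0} \sto {\dir} {\obs} \cfstack$ with $\nat' = \nat - 1$, and observe at the outset that $\cfstack_0$ is non-empty, since every rule of the speculative semantics fires only from a non-empty stack. The conceptual content is that a $\dbt$ step never helps \emph{build} the final stack: it only ever discards configurations, so any stack reachable with backtracking is also reachable without it.

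If $\dir \neq \dbt$, the argument is uniform and short. I would apply the induction hypothesis to the $(\nat-1)$-step prefix reaching $\cfstack_0$ (admissible because $\cfstack_0$ is non-empty), obtaining a $\dbt$-free reduction $\nsstep {\nat''} {\specconfone} {\cfstack_0} {\Ds''} {\Os''}$ with $\nat'' \le \nat - 1$, and then re-append the original non-backtracking step, giving $\nsstep {\nat''} {\specconfone} {\cfstack_0} {\Ds''} {\Os''} \sto {\dir} {\obs} \cfstack$. This reaches $\cfstack$, has length $\nat'' + 1 \le \nat$, and remains $\dbt$-free since $\Ds''$ is and $\dir \neq \dbt$.

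The two backtracking cases are exactly where the pre-established technical results are invoked. If $\dir = \dbt$ was applied via \ref{SI:Backtrack-Top}, then $\cfstack_0 = \specconftwo \cons \cfstack$ and the step discards the top, so the $(\nat-1)$-step reduction reaches a stack whose tail is precisely the (non-empty, by hypothesis) target $\cfstack$; I would invoke \Cref{remark:nobttech} to extract a reduction of length strictly below $\nat-1$ reaching $\cfstack$ directly, and then apply the induction hypothesis to that strictly shorter reduction. If instead $\dir = \dbt$ was applied via \ref{SI:Backtrack-Bot}, then $\cfstack_0 = \specconftwo \cons \cfstack''$ with $\cfstack'' \neq \nil$, the top configuration $\specconftwo$ carries mis-speculation flag $\bot$ by the rule's premise, and the result is $\cfstack = \specconftwo \cons \nil$. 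Here the $\bot$-flag hypothesis of \Cref{lemma:nobtsteponly} is met, so feeding it the $(\nat-1)$-step reduction yields a reduction reaching $\specconftwo$ alone using only $\dstep$ directives, which is exactly a $\dbt$-free reduction reaching $\cfstack$.

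The main obstacle is the \ref{SI:Backtrack-Bot} case, since it is the only one not handled purely by the induction hypothesis: it requires the observation that backtracking onto a $\bot$-flagged (correctly speculated) configuration collapses the whole speculation history to something replayable with $\dstep$ alone, which is precisely the content of \Cref{lemma:nobtsteponly}. The remaining care is bookkeeping: in each case I must check that the stack handed to the induction hypothesis or to \Cref{remark:nobttech} is non-empty — which follows from $\cfstack$ being non-empty and from every rule requiring a non-empty left-hand side — and that the resulting step counts all stay $\le \nat$, as they do in each branch.
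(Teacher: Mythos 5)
Your proof is correct and follows essentially the same route as the paper's: induction on $\nat$ with a case split on the directive of the last transition, closing the non-backtracking case by the induction hypothesis, the \ref{SI:Backtrack-Top} case by \Cref{remark:nobttech} followed by the induction hypothesis on the strictly shorter reduction, and the \ref{SI:Backtrack-Bot} case by \Cref{lemma:nobtsteponly}. The only (harmless) divergence is that in the \ref{SI:Backtrack-Top} case you justify non-emptiness of the tail directly from the lemma's hypothesis that the target stack is non-empty, whereas the paper derives it from \Cref{lemma:misspecunset}.
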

\begin{proof}
  By induction on $\nat$.
  \begin{proofcases}
    \proofcase{0} This case is trivial.
    \proofcase{$\nat+1$} The premise states that
    \[
      \nsstep \nat \specconfone {\cfstack'} \Ds \Os \sto\dir \obs \cfstack.
    \]
    By the induction hypothesis (IH), we know that there exist $\nat' \le \nat$,
    a sequence of directives $\Ds'$ (without any $\dbt$ directives), and a sequence of observations $\Os'$ such that:
    \[
      \nsstep {\nat'} \specconfone {\cfstack'} {\Ds'} {\Os'}.
    \]
    We need to show that if
    \[
      \sstep {\cfstack'} {\cfstack} \dir \obs,
      \tag{\dag}
    \]
    then there exists $\nat'' \le \nat+1$, a sequence of directives $\Ds''$, and a sequence of observations $\Os''$ such that:
    \[
      \nsstep {\nat''} \specconfone {\cfstack} {\Ds''} {\Os''}.
    \]
    Additionally, it must hold that $\Ds''$ does not contain any $\dbt$ directives. We proceed by analyzing the directive $\dir$ used in (\dag).
    \begin{proofcases}
      \proofcase{$\dir \neq \dbt$} In this case, the claim directly follows from the IH.
      \proofcase{$\dir = \dbt$} The rule that has been applied can either be \ref{SI:Backtrack-Top} or \ref{SI:Backtrack-Bot}.
      \begin{proofcases}
        \proofcase{\ref{SI:Backtrack-Top}} In this case, by introspection of the rule we conclude that $\cfstack' = \specconftwo : \cfstack''$, that the mis-speculation flag of $\specconftwo$ is set to $\top$. By \Cref{lemma:misspecunset}, we know that $\cfstack'' \neq \nil$ (otherwise, the configuration $\conftwo$ could not have its mis-speculation flag set to $\top$). 
        This observation allows us to apply \Cref{remark:nobttech} and establish that there is a sequence of transitions from $\cfstack$ to $\cfstack''$ with length $\nat' < \nat$. The conclusion then follows by applying the IH to this intermediate result. This completes the sub-case.
        
        \proofcase{\ref{SI:Backtrack-Bot}} In this case, $\cfstack$ contains only a single configuration. We use \Cref{lemma:nobtsteponly} to demonstrate that there is a sequence of transitions from $\specconfone$ to $\cfstack$, consisting solely of the $\dstep$ directive, which establishes our claim.
      \end{proofcases}
    \end{proofcases}
  \end{proofcases}
\end{proof}

\begin{lemma}
  \label{lemma:syscsym}
  For every system $\system (\rfs, \syss, \caps)$ and configurations 
  \[
    \conf {\frame {\syss(\syscall)} \regmap {\km[\syscall]}, \lay \lcomp {\rfs'}, \Os, \Ds}
  \]
  with $\rfs' \eqon{\Fn} \rfs$, and 
  \[
    \conf {\frame {\cmd} {\regmap'} {\km[\syscall]} : \st, \lay \lcomp {\rfs''}, \Os', \Ds'},
  \]
  if:
  \[
    \nesstep n {\conf {\frame {\syss(\syscall)} \regmap {\km[\syscall]}, \lay \lcomp {\rfs'}, \Os, \Ds}}
    {\conf {\frame {\cmd} {\regmap'} {\km[\syscall]} : \st, \lay \lcomp {\rfs''}, \Os', \Ds'}}
  \]
  then:
  \[
    \nesstep n {\conf {\frame {\syss(\syscall)} \regmap {\km[\syscall]}, \lay \lcomp {\rfs'}}}
    {\conf {\frame {\cmd} {\regmap'} {\km[\syscall]} : \st, \lay \lcomp {\rfs''}}}.
  \]
\end{lemma}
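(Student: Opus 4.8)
The plan is to prove, by induction on $n$, the strengthened claim obtained by replacing the distinguished source frame $\frame{\syss(\syscall)}{\regmap}{\km[\syscall]}$ with an arbitrary stack $\st_0$ satisfying $\km[\syscall](\st_0)$ in the sense of \Cref{sec:appsafety2} (every frame in kernel mode, and every command in $\Cmd$). Concretely I would show: whenever $\conf{\st_0, \lay\lcomp\rfs_0, \Ds, \Os} \ato^{n} \conf{\st_1, \lay\lcomp\rfs_1, \Ds', \Os'}$ with $\km[\syscall](\st_1)$, one has $\conf{\st_0, \lay\lcomp\rfs_0} \to^{n} \conf{\st_1, \lay\lcomp\rfs_1}$, and moreover $\Ds' = \Ds$ and $\Os' = \Os$. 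The lemma is the instance $\st_0 = \frame{\syss(\syscall)}{\regmap}{\km[\syscall]}$, whose reachable configurations have kernel-mode stacks by the invariant discussed below, so that the hypothesis $\km[\syscall](\st_1)$ is automatically met by the target $\frame{\cmd}{\regmap'}{\km[\syscall]}:\st$.

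The engine of the proof is a one-to-one correspondence between reduction rules. Since every frame on the stack carries a command in $\Cmd$, the attacker instructions $\cspec\cdot$, $\cpoison\cdot$ and $\vx\ass\eobs$ never occur; hence the initiation rule \ref{AL:Spec-Init} and the bookkeeping rules \ref{AL:Poison}, \ref{AL:Obs} are never applicable, no hybrid configuration is ever produced, and the only $\ato$-steps that can fire are the standard-construct rules of \Cref{fig:scen2sem2bis,fig:scen2sem2bisbis}. Each of these rules operates on a plain memory $\mem$ and leaves the directive and observation stacks untouched; erasing $\Ds,\Os$ turns it verbatim into the corresponding rule of the classic semantics of \Cref{fig:stepexcerpt1,fig:stepexcerpt2} (e.g. \ref{AL:Load} into \ref{WL:Load}, \ref{AL:Call} into \ref{WL:Call}, \ref{AL:System-Call} into \ref{WL:SystemCall}, \ref{AL:Pop} into \ref{WL:Pop}). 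Thus every single $\ato$-step between two kernel-mode $\Cmd$-configurations projects to exactly one $\to$-step on the $\Ds,\Os$-erased configurations, and the induction on $n$ chains these projected steps together; the standing assumption that the memory stays of the form $\lay\lcomp\rfs$ (together with \Cref{rem:memupdtostupd} for stores) guarantees the memories agree on both sides at every step.

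The main obstacle — and the only place where real work is needed — is establishing the invariant that the whole reduction stays kernel-mode with commands in $\Cmd$, so that the rule correspondence above indeed applies at every step. I would obtain this from \Cref{rem:stackinv1} instantiated with the empty user suffix $\st_\um = \nil$: every reachable stack decomposes as $\st_{\km}' : \st_\um'$ with $\km(\st_{\km}')$ and $\um(\st_\um')$. It then remains to argue that $\st_\um'$ stays empty. No user frame is ever created during a kernel-mode computation, because \ref{AL:Call} copies the current (kernel) mode into the pushed frame and \ref{AL:System-Call} sends $\km[\syscalltwo]$ to $\km[\syscalltwo]$ on a nested call; and no pre-existing user frame can be exposed, since there is none to begin with and popping the last remaining kernel frame via \ref{AL:Pop} requires a frame below it, so $\frame{\cnil}{\regmap}{\km[\syscall]}$ alone is terminal. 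Hence $\st_\um' = \nil$ throughout and $\km[\syscall]$ holds of every reachable stack, which (via \Cref{sec:appsafety2}'s reading of $\km[\syscall]$) re-supplies the ``commands in $\Cmd$'' hypothesis needed to close the loop.
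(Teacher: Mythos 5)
Your proof is correct and follows essentially the same route as the paper's: induction on $n$ with a case analysis on the last applied rule, using \Cref{rem:stackinv1} to rule out the attacker-specific rules (\ref{AL:Poison}, \ref{AL:Obs}, \ref{AL:Spec-Init}, etc.) so that only the standard-construct rules of \Cref{fig:scen2sem2bis,fig:scen2sem2bisbis} can fire, each of which projects verbatim onto its classic counterpart. Your additional bookkeeping (the explicit generalization to an arbitrary kernel-mode stack, the argument that the user-mode suffix stays empty, and the observation that $\Ds$ and $\Os$ are left untouched) is a harmless strengthening of what the paper leaves implicit.
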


\begin{proof}
  The proof proceeds by induction on $\nat$. The base case is straightforward. For the inductive step, we perform a case analysis on the last applied rule. The result follows by examining the premises of the rules and applying the IH.
  It is important to note that, due to \Cref{rem:stackinv1}, the following rules cannot be applied: 
  \ref{AL:Poison}, \ref{AL:Obs}, \ref{AL:Obs-End}, \ref{AL:Spec-Init}, \ref{AL:Spec-Dir}, \ref{AL:Spec-Step}, \ref{AL:Spec-Bt}, and \ref{AL:Spec-Term}.
\end{proof}

The following lemma establishes that the output of a \emph{system-call semantics 
preserving} transformation remains semantically equivalent to its input.

\begin{lemma}
  \label{lemma:ctxsemequiv}
  Let $\system = ({\rfs}_1, \syss_1, \caps_1)$ be a system, and let  
  $\systrans(\system) = ({\rfs}_2, \syss_2, \caps_2)$.  
  If $\systrans$ is \emph{system-call semantics preserving}, then the systems  
  $\system$ and $\systrans(\system)$ are \emph{semantically equivalent}.
\end{lemma}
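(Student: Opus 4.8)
The plan is to reduce the claim to the step-by-step behaviour of the unprivileged attacker, exactly as in the companion result \Cref{lemma:premiseApreserv}. Since $\system$ and $\systrans(\system)$ differ only on kernel-space procedures (the \emph{system call semantics preservation} hypothesis gives ${\rfs}_1\eqon{\Idu\cup\Ark}{\rfs}_2$, and $\systrans$ leaves $\Idu\cup\Ark$ untouched), and since an unprivileged attacker never touches kernel space directly, the two evaluations proceed in lock-step until a system call transfers control to kernel mode. I would therefore prove a generalized statement by induction on the length of the reduction, quantifying not over a single command but over a user-mode frame stack $\st$ with $\um(\st)$ and over a pair of current stores, carrying the invariant that (i) the two user-mode stacks are syntactically identical, (ii) the two current stores agree on $\Idu\cup\Ark$, and (iii) each current store agrees with its system's initial store on $\Fnk$. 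Point (iii) is essentially free: since memory updates cannot overwrite code (the write/execute separation policy), the $\Fnk$-part of memory never changes during execution.

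The inductive case analysis is on the top instruction of $\st$. For every construct other than $\csyscall$, the two executions take the same transition: the evaluated addresses and guards depend only on user-space identifiers, which both stores resolve identically by invariant (ii) and \Cref{rem:expreval}; user-mode loads and stores are confined to $\underline\lay(\Ar[\um])\subseteq\Addu$; and user-mode calls resolve to user-space procedures, which $\systrans$ leaves unchanged. Each such step preserves invariants (i)--(iii) — a user-mode store only rewrites a user-space array cell, leaving both the $\Idu\cup\Ark$-agreement and the $\Fnk$-agreement intact — so the induction hypothesis applies directly, mirroring the corresponding cases of \Cref{lemma:premiseApreserv}.

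The crux is the system-call case. When the top frame dispatches $\csyscall{\syscall}{\vec\expr}$, the first execution runs $\syss_1(\syscall)$ and the second runs the transformed body $\syss_2(\syscall)$, each in a fresh kernel frame with identical argument registers (arguments evaluate identically by invariant (ii)). Here I would invoke the context-plugging lemma \Cref{lemma:contextplugging} to factor the monolithic reduction: the body is evaluated to completion, yielding an outcome in the sense of $\Eval$, after which, on normal termination, the return value is written back into the resumed user frame and the user computation continues. Because the two kernel-frame stores satisfy the three side-conditions of the preservation hypothesis — agreement with the respective initial store on $\Fnk$ by invariant (iii), and agreement on $\Ark$ by invariant (ii) — the \emph{system call semantics preservation} assumption of \Cref{def:sempres} yields $\Eval[\system][\lay]{\syss_1(\syscall),\regmap',\km[\syscall],\cdot}\simeq\Eval[\systrans(\system)][\lay]{\syss_2(\syscall),\regmap',\km[\syscall],\cdot}$. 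Unfolding $\simeq$, either both body evaluations diverge ($\Omega$), both abort identically ($\err$ or $\unsafe$, which propagate to the whole attacker and close the case), or both return the same value with stores still agreeing on $\Idu\cup\Ark$; in the last case the common return value and the re-established invariants let the induction hypothesis finish the resumed computation.

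The main obstacle is precisely this interleaving of user-mode lock-step with atomic system-call evaluation: the preservation hypothesis constrains only \emph{complete} evaluations of a system-call body, never its intermediate kernel-mode configurations, so one cannot maintain a relational invariant step-by-step inside kernel code. The context-plugging lemma \Cref{lemma:contextplugging} is exactly what lets us treat each system call as a single atomic transition and discharge it with \Cref{def:sempres}; the only delicate bookkeeping is checking that the side-conditions of the preservation definition hold at every invocation, which follows from the write/execute separation policy for the $\Fnk$-component and from the running invariant for the $\Ark$-component. Divergence is handled uniformly by the $\Omega$-clauses of $\simeq$ and of \Cref{lemma:contextplugging}, while nested system-call and kernel-procedure structure is subsumed, since each body is always evaluated as a whole through $\Eval$ rather than tracked configuration by configuration.
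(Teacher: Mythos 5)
Your proposal follows essentially the same route as the paper: an induction on the length of the reduction over user-mode frame stacks carrying the store-agreement invariant on $\Idu\cup\Ark$, lock-step simulation of the non-syscall instructions via \Cref{rem:expreval}, and an atomic treatment of each system call through \Cref{lemma:contextplugging} and the preservation hypothesis of \Cref{def:sempres} (the paper factors the call \emph{backwards} from the kernel-mode $\cnil$ frame via \Cref{lemma:thereisasyscall0}, whereas you factor it forwards at dispatch; both work, yours needing strong induction on the residual step count). The one point you gloss over is divergence of the \emph{attacker itself} in user mode: the $\Omega$-clauses of $\simeq$ and of \Cref{lemma:contextplugging} only cover a diverging system-call body, so for an attacker that loops forever outside any single call you still need the paper's contraposition argument (take the least $\nat$ at which $\systrans(\system)$ halts and derive a contradiction from the auxiliary claim applied with the roles of the two systems swapped). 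This is routine to add and does not affect the rest of the argument.
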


\begin{proof}
  To establish the main claim, we first prove the following auxiliary claim.
  Let
  \[
    \confone_1 = \conf{\frame \cmd \regmap \um, \lay \lcomp {\rfs}_1}
    \quad \text{and} \quad
    \confone_2 = \conf{\frame \cmd \regmap \um, \lay \lcomp {\rfs}_2}.
  \]
  For every $\nat \in \mathbb{N}$ and every pair of systems 
  $\system_1$ and $\system_2$ whose system calls have the same semantics 
  in the sense of \Cref{def:sempres}, we have:
  \[
    \nstep[\system_1] \nat {\confone_1} {\conf{\st, \lay \lcomp {\rfs}_1'}} \land \um(\st_1)
    \Rightarrow
    \nstep[\system_2] * {\confone_2} {\conf{\st, \lay \lcomp {\rfs}_2'}} 
    \land {\rfs}_1' \eqon{\Idu \cup \Ark} {\rfs}_2'.
  \]
  We prove this by induction on $\nat$:
  \begin{proofcases}
    \proofcase{0} This case is trivial.
    \proofcase{$\nat+1$} In this case, we assume:
    \[
      \nstep[\system_1] \nat {\confone_1} {\conf{\st, \lay \lcomp {\rfs}_1'}} \to {\conf{\st', \lay \lcomp {\rfs}_1''}} \land \um(\st').
    \]
    We apply the IH to the $\nat$-step transition, obtaining:
    \[
      \nstep[\system_2] * {\confone_2} {\conf{\st, \lay \lcomp {\rfs}_2'}} \land {\rfs}_1' \eqon{\Idu\cup\Ark} {\rfs}_2'.
    \]
    Our goal is to establish:
    \[
      \nstep[\system_2] * {\conf{\st, \lay \lcomp {\rfs}_2'}} {\conf{\st', \lay \lcomp {\rfs}_2''}} \land {\rfs}_1'' \eqon{\Idu\cup\Ark} {\rfs}_2''.
    \]
    The proof proceeds by case analysis on the rule used in:
    \[
      \lay \red[\system_1] {\conf{\st, \lay \lcomp {\rfs}_1'}} \to {\conf{\st', \lay \lcomp {\rfs}_1''}}.
      \tag{\dag}
    \]
    In most cases, the identity of the frame stack and the equivalence
    ${\rfs}_1' \eqon{\Idu\cup\Ark} {\rfs}_2'$ suffice to show that the
    same rule applies to the corresponding transition in $\system_2$.
    To illustrate this, we consider \ref{WL:Load} as an
    example. Another notable case is \ref{WL:Pop}, which we address
    afterward.
    \begin{proofcases}
      \proofcase{\ref{WL:Load}} By analyzing the rule, we observe that
      $\um(\st)$ holds. In particular: the execution-mode flag of the
      topmost frame remains unchanged in the transition.  The
      expression within the load instruction does not contain
      kernel-space identifiers; otherwise, it would break the
      invariant of \Cref{rem:stackinv0}.
      From these observations and the structure of \ref{WL:Load}, we
      deduce that: ${\rfs}_1''={\rfs}_1'$,
      $\st = \frame {\cmemread \vx \expr\sep \cmdtwo}{\regmap'}{\um}
      \cons \overline \st$, and that
      $\st' = \frame {\cmdtwo}{\update{\regmap'}{\vx}{{\rfs}_1'(\ar,
          i)} }{\um} \cons \overline \st$, for some
      $(\ar, i) \in \Aru\times \Nat$ such that
      $\lay(\ar) + i = \sem\expr_{\lay, \regmap'}$.  Moreover, by
      introspection of rule \ref{WL:Load}, we confirm that it also
      applies to establish:
      \[
        \nstep[\system_2] {} {\conf{\st, \lay \lcomp {\rfs}_2'}}
        {\conf{\st', \lay \lcomp {\rfs}_2'}}
      \]
      In particular, the stack frame is $\st'$ because
      ${\rfs}_1'(\ar, i) = {\rfs}_2'(\ar, i)$.
      \proofcase{\ref{WL:Pop}}
      We can refine the assumption (\dag) as follows:
      \[
        \lay \red[\system_1] {\conf{\frame \cnil
            {\regmap'}{\opt}:\overline \st, \lay \lcomp {\rfs}_1'}}
        \to {\conf{\overline \st', \lay \lcomp {\rfs}_1'}}
      \]
      where, in particular, $\overline \st'$ is obtained by updating
      the register map of $\overline \st$ according to the
      rule. Additionally, we have $\um(\overline \st')$ which also
      implies $\um(\st)$. We proceed by cases on $\opt$.
      
      \begin{proofcases}
        \proofcase{$\opt=\um$} In this case, we can apply the IH, and
        the remaining part of the proof follows analogously to the
        case of rule \ref{WL:Load}.

        \proofcase{$\opt=\km[\syscall]$}
        Here, we cannot apply the IH, but we can use
        \Cref{lemma:thereisasyscall0} to prove the existence of the
        reductions:
        \begin{multline*}
          \nstep[\system_1] {\nat'} {\confone_1}  {\conf{\frame{\csyscall \syscall {\expr_1, \ldots, \expr_k};\cmdtwo}\regmap \um\cons \st'', \lay \lcomp {\rfs}_1''}} \to\\
          {\conf{\frame{\syss_1(\syscall)}{\regmap_0'} {\km[\syscall]}\cons \frame{\cmdtwo}\regmap \um\cons \st'', \lay \lcomp {\rfs}_1''}} \to^{\nat-\nat'-1}
          {\conf{\frame \cnil
              {\regmap'}{\km[\syscall]}:\overline \st, \lay \lcomp {\rfs}_1'}}
          \to {\conf{\overline \st', \lay \lcomp {\rfs}_1'}}
          \tag{$*$}
        \end{multline*}
        and
        \begin{equation*}
          \lay\red[\system_1]   
          {\conf{\frame{\syss_1(\syscall)}{\regmap_0'} {\km[\syscall]}, \lay \lcomp {\rfs}_1''}} \to^{\nat-\nat'-1}
          {\conf{\frame \cnil
              {\regmap'}{\km[\syscall]}:\overline \st'', \lay \lcomp {\rfs}_1'}},
          \tag{\ddag}
        \end{equation*}
        with
        $\overline \st''\cons \frame{\cmdtwo}\regmap \um\cons \st'' =
        \overline \st$ and $\km(\overline \st'')$.  In ($*$), we
        tacitly unrolled the first reduction step of the system call
        invocation.  From $\km(\overline \st'')$ and
        $\um(\overline \st)$, we deduce that $\overline \st''$ must be
        empty. 
        
        Next, we apply the IH to the first $\nat'$ steps of the
        reduction in ($*$), obtaining:
        \begin{multline*}
          \nstep[\system_2] {*} {\confone_2}  {\conf{\frame{\csyscall \syscall {\expr_1, \ldots, \expr_k};\cmdtwo}\regmap \um\cons \st'', \lay \lcomp {\rfs}_2''}}\to\\
          {\conf{\frame{\syss_2(\syscall)}{\regmap_0'} {\km[\syscall]}\cons\frame{\cmdtwo}\regmap \um\cons \st'', \lay \lcomp {\rfs}_2''}},
        \end{multline*}
        where ${\rfs}_2''\eqon{\Idu\cup \Ark}{\rfs}_2''$.  Here, we
        have once again expanded the first step of the system call
        using rule \ref{WL:SystemCall}.  By the system call semantics
        equivalence property, we deduce:
        \[
          \Eval[\system_1][\lay]{\syss_1(\syscall), \regmap_0', \km[\syscall], {\rfs}_1''}\simeq
          \Eval[\system_2][\lay]{\syss_2(\syscall), \regmap_0', \km[\syscall], {\rfs}_2''}.
        \]
        Using this equivalence, we conclude that: 
        \begin{equation*}
          \lay\red[\system_2]  
          {\conf{\frame{\syss_2(\syscall)}{\regmap_0'} {\km[\syscall]}, \lay \lcomp {\rfs}_2''}} \to^{\nat-\nat'-1}
          {\conf{\frame \cnil
              {\regmap''}{\km[\syscall]}, \lay \lcomp {\rfs}_2'}}
        \end{equation*}
        By combining this result and (\ddag), we obtain
        ${\rfs}_1'\eqon{\Idu\cup \Ark}{\rfs}_2'$ and
        $\regmap''(\ret)= \regmap'(\ret)$.  We conclude by applying \Cref{lemma:contextpluggingtech}.
      \end{proofcases}
    \end{proofcases}
  \end{proofcases}
  Now that the auxiliary claim has been established, we specialize it
  to $\system_1 = \system$ and $\system_2 = \systrans(\system)$.  The
  proof proceeds by case analysis on
  $\Eval[\system_1][\lay]{\cmd, \regmap, \um, {\rfs}_1}$:
  \begin{proofcases}
    \proofcase{$(\val, \rfs)$} The conclusion is a direct consequence
    of the auxiliary claim.

    \proofcase{$\err$} In this case we have:
    \[
      \nstep[\system_1][\lay] * {\confone_1} {\confone_1'} \to \err,
    \]
    notice that $\confone_1'$ is not a terminal configuration.  Now we
    go by case analysis on whether ${\confone_1'}$ is in user-mode or
    not (i.e. if its frame stack is in user-mode).
    \begin{proofcases}
      \proofcase{$\um({\confone_1'})$} In this case, we apply the
      auxiliary claim, and we deduce:
      \[
        \nstep[\system_2][\lay] * {\confone_2} {\confone_2'}.
      \]
      In particular, the frame stacks of $\confone_1'$ and
      $\confone_2'$ are identical, and their stores are identical on
      user-space identifiers. By case analysis on the rule that was
      used to prove:
      \[
        \step[\system_1][\lay] {\confone_1'} \err,
      \]
      we deduce that the same rule can also be used to prove:
      \[
        \step[\system_2][\lay] {\confone_2'} \err.
      \]
      \proofcase{$\km(\confone_1')$} We apply \Cref{lemma:thereisasyscall0}, which proves that:
      \begin{multline*}
        \nstep[\system_1][\lay] * {\confone_1} {\conf {\frame {\csyscall \syscall{\expr_1, \ldots, \expr_k}\sep\cmdtwo} {\regmap'}\um\cons \st, \lay \lcomp {\rfs}_1'}}\to\\ {\conf {\frame {\syss_1(\syscall)} {\regmap_0'} {\km[\syscall]}\cons \frame {\cmdtwo} {\regmap'}\um\cons \st, \lay \lcomp {\rfs}_1'}}\to^* {\conf {\st'\cons\frame {\cmdtwo} {\regmap'}\um\cons \st, \lay \lcomp {\rfs}_1''}} \to \err,
      \end{multline*}
      and that:
      \begin{equation*}
        \nstep[\system_1][\lay] *  {\conf {\frame {\syss_1(\syscall)} {\regmap_0'} {\km[\syscall]}, \lay \lcomp {\rfs}_1'}} {\conf {\st', \lay \lcomp {\rfs}_1''}}
      \end{equation*}
      In the first conclusion, we tacitly replaced $\confone_1'$ with
      ${\conf {\st'\cons\frame {\cmdtwo} {\regmap'}\um\cons \st, \lay
          \lcomp {\rfs}_1''}}$ and we made the first reduction step
      after the system call invocation explicit.  By case analysis on
      the rules that has been used to prove
      $\step[\system_1][\lay] {\confone_1'} \err$, we observe that the
      same rule can also be used to prove
      $\step[\system_1][\lay]{\conf {\st', \lay \lcomp
          {\rfs}_1''}}\err$. This proves:
      \begin{equation*}
        \nstep[\system_1][\lay] *  {\conf {\frame {\syss_1(\syscall)} {\regmap_0'} {\km[\syscall]}, \lay \lcomp {\rfs}_1'}}  \err,
      \end{equation*}
      and by applying the assumption on the preservation of system call semantics, we deduce:
      \begin{equation*}
        \nstep[\system_2][\lay] *  {\conf {\frame {\syss_2(\syscall)} {\regmap_0'} {\km[\syscall]}, \lay \lcomp {\rfs}_2'}}  \err.
      \end{equation*}
      By applying the auxiliary claim, we prove:
      \begin{multline*}
        \nstep[\system_2][\lay] * {\confone_2} {\conf {\frame {\csyscall \syscall{\expr_1, \ldots, \expr_k}\sep\cmdtwo} {\regmap'}\um\cons \st, \lay \lcomp {\rfs}_2'}}\to\\ {\conf {\frame {\syss_2(\syscall)} {\regmap_0'} {\km[\syscall]}\cons \frame {\cmdtwo} {\regmap'}\um\cons \st, \lay \lcomp {\rfs}_2'}},
      \end{multline*}
      Here, we unrolled the first step of the execution of the system
      call using the rule \ref{WL:SystemCall}. We conclude with an
      application of \Cref{lemma:contextpluggingtech}, which proves
      $\nstep[\system_2][\lay] * {\confone_2} \err$.
    \end{proofcases}
    \proofcase{$\unsafe$} This case is analogous to the previous one.
    \proofcase{$\Omega$} In this case we must show:
    \[
      \forall \nat. \exists \conftwo_\nat. \nstep[\system_2][\lay] \nat {\confone_2} {\conftwo_\nat}.
    \]
    The proof proceeds by contraposition. Assume that
    $\exists \nat.\lay \red[\system_2] \confone_2 \not\to^\nat$. Let
    $\overline{\nat}$ be the least $\nat$ such that
    $\nstep[\system_2][\lay] {\overline{\nat}} {\confone_2} \conftwo$
    for some $\conftwo$. Observe that $\conftwo$ must now be a
    terminal configuration. By reasoning identically to the previous
    cases, but swapping the roles of $\system_1$ and $\system_2$, we
    deduce that
    $\nstep[\system_1][\lay] {\overline{\nat}} {\confone_1}
    {\conftwo'}$ for some terminal configuration $\conftwo'$, and this
    contradicts our assumption.
  \end{proofcases}
\end{proof}

\HERE

\subsubsection{Technical Observations on the $\fencetrans$ Transformation}
\label{sec:appfencetrans}

We say that two programs $\cmd,\cmd'$ are in relation $\cmd \fencerel\cmd'$ if and only if
$\cmd'$ can be obtained by substituting instructions $\ccall \expr {\exprtwo_1, \ldots, \exprtwo_k}$ with $\cscall \expr {\exprtwo_1, \ldots, \exprtwo_k}$ and by placing $\cfence$ instruction inside $\cmd$.

We extend this relation to configurations by stipulating

\[
  \infer{\nil \fencerel \nil}{} \quad
  \infer{\rfs \fencerel \rfs'}{\rfs \eqon{\Idu\cup \Ark} \rfs' & \forall \fn \in \Fnk. \rfs'(\fn) \fencerel (\rfs(\fn))}
  \quad
  \infer{\syss \fencerel \syss'}{\forall \syscall \in \Sys. \syss(\syscall) \fencerel \syss(\syscall)'}
  \quad
    \infer{(\rfs, \syss, \caps) \fencerel (\rfs', \syss', \caps)}{
      \rfs \fencerel \rfs' &
      \syss \fencerel \syss'
    }
\]
\[    
  \infer{\frame \cmd \regmap {\km[\syscall]}:\st \fencerel \frame {\cmd'} {\regmap} {\km[\syscall]}:\st'}{\st\fencerel\st' & \cmd \fencerel \cmd'}\quad
  \infer{\frame \cmd \regmap {\um}:\st \fencerel \frame {\cmd} {\regmap} {\um}:\st'}{\st\fencerel\st'}
\]
\[
  \infer{\err \fencerel \err}{}\quad
  \infer{\unsafe \fencerel \unsafe}{} \quad
  \infer{\conf{\st, \lay \lcomp \rfs} \fencerel
    \conf{\st', \lay \lcomp \rfs}}{\rfs\fencerel \rfs' & \st \fencerel\st'}
\]

\begin{lemma}
  \label{lemma:fencestep}
  For every pair of systems $\system \fencerel \system'$, and for any configurations 
   $\confone = \conf{\frame \cmd \regmap \opt : \st, \lay \lcomp \rfs}$
   and 
   $\conftwo$ such that $\confone \fencerel \conftwo$, if
  \[
    \step[\system][\lay]  {\confone}
    {\confone'},
  \]
  then there exists a configuration $\conftwo'$ such that
  \[
    \nstep[\system'][\lay] * {\conftwo}
    {\conftwo'},
  \]
  and $\confone' \fencerel \conftwo'$.
\end{lemma}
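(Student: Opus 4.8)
The plan is to prove the statement by a \emph{forward simulation} argument: a single step of the untransformed system $\system$ is matched in $\system'$ by first reducing away any leading $\cfence$ instructions introduced by the transformation, and then firing the rule that mirrors the one used in $\system$. The whole argument rests on four elementary observations about $\fencerel$ and the (fence-enriched) non-speculative semantics. First, $\cfence$ is architecturally a no-op: by \ref{WL:Fence} a configuration $\conf{\frame{\cfence\sep\cmd}{\regmap}{\opt}:\st,\mem}$ reduces in one step to $\conf{\frame{\cmd}{\regmap}{\opt}:\st,\mem}$, and deleting leading fences from a $\fencerel$-image keeps it $\fencerel$-related. Second, the safe-call rules \ref{WL:SCall}, \ref{WL:SCall-Error}, \ref{WL:SCall-Unsafe} have exactly the premises and effects of \ref{WL:Call}, \ref{WL:Call-Error}, \ref{WL:Call-Unsafe}, so replacing $\ccall$ by $\cscall$ is transparent to this semantics. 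Third, skips, assignments, conditionals, loops and system calls are left structurally unchanged, and expression evaluation $\sem{\expr}_{\regmap,\lay}$ depends only on $\regmap$ and $\lay$, never on the store. Fourth, from $\rfs\fencerel\rfs'$ one gets $\rfs\eqon{\Idu\cup\Ark}\rfs'$ together with agreement on every user-space function and $\fencerel$-relatedness of kernel-space function bodies; since $\caps$ is preserved by the transformation, all side conditions and all \emph{array} reads coincide.

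Before the case analysis I would establish a decomposition sub-claim for $\fencerel$ on commands: if $\cmd\fencerel\cmd'$ and $\cmd=\stat\sep\cmd_r$, then $\cmd'$ has the shape $\cfence^k\sep\stat'\sep\cmd_r'$, where $\stat'$ is $\stat$ with its leading instruction possibly turned into a safe call and its sub-commands $\fencerel$-related to those of $\stat$, and $\cmd_r\fencerel\cmd_r'$; dually, $\cnil\fencerel\cmd'$ forces $\cmd'=\cfence^k$. This is shown by induction on the structure of the insertion/replacement witnessing $\cmd\fencerel\cmd'$, and is the only mildly technical ingredient. Given it, from $\conftwo=\conf{\frame{\cmd'}{\regmap}{\opt}:\st',\lay\lcomp\rfs'}$ I first apply \ref{WL:Fence} exactly $k$ times, reaching $\conf{\frame{\stat'\sep\cmd_r'}{\regmap}{\opt}:\st',\lay\lcomp\rfs'}$, still $\fencerel$-related to $\confone$ (in user mode $\cmd'=\cmd$ and $k=0$, since user-space code is untouched).

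I would then do the case analysis on the rule firing in $\step[\system][\lay]{\confone}{\confone'}$. For \ref{WL:Skip}, \ref{WL:Op}, \ref{WL:If}, \ref{WL:While} the matching rule applies in $\system'$ because guards and right-hand sides evaluate identically (store-independence) and $\fencerel$ is a congruence for these constructs. For \ref{WL:Load}, the address $\add=\toAdd{\sem{\expr}_{\regmap,\lay}}$, the membership $\add\in\underline\lay(\Ar[\opt])$ and the capability check coincide, and since $\add$ lies in an array, $(\lay\lcomp\rfs)(\add)=(\lay\lcomp\rfs')(\add)$ by $\rfs\eqon{\Idu\cup\Ark}\rfs'$, so \ref{WL:Load} yields a related configuration; \ref{WL:Load-Error} and \ref{WL:Load-Unsafe} reach the self-related states $\err$ and $\unsafe$. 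For \ref{WL:Store} I use \Cref{rem:memupdtostupd} to rewrite the updated memory as $\lay\lcomp(\update{\rfs}{(\ar,i)}{\val})$, note the stored value is store-independent, and observe that updating an array entry preserves $\fencerel$ between the resulting stores. For \ref{WL:Call} the exposed redex is $\cscall{\expr}{\vec\exprtwo}$ and I fire \ref{WL:SCall}; the pushed body $\rfs(\fn)$ with $\lay(\fn)=\add$ equals $\rfs'(\fn)$ in user mode and is $\fencerel$-related to $\rfs'(\fn)$ in kernel mode, so the new frame is related, while \ref{WL:SystemCall} is analogous with a new frame carrying body $\syss(\syscall)\fencerel\syss'(\syscall)$ in mode $\km[\syscall]$. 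Finally \ref{WL:Pop} uses the $\cnil$ decomposition and copies the return register, preserving the relation on the frame beneath.

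The main obstacle will be bookkeeping rather than conceptual: making the command-decomposition sub-claim precise enough to expose the matching redex in every case (including nested conditionals and loops where fences are inserted inside the branches), and threading the store representation through memory updates so that every reached memory is of the form $\lay\lcomp\rfs''$ with $\rfs''$ still $\fencerel$-related, invoking \Cref{rem:simispreserved} and \Cref{rem:memupdtostupd}. The call case deserves the most care, as it is the only one where the two systems push syntactically different bodies, and one must check that $\fencerel$ on stores delivers exactly the relation required between those bodies in both execution modes.
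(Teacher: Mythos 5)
Your proposal is correct and follows essentially the same route as the paper's proof: a step-by-step simulation that first discharges any leading $\cfence$ instructions via \ref{WL:Fence}, replaces $\ccall$ redexes by the matching $\cscall$ rules, and then fires the mirrored rule, relying on store-independence of expression evaluation, agreement of the stores on $\Idu\cup\Ark$, and $\fencerel$-relatedness of kernel function bodies for the call and system-call cases. The only cosmetic difference is that you package the inline shape enumeration the paper performs per case (e.g.\ $\cfence\sep\stat\sep\cmdtwo'$ versus $\stat\sep\cfence\sep\cmdtwo'$) into a single up-front decomposition sub-claim, which is a reasonable and arguably cleaner organization of the same argument.
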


\begin{proof}
  We rewrite the configurations $\confone, \conftwo$ as follows:
  \[
    \confone = \conf{\frame \cmd \regmap \opt : \st, \lay \lcomp \rfs}
    \tag{\dag}
  \]
  and
  \[
    \conftwo = \conf{\frame {\cmd'} \regmap \opt : \st', \lay \lcomp \rfs'}.
    \tag{\ddag}
  \]
  with $\st \fencerel \st'$ and $\rfs \fencerel \rfs'$. We proceed by cases on $\opt$.

  \begin{proofcases}
    \proofcase{$\um$}
    The proof goes by cases on the rule that has been applied to $\cmd$. We show some of the most relevant cases.
    In all of the cases within this sub-derivation, we have $\cmd=\cmd'$ by the assumption $\confone \fencerel \conftwo$. 
    \begin{proofcases}
      \proofcase{\ref{WL:Op}}
      The assumption is
      \begin{equation*}
        \step[\system][\lay]  {\conf{\frame {\vx \ass \expr\sep \cmdtwo} \regmap \um : \st,  \lay \lcomp \rfs}}{}\\
        {\conf{\frame {\cmdtwo} {\update\regmap\vx{\sem \expr_{\regmap, \lay}}} \um : \st,  \lay \lcomp \rfs}}.
      \end{equation*}
      By applying the same rule to $\conftwo$, we obtain:
      \begin{equation*}
        \step[\system'][\lay]  {\conf{\frame {\vx \ass \expr\sep \cmdtwo} \regmap \um : \st',  \lay \lcomp \rfs'}}{}\\
        {\conf{
            \frame {\cmdtwo} {\update\regmap\vx{\sem \expr_{\regmap, \lay}}} \um : \st',
            \lay \lcomp \rfs'}}
      \end{equation*}
      which establishes the claim.

      \proofcase{\ref{WL:Store}} 
      The assumption is
      \begin{equation*}
        \step[\system][\lay]  {\conf{\frame {\cmemass \expr\exprtwo\sep \cmdtwo} \regmap \um : \st,  \lay \lcomp \rfs}}{}\\
        {\conf{\frame {\cmdtwo} {\regmap} \um : \st,  \update{\lay \lcomp {\rfs}}\add\val}},
      \end{equation*}
      where $\val = \sem \exprtwo_{\regmap, \lay}$ and $\add = \toAdd{\sem \expr_{\regmap, \lay}}$.
      Observe that $\add \in \underline \lay(\Aru)$, meaning that there exists a pair $(\ar, i)$ such that $\lay(\ar)+i=\add$ and $\ar \in \Aru$.
      Thus, we have that 
      \(
      \update{\lay \lcomp {\rfs}}\add\val={\lay \lcomp {\update{{\rfs}}{(\ar, i)} \val}}
      \)
      due to \Cref{rem:memupdtostupd}.
      Applying the same rule to $\conftwo$, we obtain:
      \begin{equation*}
        \step[\system'][\lay]  {\conf{\frame {\cmemass \expr\exprtwo\sep \cmdtwo} \regmap \um : \st',  \lay \lcomp \rfs'}}{}\\
        {\conf{\frame {\cmdtwo} {\regmap} \um : \st', \lay \lcomp { \update{{\rfs'}}{(\ar, i)}\val}}},
      \end{equation*}
      and to complete the proof, we observe that $\update{{\rfs}}{(\ar, i)}\val \fencerel \update{{\rfs'}}{(\ar, i)}\val$, which follows from the assumption $\ar \in \Aru$.
      \proofcase{\ref{WL:Call}} 
      The assumption is
      \begin{equation*}
        \step[\system][\lay]  {\conf{\frame {\ccall \expr{\vec\exprtwo}\sep \cmdtwo} \regmap \um : \st,  \lay \lcomp \rfs}}{}\\
        \conf{\frame {\lay \lcomp \rfs(\toAdd{\sem \expr_{\regmap, \lay}}) } {\regmap_0'} \um : \conf{\frame {\cmdtwo} {\regmap} \um : \st,  \lay \lcomp \rfs}},
      \end{equation*}
      where $\regmap_0'$ is a shorthand for $\regmap_0[\vec\vx \upd \sem{\vec \exprtwo}_{\regmap, \lay}]$. From the premise of the rule, we infer that there exists $\fn \in \Fnu$ such that
      \(
      \toAdd {\sem \expr_{\regmap, \lay}} = \lay(\fn)
      \)
      Thus, from the definition of $\lay \lcomp \rfs$ and the definition of the $\fencerel$ relation, we deduce that
      \(
      \lay \lcomp \rfs = \rfs(\fn) = \rfs'(\fn)
      \),
      where $\rfs'$ is the store of (\ddag). Consequently, these observations allow us to demonstrate that applying the same rule to $\conftwo$ results in
      \begin{equation*}
        \step[\system'][\lay]  {\conf{\frame {\ccall \expr{\vec \exprtwo}\sep \cmdtwo} \regmap \um : \st',  \lay \lcomp \rfs'}}{}\\
        {\conf{\frame {\lay \lcomp \rfs(\toAdd{\sem \expr_{\regmap, \lay}}) } {\regmap_0'} \um : \conf{\frame {\cmdtwo} {\regmap} \um : \st',  \lay \lcomp \rfs'}}},
      \end{equation*}
      thereby proving the claim.

      \proofcase{\ref{WL:SystemCall}} 
      The assumption is
      \begin{equation*}
        \step[\system][\lay]  {\conf{\frame {\csyscall \syscall{\vec\exprtwo}\sep \cmdtwo} \regmap \um : \st,  \lay \lcomp \rfs}}{}\\
        \conf{\frame {\syss(\syscall) } {\regmap_0'} {\km[\syscall]} : \conf{\frame {\cmdtwo} {\regmap} \um : \st,  \lay \lcomp \rfs}},
      \end{equation*}
      where $\regmap_0'$ is a shorthand for
      \(
        \regmap_0[\vec \vx \upd \sem{\vec \exprtwo}]
      \).
      By applying the same rule to the configuration $\conftwo$ of (\ddag), we obtain
      \begin{equation*}
        \step[\system'][\lay]  {\conf{\frame {\csyscall \syscall{\vec \exprtwo}\sep \cmdtwo} \regmap \um : \st',  \lay \lcomp \rfs'}}{}\\
        {\conf{\frame {\syss'(\syscall)} {\regmap_0'} {\km[\syscall]} : \conf{\frame {\cmdtwo} {\regmap} \um : \st',  \lay \lcomp \rfs'}}},
      \end{equation*}
      To conclude, we simply observe that
      \(
        \frame {\syss(\syscall)} {\regmap_0'} {\km[\syscall]} \fencerel
        \frame {\syss'(\syscall)} {\regmap_0'} {\km[\syscall]}
      \),
      which follows by the definition of the relation $\fencerel$.

      \proofcase{\ref{WL:Store-Error}} 
      The assumption is
      \begin{equation*}
        \step[\system][\lay]  {\conf{\frame {\cmemass \expr\exprtwo\sep \cmdtwo} \regmap \um : \st,  \lay \lcomp \rfs}}\err,
      \end{equation*}
      Let $\add = \toAdd{\sem \expr_{\regmap, \lay}}$. From the premises of the rule, we deduce that $\add \notin \underline \lay(\Aru)$.
      This suffices to establish:
      \begin{equation*}
        \step[\system][\lay]  {\conf{\frame {\cmemass \expr\exprtwo\sep \cmdtwo} \regmap \um : \st',  \lay \lcomp \rfs'}}{\err},
      \end{equation*}
      thereby proving the claim.
    \end{proofcases}
    \proofcase{$\km[\syscall]$} 
    Under this assumption, most cases are analogous to the corresponding ones for user-mode execution. Similarly, we proceed by case analysis on the rule that establishes the transition, focusing on the most significant cases.

    \begin{proofcases}
      \proofcase{\ref{WL:Op}} 
      The assumption is
      \begin{equation*}
        \step[\system][\lay]  {\conf{\frame {\vx \ass \expr\sep \cmdtwo} \regmap {\km[\syscall]} : \st,  \lay \lcomp \rfs}}{}\\
        {\conf{\frame {\cmdtwo} {\update\regmap\vx{\sem \expr_{\regmap, \lay}}} {\km[\syscall]} : \st,  \lay \lcomp \rfs}},
      \end{equation*}
      From the assumption on the initial configurations, the configuration $\conftwo$ in (\dag) has the form ${\conf{\frame {\cmd'} \regmap {\km[\syscall]} : \st',  \lay \lcomp \rfs'}}$, where $\cmd'$ is one of the following commands:
      \begin{align*}
        &\vx \ass \expr\sep \cmdtwo'
        &&\vx \ass \expr\sep \cfence\sep \cmdtwo'
        &&\cfence \sep\vx \ass \expr\sep \cmdtwo'
      \end{align*}
      with $\cmdtwo \fencerel \cmdtwo'$. In the first two cases, the same rule can be applied to $\conftwo$, yielding:
      \begin{equation*}
        \step[\system'][\lay]  \conftwo
        {\conf{
            \frame {\cmdtwo''} {\update\regmap\vx{\sem \expr_{\regmap, \lay}}} {\km[\syscall]} : \st',
            \lay \lcomp \rfs'}}
      \end{equation*}
      for $\cmdtwo'' \in \{ \cmdtwo', \cfence;\cmdtwo' \}$.
      The claim follows from the observation that $\cmdtwo \fencerel \cmdtwo''$.
      In the third case, we observe that:
      \begin{equation*}
        \nstep[\system'] 2 {\conf{\frame {\cfence \sep\vx \ass \expr\sep \cmdtwo'} \regmap {\km[\syscall]} : \st',  \lay \lcomp \rfs'}}{}\\
        {\conf{
            \frame {\cmdtwo'} {\update\regmap\vx{\sem \expr_{\regmap, \lay}}} {\km[\syscall]} : \st',
            \lay \lcomp \rfs'}}
      \end{equation*}
      The conclusion follows directly since $\cmdtwo \fencerel \cmdtwo'$ holds by assumption.
      \proofcase{\ref{WL:Call}} 
      The assumption is
      \begin{equation*}
        \step[\system][\lay]  {\conf{\frame {\ccall \expr{\vec \exprtwo}\sep \cmdtwo} \regmap {\km[\syscall]} : \st,  \lay \lcomp \rfs}}{}\\
        \conf{\frame {\lay \lcomp \rfs(\sem \expr_{\regmap, \lay}) } {\regmap_0'} {\km[\syscall]} : \conf{\frame {\cmdtwo} {\regmap} {\km[\syscall]} : \st,  \lay \lcomp \rfs}},
      \end{equation*}
      where $\regmap_0'$ is a shorthand for
      \(
        \regmap_0[\vec\vx \upd\sem{\vec \exprtwo}_{\regmap, \lay}]
      \).
      From the premise of the rule, we observe that there exists $\fn \in \Fnk$ such that
      $\toAdd{\sem \expr_{\regmap, \lay}} = \lay(\fn)$. Consequently, from the definition of $\lay \lcomp \rfs$,
      we deduce that $\lay \lcomp \rfs = \rfs(\fn)$, and we conclude $\rfs(\fn) \fencerel \rfs'(\fn)$ by assumption.
      Finally, we observe that $\conftwo$ in (\ddag) is:
      \(
        {\conf{\frame {\cmd' \sep \cmdtwo'} \regmap {\km[\syscall]} : \st',  \lay \lcomp \rfs'}},
      \)
      where $\cmd'$ is one of the following commands:
      \begin{align*}
        &\ccall \expr{\vec \exprtwo} 
        &&\cscall \expr{\vec \exprtwo}
        &&\cfence \sep \ccall \expr{\vec \exprtwo} 
        &&\cfence \sep \cscall \expr{\vec \exprtwo}
      \end{align*}
      In any of these cases, using the rules \ref{WL:Fence}, \ref{WL:Call}, and \ref{WL:SCall}, we can demonstrate the following reduction:
      \begin{equation*}
        \nstep[\system'][\lay] * {\conf{\frame {\ccall \expr{\vec\exprtwo}\sep \cmdtwo'} \regmap {\km[\syscall]} : \st',  \lay \lcomp \rfs'}}{}\\
        {\conf{\frame {\lay \lcomp \rfs'(\toAdd{\sem \expr_{\regmap, \lay}}) } {\regmap_0'} {\km[\syscall]} : \conf{\frame {\cmdtwo'} {\regmap} {\km[\syscall]} : \st',  \lay \lcomp \rfs'}}}.
      \end{equation*}
      To conclude, we must observe that 
      \(
        \lay \lcomp \rfs(\sem \expr_{\regmap, \lay}) \fencerel \lay \lcomp \rfs'(\sem \expr_{\regmap, \lay})
      \),
      which follows from the assumptions $\toAdd{\sem \expr_{\regmap, \lay}} = \lay(\fn)$, $\fn \in \Fnk$, and $\rfs \fencerel \rfs'$.
    \end{proofcases}
  \end{proofcases}
\end{proof}

\begin{corollary}
  \label{cor:fencesem1}
  For every two systems
  $\system = ({\rfs}_1, \syss_1, \caps)  \fencerel ({\rfs}_2, \syss_2, \caps) =  \system'$,
  every layout $\lay$, unprivileged command $\cmd$, and registers $\regmap$, we have
  \[
    \Eval[\system][\lay]{\cmd, \regmap, \um, {\rfs}_1}
    \simeq
    \Eval[\system'][\lay]{\cmd, \regmap, \um, {\rfs}_2},
  \]
  where the equivalence is that of \Cref{def:sempres}, i.e.,
  $(v, \rfs[1]) \simeq (v, \rfs[2])$ if
  $\rfs[1] \eqon{\Idu} \rfs[2]$, and it coincides with equality otherwise.
\end{corollary}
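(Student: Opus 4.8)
The plan is to read \Cref{cor:fencesem1} off \Cref{lemma:fencestep} as a weak-simulation argument between $\system$ and $\system'$. First I would check that the two starting configurations are $\fencerel$-related: since $\cmd$ is unprivileged, the top frame $\frame{\cmd}{\regmap}{\um}$ is related to itself by the user-mode clause of $\fencerel$ (which only relates user-mode frames carrying the \emph{same} command), and $\rfs_1 \fencerel \rfs_2$ holds because $\system \fencerel \system'$; hence $\conf{\frame{\cmd}{\regmap}{\um}, \lay\lcomp\rfs_1} \fencerel \conf{\frame{\cmd}{\regmap}{\um}, \lay\lcomp\rfs_2}$. This anchors the simulation at the configurations appearing in the two sides of the claimed $\simeq$.

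Next I would lift \Cref{lemma:fencestep} from single steps to whole reductions. By induction on the length $\nat$ of a reduction $\lay \red \confone \to^\nat \confone'$ in $\system$, applying the lemma to each step shows that whenever $\confone \fencerel \conftwo$ there is a reduction $\nstep[\system'][\lay]* \conftwo {\conftwo'}$ with $\confone' \fencerel \conftwo'$. With this multi-step simulation in hand, I would case-split on the value of $\Eval[\system][\lay]{\cmd, \regmap, \um, \rfs_1}$, using in each case that a configuration $\fencerel$-related to a terminal one is itself terminal. Concretely: if it is $(v,\rfs_1')$, the reduction reaches $\conf{\frame{\cnil}{\regmap'[\ret\mapsto v]}{\um}, \lay\lcomp\rfs_1'}$, and the related target configuration has the identical $\cnil$ user-mode top frame (so it is stuck, i.e. terminal) together with a store $\rfs_2'$ satisfying $\rfs_1' \fencerel \rfs_2'$; since the store clause of $\fencerel$ forces $\rfs_1' \eqon{\Idu\cup\Ark}\rfs_2'$, in particular $\rfs_1'\eqon{\Idu}\rfs_2'$, so $\Eval[\system'][\lay]{\cmd,\regmap,\um,\rfs_2} = (v,\rfs_2')$ and the two sides are $\simeq$. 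If it is $\err$ (resp.\ $\unsafe$), then since $\err$ (resp.\ $\unsafe$) is $\fencerel$-related only to itself, the target reduction reaches $\err$ (resp.\ $\unsafe$), giving equality.

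The remaining and genuinely delicate case is divergence. When $\Eval[\system][\lay]{\cmd,\regmap,\um,\rfs_1} = \Omega$, I must produce an \emph{infinite} reduction on the target side, and here the plain $\to^*$ form of \Cref{lemma:fencestep} is not quite enough, because a priori a source step could be matched by zero target steps. The point I would make is that every case of the lemma's proof actually contributes \emph{at least one} target step: the transformation either matches a source step by exactly one target step, or, in the kernel-mode cases where a $\cfence$ has been inserted, it first performs a \ref{WL:Fence} step. Thus the simulation maps each source step to a nonempty block of target steps, so an infinite $\system$-reduction induces an infinite $\system'$-reduction, whence $\Eval[\system'][\lay]{\cmd,\regmap,\um,\rfs_2} = \Omega$ and the two sides are equal.

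The main obstacle is therefore this divergence step: it requires either recording in \Cref{lemma:fencestep} the strengthened conclusion that each source step is simulated by at least one target step, or, alternatively, arguing by contraposition as in the proof of \Cref{lemma:ctxsemequiv} (assume $\system'$ terminates in finitely many steps and reconstruct a terminating $\system$-run, contradicting $\Omega$). Everything else is bookkeeping on the $\fencerel$ relation, in particular checking that $\fencerel$-relatedness of configurations is preserved by the induction and that the only configurations related to terminal, $\err$, or $\unsafe$ states are of the expected shape, which is immediate from the inductive definition of $\fencerel$ and from the fact that a lone $\cnil$ user-mode frame admits no transition.
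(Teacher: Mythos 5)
Your proof follows exactly the paper's route: the paper's entire argument for this corollary is the one-line observation that the initial configurations are in the $\fencerel$ relation, so the claim is ``a trivial consequence'' of \Cref{lemma:fencestep}. Your more careful handling of the divergence case --- noting that the $\to^*$ in the lemma's conclusion only works for $\Omega$ if each source step is matched by at least one target step, which the lemma's proof delivers even though its statement does not promise it --- fills in a genuine gap that the paper leaves implicit, and the rest of your bookkeeping on $\fencerel$ matches what the paper intends.
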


\begin{proof}
  Observe that the initial configurations are in $\fencerel$ relation, therefore the claim
  is a trivial consequence of \Cref{lemma:fencestep}.
\end{proof}

\begin{corollary}
  \label{cor:fencesem2}
  Transformations $\fencetrans, \optfencetrans, \nospec$
  preserve the semantics of the systems.
\end{corollary}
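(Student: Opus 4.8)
The plan is to reduce each of the three transformations to the relation $\fencerel$ and then invoke \Cref{cor:fencesem1}. Concretely, for each $T \in \{\fencetrans, \optfencetrans, \nospec\}$ and each system $\system = (\rfs, \syss, \caps)$, I would show that $\system$ and $T(\system)$ stand in the relation $\fencerel$, after which \Cref{cor:fencesem1} immediately yields the $\Eval$-equivalence for unprivileged attackers that constitutes semantic preservation. The entire argument therefore hinges on a purely syntactic observation about what each transformation does, plus the already-established semantic content of \Cref{cor:fencesem1}.

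First I would establish the command-level statement $\cmd \fencerel T(\cmd)$ for every $\cmd \in \Cmd$, by structural induction on $\cmd$. The inductive cases for $\cif{\expr}{\cmd}{\cmdtwo}$, $\cwhile{\expr}{\cmd}$ and sequencing $\stat\sep\cmd$ follow from the induction hypotheses together with the fact that $\fencerel$ is closed under these constructs (inserting fences and rewriting calls inside a branch, loop body, or sequence is precisely what $\fencerel$ permits). The base cases split into three kinds: constructs left syntactically unchanged by $T$ --- assignment, $\cskip$, $\csyscall$, $\cnil$, and additionally $\cmemread$ in the case of $\nospec$ --- are handled by reflexivity of $\fencerel$ (insert zero fences, substitute zero calls); memory accesses, which $T$ surrounds with $\cfence$, fall exactly under the fence-insertion clause of $\fencerel$; and procedure calls, which $T$ both fences and rewrites from $\ccall$ to $\cscall$, fall under the combined fence-insertion and $\ccall \mapsto \cscall$ substitution of $\fencerel$.

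Next I would lift this to stores and systems. Since each transformation leaves the entries for $\Arr$ and $\Funu$ untouched, we have $\rfs \eqon{\Idu \cup \Ark} T(\rfs)$, so the user-space component of the store relation holds immediately; while for every $\fn \in \Fnk$ the entry $T(\rfs)(\fn) = T(\rfs(\fn))$ is related to $\rfs(\fn)$ by the command-level result, giving the function-identifier component. Likewise $\syss(\syscall)$ is related to $T(\syss)(\syscall) = T(\syss(\syscall))$ for every $\syscall$, and since the capability map $\caps$ is preserved verbatim, I obtain the system-level relation between $\system$ and $T(\system)$. Applying \Cref{cor:fencesem1} to this pair then closes the argument for all three transformations uniformly.

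The step requiring the most care is the command-level induction for $\optfencetrans$, whose clauses return triples $(\cmd, m, e)$ rather than bare commands and whose treatment of loads and stores is case-split on the flags $m, e$. I would project away the flags (reading $\optfencetrans(\cmd)$ as its command component, as the caption of \Cref{fig:optfencetrans} licenses) and verify that \emph{both} branches of each case-split --- whether or not a $\cfence$ is actually emitted --- yield a command related to the input by $\fencerel$, so the flag bookkeeping never affects membership in the relation. For $\nospec$ the only additional point is the leading $\cfence$ prepended to each system call in $\nospec(\syss)$, which is again a pure fence insertion. The genuinely delicate issue is matching the \emph{orientation} of $\fencerel$ as its inference rules define it on stores versus on the system-call map to the direction expected by \Cref{cor:fencesem1}: the store rule and the system-call rule read their premises with opposite argument orders, so I would take care to instantiate \Cref{cor:fencesem1} with the store argument orientation dictated by its own defining rules, relying on reflexivity of $\fencerel$ wherever the kernel- and user-space components must be reconciled, and on the symmetry of the resulting $\simeq$ to conclude equivalence of $\system$ and $T(\system)$ in either reading.
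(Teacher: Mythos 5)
Your proposal is correct and follows exactly the paper's route: the paper's own proof is the one-line observation that $\system \fencerel \systrans(\system)$ holds for each of the three transformations, followed by an appeal to \Cref{cor:fencesem1}. You merely spell out the structural induction and the store/system lifting that the paper leaves implicit, including a reasonable handling of the flag bookkeeping in $\optfencetrans$ and of the (apparently typographical) orientation mismatch in the inference rules defining $\fencerel$ on stores.
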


\begin{proof}
  Observe that for any $\systrans \in \{\fencetrans, \optfencetrans, \nospec\}$,
  we always have $\system \fencerel \systrans(\system)$. Hence, the conclusion
  follows as a trivial consequence of \Cref{cor:fencesem1}.
\end{proof}

\begin{lemma}
  \label{lemma:fwf}
  Fix a configuration stack $\cfstack$ such that $\fwf[\fencetrans] \cfstack$ and a system $\system \in \im(\fencetrans)$.  For every layout $\lay$,  $\dbt$-free sequence of directives $\dir:\Ds$, sequence of observations $\obs:\Os$, and speculative configuration stack $\cfstack'$ such that $\lnot (\fwf[\fencetrans]{\cfstack'})$, there exists a stack $\cfstack''$ satisfying the following properties:
\[
  (\nsstep \nat \cfstack {\cfstack'} {\Ds:\dir} {\Os:\obs}) \Rightarrow (\nsstep {\nat-1} \cfstack {\cfstack''}{\Ds} {\Os})\quad \text{and} \quad \fwf[\fencetrans]{\cfstack''}
\]
\end{lemma}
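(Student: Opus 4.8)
The plan is to prove the statement by induction on $\nat$, after noting that its content is really a \emph{one-step preservation} fact. Since the target $\cfstack''$ in the conclusion is the configuration reached in $\nat-1$ steps — i.e.\ the immediate predecessor of $\cfstack'$ — the implication part is just the evident decomposition of the reduction, so the only genuine obligation is $\fwf[\fencetrans]{\cfstack''}$. Thus the lemma amounts to: \emph{in a $\dbt$-free run issued from a well-formed stack, the immediate predecessor of a non--well-formed configuration is itself well-formed}, equivalently, non--well-formed configurations never occur two steps in a row.

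First I would isolate the reachable non--well-formed configurations. Because $\fwf[\fencetrans]$ constrains only the \emph{top} configuration of the stack (the rule for $\specconfone:\cfstack$ has premise $\fwf[\fencetrans]\specconfone$ alone) and the directives are $\dbt$-free, rules \ref{SI:Backtrack-Top} and \ref{SI:Backtrack-Bot} never fire, so stale configurations buried below the top are irrelevant. I would then prove two structural sub-claims about a single non-$\dbt$ step under a system $\system\in\im(\fencetrans)$: (A) a step out of a well-formed top either lands in a well-formed top or in a \emph{fence-pending} top, i.e.\ a top frame whose command has the shape $\stat\sep\fencetrans(\cmd')$ with $\stat\in\{\cmemread\vx\expr,\ \cmemass\expr\exprtwo,\ \cscall\expr{\vec\exprtwo}\}$, together with empty write buffer and mis-speculation flag $\bot$; and (B) a step out of a fence-pending top always lands in a well-formed top. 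Both follow from a case analysis on the head instruction of $\fencetrans(\cmd)$, using the compositionality $\fencetrans(\cmd_1\sep\cmd_2)=\fencetrans(\cmd_1)\sep\fencetrans(\cmd_2)$ (which keeps the $\fencetrans$-image form through the \textsc{Op}, \textsc{If}, \textsc{While}, \textsc{Pop} and syscall steps, including the speculatively pushed branch), and from the fact that $\fencetrans$ emits a $\cfence$ immediately before every load, store and replaced call, and only there. In (A) the sole way to leave the $\fencetrans$-image form is via \ref{SI:Fence}, which requires flag $\bot$, commits the buffer (\Cref{rem:overlinewrtdom}), and uncovers exactly the bare $\stat\sep\fencetrans(\cmd')$; in (B) the empty buffer forces the lookup $\bufread{\bm\nil\mem}\add i$ to return $(\mem(\add),\bot)$, so no STL mis-speculation arises, calls are $\cscall$ (never $\ccall$, hence no branch-target speculation step), and the continuation $\fencetrans(\cmd')$ — respectively the callee body $\rfs'(\fn)$, a $\fencetrans$-image because $\system\in\im(\fencetrans)$ — is again $\fencetrans$-shaped. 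Throughout, the side conditions of $\fwf[\fencetrans]$ (namely $\rfs'\eqon\Fn\rfs$, kernel bodies being $\fencetrans$-images, and $\dom(\buf)\subseteq\underline\lay(\Ar)$) are maintained, since stores touch only array addresses, appealing to the invariant of \Cref{rem:simispreservedspec}.

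With (A) and (B) in hand the induction closes. The base case $\nat=0$ is vacuous: $\cfstack\sto{}{}^{0}\cfstack'$ forces $\cfstack'=\cfstack$, contradicting $\lnot\fwf[\fencetrans]{\cfstack'}$, so any well-formed $\cfstack''$ works. In the inductive step, write the reduction as $\cfstack\sto{}{}^{\nat}\overline\cfstack\sto{\dir}{\obs}\cfstack'$ and take $\cfstack''=\overline\cfstack$; it remains to show $\fwf[\fencetrans]{\overline\cfstack}$. Suppose not. Then $\overline\cfstack$ is non--well-formed and reached in $\nat$ steps, so the induction hypothesis gives that its own predecessor $\cfstack'''$ is well-formed, yielding $\cfstack'''\sto{}{}\overline\cfstack\sto{}{}\cfstack'$ with $\cfstack'''$ well-formed and both $\overline\cfstack,\cfstack'$ non--well-formed through non-$\dbt$ steps. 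But (A) forces $\overline\cfstack$ to be fence-pending, and then (B) forces $\cfstack'$ to be well-formed, contradicting $\lnot\fwf[\fencetrans]{\cfstack'}$. Hence $\overline\cfstack$ is well-formed.

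The main obstacle is the bookkeeping in (A) and (B): one must check, rule by rule, that every non-$\dbt$ transition either preserves the $\fencetrans$-image shape of the top command or is the unique $\cfence$-consuming transition, and simultaneously that the buffer-emptiness and flag-$\bot$ part of the fence-pending invariant matches the premises of the memory- and call-rules so that exactly one further step restores well-formedness. The delicate points are the speculative rules that push a second configuration (the STL load \ref{SI:Load} and the branch rule \ref{SI:If-Branch}), where I must verify that the \emph{new} top is $\fencetrans$-shaped while relying on $\dbt$-freeness and the top-only nature of $\fwf[\fencetrans]$ to discard the buried predecessor, and the interaction between the flag-$\bot$ requirement of \ref{SI:Fence} and the guarantee that after every fence the buffer is empty — precisely what neutralises STL and prevents the mis-speculation flag from rising inside transformed kernel code.
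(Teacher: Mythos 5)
Your proposal is correct and follows essentially the same route as the paper: the paper also proves the lemma by induction on $\nat$, deriving a contradiction from two consecutive non--well-formed configurations via an auxiliary result (\Cref{rem:fwftech}) stating that a well-formed stack either is a normal form, or restores well-formedness after one step, or after two steps. Your sub-claims (A) and (B) are just a repackaging of that remark with the intermediate ``fence-pending'' state made explicit, and the case analysis you sketch matches the one carried out in the paper's proof of the remark.
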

\begin{proof}
  We proceed by induction on $\nat$.
  \begin{proofcases}
    \proofcase{0} This case is absurd.
    
    \proofcase{$\nat+1$} Assume that
    \[
      (\nsstep {\nat+1} \cfstack {\cfstack'}{\Ds:\dir} {\Os:\obs}) 
    \]
    and that $\lnot (\fwf[\fencetrans]{\cfstack'})$. We need to prove that
    \[
      (\nsstep {\nat} \cfstack {\cfstack''}{\Ds} {\Os})
    \]
    for some $\cfstack''$ such that $\fwf[\fencetrans]{\cfstack''}$.
    The proof goes by contraposition: assume that the claim does not hold, i.e., $\lnot \fwf[\fencetrans]{\cfstack''}$.
    If $\nat = 0$, then $\cfstack''$ would be exactly $\cfstack$, contradicting our assumption. Hence, we can assume that $\nat > 0$.
    In this case, we reach a contradiction with \Cref{rem:fwftech}, as we have two consecutive reduction steps with intermediate configuration stacks not satisfying $\fwf[\fencetrans] \cdot$.
  \end{proofcases}
\end{proof}

\begin{remark}
  \label{rem:fwftech}
  Fix a system $\system \in \im(\fencetrans)$.
  For every layout $\lay$ and stack of speculative configurations $\cfstack$ such that $\fwf[\fencetrans] \cfstack$, one of the following three cases holds:
  \begin{varitemize}
  \item $\nf \cfstack{}$.
  \item For every directive $\dir \neq \dbt$ and observation $\obs$, if
    \(
      \sstep \cfstack {\cfstack'} \dir \obs,
    \)
    then $\fwf[\fencetrans]{\cfstack'}$.
  \item For every pair of directives $\dir_1, \dir_2 \neq \dbt$ and every pair of observations $\obs_1, \obs_2$, if
    \(
      \nsstep 2 \cfstack {\cfstack'} {\dir_1:\dir_2} {\obs_1:\obs_2},
    \)
    then $\fwf[\fencetrans]{\cfstack'}$.
  \end{varitemize}
\end{remark}
\begin{proof}
If the initial configuration stack is terminal, the conclusion is trivial. Therefore, we assume that the stack has the form
\(
\sframe{\frame {\fencetrans(\cmd)} \regmap {\km[\syscall]} : \st}{\bm\buf{\lay \lcomp \rfs'}} \boolms : \cfstack
\).
The proof proceeds by cases on $\cmd$.

\begin{proofcases}
  \proofcase{$\cnil$} In this case, we have the following transition:
  \begin{multline*}
    \sstep{\sframe{\frame {\fencetrans(\cnil)} \regmap {\km[\syscall]}\cons{\frame \cmdtwo {\regmap'} {\km[\syscall]}}\cons\st'}
      {\bm\buf{\lay \lcomp \rfs'}}\boolms\cons\cfstack}{}{\dstep} {\onone} \\
    {\sframe{{\frame \cmdtwo {\update{\regmap'}\ret{\regmap(\ret)}} {\km[\syscall]}}\cons\st'}
      {\bm\buf{\lay \lcomp \rfs'}}\boolms\cons\cfstack}.
  \end{multline*}
  The claim follows directly from the assumption on the initial configuration stack, yielding:
  \[
    \fwf[\fencetrans]{{\frame \cmdtwo {\regmap'} {\km[\syscall]}} : \st'}.
  \]
  \proofcase{$\cmemread \vx \exprtwo \sep \cmdtwo$}
Observe that the initial configuration stack has the following form:
\[
  \sframe{\frame {\cfence \sep \cmemread \vx \exprtwo \sep \fencetrans(\cmdtwo)} \regmap {\km[\syscall]} : \st'}
  {\bm\buf{\lay \lcomp \rfs'}} \boolms : \cfstack.
\]
If $\boolms = \top$, the first claim holds trivially because, without backtracking, this configuration cannot reduce further. Therefore, we assume that $\boolms = \bot$. 
After the first reduction step, which applies rule \ref{SI:Fence}, one of the following rules must apply: \ref{SI:Load-Step}, \ref{SI:Load}, \ref{SI:Load-Unsafe}, or \ref{SI:Load-Err}. If one of the last two rules applies, the claim is trivial. Since the behavior of rules \ref{SI:Load} and \ref{SI:Load-Step} is analogous, we focus on the former. In this case, we have:
\begin{multline*}
  \nsstep 2 {\sframe{\frame {\cfence \sep \cmemread \vx \exprtwo \sep \fencetrans(\cmdtwo)} \regmap {\km[\syscall]} : \st'}
  {\bm\buf{\lay \lcomp \rfs'}} \boolms : \cfstack}{} {\dstep : \dload i} {\onone : \omem \add} \\
  \sframe{\frame {\fencetrans(\cmdtwo)} {\update\regmap\vx \val} {\km[\syscall]} : \st'} 
  {\bm\nil {\overline{\bm\buf{\lay\lcomp \rfs'}}}}{\boolms \lor \bool'} \\
  \sframe{\frame {\cmemread \vx \exprtwo \sep \fencetrans (\cmdtwo)} \regmap {\km[\syscall]} : \st'}
  {\bm\nil {\overline{\bm\buf{\lay \lcomp \rfs'}}}}{\boolms} : \cfstack
\end{multline*}
where $\val, b' = \bufread {\bm\nil {\overline{\bm\buf{\lay \lcomp \rfs'}}}} \add i$ with $\add = \toAdd{\sem \exprtwo_{\regmap,\lay}}$. 
Notice that the $\cfence$ instruction preceding the load instruction has flushed the memory. To show the claim, we need to verify that:
\[
  \fwf[\fencetrans]{\sframe{\frame {\fencetrans(\cmdtwo)} \regmap {\km[\syscall]} : \st'} 
  {\bm\nil {\overline{\bm\buf{\lay \lcomp \rfs'}}}}{\boolms \lor \bool'}}.
  \tag{$*$}
\]
In particular, we need to establish $\fwf[\fencetrans]{\bm\nil {\overline{\bm\buf{\lay \lcomp \rfs'}}}}$. By the assumption on the initial configuration stack, we know that the domain of $\buf$ does not contain any function address. Therefore, we can apply \Cref{rem:overlinewrtdom}, which ensures that the resulting memory is:
\(
  \overline{\bm\buf{\lay \lcomp \rfs'}} = \lay \lcomp \rfs'' 
\)
for some $\rfs''$ such that $\rfs'' \eqon{\Fn} \rfs' \eqon{\Fn} \rfs$. The other properties needed to establish ($*$) follow directly from the assumption on the initial configuration stack.

  \proofcase{$\cmemass \expr \exprtwo \sep \cmdtwo$}
We can rewrite the initial configuration stack as follows:
\[
  \sframe{\frame {\cfence \sep \cmemass \expr \exprtwo \sep \fencetrans(\cmdtwo)} \regmap {\km[\syscall]} : \st'}
  {\bm\buf{\lay \lcomp \rfs'}} \boolms : \cfstack.
\]
If $\boolms = \top$, the first claim holds trivially. Otherwise, the first reduction step must use rule \ref{SI:Fence}. For the next step, one of the rules \ref{SI:Store}, \ref{SI:Store-Unsafe}, or \ref{SI:Store-Err} must apply. If one of the last two rules applies, the conclusion is immediate. Therefore, we focus on the case where rule \ref{SI:Store} applies. In this case, we have:
\begin{multline*}
  \nsstep 2 {\sframe{\frame {\cfence \sep \cmemass \expr \exprtwo \sep \fencetrans(\cmdtwo)} \regmap {\km[\syscall]} : \st'}
  {\bm\buf{\lay \lcomp \rfs'}} \boolms : \cfstack}{} {\dstep : \dstep} {\onone : \omem \add} \\
  \sframe{\frame {\fencetrans(\cmdtwo)} {\regmap} {\km[\syscall]} : \st'} 
  {\bm{\bitem \add \val \cons \nil}{\overline{\bm\buf{\lay \lcomp \rfs'}}}}{\boolms} : \cfstack.
\end{multline*}
where $\add = \toAdd{\sem \expr_{\regmap,\lay}}$, and $\val = \sem \exprtwo_{\regmap,\lay}$.
Observe that to establish the claim, we need to verify that the following property holds:
\[
  \fwf[\fencetrans]{\sframe{\frame {\fencetrans(\cmdtwo)} {\regmap} {\km[\syscall]} : \st'} 
  {\bm{\bitem \add \val \cons \nil}{\overline{\bm\buf{\lay \lcomp \rfs'}}}}{\boolms}}.
\]
We conclude
$\fwf[\fencetrans]{\bm\nil {\overline{\bm\buf{\lay \lcomp \rfs'}}}}$ as in the previous case, so we only need to establish that $\add \in \Ark$. This follows directly from the premises of rule \ref{SI:Store}. The other properties needed to establish the claim follow from the assumption on the initial configuration stack.

\proofcase{$\ccall \expr {\vec \exprtwo} \sep \cmdtwo$} In this case, we can rewrite the initial configuration stack as follows:
\[
  \sframe{\frame {\cfence \sep \cscall \expr {\vec \exprtwo} \sep \fencetrans(\cmdtwo)} \regmap {\km[\syscall]} : \st'}
  {\bm\buf{\lay \lcomp \rfs'}} \boolms : \cfstack.
\]
Observe that the first reduction step (if any) must apply rule \ref{SI:Fence}. After that step, one of the following rules must apply: \ref{SI:Call-Step}, \ref{SI:Call-Step-Unsafe}, or \ref{SI:Call-Step-Err}. If one of the last two rules applies, the claim follows trivially. Otherwise, we observe the following reduction:
\begin{multline*}
  \nsstep 2 {\sframe{\frame {\fencetrans(\cmemass \expr \exprtwo \sep \cmdtwo)} \regmap {\km[\syscall]} : \st'}
  {\bm\buf{\lay \lcomp \rfs'}} \boolms : \cfstack}{} {\dstep : \dstep} {\onone : \ojump \add} \\
  \sframe{\frame {\overline{\bm\buf{\lay \lcomp \rfs'}}(\add)} {\regmap} {\km[\syscall]} \cons \frame {\fencetrans(\cmdtwo)} {\regmap} {\km[\syscall]} \cons \st'}
  {\bm\nil{\overline{\bm\buf{\lay \lcomp \rfs'}}}}{\boolms \lor \bool'} : \cfstack,
\end{multline*}
with, $\add = \toAdd{\sem \expr_{\regmap,\lay}}$ and $\val = \sem \exprtwo_{\regmap,\lay}$. 
We now need to verify:
\[
  \fwf[\fencetrans]{\sframe{\frame {\fencetrans(\cmdtwo)} {\regmap} {\km[\syscall]} : \st'} 
  {\bm{\bitem \add \val \cons \nil}{\overline{\bm\buf{\lay \lcomp \rfs'}}}}{\boolms}}.
\]
We can establish
$\fwf[\fencetrans]{\bm\nil {\overline{\bm\buf{\lay \lcomp \rfs'}}}}$ by reasoning similarly to the previous cases. Therefore,  we focus on establishing that the loaded program
$\overline{\bm\buf{\lay \lcomp \rfs'}}(\add)$ is equal to $\fencetrans(\cmd')$ for some command $\cmd'$. 
By inspecting rule \ref{SI:Call-Step}, we deduce that $\add \in \underline \lay(\Fn[\km])$. This means that there exists a function in $\Fn[\km]$ such that $\lay(\fn) = \add$. Applying \Cref{rem:overlinewrtdom}, we deduce that
\(
  \overline{\bm\buf{\lay \lcomp \rfs'}} = \lay \lcomp \rfs'' 
\)
for some $\rfs'' \eqon{\Fn} \rfs' \eqon{\Fn} \rfs$. By the definition of $\cdot \lcomp \cdot$, this implies that
\(
  \overline{\lay \lcomp \rfs''}(\add) = \rfs(\add),
\)
which is equal to $\fencetrans(\cmd')$ for some command $\cmd'$ by the hypothesis on $\system$.
  \end{proofcases}
\end{proof}

\newcommand{\coolms}{c_{\mathit{ms}}}

\begin{lemma}
  \label{lemma:stepsemsim}
  For every stack $\cfstack$ and configurations
  \(
    \specconfone = \sframe{\frame {\cmd_1} {\regmap_1} {\km[\syscall]}:\st_1} {\bm{\buf_1}{\mem_1}}{\boolms}
  \)
  and 
  \(
    \specconftwo = \sframe{\frame {\cmd_2} {\regmap_2} {\km[\syscall]}:\st_2} {\bm{\buf_2}{\mem_2}}{\coolms},
  \)
  if 
  \begin{equation*}
    \sstep {\sframe{\frame {\cmd_1} {\regmap_1} {\km[\syscall]}:\st_1} {\bm{\buf_1}{\mem_1}}{\boolms}\cons \cfstack} {}\dstep \obs \\
    {\sframe{\frame {\cmd_2} {\regmap_2} {\km[\syscall]}:\st_2} {\bm{\buf_2}{\mem_2}}{\coolms}\cons \cfstack} 
  \end{equation*}
  for some observation $\obs$, then
  \[
    \step {\ntc{\cmd_1} {\regmap_1} {\km[\syscall]} {\st_1} {\overline {\bm{\buf_1}{\mem_1}}}}
    {\ntc {\cmd_2} {\regmap_2} {\km[\syscall]}{\st_2} {\overline {\bm{\buf_2}{\mem_2}}}}.
  \]
\end{lemma}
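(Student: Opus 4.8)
The plan is to prove the statement by a case analysis on the rule used to derive the single transition $\sstep{\specconfone\cons\cfstack}{\specconftwo\cons\cfstack}{\dstep}{\obs}$, mirroring each speculative rule by its non-speculative counterpart applied to the committed memories $\overline{\bm{\buf_1}{\mem_1}}$ and $\overline{\bm{\buf_2}{\mem_2}}$. First I would observe that the directive is $\dstep$ and the target is a genuine configuration $\sframe{\cdot}{\cdot}{\cdot}$ (neither $\err$ nor $\unsafe$), so the rule must belong to the ``step'' family: this excludes every rule that consumes a speculation directive ($\dbranch$, $\djump$, $\dload$, $\dbt$) and thereby pushes a new backtrackable configuration or discards one (\ref{SI:Load}, \ref{SI:Call}, \ref{SI:If-Branch}, \ref{SI:Loop-Branch}, \ref{SI:Backtrack-Top}, \ref{SI:Backtrack-Bot}), as well as the error and unsafe variants. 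Consequently the underlying stack $\cfstack$ is left untouched, consistent with the shape of the statement, and exactly one of \ref{SI:Op}, \ref{SI:Skip}, \ref{SI:If}, \ref{SI:Loop-Step}, \ref{SI:Pop}, \ref{SI:System-Call}, \ref{SI:Fence}, \ref{SI:Load-Step}, \ref{SI:Store}, or \ref{SI:Call-Step} applies.

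For the memory-neutral rules I would dispatch them uniformly. For \ref{SI:Op}, \ref{SI:Skip}, \ref{SI:If}, \ref{SI:Loop-Step}, \ref{SI:Pop} and \ref{SI:System-Call} the buffered memory is copied verbatim from $\specconfone$ to $\specconftwo$, so $\overline{\bm{\buf_1}{\mem_1}}=\overline{\bm{\buf_2}{\mem_2}}$, and the matching non-speculative rule (\ref{WL:Op}, \ref{WL:Skip}, \ref{WL:If}, \ref{WL:While}, \ref{WL:Pop}, \ref{WL:SystemCall}) fires on the same register map, stack and guard; here I would note that branch guards and argument evaluations depend only on $\regmap$ and $\lay$, not on the memory, so the two semantics select the same branch or loop unrolling. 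For \ref{SI:Fence} the target buffered memory is $\bm{\nil}{\overline{\bm{\buf_1}{\mem_1}}}$, and since $\overline{\bm{\nil}{\mem}}=\mem$ by the definition of $\overline{\cdot}$, we get $\overline{\bm{\buf_2}{\mem_2}}=\overline{\bm{\buf_1}{\mem_1}}$, while \ref{WL:Fence} reduces $\cfence\sep\cmd$ to $\cmd$ leaving the memory unchanged, as required.

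The substance of the proof is in the three memory-interacting rules. For \ref{SI:Load-Step} the value read is the $\val$ with $\bufread{\bm{\buf_1}{\mem_1}}{\add}{0}=\val,\bot$; by \Cref{rem:bufreadoverline} this equals $\overline{\bm{\buf_1}{\mem_1}}(\add)$, so \ref{WL:Load} on the committed memory writes the very same value into $\vx$ while leaving the memory fixed, and the side conditions ($\add\in\underline\lay(\Ar[\km])$ and the capability check) transfer verbatim because they are layout- and register-determined. For \ref{SI:Store} the step appends $\bitem\add\val$ to the buffer (with $\add=\toAdd{\sem\expr_{\regmap_1,\lay}}$ and $\val=\sem\exprtwo_{\regmap_1,\lay}$); the defining equation $\overline{\bm{\bitem\add\val\cons\buf}{\mem}}=\update{\overline{\bm\buf\mem}}{\add}{\val}$ shows that the committed memory of $\specconftwo$ is exactly $\update{\overline{\bm{\buf_1}{\mem_1}}}{\add}{\val}$, which is precisely the memory produced by \ref{WL:Store}.

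The delicate case, and the one I expect to be the main obstacle, is \ref{SI:Call-Step}, because the speculative rule loads the callee body from the \emph{raw} memory $\mem_1(\add)$ whereas \ref{WL:Call}/\ref{WL:SCall} read it from the committed memory $\overline{\bm{\buf_1}{\mem_1}}(\add)$. These agree provided $\add\notin\dom(\buf_1)$, which holds because the call side condition forces $\add\in\underline\lay(\Fn[\km])$, while every buffered write is, by the side condition of \ref{SI:Store}, deposited at an array address in $\underline\lay(\Ar)$, and array and procedure regions are disjoint under any $\lay\in\Lay$. I would make this explicit, either by carrying the invariant $\dom(\buf_1)\subseteq\underline\lay(\Ar)$---which is exactly the hypothesis on buffers supplied by \Cref{def:cbu} at every site where this lemma is invoked---or by appealing directly to region disjointness; with it, $\overline{\bm{\buf_1}{\mem_1}}(\add)=\mem_1(\add)$ and the two call steps coincide, the committed memory being unchanged in both. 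The remaining routine check throughout is that the boxed capability side conditions and the $\underline\lay$-membership premises are literally identical in the speculative and non-speculative rules and depend only on $\regmap$ and $\lay$, so they never obstruct the simulation.
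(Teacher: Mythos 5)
Your proof is correct and follows essentially the same route as the paper's: a case analysis on the $\dstep$-rules, matching each against its non-speculative counterpart, using \Cref{rem:bufreadoverline} for loads and the defining equation of $\overline{\cdot}$ for stores and fences. Your explicit treatment of the \ref{SI:Call-Step} case (where the speculative rule reads the callee from the raw memory $\mem(\add)$ rather than the committed one, reconciled via $\dom(\buf)\subseteq\underline\lay(\Ar)$ and the disjointness of array and procedure regions) is in fact more careful than the paper's, which dismisses that case as straightforward and only writes out the fence, store and load cases.
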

\begin{proof}
    The proof proceeds by cases on the transition relation. Most cases are straightforward, therefore, we focus on those involving memory interactions, which require more care.
  \begin{proofcases}
    \proofcase{\ref{SI:Fence}} We rewrite the assumption as follows:
    \begin{equation*}
      \sstep {\sframe{\frame {\cfence\sep\cmd} {\regmap_1} {\km[\syscall]}:\st_1} {\bm{\buf_1}{\mem_1}}{\boolms}\cons \cfstack}{} \dstep \obs \\
      {\sframe{\frame {\cmd} {\regmap_1} {\km[\syscall]}:\st_1} {\overline{\bm{\buf_1}{\mem_1}}}{\boolms}\cons \cfstack} 
    \end{equation*}
    The claim is 
    \begin{equation*}
      \step {\conf{\frame {\cfence\sep\cmd} {\regmap_1} {\km[\syscall]}:\st_1, \overline {\bm{\buf_1}{\mem_1}}}}
      {\conf{\frame {\cmd_1} {\regmap_1} {\km[\syscall]}:\st_1, \overline {\bm{\buf_1}{\mem_1}}}},
    \end{equation*}
    which is trivially true.
    \proofcase{\ref{SI:Store}} We rewrite the assumption as:
    \begin{equation*}
      \sstep {\sframe{\frame {\cmemass \expr \exprtwo\sep\cmd} {\regmap_1} {\km[\syscall]}:\st_1} {\bm{\buf_1}{\mem_1}}{\boolms}\cons \cfstack} {} \dstep \obs\\
      {\sframe{\frame {\cmd} {\regmap_1} {\km[\syscall]}:\st_1} {{\bm{\bitem \add \val:\buf_1}{\mem_1}}}{\boolms}\cons \cfstack},
    \end{equation*}
    where $\add = \toAdd{\sem \expr_{\regmap_1,\lay}}$ and $\val = {\sem \exprtwo_{\regmap_1,\lay}}$.
    To show the claim, we observe that 
    \begin{equation*}
      \step {\conf{\frame {\cmemass \expr \exprtwo\sep\cmd} {\regmap_1} {\km[\syscall]}:\st_1}, \overline {\bm{\buf_1}{\mem_1}}}{}\\
      {\conf{\frame {\cmd_1} {\regmap_1} {\km[\syscall]}:\st_1, \update{\overline {\bm{\buf_1}{\mem_1}}}\add\val}},
    \end{equation*}
    and we know that $\overline {\bm{\bitem \add \val:\buf_1}{\mem_1}} = \update{\overline {\bm{\buf_1}{\mem_1}}}\add\val$ by definition of $\overline \cdot$.
    \proofcase{\ref{SI:Load}} We rewrite the assumption as follows:
    \begin{equation*}
      \sstep {\sframe{\frame {\cmemread \vx \expr\sep\cmd} {\regmap_1} {\km[\syscall]}:\st_1} {\bm{\buf_1}{\mem_1}}{\boolms}\cons \cfstack} {} \dstep \obs\\
      {\sframe{\frame {\cmd} {\update{\regmap_1}\vx {\val}} {\km[\syscall]}:\st_1} {{\bm{\buf_1}{\mem_1}}}{\boolms}\cons \cfstack},
    \end{equation*}
    where $\val = \bufread {\bm{\buf_1}{\mem_1}} {\toAdd{\sem \expr_{\regmap_1,\lay}}} 0$.
    To show the claim, it suffices to apply \Cref{rem:bufreadoverline}.
  \end{proofcases}
\end{proof}

\subsubsection{Technical Observations on the $\optfencetrans$ Transformation}
\label{sec:appoptfencetrans}

The next result is crucial to show that $\optfencetrans$ imposes \emph{speculative kernel safety}.
It relies on a predicate $\optfencepred$ that is defined as follows:

\[
  \infer{\optfencepred(\sframe{\frame{\cmd\sep \cmdtwo}{\regmap}{\opt}\cons\st}{\bmem}{\boolms}\cons\cfstack)}
  {
    \cmd\neq \ccall \expr{\vec \exprtwo}
    &
    \cmd\in \{\cmemread \vx \expr, \cmemass \expr \exprtwo, \cscall \expr {\vec \exprtwo}\} \Rightarrow \boolms = \bot}
\]





\begin{lemma}
  \label{lemma:optfenceonestep}
  For every system $\system = (\rfs'', \syss, \caps)\in \im(\optfencetrans)$,
  directive  $\dir \neq \dbt$, and
  speculative stack of configurations $\sframe{\frame{\prog}{\regmap}{\km[\syscall]}\cons\st}{\bm {\buf} {\lay \lcomp \rfs}}{\boolms}\cons\cfstack$ such that:
  \begin{itemize}
    \item[(H1)]\label{hypo:opth1} $\optfencepred(\sframe{\frame{\prog}{\regmap}{\km[\syscall]}\cons\st}{\bm {\buf} {\lay \lcomp \rfs}}{\boolms}\cons\cfstack)$,
    \item[(H2)]\label{hypo:opth2} $\optfwf {\sframe{\frame{\prog}{\regmap}{\km[\syscall]}\cons\st}{\bm {\buf} {\lay \lcomp \rfs}}{\boolms}\cons\cfstack}$,
    \item[(H3)]\label{hypo:opth3} $\lnot (\nf {{\sframe{\prog}{\regmap}{\km[\syscall]}\cons\st}{\bm {\buf} {\lay \lcomp \rfs}}{\boolms}\cons\cfstack} \dir)$,
  \end{itemize}
  there exists a set of configurations $Z$ containing:
  \begin{itemize}
  \item a speculative stack of configurations $\sframe{\frame{\prog'}{\regmap'}{\km[\syscall]}\cons\st'}{\bm {\buf'} {\lay \lcomp {\rfs'}}}{\boolms'}\cons\cfstack'$  such that:
    \begin{itemize}
    \item[(C1)] $\sframe{\frame{\prog'}{\regmap'}{\km[\syscall]}\cons\st'}{\bm {\buf'} {\lay \lcomp {\rfs'}}}{\boolms'}\cons\cfstack' \in \optfencepred$,
    \item[(C2)] $\optfwf {\sframe{\frame{\prog'}{\regmap'}{\km[\syscall]}\cons\st'}{\bm {\buf'} {\lay \lcomp {\rfs'}}}{\boolms'}\cons\cfstack'}$;
    \end{itemize}
  \item a speculative stack of configurations $(\err, \bot)\cons\cfstack'$;
  \item the configuration $\unsafe$,
  \end{itemize}
  and either:
  \[
    \sstep {\sframe{\frame{\prog}{\regmap}{\km[\syscall]}\cons\st}{\bm {\buf} {\lay \lcomp \rfs}}{\boolms}\cons\cfstack} {} \dir \obs z \text{ with }z \in Z,
    \tag{C3A}
  \]
  or
  \[
    \sstep {\sframe{\frame{\prog}{\regmap}{\km[\syscall]}\cons\st}{\bm {\buf} {\lay \lcomp \rfs}}{\boolms}\cons\cfstack} {} \dir \obs
    \cfstack'' \sto \dstep \onone
    z\text{ with }z \in Z\text{ and }\cfstack'' \in \optfencepred.
    \tag{C3B}
  \]
  for some $\obs \in \Obs$
\end{lemma}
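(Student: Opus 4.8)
The plan is to prove this by induction-free case analysis on the head instruction of $\prog$ together with the directive $\dir \neq \dbt$, exhibiting in each case the set $Z = \{\,\cfstack^{\mathrm{good}},\ (\err,\bot)\cons\cfstack',\ \unsafe\,\}$ whose three members correspond to the successful, erroneous, and capability-violating outcomes of firing the head instruction. First I would unfold the well-formedness hypothesis (H2): by $\optfwf$ the command in the top frame is instrumented, i.e.\ $\Sigma(\prog, m, e)$ holds for some flags $m,e$, together with the two structural invariants $\boolms \Rightarrow m$ and $e \Rightarrow \buf = \nil$. These are the workhorses of the argument: $\boolms \Rightarrow m$ ties the runtime mis-speculation flag to the statically computed flag, while $e \Rightarrow \buf = \nil$ tells the static analysis when the write buffer is certainly empty. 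Throughout I would carry the memory-shape invariant $\bm{\buf}{\lay\lcomp\rfs}$ along using \Cref{rem:simispreservedspec}, noting that a $\cfence$ commits the buffer via $\overline{(\cdot)}$ while preserving this shape by \Cref{rem:overlinewrtdom}.

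For the transparent instructions (assignments, $\cskip$, conditionals, and loops) only the $\dstep$ and $\dbranch$ rules apply, and a single step (case (C3A)) reaches the continuation. Here I would re-establish (C1)/(C2) by observing that $\optfencetrans$ instruments both branches of a conditional, and the loop body, with $(\top,\bot)$, and outputs $(\top,\bot)$ after the construct; hence the freshly exposed head is either a $\cfence$ or a non-dangerous statement, the predicate $\optfencepred$ holds vacuously, and $\boolms \Rightarrow m$ is maintained because the new static flag $m$ is $\top$. For the dangerous instructions (loads, stores, safe calls), hypothesis (H1) forces $\boolms = \bot$; this is precisely why $Z$ may contain $(\err,\bot)\cons\cfstack'$ rather than an arbitrarily-flagged error, and why an out-of-capability access steps directly to $\unsafe$. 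Ordinary calls $\ccall\cdots$ cannot be heads, by the first clause of $\optfencepred$, so they need no treatment.

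The delicate case is the load fired with $\dload i$ (rule \ref{SI:Load}), whose successor carries flag $\boolms \lor f$. I would split on the statically computed input flag of the load. Since loads with $m = \top$ are instrumented as $\cfence$ followed by the load—so their head is a fence, handled in the fence case—a bare load head forces $m = \bot$, and with $\boolms \Rightarrow m$ this recovers $\boolms = \bot$ consistently with (H1). The transformation then sets the load's output flag to $\lnot e$. When $e = \top$, the invariant $e \Rightarrow \buf = \nil$ makes the buffer empty, so by the buffered-lookup definition the stale flag $f$ is $\bot$ (\Cref{remark:onbuflookup}); the successor flag stays $\bot$ and the continuation, instrumented with $m=\bot$, satisfies $\optfencepred$. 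When $e = \bot$, $f$ may be $\top$, but the output flag is $\top$, so the continuation is instrumented with $m=\top$ and its first dangerous instruction is fence-guarded, again re-establishing both invariants.

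Case (C3B) is needed exactly when one directive step does not yet land on a canonical instrumented configuration but deterministically reaches one after a single further $\dstep$: for instance when the fired instruction exposes a freshly inserted $\cfence$ that must commit the buffer, or finishes a frame that must be popped before the next instrumented command becomes the head. In both situations the intermediate stack $\cfstack''$ has a non-dangerous head and therefore satisfies $\optfencepred$, as required. I expect the main obstacle to be the bookkeeping that re-establishes $\optfwf$—in particular the invariant $\boolms \Rightarrow m$—after the flag-mutating transitions (the $\dload$ load, the branch directives, and frame pops), since this is exactly where the soundness of the static flag propagation in $\optfencetrans$ must be shown to track the operational semantics; the error and unsafe outcomes are then routine once (H1) pins their flag to $\bot$.
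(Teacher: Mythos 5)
Your proposal follows essentially the same route as the paper's proof: a case analysis on the head instruction driven by the two $\optfwf$ invariants ($\boolms \Rightarrow m$ and $e \Rightarrow \buf = \nil$), with (H1) pinning the mis-speculation flag to $\bot$ at bare dangerous heads, and the $\dload$ case resolved by splitting on $e$ so that an empty buffer forces the staleness flag $f = \bot$ while $e = \bot$ pushes the continuation to $m = \top$. All of that matches the paper.

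The one point you get backwards is the characterisation of when (C3B) is needed. In $\optfencetrans$ fences are inserted \emph{before} dangerous instructions, never after them, so a fired load or store never ``exposes a freshly inserted $\cfence$''; and the frame pop on $\cnil$ is a one-step (C3A) case. The two-step shape (C3B) arises exactly when $m = \top$: the instrumented command has $\cfence$ as its head, the first transition is \ref{SI:Fence} under $\dstep$ (which is only enabled when $\boolms = \bot$, so (H3) yields the flag needed for the intermediate stack $\cfstack''$ to satisfy $\optfencepred$), and the dangerous instruction fires in the second step. Since you already describe this fence-first pattern correctly when discussing loads with $m = \top$, the fix is local, but as written the (C3B) paragraph describes a reduction sequence that the transformation never produces.
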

\begin{proof}
  By (H2), we deduce that: 
  \[
    \prog = \optfencetrans[m][e] (\cmd); \vec \progtwo,
  \]
  where $\vec\progtwo$ is a (possibly empty) sequence of commands $\progtwo_1, \ldots, \progtwo_h$ such that for every $1 \le i \le h$, it holds that 
  \[
    \progtwo_i = \optfencetrans[\top][\bot] (\cmd_i).
  \]
  We proceed by cases on $\vec \progtwo$:
  \begin{proofcases}
    \proofcase{$\vec \progtwo=\cnil$} The proof proceeds by cases on $\cmd$
    \begin{proofcases}
      \proofcase{$\cmd = \cnil$}
      In this case, from (H3) and by analyzing the applicable rules, we deduce that the only directive compatible with this command is $\dstep$, and that the transition rule applied must be \ref{SI:Pop}. Furthermore, by inspecting this rule, we conclude that $\st$ is not empty. Consequently, from (H2), we derive that  
      \(
      \st = {\frame{\optfencetrans[\top][\bot] (\cmd_1)}{\regmap_1}{\km[\syscall]}\cons \ldots \cons \frame{\optfencetrans[\top][\bot] (\cmd_k)}{\regmap_k}{\km[\syscall]}}
      \).
      Thus, the only reachable stack of configurations is 
      \(
      \sframe{\st}{\bm {\buf} {\lay \lcomp \rfs}}{\boolms}\cons\cfstack.
      \)
      We include this configuration in $Z$, while the other elements of $Z$ are not relevant for this part of the proof and can be instantiated with any suitable values.
      This stack satisfies conditions (C1), (C2), and (C3A).
      Claim (C1) holds because, by analyzing the definition of $\optfencetrans[\top][\bot](\cdot)$, we deduce that $\optfencetrans[\top][\bot] (\cmd_1)$ cannot be a member of the set $\{\cmemread \vx \expr, \cmemass \expr \exprtwo, \cscall \expr {\vec \exprtwo}\}$. Claim (C2) follows from (H2).

      \proofcase{$\cmd = \stat; \cmdtwo$} The proof proceeds by case analysis on the instruction $\stat$.
      \begin{proofcases}
        \proofcase{$\stat = \cskip$} We begin by noting that 
        \(
          \optfencetrans[m][e](\cskip;\cmdtwo) = \cskip; \optfencetrans[m][e] (\cmdtwo)
        \).
        Similar to the case where $\cmd = \cnil$, from (H3) and by examining the applicable rules, we conclude that the only directive compatible with this command is $\dstep$, and that the transition rule applied must be \ref{SI:Skip}. In addition, the resulting configuration stack is 
        \[
          \sframe{\frame{\optfencetrans[m][e] (\cmdtwo)}{\regmap}{\opt}\cons\st}{\bm {\buf} {\lay \lcomp \rfs}}{\boolms}\cons\cfstack,
        \]
        which we include in $Z$, along with any other suitable configurations that are not relevant to this part of the proof. This stack satisfies conditions (C1), (C2), and (C3A).
        To establish (C1), we proceed by case analysis on the value of $m$.
        \begin{proofcases}
          \proofcase{$m = \top$} By definition of $\optfencetrans[\top][e](\cdot)$, we infer that $\optfencetrans[m][e] (\cmdtwo)$ cannot belong to the set $\{\cmemread \vx \expr, \cmemass \expr \exprtwo, \cscall \expr {\vec \exprtwo}\}$ and it cannot be a call instruction.
          
          \proofcase{$m = \bot$} In this case, (H2) implies that $\boolms = \bot$, hence (C1) holds trivially.
        \end{proofcases}

        Finally, claim (C2) follows directly from (H2), since this transition does not alter the state.
        \proofcase{$\stat = \vx\ass \expr$} This case is analogous to the previous one, with the only difference that the transition causes a modification to the register map, which does not influence the proof. 
        \proofcase{$\stat = \cmemass \expr \exprtwo$} the proof proceeds by case analysis on $m$.
        \begin{proofcases}
          \proofcase{$m = \top$} In this case, by examining the definition of $\optfencetrans[\top][e](\cdot)$, we deduce that $\optfencetrans[\top][e](\cmemass \expr \exprtwo;\cmdtwo) = \cfence; \cmemass \expr \exprtwo \sep \optfencetrans[\bot][\bot] (\cmdtwo)$. From (H3), and by introspection of the rules, we infer that $\dir =\dstep$ and that the rule for showing the first transition is \ref{SI:Fence}. Following this transition, another transition occurs, again with the directive $\dstep$. As a result, the entire reduction has the following structure:
          \[
            \sstep {\sframe{\frame{\cfence; \cmemass \expr \exprtwo \sep \optfencetrans[\bot][\bot] (\cmdtwo)}{\regmap}{\km[\syscall]}\cons\st}{\bm {\buf} {\lay \lcomp \rfs}}{\boolms}\cons\cfstack} {} \dir \obs
            \cfstack'' \sto \dstep \onone
            z
          \]
          Depending on the rule applied in the final transition (specifically, \ref{SI:Store}, \ref{SI:Store-Err}, or \ref{SI:Store-Unsafe}), the final configuration $z$ can take one of the following forms:
          \begin{itemize}
          \item     $\sframe{\frame{\optfencetrans[\bot][\bot] (\cmdtwo)}{\regmap}{\km[\syscall]}\cons\st} {\bm {\bitem {\sem \expr_{\regmap, \lay} }{\sem \exprtwo_{\regmap, \lay} }} {\overline {\bm {\buf} {\lay \lcomp \rfs}}} }{\bot}\cons\cfstack$,
          \item     $(\err, {\bot})\cons\cfstack$, or
          \item     $\unsafe$.
          \end{itemize}
          These three possible outcomes are the elements of the set $Z$. It is important to note that the existence of the initial reduction step---a direct consequence of (H3)---implies that $\boolms = \bot$. Consequently, $\cfstack''$ satisfies the $\optfencepred$ condition.
          Moreover, the final configuration can be any element of $Z$, depending on the value of $\toAdd{\sem \expr_{\regmap, \lay}}$.
          Finally, we must show that $\sframe{\frame{\optfencetrans[\bot][\bot] (\cmdtwo)}{\regmap}{\km[\syscall]}\cons\st} {\bm {\bitem {\sem \expr_{\regmap, \lay} }{\sem \exprtwo_{\regmap, \lay} }} {\overline {\bm {\buf} {\lay \lcomp \rfs}}} }{\bot}\cons\cfstack$ satisfies (C1) and (C2).
          Claim (C1) holds trivially, because the mis-speculation flag of the target configuration is $\bot$, and the image of $\optfencetrans$ cannot be a call instruction.
          For (C2),  we are required to establish:
          \begin{enumerate}
          \item ${\overline {\bm {\buf} {\lay \lcomp \rfs}}} = \lay\lcomp \rfs'$ for some $\rfs'$ such that $\optfwf {\rfs'}$.
        \item$\st = \frame {\cmd_0} {\regmap_0} {\opt_0}:\ldots:\frame {\cmd_k} {\regmap_k} {\opt_k} $ for $\cmd_0, \ldots, \cmd_k$ such that $\Sigma(\cmd_0, \top, \bot), \ldots,\Sigma(\cmd_k, \top, \bot)$.
          \item $\cmd = \optfencetrans[m][e](\cmd') \land \boolms \Rightarrow m \land e \Rightarrow \buf = \nil$
          \item $\dom({\bitem {\toAdd{\sem \expr_{\regmap, \lay}} }{\sem \exprtwo_{\regmap, \lay} }}) \subseteq \underline \lay(\Ar)$
          \end{enumerate}
          From (H2), we conclude that the domain of $\buf$ is a subset of $\Ar$. Consequently, we can apply \Cref{rem:overlinewrtdom} to establish that $\rfs' \eqon{\Fn} \rfs \eqon{\Fn} \rfs''$. This follows directly from (H2), and hence $\optfwf{\rfs'}$, given that $\system \in \im(\optfencetrans)$.
          Point (2) is a direct consequence of (H2), as the frame stack beneath the topmost configuration remains unchanged. Point (3) can be easily verified through a direct examination of the target configuration. Lastly, point (4) follows by introspection of the rule \ref{SI:Store}, which has been applied to demonstrate the transition.
 
          \proofcase{$m = \bot$} In this case, by analyzing the definition of $\optfencetrans[\bot][e](\cdot)$, we conclude that $\optfencetrans[\top][e](\cmemass \expr \exprtwo;\cmdtwo) = \cmemass \expr \exprtwo \sep \optfencetrans[\bot][\bot] (\cmdtwo)$. From (H3), and by examining the rules, we determine that $\dir = \dstep$, and the applicable transition rule must be one of \ref{SI:Store}, \ref{SI:Store-Err}, or \ref{SI:Store-Unsafe}. Consequently, the target configuration can be one of the following:
          \begin{itemize}
          \item $\sframe{\frame{\optfencetrans[\bot][\bot] (\cmdtwo)}{\regmap}{\km[\syscall]}\cons\st} {\bm {\bitem {\toAdd{\sem \expr_{\regmap, \lay}} }{\sem \exprtwo_{\regmap, \lay} }} {\overline {\bm {\buf} {\lay \lcomp \rfs}}} }{\bot}\cons\cfstack$,
          \item $(\err, {\bot})\cons\cfstack$, or
          \item $\unsafe$.
          \end{itemize}
          These three possible outcomes form the set $Z$. Note that from (H2), it follows that $\boolms = \bot$. From this point onward, the proof proceeds using the same strategy as in the previous case.

       \end{proofcases}
        \proofcase{$\stat = \cmemread \vx \expr$} The proof proceeds by case analysis on $m$.
        \begin{proofcases}
          \proofcase{$m=\top$} In this case, by analyzing the definition of $\optfencetrans[\top][e](\cdot)$, we deduce that $\optfencetrans[m][e](\cmemread \vx \expr;\cmdtwo) = \cfence; \cmemread \vx \expr \sep \optfencetrans[\bot][\top] (\cmdtwo)$. From (H3), and by analyzing the semantics, we conclude that $\dir = \dstep$, and the rule for the first transition must be \ref{SI:Fence}. Subsequently, another transition follows, using the directive $\dir' \in \{\dstep\} \cup \{\dload i \mid i \in \Nat\}$. This implies that the entire reduction has the following form:
          \[
            \sstep {\sframe{\frame{\cfence; \cmemread \vx \expr \sep \optfencetrans[\bot][\top] (\cmdtwo)}{\regmap}{\km[\syscall]}\cons\st}{\bm {\buf} {\lay \lcomp \rfs}}{\boolms}\cons\cfstack} {} \dstep \onone
            \cfstack'' \sto {\dir'} \obs
            z\text{ for some  }\obs, z.          
          \]
          Depending on the rule that is applied in the last transition, the final configuration can be one of the following:
          \begin{itemize}
          \item     $\sframe{\frame{\optfencetrans[\bot][\top] (\cmdtwo)}{\update \regmap \vx  {\overline {\bm {\buf} {\lay \lcomp \rfs}}(\toAdd{\sem \expr_{\regmap, \lay}})}}{\km[\syscall]}\cons\st} {\bm {\nil} {\overline {\bm {\buf} {\lay \lcomp \rfs}}} }{\bot}\cons\cfstack$, if the rule is \ref{SI:Load} or \ref{SI:Load-Step},
          \item     $(\err, {\bot})\cons\cfstack$, if the rule is \ref{SI:Load-Err} or
          \item     $\unsafe$, if the rule was \ref{SI:Load-Unsafe}.
          \end{itemize}
          We take these three possible outcomes as the elements of the set $Z$.
          Notice that the existence of the first reduction step---which, in turn, follows from (H3)---implies that $\boolms = \bot$. Therefore, $\cfstack''$ satisfies $\optfencepred$. 

          Finally, we must establish that the configuration $\sframe{\frame{\optfencetrans[\bot][\top] (\cmdtwo)}{\update \regmap \vx  {\overline {\bm {\buf} {\lay \lcomp \rfs}}(\sem \expr_{\regmap, \lay})}}{\km[\syscall]}\cons\st} {\bm {\nil} {\overline {\bm {\buf} {\lay \lcomp \rfs}}} }{\bot}\cons\cfstack$ satisfies both (C1) and (C2). Claim (C1) holds trivially because the mis-speculation flag of the target configuration is $\bot$. Additionally, (C2) can be established using the same strategy as in the previous case.

          \proofcase{$m=\bot$} In this case, by analyzing the definition of $\optfencetrans[\bot][e](\cdot)$, we deduce that  $\optfencetrans[\bot][e](\cmemread \vx \expr;\cmdtwo) = \cmemread \vx \expr \sep \optfencetrans[\lnot e][e] (\cmdtwo)$.
          From (H3) and by analyzing the semantics, we conclude that the applicable transition rule must be one of \ref{SI:Load}, \ref{SI:Load-Step}, \ref{SI:Load-Err}, or \ref{SI:Load-Unsafe}, and the directive is $\dir \in \{\dstep\} \cup \{\dload i \mid i \in \Nat\}$.
          Depending on the applied rule, the target configuration can be one of the following:
          \begin{itemize}
          \item     $\sframe{\frame{\optfencetrans[\lnot e][e] (\cmdtwo)}{\update \regmap \vx  \val}{\km[\syscall]}\cons\st} { { {\bm {\buf} {\lay \lcomp \rfs}}} }{\boolms \lor f}\cons\cfstack$, where ${\bufread {\bm {\buf} {\lay \lcomp \rfs}} {\sem \expr_{\regmap, \lay}} i} = \val , f$ for some $i \in \Nat$, if the rule is \ref{SI:Load} or \ref{SI:Load-Step},
          \item     $(\err, {\bot})\cons\cfstack$, if the rule is \ref{SI:Load-Err} or
          \item     $\unsafe$, if the rule was \ref{SI:Load-Unsafe}.
          \end{itemize}
          We take these three possible outcomes as the elements of the set $Z$.  Finally, we are required to show that $\sframe{\frame{\optfencetrans[\lnot e][e] (\cmdtwo)}{\update \regmap \vx  \val}{\km[\syscall]}\cons\st} { { {\bm {\buf} {\lay \lcomp \rfs}}} }{\boolms \lor f}\cons\cfstack$ satisfies (C1) and (C2). The proof proceeds by case analysis on $e$:
          \begin{proofcases}
            \proofcase{$e= \bot$} In this case, by analyzing the definition of $\optfencetrans[\top][\bot](\cdot)$, we deduce that $\optfencetrans[\top][\bot] (\cmdtwo)$ cannot be a member of the set $\{\cmemread \vx \expr, \cmemass \expr \exprtwo, \cscall \expr {\vec \exprtwo}\}$. This establishes (C1). Moreover, (C2) follows directly from (H2).
            \proofcase{$e= \top$} Condition (C1) follows from the fact that $\boolms = \bot$ and the definition of $\optfencetrans$. From (H2), we deduce that $\buf = \nil$. Then, by analyzing the definition of                ${\bufread {\bm {\buf} {\lay \lcomp \rfs}} {\sem \expr_{\regmap, \lay}} i}$,
            we conclude that $f = \bot$. Consequently, (C2) follows from (H2) and the fact that $f \lor \boolms = \bot \land \buf = \nil$.
          \end{proofcases}
        \end{proofcases}
        \proofcase{$\stat = \ccall  \expr {\vec \exprtwo}\sep \cmdtwo$} By analyzing the definition of $\optfencetrans[m][e](\cdot)$, we deduce that $\optfencetrans[m][e]( \cscall \expr {\vec \exprtwo}) = \cfence; \cscall  \expr {\vec \exprtwo} \sep \optfencetrans[\top][\bot] (\cmdtwo)$.
        From (H3) and by inspecting the rules, we conclude that the first transition is governed by \ref{SI:Fence}, followed by another transition with the directive $\dstep$. The complete reduction takes the form:
        \[
          \sstep {\sframe{\frame{\cfence; \cscall  \expr {\vec \exprtwo} \sep \optfencetrans[\top][\bot] (\cmdtwo)}{\regmap}{\km[\syscall]}\cons\st}{\bm {\buf} {\lay \lcomp \rfs}}{\boolms}\cons\cfstack} {} \dstep \onone
          \cfstack'' \sto {\dstep} \obs
          z\text{ for some  }\obs, z.          
        \]
        Depending on the rule applied in the last transition, the final configuration can be:
        \begin{itemize}
        \item     $\sframe{\frame{\optfencetrans[\top][\bot] (\cmdtwo')}{\update {\regmap_0} {\vec \vx}  {\sem {\vec \exprtwo}_{\regmap, \lay}}} {\km[\syscall]}\cons
            \frame{\optfencetrans[\top][\bot] (\cmdtwo)}{\regmap}{\km[\syscall]}
            \cons\st} {\bm {\nil} {\overline {\bm {\buf} {\lay \lcomp \rfs}}} }{\bot}\cons\cfstack$, if the rule is \ref{SI:Call} or \ref{SI:Call-Step}, \DD{Notice that here we are approximating: we can compile functions starting with $e = \top, m=\bot$. In practice we do not do it because it is easier to implement.}
        \item     $(\err, {\bot})\cons\cfstack$, if the rule is \ref{SI:Call-Err} or
        \item     $\unsafe$, if the rule was \ref{SI:Call-Step-Unsafe}.
        \end{itemize}
        These three possible outcomes form the set $Z$. Given the existence of the first reduction step---ensured by (H3)---it follows that $\boolms=\bot$, implying that $\cfstack''$ satisfies $\optfencepred$.
        Next, we must verify that $\sframe{\frame{\optfencetrans[\top][\bot] (\cmdtwo')}{\update {\regmap_0} {\vec \vx}  {\sem {\vec \exprtwo}_{\regmap, \lay}}} {\km[\syscall]}\cons
          \frame{\optfencetrans[\top][\bot] (\cmdtwo)}{\regmap}{\km[\syscall]}
          \cons\st} {\bm {\nil} {\overline {\bm {\buf} {\lay \lcomp \rfs}}} }{\bot}\cons\cfstack$ satisfies (C1) and (C2). Claim (C1) holds trivially, because the mis-speculation flag of the target configuration is $\bot$. Claim (C2) requires proving the following:
        \begin{enumerate}
        \item ${\overline {\bm {\buf} {\lay \lcomp \rfs}}} = \lay\lcomp \rfs'$ for some $\rfs'$ such that $\optfwf {\rfs'}$.
        \item All  frames in the stack of the final configuration carry commands  $\cmd_0, \ldots, \cmd_k$ such that $\Sigma(\cmd_0, \top, \bot), \ldots,\Sigma(\cmd_k, \top, \bot)$.
        \item $ \bot \Rightarrow \top  \land \bot \Rightarrow \nil = \nil$
        \item $\dom(\nil) \subseteq \underline \lay(\Ar)$
        \end{enumerate}
        The proof of point (1) is analogous to the one given for stores. Point (2) is a consequence of (H2), and follows by introspection of the target configuration. Points (3) and (4) (where we already replaced $m, e, \boolms$ and $\mu$ with their actual value for this sub-derivation) are trivial.
        
        \proofcase{$\stat =\cif \expr {\cmd_1} {\cmd_2} ; \cmdtwo$}
        By analyzing the definition of $\optfencetrans[m][e](\cdot)$, we deduce that:
        \[
          \optfencetrans[m][e](\cif \expr {\cmd_1} {\cmd_2} ; \cmdtwo) = \cif \expr {\optfencetrans[\top][\bot](\cmd_1)}{\optfencetrans[\top][\bot](\cmd_2)} \sep \optfencetrans[\top][\bot] (\cmdtwo).
        \]
        From (H3) and by inspecting the rules, we conclude that the rule for the first transition must be either \ref{SI:If} or \ref{SI:If-Branch}. The reduction therefore takes the following form:
        \begin{multline*}
          {\sframe{\frame{\cif \expr {\optfencetrans[\top][\bot](\cmd_1)}{\optfencetrans[\top][\bot](\cmd_2)} \sep \optfencetrans[\top][\bot] (\cmdtwo)}{\regmap}{\km[\syscall]}\cons\st}{\bm {\buf} {\lay \lcomp \rfs}}{\boolms}\cons\cfstack} {}\sto \dir {\obranch b}\\
          {\sframe{\frame{\optfencetrans[\top][\bot](\cmd_i) \sep \optfencetrans[\top][\bot] (\cmdtwo)}{\regmap}{\km[\syscall]}\cons\st}{\bm {\buf} {\lay \lcomp \rfs}}{\boolms}\cons\cfstack'}
          \text{ for some  }\cfstack', b, \dir, i.
        \end{multline*}
        No $\err$ or $\unsafe$ state can be reached in this case, so we define a set $Z$ containing the above target configuration and some arbitrary configurations to comply with the requirements over $Z$. 
        We now need to show that
        \(
          {\sframe{\frame{\optfencetrans[\top][\bot](\cmd_i) \sep \optfencetrans[\top][\bot] (\cmdtwo)}{\regmap}{\km[\syscall]}\cons\st}{\bm {\buf} {\lay \lcomp \rfs}}{\boolms}\cons\cfstack}
        \)
        satisfies (C1) and (C2).
        The validity of (C1) follows directly from the definition of the function $\optfencetrans[\top][\bot](\cdot)$.
        To verify (C2), we must prove the following:
        \begin{enumerate}
        \item ${\lay \lcomp \rfs} \eqon{\Fn} \lay\lcomp \rfs''$, which directly follows from (H2).
        \item The stack $\st$ is of the form $\frame {\cmd_0} {\regmap_0} {\opt_0}:\ldots:\frame {\cmd_k} {\regmap_k} {\opt_k}$, where $\cmd_0, \ldots, \cmd_k$ satisfies $\Sigma(\cmd_0, \top, \bot),\\ \ldots,\Sigma(\cmd_k, \top, \bot)$.  This follows trivially from (H2), as the frame stack remains unchanged.
        \item $\Sigma(\optfencetrans[\top][\bot](\cmd_i) \sep \optfencetrans[\top][\bot] (\cmdtwo))$, that is trivial.
        \item $\dom(\buf) \subseteq \underline \lay(\Ar)$, which is a consequence of (H2).
        \end{enumerate}
      \end{proofcases}
    \end{proofcases}
    \proofcase{$\vec \progtwo \neq \cnil$} This remaining case is analogous to the case where $\vec \progtwo=\cnil$ and $\cmd\neq \cnil$, as the premises (H1), (H2), (H3) also hold for the first element of $\vec \progtwo$.  
  \end{proofcases}
\end{proof}

\begin{lemma}
  \label{lemma:optfencensteps}
  Let $\system = (\rfs, \syss, \caps)$ be a system. For every configuration stacks $\conf{\conf{\prog, \regmap, \km[\syscall]}, \bm\buf{(\lay \lcomp \rfs')}, \boolms}\cons \cfstack$ and $\cfstack'$, directives $\Ds$, observations $\Os$, and number of steps $\nat$, if the following conditions hold:
  \begin{itemize}
  \item   \(
    \conf{\conf{\prog, \regmap, \km[\syscall]}, \bm\buf{(\lay \lcomp \rfs')}, \boolms}\cons \cfstack \sto{\Ds}{\Os}^n \cfstack',
    \)
  \item $\cfstack' \notin \{\unsafe, (\err, \boolms'):\cfstack''\}$,
  \item $\dbt\notin \Ds$,
  \item $\optfencepred(\conf{\conf{\prog, \regmap, \km[\syscall]}, \bm\buf{(\lay \lcomp \rfs')}, \boolms}\cons \cfstack)$,
  \item $\optfwf {\conf{\conf{\prog, \regmap, \km[\syscall]}, \bm\buf{(\lay \lcomp \rfs')}, \boolms}\cons \cfstack}$,
  \end{itemize}
  then we have  $\optfencepred(\cfstack')$.
\end{lemma}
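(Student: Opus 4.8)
The plan is to argue by strong induction on the number of steps $\nat$, using the single-step analysis of \Cref{lemma:optfenceonestep} to advance the reduction one transition at a time. In the base case $\nat = 0$ the target $\cfstack'$ coincides with the initial stack, which satisfies $\optfencepred$ by assumption, so there is nothing to prove. For the inductive step I would peel off the first transition of the length-$(\nat+1)$ reduction. Its first directive is applicable (the reduction exists), so hypothesis (H3) of \Cref{lemma:optfenceonestep} holds, while (H1) and (H2) are exactly the standing assumptions $\optfencepred$ and $\optfwf$ on the initial stack; instantiating the one-step lemma with this first directive then places us in case (C3A) or (C3B), and produces a set $Z$ whose only members are a ``good'' stack $z_{\mathrm{ok}}$ satisfying both (C1)~$\optfencepred$ and (C2)~$\optfwf$, an error-topped stack $(\err,\bot)\cons\cfstack_1$, and $\unsafe$.

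A preliminary observation that I would establish once and reuse is that, because $\Ds$ contains no $\dbt$ directive, neither $\unsafe$ nor an error-topped stack can perform a further transition: the only rules that fire on them are the backtracking rules \ref{SI:Backtrack-Top} and \ref{SI:Backtrack-Bot}, both of which consume a $\dbt$. Consequently, whenever the reduction reaches such a configuration with strictly positive remaining length it cannot continue, and whenever it reaches it as the final stack $\cfstack'$ we contradict the hypothesis $\cfstack' \notin \{\unsafe, (\err,\boolms)\cons\cfstack_1\}$. Thus every configuration that the reduction actually passes through as an intermediate stage must be the good stack $z_{\mathrm{ok}}$ of $Z$.

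In case (C3A) the first step lands directly in some element of $Z$; by the observation above that element must be $z_{\mathrm{ok}}$, which satisfies both invariants by (C1) and (C2). I then apply the induction hypothesis to the residual $\nat$-step reduction issuing from $z_{\mathrm{ok}}$ (still $\dbt$-free and still ending outside $\unsafe$/error), concluding $\optfencepred(\cfstack')$. In case (C3B) the first step is a $\cfence$ flush reaching an intermediate stack $\cfstack_0$ that already satisfies $\optfencepred$ by the lemma. If $\nat = 0$ then $\cfstack' = \cfstack_0$ and we are done. If $\nat \ge 1$ the reduction takes a second step from $\cfstack_0$; its top command is a load, store, or safe call, so the applicable directive is forced to $\dstep$ for stores and safe calls, while for loads the proof of \Cref{lemma:optfenceonestep} also covers the $\dload[\lbl]{i}$ directives, and in every subcase the step lands in $Z$, hence (again by the preliminary observation) in $z_{\mathrm{ok}}$. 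I apply the induction hypothesis to the residual $\nat-1$ steps, which is legitimate precisely because the induction is strong.

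The main obstacle I anticipate is the bookkeeping of case (C3B): I must confirm that the second transition actually taken by the reduction coincides with the one predicted by \Cref{lemma:optfenceonestep}, which relies on determinism of the speculative transition relation under a fixed directive together with the fact—implicit in the one-step lemma's proof—that after the fence every applicable directive (including $\dload[\lbl]{i}$ for loads) still drives the configuration into $Z$; and I must use the two-step peel here, which is why strong induction is needed, since the intermediate post-fence stack $\cfstack_0$ is not in general $\optfwf$ (for a safe call its top command is a fence-stripped $\cscall$, which is not the head of any $\optfencetrans$ image). Getting this right ensures that the configuration handed to the induction hypothesis always enjoys both $\optfencepred$ and $\optfwf$, so that $\optfwf$ never has to be re-proved along the reduction.
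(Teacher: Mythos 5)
Your proof is correct and follows the same route as the paper, which simply declares the lemma a ``direct consequence'' of \Cref{lemma:optfenceonestep}; your strong induction on $\nat$, with the observation that $\unsafe$ and error-topped stacks cannot reduce without a $\dbt$ directive and the two-step peel for case (C3B), is precisely the argument the paper leaves implicit. The one worry you flag about the second transition in (C3B) is indeed discharged by the one-step lemma's case analysis, which already quantifies over all applicable directives (including $\dload[\lbl]{i}$ for loads) after the fence.
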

\begin{proof}
  Direct consequence of \Cref{lemma:optfenceonestep}.
\end{proof}

\begin{lemma}
  \label{lemma:optfencesyscall}
  Let $\optfencetrans(\system) = (\rfs, \syss, \caps)$ be a system. For every configuration stacks $\conf{\conf{\syss(\syscall), \regmap, \km[\syscall]}, \bm\buf{(\lay \lcomp \rfs')}, \boolms}$ and $\cfstack$, directives $\Ds$, observations $\Os$, and number of steps $\nat$, if the following conditions hold:
  \begin{itemize}
  \item \(
      \lay \red[\optfencetrans(\system)] \conf{\conf{\syss(\syscall), \regmap, \km[\syscall]}, \bm\buf{(\lay \lcomp \rfs')}, \boolms} \sto{\Ds}{\Os}^n \cfstack,
    \)
    \item $\cfstack \notin \{\unsafe, (\err, \boolms'):\cfstack''\}$,
    \item $\rfs'\eqon{\Fn}\rfs$,  $\dom(\buf)\subseteq\underline \lay(\Ar)$, and
    \item $\dbt\notin \Ds$,
    \end{itemize}
    then we have $\optfencepred(\cfstack)$.
\end{lemma}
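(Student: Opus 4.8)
The plan is to reduce this statement to \Cref{lemma:optfencensteps} by instantiating the latter with $\prog \defsym \syss(\syscall)$ and with the empty continuation $\cfstack \defsym \nil$, so that the initial configuration of \Cref{lemma:optfencensteps} becomes the singleton $\conf{\conf{\syss(\syscall), \regmap, \km[\syscall]}, \bm\buf{(\lay \lcomp \rfs')}, \boolms}$. Under this instantiation the reduction hypothesis, the assumption $\cfstack \notin \{\unsafe, (\err,\boolms'):\cfstack''\}$, and the condition $\dbt \notin \Ds$ carry over verbatim, so the only genuine work is to verify the two remaining hypotheses of \Cref{lemma:optfencensteps} on the initial configuration, namely $\optfwf{\cdot}$ and $\optfencepred(\cdot)$.

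First I would exploit the shape forced by the transformation. Since $(\rfs,\syss,\caps) = \optfencetrans(\system)$ for some underlying system $\system = (\rfs_0, \syss_0, \caps_0)$, the definitions in \Cref{fig:optfencetrans} give $\syss(\syscall) = \optfencetrans[\top][\bot](\syss_0(\syscall))$, and every kernel-space procedure of $\rfs$ lies in the image of $\optfencetrans[\top][\bot]$. To establish $\optfwf{\cdot}$ (following \Cref{fig:transrel2}) I would check each premise on the singleton frame stack, i.e. with $h = 0$: the side condition $\optfwf{\rfs'}$ holds because $\rfs' \eqon{\Fn} \rfs$ and each kernel procedure of $\rfs$ is a $\optfencetrans[\top][\bot]$-image; the predicate $\Sigma(\syss(\syscall), \top, \bot)$ holds by the very form of $\syss(\syscall)$; the side condition $\boolms \Rightarrow \top$ is trivial for any $\boolms$; $e \Rightarrow \buf = \nil$ is vacuous as $e = \bot$; and $\dom(\buf) \subseteq \underline\lay(\Ar)$ is assumed.

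Then I would establish $\optfencepred(\cdot)$ by inspecting the head instruction of $\optfencetrans[\top][\bot](\syss_0(\syscall))$. The key observation is that starting the analysis at $m = \top$ forces the transformation to prefix every load, store, and call with a $\cfence$, so the head of $\syss(\syscall)$ is either $\cfence$, or else an assignment, a $\cskip$, a conditional, a loop, or a system call; in no case is it a $\ccall \expr{\vec\exprtwo}$, nor is it any of $\cmemread \vx\expr$, $\cmemass \expr\exprtwo$, $\cscall \expr{\vec\exprtwo}$. Hence both premises of the $\optfencepred$ rule hold, the second vacuously, for arbitrary $\boolms$; the degenerate case $\syss(\syscall) = \cnil$ is handled trivially, since then no nonempty reduction is possible.

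With both hypotheses discharged, a direct application of \Cref{lemma:optfencensteps} to the instantiated system $\optfencetrans(\system)$ yields $\optfencepred(\cfstack)$. The only mildly delicate point is the head-instruction case analysis of the last paragraph, which hinges precisely on the fact that evaluating $\optfencetrans$ at the entry point of a system call uses the flag $m = \top$, guaranteeing that no unguarded memory or call instruction can appear at the head of the transformed system call.
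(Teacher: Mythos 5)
Your proposal is correct and follows essentially the same route as the paper's proof: both reduce the statement to \Cref{lemma:optfencensteps} by observing that $\syss(\syscall) = \optfencetrans[\top][\bot](\cmd)$ for some $\cmd$, which guarantees that the head of the translated body is never an unguarded load, store, or call (hence $\optfencepred$ holds of the initial configuration), and that $\optfwf{\cdot}$ holds thanks to the assumptions on $\buf$ and $\rfs'$. Your write-up merely spells out the verification of the individual premises of the well-formedness relation in more detail than the paper does.
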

\begin{proof}
  By the definition of $\optfencetrans(\system)$, the body of a syscall is translated with initial flags $m = \top$ and $e = \bot$. This implies that the initial configuration has the form:
  \[
    \conf{\conf{\optfencetrans[\top][\bot](\cmd), \regmap, \km[\syscall]}, \bm\buf{(\lay \lcomp \rfs')}, \boolms}.
  \]
  By analyzing the definition of $\optfencetrans[\top][\bot](\cdot)$, we observe that $\optfencetrans[\top][\bot](\cmd)$ cannot be a potentially unsafe command. Therefore
  \(
    \optfencepred(\conf{\conf{\optfencetrans[\top][\bot](\cmd), \regmap, \km[\syscall]}, \bm\buf{(\lay \lcomp \rfs')}, \boolms})
  \)
  holds.
  Additionally, we have:
  \[
    \optfwf {\conf{\conf{\optfencetrans[\top][\bot](\cmd), \regmap, \km[\syscall]}, \bm\buf{(\lay \lcomp \rfs')}, \boolms}}
  \]
  due to the assumptions on $\buf$ and $\rfs'$. The claim then follows by applying \Cref{lemma:optfencensteps}.
\end{proof}
\begin{lemma}
  \label{lemma:optfencesafeimpo}
  The transformation $\optfencetrans$ imposes \emph{speculative kernel safety}.
\end{lemma}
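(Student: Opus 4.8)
\textbf{The plan is to} mirror the structure of the proof of \Cref{lemma:fencesafeimpo}, replacing the syntactic well-formedness argument (which tracked a $\cfence$ before every unsafe instruction) by the semantic invariant $\optfencepred$ supplied by the static analysis. Concretely, to discharge \Cref{def:cbu} for $\optfencetrans$, I fix $\optfencetrans(\system) = (\rfs, \syss, \caps)$, a buffer $\buf$ with $\dom(\buf) \subseteq \underline\lay(\Ar)$, a store $\rfs' \eqon{\Fun} \rfs$, and assume a speculative run of a transformed system call reaches $\unsafe$. The crux of the whole argument is to show that the \emph{predecessor} of the $\unsafe$ configuration carries mis-speculation flag $\bot$; once this is established, the speculative run can be collapsed to a pure $\dstep$ run and then projected onto the non-speculative semantics, where the same unsafe transition fires.

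First I would invoke \Cref{lemma:nobt} to assume, without loss of generality, that the directive sequence $\Ds$ contains no $\dbt$ directive. Inspecting the rules that produce $\unsafe$, the last step must be one of \ref{SI:Load-Unsafe}, \ref{SI:Store-Unsafe}, \ref{SI:Call-Unsafe}, or \ref{SI:Call-Step-Unsafe}, so I can write the run as
\[
  \lay \red \sframe{\frame{\syss(\syscall)}{\regmap}{\km[\syscall]}}{\bm\buf{(\lay\lcomp\rfs')}}{\boolms}
  \sto{\Ds}{\Os}^{n}
  \sframe{\frame{\cmd}{\regmap'}{\km[\syscall]}\cons\st}{\bm{\buf'}{\mem'}}{\boolms'}\cons\cfstack
  \sto{\dir}{\obs}
  \unsafe .
\]
Since the $n$-step prefix is $\dbt$-free and lands in a configuration that is neither $\unsafe$ nor an error state, \Cref{lemma:optfencesyscall} yields $\optfencepred$ of the predecessor stack. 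Now I case on which unsafe rule fires: rule \ref{SI:Call-Unsafe} requires the head instruction to be a \emph{bare} call $\ccall$, which the first premise of $\optfencepred$ explicitly forbids (this is exactly why $\optfencetrans$ rewrites every $\ccall$ into $\cfence\sep\cscall$), so that case is vacuous; in each of the remaining cases the head is a load, a store, or a safe call, i.e.\ a member of $\{\cmemread \vx \expr, \cmemass \expr \exprtwo, \cscall \expr {\vec\exprtwo}\}$, so the second premise of $\optfencepred$ forces $\boolms' = \bot$. With $\boolms' = \bot$ in hand I apply \Cref{lemma:nobtsteponly} to rewrite the prefix as a pure $\dstep$-only run reaching the predecessor configuration, and then \Cref{lemma:stepsemsim} step by step to project that run onto the non-speculative semantics over the flushed memories, obtaining $\lay \red \conf{\frame{\syss(\syscall)}{\regmap}{\km[\syscall]}, \overline{\bm\buf{(\lay\lcomp\rfs')}}} \to^{*} \conf{\frame{\cmd}{\regmap'}{\km[\syscall]}\cons\st, \overline{\bm{\buf'}{\mem'}}}$. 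Finally I observe that the premises of the corresponding non-speculative unsafe rule (\ref{WL:Load-Unsafe}, \ref{WL:Store-Unsafe}, or \ref{WL:SCall-Unsafe}) depend only on the register map, the layout, and the head command, not on the write buffer, so the same rule applies and drives the non-speculative run to $\unsafe$.

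\textbf{The hard part} is the $\boolms' = \bot$ step, which rests entirely on the soundness of the static analysis encoded by $\optfencepred$ and $\optfwf$. That soundness is exactly what \Cref{lemma:optfencesyscall} packages, and it is itself underpinned by the delicate one-step preservation argument of \Cref{lemma:optfenceonestep} (tracking, across every instruction, that the flag $m$ over-approximates possible mis-speculation and that $e$ correctly predicts an empty buffer). I would therefore treat \Cref{lemma:optfencesyscall} as a black box here; the only genuinely new reasoning in this proof is the case analysis on the $\unsafe$-producing rules and the twin observations that $\optfencepred$ excludes the BTB-into-unsafe-call case and that buffer contents are irrelevant to the side-conditions of the unsafe rules after flushing. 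The remaining manipulations—$\dbt$-elimination, $\dstep$-collapse, and the speculative-to-non-speculative projection—are routine reuses of \Cref{lemma:nobt,lemma:nobtsteponly,lemma:stepsemsim}, already exercised in \Cref{lemma:fencesafeimpo}.
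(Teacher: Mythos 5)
Your proposal is correct and follows essentially the same route as the paper's own proof: eliminate $\dbt$ via \Cref{lemma:nobt}, case-split on the four unsafe-producing rules, use \Cref{lemma:optfencesyscall} (i.e., the $\optfencepred$ invariant) to rule out \ref{SI:Call-Unsafe} and to force $\boolms' = \bot$ in the remaining cases, then collapse to a $\dstep$-only run with \Cref{lemma:nobtsteponly} and project onto the non-speculative semantics with \Cref{lemma:stepsemsim}. Your added remark that the side-conditions of the non-speculative unsafe rules are insensitive to the write buffer is a correct and slightly more explicit justification of the final step than the paper gives.
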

\begin{proof}
  More precisely, we need to show that: given $\system = (\rfs, \syss, \caps)$, 
  for every buffer $\buf$ with $\dom(\buf) \subseteq \underline \lay(\Ar)$
  and store $\rfs' \eqon{\Fun} \optfencetrans(\rfs)$, if
  \[
    \lay \red[\systrans(\system)] \conf{\conf{\syss(\syscall), \regmap, \km[\syscall]}, \bm\buf{(\lay \lcomp \rfs')}, \boolms} \sto{\Ds}{\Os}^* \unsafe,
  \]
  then
  \[
    \lay \red[\systrans(\system)] \conf{\conf{\syss(\syscall), \regmap, \km[\syscall]}, \overline{\bm\buf{(\lay \lcomp \rfs')}}} \to^* \unsafe.
  \]

  We start by observing that, due to \Cref{lemma:nobt}, we can assume without loss of generality that $\dbt \notin \Ds$. By analyzing the semantics, the final configuration preceding the $\unsafe$ state must have carried one of the following commands: 
  $\cmemass \expr \exprtwo,  \cmemread \vx \expr,  \ccall \expr{\vec \exprtwo}$, or $\cscall \expr{\vec \exprtwo}$.
  Moreover, the transition to the $\unsafe$ state can be performed using any of the rules \ref{SI:Load-Unsafe}, \ref{SI:Store-Unsafe}, \ref{SI:Call-Step-Unsafe}, or \ref{SI:Call-Unsafe}.
  By applying \Cref{lemma:optfencesyscall}, we can exclude the last case (\ref{SI:Call-Unsafe}). Furthermore, we know that if the command is any of the other three, the mis-speculation flag of the final configuration before $\unsafe$ is $\bot$. 
  Therefore, the conclusion of this proof follows similarly to the argument in \Cref{lemma:fencesafeimpo}. By analyzing the rule used in the last transition and applying a combination of \Cref{lemma:nobtsteponly} and \Cref{lemma:stepsemsim}, we demonstrate that under the speculative semantics, the configuration:
  \[
    \conf{\conf{\syss(\syscall), \regmap, \km[\syscall]}, \overline{\bm\buf{(\lay \lcomp \rfs')}}}
  \]
  can reach a configuration that satisfies the premises of the corresponding speculative unsafe rules \ref{WL:Load-Unsafe}, \ref{WL:Store-Unsafe}, or \ref{WL:Call-Unsafe}.
  Thus, we conclude that it also reaches the $\unsafe$ state.
\end{proof}
\subsubsection{Technical Observations on the $\nospec$ Transformation}
\label{sec:apponospectrans}

\begin{lemma}
  \label{lemma:nospeconestep}
  For every system $\system = (\rfs'', \syss, \caps)\in \im(\nospec)$,
  directive  $\dir \neq \dbt$, and
  speculative stack of configurations $\sframe{\frame{\cmd}{\regmap}{\km[\syscall]}\cons\st}{\bm {\buf} {\lay \lcomp \rfs}}{\boolms}\cons\cfstack$ such that:
  \begin{itemize}
    \item[(H1)]\label{hypo:opth2} $\nsfwf {\sframe{\frame{\cmd}{\regmap}{\km[\syscall]}\cons\st}{\bm {\buf} {\lay \lcomp \rfs}}{\boolms}\cons\cfstack}$,
    \item[(H2)]\label{hypo:opth3} $\lnot (\nf {{\sframe{\cmd}{\regmap}{\km[\syscall]}\cons\st}{\bm {\buf} {\lay \lcomp \rfs}}{\boolms}\cons\cfstack} \dir)$,
  \end{itemize}
  there is a set $Z$ containing:
  \begin{itemize}
  \item a speculative stack of configurations $\sframe{\frame{\cmd'}{\regmap'}{\km[\syscall]}\cons\st'}{\bm {\buf'} {\lay \lcomp {\rfs'}}}{\boolms'}\cons\cfstack'$ and an observation $\obs$ such that
    \begin{itemize}
    \item[(C1)] $\nsfwf {\sframe{\frame{\cmd'}{\regmap'}{\km[\syscall]}\cons\st'}{\bm {\buf'} {\lay \lcomp {\rfs'}}}{\boolms'}\cons\cfstack'}$,
    \end{itemize}
  \item a speculative stack $(\err, \bot)\cons\cfstack'$,
  \item the configuration $\unsafe$,
  \end{itemize}
  and either:
  \[
    \sstep {\sframe{\frame{\cmd}{\regmap}{\km[\syscall]}\cons\st}{\bm {\buf} {\lay \lcomp \rfs}}{\boolms}\cons\cfstack} {} \dir \obs z \text{ with }z \in Z,
    \tag{C2A}
  \]
  \[
    \sstep {\sframe{\frame{\cmd}{\regmap}{\km[\syscall]}\cons\st}{\bm {\buf} {\lay \lcomp \rfs}}{\boolms}\cons\cfstack} {} \dir \obs
    \cfstack'' \sto \dstep \onone
    z\text{ with }z \in Z\text{ and }\cfstack'' \in \optfencepred,
    \tag{C2B}
  \]
  or
  \[
     {\sframe{\frame{\cmd}{\regmap}{\km[\syscall]}\cons\st}{\bm {\buf} {\lay \lcomp \rfs}}{\boolms}\cons\cfstack} {}  \sto \dir \obs \cfstack'', \cfstack'' \in \optfencepred \text{ and } \nf {\cfstack''}{}
    \tag{C2C}
  \]
\end{lemma}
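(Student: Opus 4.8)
The plan is to mirror the proof of \Cref{lemma:optfenceonestep}, exploiting the crucial simplification that for $\nospec$ the well-formedness predicate $\nsfwf$ pins the write buffer to $\nil$ and the mis-speculation flag to $\bot$ on the top configuration. First I would use (H1) to expose the shape of the running command: from $\nsfwf$ we obtain $\cmd = \nospec(\cmd_0)$ for some source command $\cmd_0$, together with $\buf = \nil$, $\boolms = \bot$, a well-formed tail stack $\st$, and a store with $\nsfwf{\rfs}$ (so every kernel procedure in memory is itself a $\nospec$-image). I then case on the leading instruction $\stat$ of $\cmd_0$, so that $\cmd = \nospec(\stat);\nospec(\cmd_0')$, and in each case read off from (H2) which rules are enabled for the given directive $\dir \neq \dbt$, compute the successor, and classify it under (C2A), (C2B), or (C2C); the observation $\obs$ is taken from the firing rule.

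The routine cases discharge (C2A) in a single step. For $\cnil$ the only enabled rule is~\ref{SI:Pop} (with $\st$ non-empty by (H2)), and popping preserves $\nsfwf$ since the exposed tail frame already carries a $\nospec$-image. For $\cskip$ and $\vx\ass\expr$, rules~\ref{SI:Skip} and~\ref{SI:Op} leave buffer and flag untouched, so the residual $\nospec(\cmd_0')$ stays well-formed. For a load $\cmemread \vx\expr$ (unchanged by $\nospec$), both $\dstep$ (rule~\ref{SI:Load-Step}) and $\dload i$ (rule~\ref{SI:Load}) are admissible, but since $\buf = \nil$ the lookup of \Cref{fig:lookup} returns the architectural value with flag $\bot$; hence no mis-speculation is introduced and the successor remains well-formed. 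For a call, $\nospec$ has installed a $\cscall$, which admits only $\dstep$ (rule~\ref{SIE:SCall}); the target $\add \in \underline\lay(\Fn[\km])$ resolves through $\lcomp$ and $\nsfwf{\rfs}$ to some $\rfs(\fn) = \nospec(\cmd')$, so the freshly opened frame is again well-formed. In all these cases the erroneous and unsafe variants land on $(\err,\bot)\cons\cfstack'$ or $\unsafe$, both of which lie in $Z$.

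The interesting cases are exactly those where $\nospec$ couples an instruction with a fence: stores, branches, loops, and system-call entry. A store $\cmemass\expr\exprtwo$ becomes $\cmemass\expr\exprtwo;\cfence;\dots$, so the firing step (rule~\ref{SI:Store}, only $\dstep$) buffers one write and the intermediate stack $\cfstack''$ carries a leading $\cfence$ with flag $\bot$; this $\cfstack''$ satisfies $\optfencepred$ vacuously (a fence is neither a $\ccall$ nor a load/store/$\cscall$), and the following $\dstep$ fires~\ref{SI:Fence}, flushing the buffer to $\nil$ and exposing $\nospec(\cmd_0')$, which is (C2B). For a branch $\cif\expr{\cfence;\nospec(\cmd_1)}{\cfence;\nospec(\cmd_2)}$ and a loop $\cwhile\expr{\cfence;\nospec(\cmd)};\cfence;\dots$, I split on whether the chosen direction matches the real guard. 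When it agrees (in particular whenever $\dir=\dstep$, via rules~\ref{SI:If} and~\ref{SI:Loop-Step}, or $\dir=\dbranch d$ with $d$ correct, via rules~\ref{SI:If-Branch} and~\ref{SI:Loop-Branch}) the successor keeps flag $\bot$ and begins with a fence, so one further $\dstep$ discharges it, yielding (C2B). When the direction is wrong the successor carries flag $\top$ and begins with the leading fence, which cannot fire while $\boolms = \top$; since $\dir \neq \dbt$, no non-backtracking rule applies, so this successor is the blocked, $\optfencepred$-satisfying normal form of (C2C). The syscall-entry case is analogous: rule~\ref{SI:System-Call} opens a frame whose body $\syss(\syscall)=\cfence;\nospec(\cdots)$ is fence-prefixed, and the subsequent $\dstep$ fence step restores well-formedness, giving (C2B).

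The main obstacle I anticipate is not the step computation but the bookkeeping to re-establish $\nsfwf$ after the coupling fence in the store, branch, loop, and syscall cases. Concretely, I must check that firing~\ref{SI:Fence} on $\bm{\buf}{\lay\lcomp\rfs}$ produces $\bm{\nil}{\overline{\bm{\buf}{\lay\lcomp\rfs}}}$ with $\overline{\bm{\buf}{\lay\lcomp\rfs}} = \lay\lcomp\rfs'$ and $\rfs'\eqon{\Fn}\rfs$, so that $\nsfwf{\rfs'}$ survives, invoking $\dom(\buf)\subseteq\underline\lay(\Ar)$ and \Cref{rem:overlinewrtdom}, and that the exposed residual is a genuine $\nospec$-image rather than merely a fence-prefixed command. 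The second point is the delicate one, and it is precisely why the lemma asks the intermediate and stuck configurations to satisfy only the weaker predicate $\optfencepred$, while the fully well-formed $\nsfwf$ configuration in $Z$ is reached only once the coupling fence has been consumed; keeping that distinction straight uniformly across the branch, loop, and syscall cases is where the argument must be most careful.
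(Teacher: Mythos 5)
Your proposal is correct and follows essentially the same route as the paper's proof: a case analysis on the leading instruction of the $\nospec$-image, with the routine instructions discharged via (C2A), the fence-coupled store and correctly-predicted branches via (C2B) using \Cref{rem:overlinewrtdom} to re-establish $\nsfwf$ after the flush, and mis-predicted branches stuck at the leading fence yielding (C2C). The only differences are cosmetic: you explicitly treat the loop and system-call cases that the paper leaves implicit, and you classify the correctly-predicted branch case as (C2B) where the paper's text says (C2A) despite describing the same two-step branch-then-fence reduction.
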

\begin{proof}
  The proof goes by cases on $\cmd$:
  \begin{proofcases}
    \proofcase{$\cmd = \cnil$} In this case, from (H3) and by introspection of the rules, we deduce that the only directive compatible with this command is $\dstep$, and that the transition rule that is applied must be \ref{SI:Pop}. By analyzing this rule, we also deduce that $\st$ is not empty. Therefore, from (H1), we conclude that  $\st = {\frame{\nospec (\cmd_1)}{\regmap_1}{\km[\syscall]}\cons \ldots \cons \frame{\nospec (\cmd_k)}{\regmap_k}{\km[\syscall]}}$and hence the source configuration can only reach the target stack configuration $\sframe{\st}{\bm {\buf} {\lay \lcomp \rfs}}{\bot}\cons\cfstack$ which we put in $Z$, along with any two other suitable elements, which are not relevant for this part of the proof. The target configuration stack satisfies both (C1) and (C2A). Claim (C1) follows directly from (H1). Claim  (C2A) can be verified through introspection of the rule \ref{SI:Pop}. 
    \proofcase{$\cmd = \stat; \cmdtwo$} The proof proceeds by induction on the instruction $\stat$.
    \begin{proofcases}
      \proofcase{$\stat = \cskip$} We start by observing that $\nospec(\cskip;\cmdtwo) = \cskip; \nospec (\cmdtwo)$. From (H3) and by introspection of the semantics, we deduce that the only directive compatible with this command is $\dstep$, and that the transition rule that must be applied is \ref{SI:Skip}. Applying this rule, we obtain the target stack configuration $\sframe{\frame{\nospec (\cmdtwo)}{\regmap}{\opt}\cons\st}{\bm {\nil} {\lay \lcomp \rfs}}{\bot}\cons\cfstack$, which we include in the set $Z$ along with other two suitable configurations that are not relevant for this part of the proof. It is trivial to see that the target stack satisfies (C1), and (C2A).
      \proofcase{$\stat = \vx\ass \expr$} This case is analogous to the previous one, with the difference that the register file is modified, without influencing the proof. 
      \proofcase{$\stat = \cmemass \expr \exprtwo$}  By introspection of the definition of $\nospec$, we deduce that  $\nospec(\cmemass \expr \exprtwo;\cmdtwo) = \cmemass \expr \exprtwo \sep \cfence\sep \nospec (\cmdtwo)$. From (H2) and by introspection of the semantics, we identify that the first transition can be shown with either \ref{SI:Store}, \ref{SI:Store-Err}, or \ref{SI:Store-Unsafe}. The target configuration can take one of the following forms
      \begin{itemize}
      \item     $\sframe{\frame{\cfence \sep \nospec (\cmdtwo)}{\regmap}{\km[\syscall]}\cons\st} {{ {\bm {\bitem {\sem \expr_{\regmap, \lay} }{\sem \exprtwo_{\regmap, \lay} }} {\lay \lcomp \rfs}}} }{\bot}\cons\cfstack$. Notice that, by applying the \ref{SI:Fence} rule, this configuration transitions to
        \(
          \sframe{\frame{\nospec (\cmdtwo)}{\regmap}{\km[\syscall]}\cons\st} {\bm \nil {\overline {\bm {\bitem {\sem \expr_{\regmap, \lay} }{\sem \exprtwo_{\regmap, \lay} }} {\lay \lcomp \rfs}}} }{\bot}\cons\cfstack.
        \)
      \item     $(\err, {\bot})\cons\cfstack$, or
      \item     $\unsafe$.
      \end{itemize}
      We define the set $Z$ as follows:
      \[
        Z=\{\unsafe, (\err, {\bot})\cons\cfstack, \sframe{\frame{\nospec (\cmdtwo)}{\regmap}{\km[\syscall]}\cons\st} {\bm \nil {\overline {\bm {\bitem {\sem \expr_{\regmap, \lay} }{\sem \exprtwo_{\regmap, \lay} }} {\lay \lcomp \rfs}}} }{\bot}\cons\cfstack\}.
      \]
      If the target configuration is $(\err, {\bot}) \cons \cfstack$ or $\unsafe$, claim (C2A) holds trivially. Otherwise, we need to establish (C1) and (C2B).
      For (C2B), we have the transition:
      \begin{multline*}
        \sframe{\frame{\cfence \sep \nospec (\cmdtwo)}{\regmap}{\km[\syscall]}\cons\st} {{{\bm {\bitem {\sem \expr_{\regmap, \lay} }{\sem \exprtwo_{\regmap, \lay} }} {\lay \lcomp \rfs}}} }{\bot}\cons\cfstack \sto \dstep \onone\\
        \sframe{\frame{\nospec (\cmdtwo)}{\regmap}{\km[\syscall]}\cons\st} {\bm \nil {\overline {\bm {\bitem {\sem \expr_{\regmap, \lay} }{\sem \exprtwo_{\regmap, \lay} }} {\lay \lcomp \rfs}}} }{\bot}\cons\cfstack,
      \end{multline*}
      and we are required to show that $\optfencepred(\sframe{\frame{\cfence \sep \nospec (\cmdtwo)}{\regmap}{\km[\syscall]}\cons\st} {{{\bm {\bitem {\sem \expr_{\regmap, \lay} }{\sem \exprtwo_{\regmap, \lay} }} {\lay \lcomp \rfs}}} }{\bot}\cons\cfstack)$ holds, which is trivial. 
      For (C1), we need to establish:
      \[
        \nsfwf{          \sframe{\frame{\nospec (\cmdtwo)}{\regmap}{\km[\syscall]}\cons\st} {\bm \nil {\overline {\bm {\bitem {\sem \expr_{\regmap, \lay} }{\sem \exprtwo_{\regmap, \lay} }} {\lay \lcomp \rfs}}} }{\bot}\cons\cfstack
        }.
      \]
      To this aim, it is crucial to observe that $\nsfwf{\bm \nil {\overline {\bm {\bitem {\sem \expr_{\regmap, \lay} }{\sem \exprtwo_{\regmap, \lay} }} {\lay \lcomp \rfs}}} }$ holds.
      This follows from the fact that the premises of \ref{SI:Store} require $\sem \expr_{\regmap, \lay} \in \underline \lay(\Ark)$, making the claim a direct consequence of \Cref{rem:overlinewrtdom}.
      The other conditions required for (C1) follow directly by introspection of the configuration and from (H1).
      
      \proofcase{$\stat = \cmemread \vx \expr$} By introspection of the definition of $\nospec$, we deduce that $\nospec(\cmemread \vx \expr;\cmdtwo) = \cmemread \vx \expr \sep \nospec (\cmdtwo)$. From (H2), and by introspection of the semantics, we conclude that the first transition uses one of the following rules: \ref{SI:Load}, \ref{SI:Load-Step}, \ref{SI:Load-Err}, or \ref{SI:Load-Unsafe}. The directive employed of the transition is $\dir \in \{\dstep\}\cup\{\dload i\mid i \in \Nat\}$.
      Depending on the rule that is applied, the target configuration can take one of the following forms:
        \begin{itemize}
        \item     $\sframe{\frame{\nospec (\cmdtwo)}{\update \regmap \vx  \val}{\km[\syscall]}\cons\st} { { {\bm {\nil} {\lay \lcomp \rfs}}} }{f}\cons\cfstack$, where ${\bufread {\bm {\nil} {\lay \lcomp \rfs}} {\sem \expr_{\regmap, \lay}} i} = \val , f$ for some $i \in \Nat$, if the rule is \ref{SI:Load} or \ref{SI:Load-Step},
        \item     $(\err, {\bot})\cons\cfstack$, if the rule is \ref{SI:Load-Err} or
        \item     $\unsafe$, if the rule was \ref{SI:Load-Unsafe}.
        \end{itemize}
        We define the set $Z$ as follows:
        \[
          Z = \{\unsafe, (\err, {\bot}) \cons \cfstack, \sframe{\frame{\nospec (\cmdtwo)}{\update \regmap \vx \val}{\km[\syscall]}\cons\st} { {\bm {\nil} {\lay \lcomp \rfs}}} {f}\cons\cfstack\},
        \]
        and we observe that we are in the case where (C2A) holds. Finally, we must show that the configuration  $\sframe{\frame{\nospec (\cmdtwo)}{\update \regmap \vx  \val}{\km[\syscall]}\cons\st} { { {\bm {\nil} {\lay \lcomp \rfs}}} }{f}\cons\cfstack$ satisfies (C1). To establish (C1), we analyze definition of 
          ${\bufread {\bm {\buf} {\lay \lcomp \rfs}} {\sem \expr_{\regmap, \lay}} i}$,  deducing that that $f=\bot$ because the buffer is empty, so (C1) is satisfied.
          
      \proofcase{$\stat = \ccall  \expr {\vec \exprtwo}\sep \cmdtwo$} By introspection of the definition of $\nospec$, we deduce that $\nospec( \ccall \expr {\vec \exprtwo}) = \cscall  \expr {\vec \exprtwo} \sep \nospec (\cmdtwo)$. From (H2), and by analyzing the semantics, we observe that the configuration reduces using the directive $\dstep$. Therefore, the reduction takes the following form:
      \[
        \sstep {\sframe{\frame{\cscall  \expr {\vec \exprtwo} \sep \nospec (\cmdtwo)}{\regmap}{\km[\syscall]}\cons\st}{\bm {\nil} {\lay \lcomp \rfs}}{\bot}\cons\cfstack} {}  {\dstep} \obs
        z\text{ for some  }\obs, z.          
      \]
      Depending on the applied rule, the target configuration can be one of the following:
      \begin{itemize}
      \item     $\sframe{\frame{\lay \lcomp \rfs(\toAdd{\sem\expr_{\regmap, \lay}})}{\update {\regmap_0} {\vec \vx}  {\sem {\vec \exprtwo}_{\regmap, \lay}}} {\km[\syscall]}\cons
          \frame{\nospec (\cmdtwo)}{\regmap}{\km[\syscall]}
          \cons\st} {\bm {\nil} {\lay \lcomp \rfs} }{\bot}\cons\cfstack$, if the applied rule is \ref{SI:Call} or \ref{SI:Load-Step},
      \item     $(\err, {\bot})\cons\cfstack$, if the applied rule is \ref{SI:Call-Err} or
      \item     $\unsafe$, if the applied rule is \ref{SI:Call-Step-Unsafe}.
      \end{itemize}
      We collect these three possible configurations as the elements of the set $Z$, and we establish (C2A). If the target configuration is either the second or third one, the conclusion follows trivially. In the first case, we have $\toAdd{\sem\expr_{\regmap, \lay}} \in \underline \lay(\Addk)$. Therefore $\lay \lcomp \rfs(\toAdd{\sem\expr_{\regmap, \lay}})= \rfs(\fn)$ for some $\fn \in \Fnk$. Using (H1), we deduce that the function body must be of the form  $\nospec (\cmdtwo')$ for some $\cmdtwo$.
      Finally, we are required to show that $\sframe{\frame{\nospec (\cmdtwo')}{\update {\regmap_0} {\vec \vx}  {\sem {\vec \exprtwo}_{\regmap, \lay}}} {\km[\syscall]}\cons \frame{\nospec (\cmdtwo)}{\regmap}{\km[\syscall]} \cons\st} {\bm {\nil} {\lay \lcomp \rfs} }{\bot}\cons\cfstack$ satisfies (C1). To this aim, we need to prove:
      \[
        \nsfwf{\sframe{\frame{\nospec (\cmdtwo')}{\update {\regmap_0} {\vec \vx}  {\sem {\vec \exprtwo}_{\regmap, \lay}}} {\km[\syscall]}\cons \frame{\nospec (\cmdtwo)}{\regmap}{\km[\syscall]} \cons\st} {\bm {\nil} {\lay \lcomp \rfs} }{\bot}\cons\cfstack},
      \]
      which is a direct consequence of (H1).

      \proofcase{$\stat =\cif \expr {\cmd_1} {\cmd_2} ; \cmdtwo$} By introspection of the definition of $\nospec$, we deduce that:
      \[
        \nospec(\cif \expr {\cmd_1} {\cmd_2} ; \cmdtwo) = \cif \expr {\cfence; \nospec(\cmd_1)}{\cfence; \nospec(\cmd_2)} \sep \nospec (\cmdtwo).
      \]
      From (H2), and by introspection of the semantics, we observe that the rule governing the first transition is either \ref{SI:If} or \ref{SI:If-Branch}. Thus, the first reduction step is as follows:
      \begin{multline*}
        {\sframe{\frame{\cif \expr {\cfence; \nospec(\cmd_1)}{\cfence; \nospec(\cmd_2)} \sep \nospec (\cmdtwo)}{\regmap}{\km[\syscall]}\cons\st}{\bm {\nil} {\lay \lcomp \rfs}}{\bot}\cons\cfstack} {}\sto \dir {\obranch b}\\
        {\sframe{\frame{\cfence;\nospec(\cmd_i) \sep \nospec (\cmdtwo)}{\regmap}{\km[\syscall]}\cons\st}{\bm {\nil} {\lay \lcomp \rfs}}{\boolms}\cons\cfstack'}
        \text{ for some  }\cfstack', b, \boolms, i, \dir
      \end{multline*}
      Depending on the directive $\dir$ two cases arise:
      \begin{proofcases}
        \proofcase{$\dir \in \{\dstep, \dbranch {\sem \expr_{\regmap, \lay}}\}$} In this case, we have $\boolms = \bot$. Therefore, the \ref{SI:Fence} rule can be applied, leading to the following configuration:
        \[
          {\sframe{\frame{\nospec(\cmd_i \sep \cmdtwo)}{\regmap}{\km[\syscall]}\cons\st}{\bm {\nil} {\lay \lcomp \rfs}}{\boolms}\cons\cfstack}.
        \]
        This configuration satisfies (C2A) for a suitable set $Z$, whose precise definition is omitted. By introspection of this configuration and using (H1), we conclude that it also satisfies (C1).
        \proofcase{$\dir = \dbranch {\lnot  \sem \expr_{\regmap, \lay}}$} By analyzing the semantics, we observe that $\boolms = \top$, and therefore, the configuration cannot reduce further. In this case, (C2C) holds for some suitable set $Z$, whose definition is omitted for brevity.
      \end{proofcases}
    \end{proofcases}
  \end{proofcases}
\end{proof}

\begin{lemma}
  \label{lemma:nospecnsteps}
  Let $\nospec(\system) = (\rfs, \syss, \caps)$ be a system. For every configuration stacks $\conf{\conf{\prog, \regmap, \km[\syscall]}, \bm\buf{(\lay \lcomp \rfs')}, \boolms}\cons \cfstack$ and $\cfstack'$, directives $\Ds$,  observations $\Os$, and number of steps $\nat$, if the following conditions hold:
  \begin{itemize}
  \item
    \(
    \lay \red \conf{\conf{\prog, \regmap, \km[\syscall]}, \bm\buf{(\lay \lcomp \rfs')}, \boolms}\cons \cfstack \sto{\Ds}{\Os}^n \cfstack',
    \)
  \item $\cfstack' \notin \{\unsafe, (\err, \boolms'):\cfstack''\}$,
  \item $\dbt\notin \Ds$,
  \item $\nsfwf {\conf{\conf{\prog, \regmap, \km[\syscall]}, \bm\buf{(\lay \lcomp \rfs')}, \boolms}\cons \cfstack}$,
  \end{itemize}
  then we have  $\optfencepred(\cfstack')$.
\end{lemma}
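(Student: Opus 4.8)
The plan is to mirror the proof of \Cref{lemma:optfencensteps} and argue by induction on $\nat$, using \Cref{lemma:nospeconestep} as the single-step engine. The conceptual bridge between the invariant that the single-step lemma maintains, namely $\nsfwf$, and the target predicate $\optfencepred$ is the implication $\nsfwf{\cfstack} \Rightarrow \optfencepred(\cfstack)$. Indeed, $\nsfwf$ pins the topmost configuration's write buffer to $\nil$ and its mis-speculation flag to $\bot$, and forces its command to be an image $\nospec(\cmd')$. Hence the side-condition of $\optfencepred$ that requires $\boolms=\bot$ whenever the leading instruction is potentially unsafe is met vacuously, and the leading instruction of $\nospec(\cmd')$ is never a raw $\ccall$, since $\nospec$ compiles every call into a $\cscall$ (and syscall bodies even begin with a $\cfence$). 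I would first establish this implication as a standalone observation; it discharges the base case $\nat=0$ immediately, because the initial stack is $\nsfwf$ by hypothesis.

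For the inductive step I would apply \Cref{lemma:nospeconestep} at each $\nsfwf$ ``checkpoint''. From such a checkpoint the lemma yields a set $Z$ containing one $\nsfwf$ configuration together with $\unsafe$ and an error stack $(\err,\bot)\cons\cfstack''$, and one of the three transition shapes (C2A)--(C2C) holds: either a single $\dir$-step reaches an element of $Z$, or a $\dir$-step reaches an intermediate $\cfstack''$ satisfying $\optfencepred$ followed by a forced $\dstep$ into $Z$, or a $\dir$-step reaches an $\optfencepred$ normal form. Since the hypotheses exclude $\unsafe$ and error \emph{as the final configuration} and forbid $\dbt$ directives, error and unsafe configurations cannot occur as \emph{intermediate} states: without $\dbt$ an $(\err,\bot)$-stack is irreducible, so passing through one would contradict the assumed reduction length. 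Consequently every element of $Z$ actually reached at an interior checkpoint is the $\nsfwf$ one, which lets me invoke the induction hypothesis on the remaining suffix. The conclusion then follows by a case analysis on where $\cfstack'$ lands relative to this structure: it is either a checkpoint (hence $\nsfwf$, hence $\optfencepred$ by the bridging implication), the intermediate $\cfstack''$ emitted by a final application of (C2B) (hence $\optfencepred$ directly), or the normal form produced by (C2C) (again $\optfencepred$ directly).

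The main obstacle I anticipate is the step-counting mismatch: \Cref{lemma:nospeconestep} sometimes describes two consecutive $\sto{}{}$ transitions, namely case (C2B), where a store is immediately followed by the flushing $\cfence$, whereas the relation $\sto{\Ds}{\Os}^\nat$ counts each transition individually. I would handle this by strong induction on $\nat$ and, at each round, consuming either one directive (cases (C2A), (C2C)) or two (case (C2B)), again using the exclusion of intermediate error and unsafe states to guarantee that the configuration reached at the end of each completed round is a genuine $\nsfwf$ checkpoint on which the induction hypothesis applies. A minor edge case to dispatch separately is a final configuration whose leading command is $\cnil$, for which $\optfencepred$ is vacuous and no obligation arises.
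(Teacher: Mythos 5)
Your proposal is correct and follows essentially the same route as the paper: the paper's proof consists precisely of the bridging implication $\nsfwf{\cfstack} \Rightarrow \optfencepred(\cfstack)$ followed by iterating the one-step lemma, which guarantees that either $\optfencepred$ or $\nsfwf$ holds after every step. You simply spell out the induction, the two-transition bookkeeping of case (C2B), and the exclusion of intermediate error states, all of which the paper compresses into a single sentence.
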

\begin{proof}
  Notice that, from the premises of this lemma, assumptions (H1) and
  (H2) of \Cref{lemma:optfenceonestep} are satisfied.  Moreover, we
  observe that:
  \[
    \nsfwf {\conf{\conf{\prog, \regmap, \km[\syscall]}, \bm\buf{(\lay
          \lcomp \rfs')}, \boolms}\cons \cfstack} \implies
    \optfencepred(\cfstack), \tag{$*$}
  \]
  Therefore, the conclusion is a direct consequence of
  \Cref{lemma:optfenceonestep}, which ensures that either
  $\optfencepred(\cfstack')$ or $\nsfwf{\cfstack'}$ holds at every
  step.
  
\end{proof}

\begin{lemma}
  \label{lemma:nospecsyscall}
  Let $\nospec(\system) = (\rfs, \syss, \caps)$ be a system. For every configuration stacks $\conf{\conf{\syss(\syscall), \regmap, \km[\syscall]}, \bm\buf{(\lay \lcomp \rfs')}, \boolms}$ and $\cfstack$, directives $\Ds$,  observations $\Os$, and number of steps $\nat$, if the following conditions hold:
  \begin{itemize}
  \item \(
    \lay \red \conf{\conf{\syss(\syscall), \regmap, \km[\syscall]}, \bm\buf{(\lay \lcomp \rfs')}, \boolms} \sto{\Ds}{\Os}^n \cfstack
    \),
  \item$\cfstack \notin \{\unsafe, (\err, \boolms'):\cfstack''\}$,
  \item 
    $\rfs'\eqon{\Fn}\rfs$,
  \item   $\dom(\buf)\subseteq\underline \lay(\Ar)$, and 
  \item $\dbt\notin \Ds$,
  \end{itemize}
  then we have $\optfencepred(\cfstack)$.
\end{lemma}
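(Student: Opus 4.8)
The plan is to adapt the argument of \Cref{lemma:optfencesyscall} to the speculation-blocking transformation, the only real difference being the leading $\cfence$ that $\nospec$ prepends to every system call body. By the definition of $\nospec$ on system call maps, $\syss(\syscall)=\cfence\sep\nospec(\cmd)$ for the original body $\cmd$, so the initial configuration carries the command $\cfence\sep\nospec(\cmd)$. First I would record that the hypotheses already entail $\nsfwf{\rfs'}$: since $\nospec(\system)=(\rfs,\syss,\caps)$ means $\rfs=\nospec(\rfs_0)$ for the original store $\rfs_0$, for every $\fn\in\Fnk$ we have $\rfs'(\fn)=\rfs(\fn)=\nospec(\rfs_0(\fn))$ by $\rfs'\eqon{\Fn}\rfs$ and the definition of $\nospec$ on stores.

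Then I would split on the length $\nat$ of the reduction. If $\nat=0$, then $\cfstack$ is the initial configuration, whose top instruction is $\cfence$; since $\cfence$ is neither a call nor one of the potentially unsafe instructions $\cmemread\vx\expr,\cmemass\expr\exprtwo,\cscall\expr{\vec\exprtwo}$, the predicate $\optfencepred(\cfstack)$ holds immediately, irrespective of $\boolms$ and $\buf$. If $\nat>0$, I would argue that the first step is forced to be the fence: a $\cfence$ command reduces only under $\dstep$ via \ref{SI:Fence}, and that rule requires the misspeculation flag to be $\bot$. Since $\dbt\notin\Ds$, no backtracking step is available, so if $\boolms=\top$ the configuration would be stuck and $\nat=0$, a contradiction; hence $\boolms=\bot$, the head directive is $\dstep$, and the first transition flushes the buffer.

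After this fence step the configuration becomes $\conf{\conf{\nospec(\cmd),\regmap,\km[\syscall]},\bm\nil{\overline{\bm\buf{(\lay\lcomp\rfs')}}},\bot}$. Using $\dom(\buf)\subseteq\underline\lay(\Ar)$ and \Cref{rem:overlinewrtdom}, the flushed memory equals $\lay\lcomp\rfs''$ for some $\rfs''\eqon{\Fn}\rfs'\eqon{\Fn}\rfs$; combined with $\nsfwf{\rfs'}$ and the fact that $\nospec(\cmd)$ lies in the image of $\nospec$, this yields $\nsfwf$ of the post-fence configuration, which now has an empty buffer and cleared flag. I would then invoke \Cref{lemma:nospecnsteps} on the remaining $\nat-1$ steps, whose directive sequence is still $\dbt$-free and whose final stack is, by hypothesis, neither $\unsafe$ nor an error stack; this delivers $\optfencepred(\cfstack)$.

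The only delicate point is the interaction between the leading $\cfence$ and the $\nsfwf$ invariant: unlike $\optfencetrans$, the well-formedness predicate for $\nospec$ demands an empty buffer and a cleared misspeculation flag, conditions the initial configuration need not satisfy. The fence is precisely what re-establishes them, and the crux of the proof is recognizing that a nonempty buffer or a set flag forces either a length-zero reduction (the trivial case) or an immediate, store-preserving flush (via the buffer-domain assumption and \Cref{rem:overlinewrtdom}), which is exactly what makes \Cref{lemma:nospecnsteps} applicable.
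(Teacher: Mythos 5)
Your proposal is correct and follows essentially the same route as the paper's proof: a case split on $\nat$, the observation that for $\nat>0$ the leading $\cfence$ forces $\boolms=\bot$ and a buffer flush via \ref{SI:Fence}, the use of \Cref{rem:overlinewrtdom} together with $\dom(\buf)\subseteq\underline\lay(\Ar)$ to re-establish $\nsfwf$ on the post-fence configuration, and a final appeal to \Cref{lemma:nospecnsteps}. Your added justifications for the $\nat=0$ case and for why a set misspeculation flag would leave the configuration stuck are correct elaborations of steps the paper leaves terse.
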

\begin{proof}
  Notice that, by definition of $\nospec$, the body of system calls are translated with an initial $\cfence$ instruction. This implies that the initial configuration has the following form:
  \[
    \conf{\conf{\cfence; \nospec(\cmd), \regmap, \km[\syscall]}, \bm\buf{(\lay \lcomp \rfs')}, \boolms}.
  \]
  The proof proceeds by case analysis on $\nat$.
  \begin{proofcases}
    \proofcase{$\nat=0$} Trivial.
    \proofcase{$\nat>0$} By examining the semantics, we observe that $\boolms = \bot$, and that the first rule that is applied is \ref{SI:Fence}. Therefore the first transition is:
    \[
        \lay \red  \conf{\conf{\cfence; \nospec(\cmd), \regmap, \km[\syscall]}, \bm\buf{(\lay \lcomp \rfs')}, \bot} \to \conf{\conf{\nospec(\cmd), \regmap, \km[\syscall]}, \bm \nil {\overline {\bm\buf{(\lay \lcomp \rfs')}}}, \bot}.
    \]
    Consequently, we conclude with an application of \Cref{lemma:nospecnsteps}. In particular, the premise
    \[
      \nsfwf{\conf{\conf{\nospec(\cmd), \regmap, \km[\syscall]}, \bm \nil {\overline {\bm\buf{(\lay \lcomp \rfs')}}}, \bot}}
    \]
    is established as follows:
    \begin{itemize}
    \item By examining of the configuration, we note that the mis-speculation flag is $\bot$, and that both the write buffer and the frame stack are empty.
    \item By \Cref{rem:overlinewrtdom}, we conclude that ${\overline {\bm\buf{(\lay \lcomp \rfs')}}} \eqon{\Fn} {\rfs'}$.
    \end{itemize}
  \end{proofcases}
\end{proof}

\begin{lemma}
  \label{lemma:nospecsafeimpo}
  The transformation $\nospec$ imposes \emph{specuative kernel safety}.
\end{lemma}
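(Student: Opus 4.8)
The plan is to mirror the proof of \Cref{lemma:optfencesafeimpo}, substituting the invariant established there for $\optfencetrans$ by the one proved for $\nospec$. Unfolding \Cref{def:cbu}, I must show that for $\nospec(\system) = (\rfs, \syss, \caps)$, every buffer $\buf$ with $\dom(\buf) \subseteq \underline \lay(\Ar)$, and every store $\rfs' \eqon{\Fun} \rfs$, a speculative reduction
\[
  \lay \red \conf{\conf{\syss(\syscall), \regmap, \km[\syscall]}, \bm\buf{(\lay \lcomp \rfs')}, \boolms} \sto{\Ds}{\Os}^* \unsafe
\]
implies the non-speculative reduction
\[
  \lay \red \conf{\conf{\syss(\syscall), \regmap, \km[\syscall]}, \overline{\bm\buf{(\lay \lcomp \rfs')}}} \to^* \unsafe.
\]

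First I would apply \Cref{lemma:nobt} to assume, without loss of generality, that $\Ds$ contains no $\dbt$ directive. Inspecting the speculative rules, the step into $\unsafe$ can only be fired by one of \ref{SI:Load-Unsafe}, \ref{SI:Store-Unsafe}, \ref{SI:Call-Step-Unsafe}, or \ref{SI:Call-Unsafe}, and accordingly the command carried by the penultimate configuration is a load, a store, a safe call, or an ordinary call. The central input is \Cref{lemma:nospecsyscall}: applied to the prefix of the reduction that stops one step before $\unsafe$ (a configuration that is neither $\unsafe$ nor an error state, since it still takes a step and error states cannot step without backtracking), it yields $\optfencepred$ of that penultimate configuration. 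I would then use the two clauses of $\optfencepred$: the first forbids the topmost command from being an ordinary call, which excludes rule \ref{SI:Call-Unsafe}; the second forces the mis-speculation flag of the penultimate configuration to be $\bot$ in the three remaining cases.

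With the flag pinned to $\bot$, the argument closes exactly as in \Cref{lemma:fencesafeimpo}: \Cref{lemma:nobtsteponly} lets me replay the reduction up to the penultimate configuration using only $\dstep$ directives, and \Cref{lemma:stepsemsim} converts this $\dstep$-only speculative run into a non-speculative run from $\conf{\conf{\syss(\syscall), \regmap, \km[\syscall]}, \overline{\bm\buf{(\lay \lcomp \rfs')}}}$ to the committed version of the penultimate configuration. Since the premises of the relevant unsafe rule depend only on the register map, the layout, and the accessed address, and not on the write buffer or mis-speculation flag, they remain satisfied by the committed configuration; hence the matching non-speculative rule \ref{WL:Load-Unsafe}, \ref{WL:Store-Unsafe}, or \ref{WL:SCall-Unsafe} fires and drives the computation to $\unsafe$.

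The main obstacle, as for $\optfencetrans$, lies upstream in the invariant machinery rather than in this final argument: the placement of fences by $\nospec$ (after stores, inside both branches, around loops, and at the head of each system-call body) differs from that of $\fencetrans$ and $\optfencetrans$, so it is \Cref{lemma:nospecsyscall}---resting on $\nsfwf$ implying $\optfencepred$ and on the leading $\cfence$ that flushes the buffer and resets the flag before any potentially unsafe instruction---that does the real work. Once that lemma is in hand, the present proof is purely the last-rule case analysis together with the two replay lemmas, and carries over from the $\optfencetrans$ development almost verbatim.
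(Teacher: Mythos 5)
Your proposal is correct and follows essentially the same route as the paper, which simply defers to the proof of \Cref{lemma:optfencesafeimpo}: eliminate $\dbt$ via \Cref{lemma:nobt}, invoke \Cref{lemma:nospecsyscall} to obtain $\optfencepred$ of the penultimate configuration (excluding \ref{SI:Call-Unsafe} and pinning the mis-speculation flag to $\bot$), and then replay the run non-speculatively via \Cref{lemma:nobtsteponly} and \Cref{lemma:stepsemsim}. You have merely spelled out the details that the paper leaves implicit in its one-line appeal to analogy.
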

\begin{proof}
  The proof is analogous to the one of \Cref{lemma:optfencesafeimpo}.
\end{proof}

    


\section{Discussion on SESES Performance Evaluation}
\label{app:seses}
In \Cref{sec:discussion}, we observe that SESES alone is insufficient to prevent the speculative attacks we consider there. However, we also note that combining SESES with the \texttt{lvi-cfi} pass and a BTB mitigation technique, such as retpoline or eIBRS, could prevent those attacks. Since SESES is not compatible with retpoline, we attempted to evaluate a configuration with SESES and eIBRS.

We made considerable efforts to measure the overhead of SESES under this configuration. Unfortunately, we encountered previously undocumented link-time bugs in \texttt{lvi-cfi} that prevented successful compilation of the Linux kernel. 

To estimate the potential overhead of SESES and \texttt{lvi-cfi}, we analyzed the output of the aborted compilation. Specifically, we counted the number of \texttt{lfence} instructions in the \texttt{vmlinux.o} file and compared it to the counts resulting from our own transformations. The results are shown in the following table:

\begin{table}[h]
  \centering
  \begin{tabular}[t]{cccc}
    \toprule
    \makecell{Simple Fencing\\ Transformation} & \makecell{Optimized fencing\\Transformation} & \makecell{Non-speculating\\  Transformation}& SESES + \texttt{lvi-cfi}\\
    \midrule
    $1083\cdot 10^3$ & $530\cdot 10^3$ & $1045\cdot 10^3$ & $1123 \cdot 10^3$\\
    \bottomrule
  \end{tabular}
    \caption{Count of \texttt{lfence} instructions for different transformations.}
\end{table}

While this analysis does not provide runtime overhead measurements, it
supports the hypothesis that the performance cost of SESES combined
with \texttt{lvi-cfi} would be similar to that of our Simple Fencing
Transformation. 

\end{document}